\documentclass[fleqn,11pt,oneside]{article}

\usepackage{amsmath}
\usepackage{amsthm}
\usepackage{amsfonts}
\usepackage{amssymb}
\usepackage{physics}
\usepackage{xcolor}
\usepackage{graphicx}
\usepackage{authblk}
\usepackage{accents}
\usepackage{mathtools}
\usepackage{mathrsfs}
\usepackage[hidelinks]{hyperref}
\usepackage{subfigure}
\usepackage{multirow}

\textwidth 450 pt
\textheight 575 pt
\topmargin 0 pt
\oddsidemargin 0 pt
\evensidemargin 0 pt
\mathindent 72pt

\newcommand{\dbtilde}[1]{\widetilde{\raisebox{0pt}[0.85\height]{$\widetilde{#1}$}}}
\renewcommand{\tilde}{\widetilde}

\newcommand{\R}{\mathbb{R}}
\newcommand{\C}{\mathbb{C}}
\newcommand{\Z}{\mathbb{Z}}

\newcommand{\vct}[1]{\boldsymbol{#1}}

\newtheorem{theorem}{Theorem}[section]
\newtheorem{lemma}[theorem]{Lemma}
\newtheorem{corollary}[theorem]{Corollary}
\newtheorem{remark}[theorem]{Remark}

\linespread{1.1}


\newcommand{\mat}[1]{\begin{matrix}#1\end{matrix}} 
\newcommand{\bb}[1]{\left(#1\right)} 
 
\renewcommand{\sb}[1]{\left[#1\right]} 

\newcommand{\cb}[1]{\left\{#1\right\}}
\newcommand{\D}{\mathrm{d}}
\newcommand{\I}{\mathrm{i}}

\newcommand{\half}{\frac{1}{2}}

\renewcommand{\bar}{\overline}

\begin{document}

%
%
%
%

\begin{center}
	\begin{minipage}[t]{6.0in}
We present a rapidly convergent scheme for computing globally optimal Wannier functions of isolated single bands for matrix models in two dimensions. The scheme proceeds first by constructing provably exponentially localized Wannier functions directly from parallel transport (with simple analytically computable corrections) when topological obstructions are absent. We prove that the corresponding Wannier functions are real when the matrix model possesses time-reversal symmetry. When a band has a nonzero Berry curvature, the resulting Wannier function is not optimal, but it is transformed into the global optimum by a single gauge transformation that eliminates the divergence of the Berry connection. Complete analysis of the construction is presented, paving the way for further improvements and generalizations. The performance of the scheme is illustrated with several numerical examples.

		\thispagestyle{empty}
		
		\vspace{ -100.0in}
		
	\end{minipage}
\end{center}

\vspace{ 3.60in}

\begin{center}
	\begin{minipage}[t]{4.4in}
		\begin{center}
			
			\textbf{Constructing optimal Wannier functions via potential theory: isolated single band for matrix models  } \\
			
			\vspace{ 0.50in}
			
			Hanwen Zhang$\mbox{}^{\dagger }$  \\
			\today
			
		\end{center}
		\vspace{ -100.0in}
	\end{minipage}
\end{center}

\vspace{ 2.00in}

\vfill
\noindent
$\mbox{}^{\dagger}$ Dept. of Applied and Computational Mathematics, Yale University, New Haven, CT 06511

\vspace{2mm}

\clearpage

\tableofcontents
\vspace{0.2in}

\section{Introduction}
A standard task in solid state physics and quantum chemistry is the computation of localized molecular
orbitals known as Wannier functions\,\cite{marzari2012maximally}. 
This paper is a follow-up to \cite{gopal2024high}, extending its approach from isolated single bands of Schr\"odinger operators in one dimension to matrix models in two dimensions. We present a rapidly convergent scheme for computing globally optimal Wannier functions based on solving the parallel transport equation and the Helmholtz-Hodge decomposition of Berry connections (gauge fields). First, we construct provably exponentially localized Wannier functions by \emph{solely} solving the parallel transport equation (with simple corrections) to assign eigenvectors smoothly when no topological obstruction is present. Although the resulting Wannier functions are not optimally localized when the Berry curvature is nonzero, they are \emph{automatically} real (given a simple choice of the initial condition) when the model has time-reversal symmetry. Next, globally optimal Wannier functions are obtained via a single gauge transformation that eliminates the divergence of the Berry connections. Such gauge transformations are obtained by solving Poisson's equation on tori, which can be done in an efficient and accurate manner. (An alternative approach that utilizes the gauge invariance of the Berry curvature can be found in Remark\,\ref{rmk:alt} and Appendix\,\ref{sec:method2}.)
In addition, when topological obstructions are encountered, the resulting assignments are still analytic (but they are discontinuous when viewed as periodic functions); they are amenable to efficient interpolation schemes based on, for example, Chebyshev nodes, instead of equispaced ones in the Fourier basis in this paper.

The construction of Wannier functions can be viewed as a problem of assigning eigenvectors of some analytic family of matrices/operators to be as smooth as possible so that coefficients of the Fourier expansions of the eigenvectors decay exponentially.
For computational purposes, it is a common practice to view this problem as a nonlinear and nonconvex optimization problem. As a result, the focus of most work has been on obtaining robust initial guesses. For example, high-quality initial guesses are obtained by applying the interpolative decomposition to the so-called density matrices\,\cite{damle2017scdm, damle2019variational}. Parallel-transport-based approaches have been applied for assigning eigenvectors\,\cite{cances2017robust}, but it results in continuous assignments, corresponding to slowly decaying Wannier functions. Then, the initial guesses are optimized by minimizing the spread of the Fourier coefficients (Marzari–Vanderbilt functional) via gradient descent\,\cite{marzari1997maximally, marzari2012maximally}, where the gradients are often computed by finite differences. 
From this optimization point of view, the first step of the scheme in this paper can be viewed as obtaining nearly optimal initial assignments of eigenvectors (corresponding to exponentially localized Wannier functions) by only carrying out parallel transport.
Moreover, the minimization procedures can be drastically accelerated by making use of the potential theory of the physical quantities involved; for the single band case in this paper, it is an unconstrained quadratic problem whose global optimum is achieved in a single Newton step, where the Hessian inversion is essentially free.

It should be observed that, although we solve the parallel transport equation explicitly in this paper to obtain the band structure and Wannier functions simultaneously, the computation of the two can be easily decoupled by using the scheme for parallel transport in Remark\,\ref{rmk:twist} at the cost of lower accuracy. Such an approach is only a second-order scheme; it produces results of reasonable accuracy with almost no additional computational cost once the band structure is obtained. We refer the reader to Remark\,\ref{rmk:twist} and Example 1 in Section\,\ref{sec:eg1} for details.

The extension of the schemes in \cite{gopal2024high} and this paper to isolated multibands has been worked out and is in preparation for publication. We refer the readers to Section\,\ref{sec:final} for the generalizations to Schr\"odinger operators and higher dimensions.

The structure of this paper is as follows. Section\,\ref{sec:ps} contains the description of the problem to be solved. Section\,\ref{sec:ps} contains the physical and mathematical preliminaries, followed by Section\,\ref{sec:npre} consisting of the numerical tools used in this paper. In Section\,\ref{sec:ap}, we introduce the analytic apparatus for constructing Wannier functions. Section\,\ref{sec:ap} contains the procedures for the construction of optimal Wannier functions, followed by the corresponding numerical procedures in Section\,\ref{sec:numerics}.  Section\,\ref{sec:results} contains several examples that illustrate the performance of the scheme in this paper.

\section{Problem statement\label{sec:ps}}
Let $D^*$ be a two-dimensional (deformed) torus 
\begin{equation*}
	D^* = \cb{\kappa_1\vct{b}_1 + \kappa_1 \vct{b}_2: \kappa_1,\kappa_2\in [-1/2,1/2]}
\end{equation*}
parameterized by $(\kappa_1, \kappa_2)\in \sb{-\half, \half}\times \sb{-\half,\half}$\,, where $\vct{b}_1$ and $\vct{b}_2$ are basis vectors in $\R^2$ that span a Bravais lattice.
Suppose that $H:D^* \rightarrow \C^{n\times n}$ is a family of Hermitian matrices that satisfies the following periodicity condition
\begin{equation}
	H(\vct{k} + m_1\vct{b}_1 + m_2\vct{b}_2) = H(\vct{k}), \quad  \vct{k}\in D^*\,, m_1,m_2\in\Z\,.
	\label{eq:hper}
\end{equation}
Suppose further that the elements in $H$ are analytic functions on $D^*$.  
We consider an eigenvalue $E(\vct{k})$ and its corresponding normalized eigenvector $\vct{u}(\vct{k})$ for some $\vct{k}\in D^*$ defined by the formula
\begin{equation}
	H(\vct{k})\vct{u}(\vct{k}) = E(\vct{k})\vct{u}(\vct{k})\,,
	\label{eq:heig}
\end{equation}
and we assume that the eigenvalue $E(\vct{k})$ never becomes degenerate for any $\vct{k}\in D^*$. 
Since we may multiply the vector $\vct{u}(\vct{k})$ by a $\vct{k}$-dependent phase factor $e^{-\I \varphi(\vct{k})}$ (with a real $\varphi(\vct{k})$) without affecting the definition in (\ref{eq:heig}), the eigenvector  $\vct{u}(\vct{k})$ is only unique up to a phase factor. The phase factors are often referred to as the ``choice of gauge'' in physics literature. In this paper, we will use the terms phase factors and choice of gauge (or simply gauge) interchangeably.

In this paper, given $H:D^* \rightarrow \C^{n\times n}$ as in (\ref{eq:hper}) above, we wish to construct an assignment of $\vct{u}$ on $D^*$ such that $\vct{u}$ is analytic on $D^*$ and satisfies the same periodicity in (\ref{eq:hper}) (so that the Fourier coefficients of elements in $\vct{u}$ decay exponentially asymptotically).
Furthermore, we also require the assignment of $\vct{u}$ to be optimal over all possible choices of $\varphi$ in the sense that the Fourier coefficients of $\vct{u}$ have a minimum spread defined by some variance function. Such a problem can be viewed as a matrix version of the construction of Wannier functions for the Schr\"odinger operator case.
(The physical meaning of the quantities considered above will be further clarified in Section\,\ref{sec:tb} and the variance function is defined in (\ref{eq:var}) in Section\,\ref{sec:wannier}.)


\section{Mathematical and physical preliminaries \label{sec:pre}}
\subsection{Notations}
We let $\delta_{ij}$ denote the Kronecker delta function, defined by 
\begin{equation}
\delta_{ij} = \begin{cases} 1 & i = j \\ 0 & i \neq j \end{cases}.
\end{equation} Suppose $\vct{v}$ is a vector in $\C^n$. We denote its norm by $\norm{\vct{v}}$, which is given by the standard inner product as $\norm{\vct{v}}^2=\vct{v}^*\vct{v}$. 
For vectors $\vct{a}, \vct{b}$ in $\R^n$, we also use the dot product notation $\vct{a}\cdot\vct{b}$ to represent their standard inner product $\vct{a}^*\vct{b}$. For any $n$ by $n$ matrix $M$, we denote the matrix trace by $\Tr M = \sum_{j=1}^n M_{jj}$.

Unless otherwise specified, we choose the principal branch of the complex $\log$ function, i.e. $\log(z) = \log(\abs{z}) + \I \,\mathrm{arg}(z)$\,, where $-\pi <\mathrm{arg}(z) \le \pi$\,.




\subsection{Periodic lattices in $\R^2$}
In this section, we introduce standard definitions for a periodic lattice in $\R^2$. 

Suppose the canonical basis vectors in $\R^2$ are given by
\begin{equation}
	\vct{e}_x = (1,0), \quad \vct{e}_y = (0,1)\,.
\end{equation}
We define a periodic lattice in two dimensions by two linearly independent vectors $\vct{a}_1$ and $\vct{a}_2$ in $\R^2$:
\begin{equation}
	\Lambda = \cb{\vct{R}= m_1 \vct{a}_1 + m_2 \vct{a}_2: m_1 ,m_2 \in\Z}\,.
	\label{eq:lambda}
\end{equation}
The vectors $\vct{a}_1$ and $\vct{a}_2$  are referred to as the real space primitive lattice vectors. Given such vectors, we also denote the reciprocal space primitive lattice vectors by $\vct{b}_1$ and $\vct{b}_2$ in $\R^2$, such that their inner product satisfies
\begin{equation}
\vct{a}_i\cdot \vct{b}_j =  2\pi\delta_{ij}, \quad i,j= 1,2\,.
	\label{eq:ab}
\end{equation}
Similarly, the reciprocal lattice is defined as
\begin{equation}
	\Lambda^* = \cb{\vct{G}= m_1 \vct{b}_1 + m_2 \vct{b}_2: m_1, m_2\in\Z}.
	\label{eq:rgortho}
\end{equation}
Due to the relation in (\ref{eq:rgortho}), for any vector $\vct{R}\in \Lambda$ and $\vct{G}\in \Lambda^*$, we have
\begin{equation}
 \vct{G}\cdot \vct{R} = 2\pi l 	
 \label{eq:gr}
\end{equation}
for some integer $l$\,.

We define the primitive unit cell in the real space by the formula
\begin{equation}
	D = \cb{r_1\vct{a}_1 + r_2 \vct{a}_2: r_1,r_2\in [-1/2,1/2]},
		\label{eq:D}
\end{equation}
and we denote its area $\abs{D}$ by $V_{\rm puc}$. We also define the primitive unit cell in the reciprocal space by the formula
\begin{equation}
	D^* = \cb{ \kappa_1\vct{b}_1 + \kappa_2 \vct{b}_2: \kappa_1,\kappa_2\in [-1/2,1/2]}.
	\label{eq:Ds}
\end{equation}
For any $\vct{k}$ in $D^*$, we denote its $x, y$ component by $k_x, k_y$ given by the formula
\begin{equation}
	\vct{k} = k_x \vct{e}_x + k_y \vct{e}_y\,.
\end{equation}
We define the torus $T$ by 
\begin{equation}
	T = \left[-\frac{1}{2},\frac{1}{2}\right] \times \left[-\frac{1}{2},\frac{1}{2}\right] \,,
	\label{eq:torus}
\end{equation}
which provides a natural domain for parameterizing $\vct{k} \in D^*$ by the formula
\begin{equation}
 \vct{k} =  \vct{k}(\kappa_1,\kappa_2) = \kappa_1\vct{b}_1 + \kappa_2\vct{b}_2\,, \quad (\kappa_1,\kappa_2)\in T\,.
	\label{eq:xydir}
\end{equation}
The components of $\vct{k}$ are given by
\begin{equation}
	k_x = k_x(\kappa_1,\kappa_2)\,,\quad 	k_y = k_y(\kappa_1,\kappa_2) \,, \quad (\kappa_1,\kappa_2)\in T\,.
		\label{eq:xydir2}
\end{equation}


\begin{remark}
The region $D^*$ defined in (\ref{eq:Ds}) is not necessarily the Wigner-Seitz cell of the lattice $\Lambda^*$ that defines the first Brillouin zone (BZ) in physics literature.  Due to periodicity, the domain $D^*$ serves the same purpose as the BZ, except for not capturing the full symmetry of the lattice.
\end{remark}


\subsection{Tight-binding models \label{sec:tb}}
In this section, we introduce the so-called tight-binding model to provide the physical context for the eigenvalue problem in (\ref{eq:heig}). They are standard theories in solid state physics and can be found, for example, in \cite{kaxiras2019quantum}.

Given a Schr\"odinger operator with a periodic potential on the lattice $\Lambda$ defined in (\ref{eq:lambda}),
a common strategy in band structure modeling is to approximate the operator by a set of $n$ Bloch-like functions $\chi_{\vct{k}, i}: \R^2 \rightarrow \C$ defined by the formula
\begin{equation}
	\chi_{\vct{k}, i}(\vct{r}) = \sum_{\vct{R}\in \Lambda} e^{-\I\vct{k}\cdot\vct{R}}\phi_i(\vct{r}+\vct{R})\,, \quad \vct{k}\in D^*\,, \vct{r}\in \R^2\,,\quad i=1,2,\ldots,n\,,
	\label{eq:blochlike}
\end{equation}
where $\phi_i: \R^2\rightarrow \R$ represents some pre-specified atomic orbitals (localized near $\vct{R=0}$). 
The eigenfunctions of the $j$-th eigenvalues of the Scr\"odinger operator can thus approximated by some linear combination of $\chi_{\vct{k}, i}$ in (\ref{eq:blochlike}) given by the formula
\begin{equation}
	\psi^{(j)}_{\vct{k}}(\vct{r}) = \sum_{i=1}^n u^{(j)}_{i}(\vct{k})\chi_{\vct{k}, i}(\vct{r})\,,
	\label{eq:lincom}
\end{equation}
where $u^{(j)}_{i}(\vct{k})$ for $i=1,2,\ldots,n$ are the coefficients to be found. 

By approximating the Schr\"odinger operator in the subspace spanned by $\chi_{\vct{k}, i}$, the original eigenvalue problem of the operator is reduced into that of a family of $n$ by $n$ matrices $H(\vct{k})$ for $\vct{k}\in D^*$. The eigenvectors of the matrices determine the coefficients  $u^{(j)}_{i}(\vct{k})$ in (\ref{eq:blochlike}), thus obtaining the approximating eigenfunctions $\psi^{(j)}_{\vct{k}}$. The family of matrices $H(\vct{k})$ is commonly referred to as the tight-binding Hamiltonian.
In general, various assumptions are made about the properties of the inner product among functions $\phi_i$ in (\ref{eq:blochlike});  parameters in $H(\vct{k})$ are often determined by least-square procedures against experimental data or first-principle calculations. We refer the reader to \cite{kaxiras2019quantum,papaconstantopoulos2015handbook} for details.

We form a family of vectors $\vct{u}_n: D^*\rightarrow \C^n $ containing all coefficients in (\ref{eq:lincom}) of the form
\begin{equation}
	\vct{u}_j (\vct{k}) = (u^{(j)}_{1}(\vct{k}),u^{(j)}_{2}(\vct{k}),\ldots,u^{(j)}_{n}(\vct{k})),
	\label{eq:ulabel}
\end{equation}
so that the eigenvalue problem for the tight-binding Hamiltonian becomes
\begin{equation}
	H(\vct{k}) \vct{u}_{j}(\vct{k}) = E_j(\vct{k}) \vct{u}_{j}(\vct{k})\,,
	\label{eq:heig2}
\end{equation}
with the normalization condition
\begin{equation}
	\norm{\vct{u}_j(\vct{k})}^2 = \vct{u}^*_j(\vct{k}) \vct{u}_j(\vct{k}) = 1\,.
\end{equation}
The eigenvalue index $j$ is also referred to as the band index, and 
the family of eigenvalues $E_j(\vct{k})$ for  $\vct{k}\in D^*$ is the energy levels of the $j$-th band. After solving the eigenvalue problem (\ref{eq:heig2}), the family of vectors $\vct{u}_j(\vct{k})$ for  $\vct{k}\in D^*$ together with (\ref{eq:lincom}) defines the Block-like functions for the $j$-th band. The eigenvalue problem defined in (\ref{eq:heig}) can be viewed as that in (\ref{eq:heig2}) for a particular band with the band index $j$ dropped.

We assume that the elements in the matrix $H$ are analytic on $D^*$ and are periodic in the following form
\begin{align}
	H(\vct{k}+\vct{G}) = H(\vct{k})\,,\quad\vct{k}\in D^*, \vct{G}\in\Lambda^*\,,
	\label{eq:hper2}
\end{align}
which is identical to the condition in (\ref{eq:hper}). We refer to periodic functions in the form of (\ref{eq:hper2}) as being $\Lambda^*$-periodic. Thus the eigenvalue $E_j(\vct{k})$ is also $\Lambda^*$-periodic:
\begin{align}
	E_j(\vct{k}+\vct{G}) = E_j(\vct{k})\,,\quad\vct{k}\in D^*, \vct{G}\in\Lambda^*\,.
	\label{eq:eper}
\end{align}
Throughout this paper, we assume the family of eigenvalues $E_j(\vct{k})$ of interests never becomes degenerate. The eigenvector $\vct{u}_j(\vct{k})$ is chosen to be periodic copies of that in $D^*$ so that we have
\begin{align}
	\vct{u}_j(\vct{k}+\vct{G}) = \vct{u}_j(\vct{k})\,,\quad\vct{k}\in D^*, \vct{G}\in\Lambda^*\,.
	\label{eq:uper}
\end{align}
The projector $P_j(\vct{k})=\vct{u}_j(\vct{k}) \vct{u}^*_j(\vct{k})$ has the same periodicity
\begin{align}
	P_j(\vct{k}+\vct{G}) = P_j(\vct{k})\,,\quad\vct{k}\in D^*, \vct{G}\in\Lambda^*\,.
		\label{eq:ppper}
\end{align}
We observe that, although the eigenvector $\vct{u}_j$ is only determined up to a phase factor, the projector $P_j$ is independent of such choices, thus is 
uniquely defined on $D^*$.

\begin{remark}
	In this paper, most functions of interest both analytic and $\Lambda^*$-periodic on $D^*$, such as $H$ in (\ref{eq:hper2}). 
	However, there are cases where it is still convenient to view $D^*$ not as a deformed torus, but as a subset of $\R^2$, so that a function can be analytic on $D^*$ without being $\Lambda^*$-periodic. For this reason, we always make it explicit when a function is both analytic \emph{and} $\Lambda^*$-periodic to indicate that $D^*$ is viewed as a torus, \emph{not} as a subset of $\R^2$.
\end{remark}

Due to the periodicity in (\ref{eq:uper}), each element $u^{(j)}_{i}$ in $\vct{u}_j$ for $i=1,2,\ldots,n$ can be represented by its Fourier series of the form
\begin{align}
	u^{(j)}_{i}(\vct{k}) = \sum_{\vct{R}\in\Lambda} u^{(j)}_{i,\vct{R}} \cdot e^{\I \vct{R}\cdot\vct{k}}.
	\label{eq:ufour}
\end{align}
The orthogonality for two different lattice vectors $\vct{R}, \vct{R}'\in\Lambda$
\begin{align}
	 \int_{D^*} \D\vct{k} \, e^{\I (\vct{R} - \vct{R}')\cdot\vct{k}} = \frac{(2\pi)^2}{ V_{\rm puc}}\delta_{\vct{R},\vct{R}'} 
	 \label{eq:forth}
\end{align}
allows us to compute the Fourier coefficients by the formula
\begin{align}
	u^{(j)}_{i,\vct{R}} = \frac{V_{\rm puc}}{(2\pi)^2} \int_{D^*} \D\vct{k} \, e^{-\I \vct{R}\cdot\vct{k}} \cdot u^{(j)}_{i}(\vct{k}),
	\label{eq:ufourc}
\end{align}
where $V_{\rm puc}$ is the area of the primitive unit cell $D$ in (\ref{eq:D}).

\subsection{Time-reversal symmetry}
If the matrix $H$ in (\ref{eq:heig2}) satisfies
\begin{align}
	\bar{H}(\vct{k}) = H(-\vct{k})\,, \quad \vct{k}\in D^*\,,
	\label{eq:trs}
\end{align}
we call that $H$ has time-reversal symmetry.

Suppose that $H$ has time-reversal symmetry. It is straightforward to show that the eigenvalue $E_j$ and projector $P_j$ have the same symmetry
\begin{align}
	\bar{P}_{j}(\vct{k})= P_{j}(-\vct{k})\,,\quad \vct{k}\in D^*\,,
	\label{eq:ptrs}
\end{align}
\begin{align}
	E_j(\vct{k}) = 	E_j(-\vct{k})\,,\quad \vct{k}\in D^*\,.
	\label{eq:etrs}
\end{align}



%
%
%
%
\subsection{Wannier functions \label{sec:wannier}}
In this section, we define Wannier functions for the matrix models and their variance that measures their localization. It should be observed that the variance definition for the matrix case in this paper is different from the standard one in the Schr\"odinger operator case\,\cite{marzari1997maximally}; the purpose of the new definition is to make the matrix case almost identical to the operator one in terms of minimizing the variance.

Analogous to the Wannier functions in the Schr\"odinger operator case\,\cite{marzari1997maximally,vanderbilt2018berry}, we define the Wannier function for the $j$-th band centered at some $\vct{R}\in\Lambda$ by the formula
\begin{align}
	W^{(j)}_{\vct{R}}(\vct{r}) = \frac{(2\pi)^2}{V_{\rm puc}}\int_{D^*}\D\vct{k}\, e^{-\I \vct{k}\cdot \vct{R}}\psi^{(j)}_{\vct{k}}(\vct{r})\,,
\end{align}
where $\psi^{(j)}_{\vct{k}}(\vct{r})$ is given in (\ref{eq:lincom}) and the domain $D^*$ is defined in (\ref{eq:Ds}). 

Due to the Bloch-like nature of (\ref{eq:blochlike}), the Wannier functions $W^{(j)}_{\vct{R}}(\vct{r})$ centered at $\vct{R}$ are copies of those at $\vct{R}=0$. More explicitly, we have
\begin{align}
	W^{(j)}_{\vct{R}}(\vct{r}) = W^{(j)}_{\vct{0}}(\vct{r} - \vct{R}).
	\label{eq:copy}
\end{align}
Hence, we only need to consider those centered around $\vct{R}=\vct{0}$:
\begin{align}
	W^{(j)}_{\vct{0}}(\vct{r}) = \frac{V_{\rm puc}}{(2\pi)^2}\int_{D^*}\D\vct{k}\, \psi^{(j)}_{\vct{k}}(\vct{r})\,.
		\label{eq:wannn}
\end{align}
By the definition of $\psi^{(j)}_{\vct{k}}$ in (\ref{eq:blochlike}), (\ref{eq:lincom}) and the Fourier series formulas (\ref{eq:ufour} - \ref{eq:ufourc}), we rewrite $W^{(j)}_{\vct{0}}$ in the following form
\begin{align}
	W^{(j)}_{\vct{0}}(\vct{r}) = \sum_{i=1}^n \sum_{\vct{R}\in\Lambda} u^{(j)}_{i,\vct{R}} \cdot \phi_i(\vct{r}+\vct{R}),
	\label{eq:wann}
\end{align}
where $u^{(j)}_{i,\vct{R}}$ are the Fourier coefficients of $u^{(j)}_{i}(\vct{k})$, whose Fourier series is given by the formula 
\begin{align}
	u^{(j)}_{i}(\vct{k}) = \sum_{\vct{R}\in\Lambda} u^{(j)}_{i,\vct{R}} \cdot e^{\I \vct{R}\cdot\vct{k}}.
	\label{eq:ufour2}
\end{align}
Due to (\ref{eq:wann}), the Wannier function defined in (\ref{eq:wannn}) is well-localized near $\vct{R}=\vct{0}$, provided $\abs{u^{(j)}_{i,\vct{R}}}$ decays rapidly as $\norm{\vct{R}}$ gets large. This can be achieved if the components of the eigenvector $\vct{u}$ are chosen as analytic and $\Lambda^*$-periodic functions on $ D^*$, so that $\abs{u_{i,\vct{R}}}$ decays exponentially for large $\norm{\vct{R}}$.

For the rest of this paper, we drop the band index label $j$ since we only deal with a single band. Hence the eigenvalue equation for the eigenvalue $E$ and eigenvector $\vct{u}$ of interests for $\vct{k}\in D^*$ becomes
\begin{align}
	H(\vct{k}) \vct{u}(\vct{k}) = E(\vct{k}) \vct{u}(\vct{k})\,.
	\label{eq:heig3}
\end{align}
Thus the Wannier function is given by the formula
\begin{align}
	W_{\vct{0}}(\vct{r}) = \frac{V_{\rm puc}}{(2\pi)^2}\int_{D^*}\D\vct{k}\, \psi_{\vct{k}}(\vct{r}) =  \sum_{i=1}^n \sum_{\vct{R}\in\Lambda} u_{i,\vct{R}} \cdot \phi_i(\vct{r} +\vct{R}),
	\label{eq:wann2}
\end{align}
and the Fourier series of the component $u_i$ of the eigenvector $\vct{u}$ is given by
\begin{align}
	u_{i}(\vct{k}) = \sum_{\vct{R}\in\Lambda} u_{i,\vct{R}} \cdot e^{\I \vct{R}\cdot\vct{k}}.
	\label{eq:ufour22}
\end{align}


We quantify the localization of $W_{\vct{0}}(\vct{r})$ by 
the following variance function
\begin{align}
	\langle \norm{\vct{R}}^2 \rangle - \norm{\langle \vct{R} \rangle}^2.
		\label{eq:var}
\end{align}
where the first and second moment functions are defined by the formulas
\begin{align}
	\langle \vct{R} \rangle := \sum_{\vct{R}\in\Lambda} \abs{u_{i,\vct{R}}}^2 (-\vct{R})\,,\quad
	\langle \norm{\vct{R}}^2 \rangle := \sum_{\vct{R}\in\Lambda} \abs{u_{i,\vct{R}}}^2 \norm{\vct{R}}^2\,.
	\label{eq:R2}
\end{align}
The first moment $\langle \vct{R} \rangle$ is commonly referred to as the Wannier center, where the minus sign is introduced for reasons as follows.

We observe that the moment functions in (\ref{eq:R2}) are different from the conventional definitions
\begin{align}
	\langle \vct{r} \rangle = \int_{\R^2} \D\vct{r}\, \abs{W_{\vct{0}}(\vct{r})}^2 \vct{r}\,,\quad
	\langle r^2 \rangle = \int_{\R^2} \D\vct{r}\, \abs{W_{\vct{0}}(\vct{r})}^2 r^2\,.
		\label{eq:r2}
\end{align}
The justification for choosing (\ref{eq:R2}) over (\ref{eq:r2}) is as follows. 
First, since the tight-binding model is already a physical approximation of the original problem, the choice of (\ref{eq:R2}) does not affect much the physical relevance of the final results.
Second, and more importantly, since this paper is a stepping stone for the Schr\"odinger operator case, it is desirable for all definitions to mimic the Wannier problem for the operator case mathematically.

Suppose that $\vct{u}$ in (\ref{eq:ulabel}) is chosen to be analytic and $\Lambda^*$-periodic (see (\ref{eq:uper})) on $D^{*}$. Then the Fourier coefficients in  (\ref{eq:R2}) decay exponentially and the moment functions (\ref{eq:R2}) are well-defined. Furthermore, by differentiating (\ref{eq:ufour2}) with respect to $\vct{k}$, we obtain the following expressions of (\ref{eq:R2}) in terms of the eigenvector $\vct{u}$:
\begin{align}
	\langle \vct{R} \rangle = \I \frac{V_{\rm puc}}{(2\pi)^2}\int_{D^*} \D\vct{k}\, \vct{u}^*(\vct{k})\grad_{\vct{k}} \vct{u}(\vct{k})\,,
	\label{eq:Ru}
\end{align}
\begin{align}
	\langle \norm{\vct{R}}^2 \rangle = \frac{V_{\rm puc}}{(2\pi)^2}\int_{D^*} \D\vct{k}\, \norm{\grad_{\vct{k}} \vct{u}(\vct{k})}^2.
		\label{eq:R2u}
\end{align}
We observe that (\ref{eq:Ru}) and (\ref{eq:R2u}) resemble the formulas of (\ref{eq:r2}) in terms of Bloch functions in the operator case\,\cite{marzari1997maximally} with only the eigenfunction $u_{\vct{k}}$ replaced by the eigenvector $\vct{u}(\vct{k})$.

\subsection{Systems of coordinates\label{sec:coord}}
This section contains basic formulas for changing systems of coordinates between variables $k_x, k_y$ and $\kappa_1, \kappa_2$ for parameterizing  $D^*$ in (\ref{eq:Ds}) defined in (\ref{eq:xydir}). 

In this paper, we mainly deal with $\Lambda^*$-periodic functions on $D^*$ of the form
\begin{align}
	f(\vct{k}) = f(\vct{k}+\vct{G})\,, \quad \vct{k}\in D^*\,, \vct{G}\in \Lambda^*.
\end{align}
By the parameterization in (\ref{eq:xydir}), the $\Lambda^*$-periodicity of $f$ is equivalent to 
\begin{align}
	f(\vct{k}(\kappa_1+m,\kappa_2+n)) = 	f(\vct{k}(\kappa_1,\kappa_2))\,, \quad (\kappa_1,\kappa_2)\in T\,, m,n\in \Z\,.
	\label{eq:zper}
\end{align}

For derivatives, by the orthogonality in (\ref{eq:ab}), we have
\begin{align}
	\frac{\partial f}{\partial k_x} = \frac{1}{2\pi}\vct{a}_1\cdot\vct{e}_x \frac{\partial f}{\partial \kappa_1} + \frac{1}{2\pi}\vct{a}_2\cdot\vct{e}_x \frac{\partial f}{\partial \kappa_2}\,, \quad
	\frac{\partial f}{\partial k_y} = \frac{1}{2\pi}\vct{a}_1\cdot\vct{e}_y \frac{\partial f}{\partial \kappa_1} + \frac{1}{2\pi}\vct{a}_2\cdot\vct{e}_y \frac{\partial f}{\partial \kappa_2}\,,
		\label{eq:diffc}
\end{align} 
where $\vct{a}_1$ and $\vct{a}_2$ are the real-space primitive lattice vectors in (\ref{eq:D}). By (\ref{eq:xydir}), we have similarly
\begin{align}
	\frac{\partial f}{\partial \kappa_1} = \vct{b}_1\cdot\vct{e}_x \frac{\partial f}{\partial k_x} +\vct{b}_1\cdot\vct{e}_y \frac{\partial f}{\partial k_y}\,,\quad
	\frac{\partial f}{\partial \kappa_2} = \vct{b}_2\cdot\vct{e}_x \frac{\partial f}{\partial k_x} +\vct{b}_2\cdot\vct{e}_y \frac{\partial f}{\partial k_y}\,.
		\label{eq:diffck}
\end{align}

Accordingly, for integrals over $D^*$, we have
\begin{align}
	\int_{D^*} \D\vct{k}\, f(\vct{k}) = \int_{D^*} \D k_x\D k_y\, f(\vct{k}) = \frac{(2\pi)^2}{V_{\rm puc}} \int_T \D \kappa_1 \D \kappa_2 	f(\vct{k}(\kappa_1,\kappa_2))\,,
	\label{eq:intvarc}
\end{align}
where the torus $T$ is defined in (\ref{eq:torus}).
By applying (\ref{eq:zper}) and (\ref{eq:intvarc}), we have
\begin{align}
	\int_{D^*} \D\vct{k}\, \frac{\partial f(\vct{k})}{\partial \kappa_1} = \frac{(2\pi)^2}{V_{\rm puc}}  \int_{-\frac{1}{2}}^{\frac{1}{2}}\D \kappa_2 \,\sb{f(\vct{k}(1/2,\kappa_2)) - f(\vct{k}(-1/2,\kappa_2))} = 0\,,
	\label{eq:k10}
\end{align}
\begin{align}
	\int_{D^*} \D\vct{k}\, \frac{\partial f(\vct{k})}{\partial \kappa_2} = \frac{(2\pi)^2}{V_{\rm puc}}  \int_{-\frac{1}{2}}^{\frac{1}{2}}\D \kappa_1 \,\sb{f(\vct{k}(\kappa_1,1/2)) - f(\vct{k}(\kappa_1,-1/2))} = 0\,.
	\label{eq:k20}
\end{align}
Combing (\ref{eq:diffc}) with (\ref{eq:k10}) and (\ref{eq:k20})\,, we thus have
\begin{align}
	\int_{D^*} \D\vct{k}\, \frac{\partial f(\vct{k})}{\partial k_x} = 0\,, \quad \int_{D^*} \D\vct{k}\, \frac{\partial f(\vct{k})}{\partial k_y} = 0\,.
	\label{eq:kxy0}
\end{align}

\subsection{Poisson's equation on tori \label{sec:pois}}
This section contains basic results for Poisson's equation defined on the torus $D^*$ (see (\ref{eq:Ds})). Consider the following Poisson's equation
\begin{align}
	\frac{\partial^2 f}{\partial k^2_x} + \frac{\partial^2 f}{\partial k^2_y} = - g\,,\quad  \mbox{$f$ is $\Lambda^*$-periodic on $D^*$\,,}
	\label{eq:poiss}
\end{align}
where $g:D^*\rightarrow \R$ is assumed to be an analytic and $\Lambda^*$-periodic  function, whose Fourier series is of the form
\begin{align}
	g (\vct{k}) = \sum_{\vct{R}\in \Lambda} g_{\vct{R}}\cdot e^{\I \vct{R}\cdot\vct{k}}\,, \quad \vct{k}\in D^*\,.
\end{align}
The analyticity of $g$ implies $\abs{g_{\vct{R}}}$ decays exponentially for large $\norm{\vct{R}}$. 
Suppose that its zeroth Fourier coefficient $g_{\vct{0}}$ is zero; equivalently, this condition is given by the formula
\begin{align}
	\int_{D^*}\D \vct{k}\, g(\vct{k}) = 0\,.
		\label{eq:zero}
\end{align}
Due to the condition in (\ref{eq:zero}), it is easy to verify that 
\begin{align}
	f (\vct{k}) = f_{\vct{0}} + \sum_{\substack{\vct{R}\in\Lambda\\ \vct{R}\ne \vct{0}}} \frac{g_{\vct{R}}}{\norm{\vct{R}}^2}\cdot e^{\I \vct{R}\cdot\vct{k}}\,, \quad \vct{k}\in D^*\,,
	\label{eq:poisol2}
\end{align}
solves (\ref{eq:poiss}), 
where $f_{\vct{0}}$ can be any real constant and may be chosen to be zero, i.e. $f_{\vct{0}}=0$\,. The solution $f: D^*\rightarrow \R$ in (\ref{eq:poisol2}) is analytic since its Fourier coefficients also decay exponentially. Thus, we have the following observation. 
\begin{theorem}
	There exists a unique solution (up to a constant) to Poisson's equation in (\ref{eq:poiss}) if and only if 
	\begin{align}
		\int_{D^*}\D \vct{k}\, g(\vct{k}) = 0\,.
	\end{align}
Moreover, the solution is analytic on $D^*$ and given by (\ref{eq:poisol2}). 
\label{thm:poiss}
\end{theorem}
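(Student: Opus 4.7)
The plan is to prove the equivalence and the formula in three steps: necessity of the zero-mean condition, existence via the explicit Fourier series, and uniqueness up to a constant.

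First, I would establish necessity. Suppose $f$ is a $\Lambda^*$-periodic analytic solution of (\ref{eq:poiss}). Integrating both sides over $D^*$ and applying (\ref{eq:kxy0}) to $\partial f/\partial k_x$ and $\partial f/\partial k_y$ (both of which are themselves $\Lambda^*$-periodic since $f$ is) shows that $\int_{D^*} \Delta f \, \D\vct{k} = 0$, forcing $\int_{D^*} g\,\D\vct{k} = 0$. This is the short half of the argument.

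Next, I would prove existence under the mean-zero assumption. The candidate solution is the series (\ref{eq:poisol2}). Since $g$ is analytic and $\Lambda^*$-periodic, its Fourier coefficients $g_{\vct{R}}$ decay exponentially in $\norm{\vct{R}}$, and dividing by $\norm{\vct{R}}^2$ (for $\vct{R}\ne\vct{0}$) preserves that exponential decay. Consequently, the series in (\ref{eq:poisol2}), as well as all its term-by-term derivatives, converges absolutely and uniformly on $D^*$; this both shows that $f$ is analytic and $\Lambda^*$-periodic and justifies applying $\partial^2/\partial k_x^2 + \partial^2/\partial k_y^2$ inside the sum. Each basis function $e^{\I \vct{R}\cdot \vct{k}}$ satisfies $\Delta e^{\I\vct{R}\cdot\vct{k}} = -\norm{\vct{R}}^2 e^{\I\vct{R}\cdot\vct{k}}$, so term-by-term differentiation produces exactly $-\sum_{\vct{R}\ne\vct{0}} g_{\vct{R}} e^{\I\vct{R}\cdot\vct{k}} = -(g-g_{\vct{0}}) = -g$, using the hypothesis $g_{\vct{0}}=0$ from (\ref{eq:zero}).

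Finally, for uniqueness up to a constant, suppose $f_1$ and $f_2$ are two analytic $\Lambda^*$-periodic solutions, and let $h = f_1 - f_2$. Then $\Delta h = 0$ with $h$ analytic and $\Lambda^*$-periodic. Expanding $h$ in its Fourier series $h(\vct{k}) = \sum_{\vct{R}\in\Lambda} h_{\vct{R}} e^{\I\vct{R}\cdot\vct{k}}$ and differentiating term by term (again valid by exponential decay of $h_{\vct{R}}$), the equation $\Delta h = 0$ becomes $\norm{\vct{R}}^2 h_{\vct{R}} = 0$ for every $\vct{R}\in\Lambda$, forcing $h_{\vct{R}} = 0$ whenever $\vct{R}\ne \vct{0}$. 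Hence $h$ is constant, so $f_{\vct{0}}$ is the only degree of freedom in (\ref{eq:poisol2}).

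I do not expect any real obstacle: the whole argument is bookkeeping once one notes that the analyticity of $g$ gives enough decay to differentiate the Fourier series twice inside the summation. The only mildly delicate point is ensuring that the ``term-by-term'' differentiation argument applies in both the existence and uniqueness steps, but exponential decay of the coefficients handles this uniformly.
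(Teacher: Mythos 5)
Your proof is correct and takes essentially the same approach as the paper: verify the formula (\ref{eq:poisol2}) by term-by-term differentiation of the Fourier series and invoke exponential decay of coefficients. The paper states the verification only briefly, whereas you have spelled out the necessity and uniqueness halves explicitly; these are valid and standard, and nothing in them departs from the paper's intended reasoning.
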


\subsection{Vector fields on tori}
This section contains the Helmholtz-Hodge decomposition\,\cite{nakahara2018geometry} of vector fields defined on the torus $D^*$ in (\ref{eq:Ds}). In this special case, the decomposition of such vector fields is easily obtained from their Fourier series. 

Consider a vector field $\vct{f}: D^*\rightarrow \R^2$ given by the formula
\begin{align}
	\vct{f}(\vct{k}) = (f_x (\vct{k}), f_y (\vct{k}))\,.
\end{align}
We assume that $\vct{f}$ is an analytic and $\Lambda^*$-periodic function (see (\ref{eq:hper2})) on $D^*$.
Thus we expand $\vct{f}$ in its Fourier series
\begin{align}
	\vct{f}(\vct{k}) = \sum_{\vct{R}\in \Lambda} \vct{f}_{\vct{R}}\cdot e^{\I \vct{R}\cdot\vct{k}}\,,
	\label{eq:ffour}
\end{align}
where the constant vector $\vct{f}_{\vct{R}}\in \C^2$ defined by
\begin{align}
	\vct{f}_{\vct{R}} = (f_{x,\vct{R}} , f_{y, \vct{R}} )
\end{align}
contains the Fourier coefficients $f_{x,\vct{R}}$ and $f_{y, \vct{R}}$ of $f_x$ and $f_y$, respectively, at some lattice point $\vct{R}$.
We decompose $\vct{f}_{\vct{R}}$ into
\begin{align}
	\vct{f}_{\vct{R}} = \vct{R} \frac{\vct{R}\cdot \vct{f}_{\vct{R}}}{\norm{\vct{R}}^2} + \bb{\vct{f}_{\vct{R}} - \vct{R} \frac{\vct{R}\cdot \vct{f}_{\vct{R}}}{\norm{\vct{R}}^2}}\,,\quad \mbox{for any $\vct{R}\ne \vct{0}$}\,,
	\label{eq:fd}
\end{align}
where the first term is parallel to $\vct{R}$ and the second is orthogonal to $\vct{R}$. By making use of the components in $\vct{R}=(R_x,R_y)$, the second term in (\ref{eq:fd}) can be written as 
\begin{align}
	\vct{f}_{\vct{R}} - \vct{R} \frac{\vct{R}\cdot \vct{f}_{\vct{R}}}{\norm{\vct{R}}^2} = \bb{R_y , - R_x } \frac{R_y f_{x,\vct{R}} - R_x  f_{y,\vct{R}}}{\norm{\vct{R}}^2} \,.
		\label{eq:fd2}
\end{align}
Based on the decomposition of Fourier series in (\ref{eq:fd}) and (\ref{eq:fd2}), we define a potential $\psi:D^*\rightarrow \R$ by the formula
\begin{align}
	\psi(\vct{k}) = \sum_{\substack{\vct{R}\in\Lambda\\ \vct{R} \ne \vct{0}}}  e^{\I \vct{R}\cdot\vct{k}}\cdot \frac{\I \vct{R}\cdot \vct{f}_{\vct{R}}}{\norm{\vct{R}}^2}\,,\quad \vct{k}\in D^*\,,
		\label{eq:psi}
\end{align}
for the curl-free component, and a potential $F:D^*\rightarrow \R$ by the formula
\begin{align}
	F(\vct{k}) = - \sum_{\substack{\vct{R}\in\Lambda\\ \vct{R} \ne \vct{0}}}  e^{\I \vct{R}\cdot\vct{k}}\cdot\frac{\I R_x f_{y,\vct{R}} - \I R_y  f_{x,\vct{R}}}{\norm{\vct{R}}^2} \,,\quad \vct{k}\in D^*\,,
	\label{eq:F}
\end{align}
for the divergence-free component. Obviously, both $\psi$ and $F$ are analytic and $\Lambda^*$-periodic.
We substitute (\ref{eq:fd}) and (\ref{eq:fd2}) into (\ref{eq:ffour}) and utilize
	$\frac{\partial}{\partial k_x} e^{\I\vct{R}\cdot\vct{k}} = \I R_x e^{\I\vct{R}\cdot\vct{k}}\,, \frac{\partial}{\partial k_y} e^{\I\vct{R}\cdot\vct{k}} = \I R_y e^{\I\vct{R}\cdot\vct{k}}$
to arrive at the Helmholtz-Hodge decomposition of the vector field $\vct{f}$, given by the formula
\begin{align}
	\vct{f} = -\bb{\frac{\partial \psi}{\partial k_x}, \frac{\partial \psi}{\partial k_y} } + \bb{\frac{\partial F}{\partial k_y}, -\frac{\partial F}{\partial k_x} } + (h_x, h_y)\,,
	\label{eq:hhd}
\end{align}
where the harmonic components $(h_x, h_y)$ is given by the zeroth Fourier coefficients of $\vct{f}$ by the formula
\begin{align}
	(h_x, h_y) = (f_{x,\vct{0}}, f_{y,\vct{0}})\,,
	\label{eq:h}
\end{align}
which is obviously both curl-free and divergence-free.
It is easily verified that (\ref{eq:hhd}) is identical to (\ref{eq:ffour}) due to (\ref{eq:psi}), (\ref{eq:F}) and (\ref{eq:h}). We summarize this decomposition in the following observation.
\begin{theorem}
	Let $\vct{f}: D^*\rightarrow \R^2$ be an analytic and  $\Lambda^*$-periodic  vector field. Then there is a unique decomposition of $\vct{f}$ into a curl-free component due to the potential $\psi$, a divergence-free component due to the potential $F$ and the harmonic components $(h_x,h_y)$ in the form of (\ref{eq:hhd}).
	\label{thm:vecdecom}
\end{theorem}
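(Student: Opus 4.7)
The plan is to reduce everything to Fourier analysis on $D^*$, using the fact that analyticity of $\vct{f}$ on the torus forces the Fourier coefficients $\vct{f}_{\vct{R}}$ to decay exponentially in $\norm{\vct{R}}$. This exponential decay is precisely what makes the series (\ref{eq:psi}) and (\ref{eq:F}) tractable despite the factor $1/\norm{\vct{R}}^2$.

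For existence, I would first verify that the defining series for $\psi$ and $F$ converge absolutely to analytic, $\Lambda^*$-periodic functions on $D^*$. The Fourier coefficients of $\psi$ and $F$ are bounded in magnitude by $\norm{\vct{f}_{\vct{R}}}/\norm{\vct{R}}$ for $\vct{R}\ne\vct{0}$, so exponential decay of $\vct{f}_{\vct{R}}$ transfers directly to $\psi$ and $F$, yielding analyticity by the standard converse in Fourier theory. Next I would differentiate (\ref{eq:psi}) and (\ref{eq:F}) term by term, justified by absolute convergence, and check that
\begin{equation*}
-\bb{\tfrac{\partial \psi}{\partial k_x},\tfrac{\partial \psi}{\partial k_y}} + \bb{\tfrac{\partial F}{\partial k_y},-\tfrac{\partial F}{\partial k_x}} + (h_x,h_y)
\end{equation*}
reproduces (\ref{eq:ffour}) mode by mode. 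For the zeroth mode this is definitional, since $(h_x,h_y)=(f_{x,\vct{0}},f_{y,\vct{0}})$. For $\vct{R}\ne \vct{0}$ this reduces precisely to the identity (\ref{eq:fd})--(\ref{eq:fd2}) expressing $\vct{f}_{\vct{R}}$ as the sum of its $\vct{R}$-parallel and $\vct{R}$-orthogonal parts, because the first bracket contributes $\vct{R}(\vct{R}\cdot \vct{f}_{\vct{R}})/\norm{\vct{R}}^2$ and the second contributes the orthogonal complement. The curl-free and divergence-free properties then hold termwise from $\mathrm{curl}\,\grad = 0$ and $\mathrm{div}(\partial_{k_y}F,-\partial_{k_x}F)=0$, and the harmonic component $(h_x,h_y)$ is constant, hence trivially both.

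For uniqueness, suppose $\vct{f}$ has two such decompositions with data $(\psi_1,F_1,\vct{h}_1)$ and $(\psi_2,F_2,\vct{h}_2)$. Comparing zeroth Fourier modes and invoking (\ref{eq:k10})--(\ref{eq:kxy0}), every gradient and skew-gradient of a $\Lambda^*$-periodic function has vanishing zeroth mode, so $\vct{h}_1=\vct{h}_2$. For $\vct{R}\ne\vct{0}$, the orthogonal splitting in (\ref{eq:fd}) is canonical, so the Fourier modes of the curl-free and divergence-free components are individually determined by $\vct{f}_{\vct{R}}$; this fixes both components uniquely as vector fields. The potentials $\psi$ and $F$ themselves are then determined up to an additive constant as in Theorem~\ref{thm:poiss}, consistent with the claim.

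The only non-obvious step is the mode-by-mode matching: one needs to track carefully how differentiating $\psi$ and $F$ converts the $1/\norm{\vct{R}}^2$ factors (together with the $\I R_x$ and $\I R_y$ brought down from $e^{\I \vct{R}\cdot \vct{k}}$) into exactly the parallel and perpendicular projectors in (\ref{eq:fd})--(\ref{eq:fd2}). Once this bookkeeping is laid out, both existence and uniqueness follow from elementary Fourier analysis; analyticity is what makes all series manipulations legitimate.
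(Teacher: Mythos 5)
Your proposal is correct and follows essentially the same route as the paper: expand $\vct{f}$ in its Fourier series, split each coefficient $\vct{f}_{\vct{R}}$ into its $\vct{R}$-parallel and $\vct{R}$-orthogonal pieces as in (\ref{eq:fd})--(\ref{eq:fd2}), and observe that these pieces are exactly what termwise differentiation of the series (\ref{eq:psi}) and (\ref{eq:F}) produces. The only thing you add beyond the paper's exposition is an explicit uniqueness argument (matching the zeroth mode to pin down the harmonic part, then using the canonical orthogonal splitting for $\vct{R}\ne\vct{0}$), which the paper asserts but does not spell out; that addition is sound.
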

Furthermore, according to Theorem\,\ref{thm:poiss}, (\ref{eq:psi}) and (\ref{eq:F}) imply that $\psi$ and $F$ satisfy the following Poisson's equations
\begin{align}
		\frac{\partial^2 \psi}{\partial k^2_x} + \frac{\partial^2 \psi}{\partial k^2_y} = - \bb{\frac{\partial f_x}{\partial k_x} +  \frac{\partial f_y}{\partial k_y} }\,, 
		\label{eq:pois}
\end{align}
\begin{align}
		\frac{\partial^2 F}{\partial k^2_x} + \frac{\partial^2 F }{\partial k^2_y} = - \bb{\frac{\partial f_y}{\partial k_x} -  \frac{\partial f_x}{\partial k_y} }\,,
		\label{eq:poif}
\end{align}
subject to the condition that $\psi$ and $F$ are both $\Lambda^*$-periodic on $D^*$. We observe that solving (\ref{eq:pois}) and (\ref{eq:poif}) only determines $\psi$ and $F$ up to a constant by Theorem\,\ref{thm:poiss}. It does not pose any ambiguity since we are mostly interested in their derivatives as in (\ref{eq:hhd}).

\subsection{Analyticity of eigenvalues and eigenvectors \label{sec:eig}}
In this section, we first introduce standard smoothness results for eigenvalues/vectors of an analytic family of matrices. The results in this section can be viewed as the higher-dimensional extension of the one-dimensional results in \cite{gopal2024high}, which is a summary of those in \cite{kato2013perturbation}. 
Moreover, we also include, in this section, formulas for the directional derivative of eigenvalues and eigenvectors; they are basically identical to those in \cite{gopal2024high} and are only included here for completeness.

In this section, we consider a family of eigenvalues $\lambda$ and its corresponding eigenvectors $\vct{v}\in\C^n$ of an analytic family of $n\times n$ Hermitian matrices $M: R \subseteq \R^2\rightarrow \C^{n\times n}$ , defined by the formula
\begin{align}
	M(x_1,x_2)\vct{v}(x_1,x_2) = \lambda(x_1,x_2) \vct{v}(x_1,x_2) \,, \quad (x_1,x_2) \in R\,,
	\label{eq:katoeig}
\end{align}
where the elements in $M$ are assumed to be analytic both in $R$. Furthermore, similar to the one-dimensional case in \cite{gopal2024high}, we assume that, for any $(z_1, z_2)\in \C^2$ in a complex neighborhood of $(x_1 ,x_2)\in R$, we have the extension of $A$ into the complex plane $\C^2$ satisfying the condition
\begin{align}
	M(z_1,z_2)^* = 	M(\bar{z}_1,\bar{z}_2)\,.
		\label{eq:anac}
\end{align}

It should be observed that the generality of the one-dimensional results is lost  (see Remark\,2.6.3 in \cite{kato2013perturbation}); it is {essential} that we assume the eigenvalue family $\lambda(x_1,x_2)$ is \emph{never degenerate} since we only consider a single eigenvalue in this paper.

By the definition of the projector $P$ via the formula\,\cite{kato2013perturbation}
\begin{align}
	P(x_1,x_2) = -\frac{1}{2\pi \I}\oint_{\mathcal{C}}  (M(x_1,x_2) - \xi)^{-1}\, \D \xi\,, \quad (x_1,x_2)\in R\,,
	\label{eq:projdef}
\end{align}
where the integration is over a contour $\mathcal{C}\subset \C$ enclosing only the eigenvalue $\lambda(x_1,x_2)$ for any $(x_1,x_2)\in R$, it is obvious that the projector $P$ is analytic on $R$. (Such a contour $\mathcal{C}$ is possible since we assume $\lambda$ is never degenerate.) The eigenvalue family $\lambda(x_1,x_2)$ is also analytic on $R$ by the formula
\begin{align}
	\lambda(x_1,x_2) = \Tr(P(x_1,x_2) M(x_1,x_2) )\,,\quad (x_1,x_2)\in R.
\end{align}
We summarize this observation in the following.

\begin{theorem}
	Suppose that $\lambda$ and $\vct{v}$ are the family of eigenvalues and eigenvectors of the matrices $M$ in (\ref{eq:katoeig}) defined on  $R\subseteq\R^2$. Suppose further that the eigenvalue $\lambda$ is not degenerate at any point in $R$. Then both the eigenvalue  $\lambda$ and the eigenprojector $P$ are analytic on $R$.
	\label{thm:eigana}
\end{theorem}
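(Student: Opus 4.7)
The plan is to follow the sketch already outlined just before the theorem statement, namely to express both $P$ and $\lambda$ via the Riesz (contour) representation and then invoke standard facts about analyticity of matrix inverses and contour integrals. The key point is that nondegeneracy of $\lambda$ lets us isolate it from the rest of the spectrum by a fixed contour on a neighborhood of any given base point.

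First I would fix an arbitrary point $(x_1^0, x_2^0)\in R$ and choose a small positively oriented circle $\mathcal{C}\subset\C$ that encloses $\lambda(x_1^0,x_2^0)$ and no other eigenvalue of $M(x_1^0,x_2^0)$. Such a $\mathcal{C}$ exists because, by the nondegeneracy hypothesis, the distance from $\lambda(x_1^0,x_2^0)$ to the other eigenvalues of $M(x_1^0,x_2^0)$ is strictly positive. Next I would argue that, by the continuity of the spectrum of an analytic (hence continuous) matrix family, there is an open neighborhood $U\subseteq R$ of $(x_1^0, x_2^0)$ on which $\lambda(x_1,x_2)$ stays strictly inside $\mathcal{C}$ and no other eigenvalue of $M(x_1,x_2)$ ever crosses $\mathcal{C}$. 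Consequently, on $U$ the formula (\ref{eq:projdef}) with this same contour $\mathcal{C}$ produces the spectral projector onto the eigenspace of $\lambda(x_1,x_2)$.

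The second step is to show that the right-hand side of (\ref{eq:projdef}) is analytic in $(x_1,x_2)\in U$. For each fixed $\xi\in\mathcal{C}$, the matrix $M(x_1,x_2)-\xi$ is invertible on $U$, and its inverse is a rational function of the entries of $M(x_1,x_2)-\xi$; since the entries of $M$ are analytic in $(x_1,x_2)$, each entry of $(M(x_1,x_2)-\xi)^{-1}$ is analytic in $(x_1,x_2)$ and depends continuously on $\xi\in\mathcal{C}$. The contour $\mathcal{C}$ is compact and bounded away from the spectrum uniformly on $U$, so standard results on analyticity of parameter integrals (termwise application of $\partial/\partial x_i$ under the integral, justified by a uniform bound on the integrand and its derivatives over the compact $\mathcal{C}\times \overline U'$ for any compact $\overline U'\Subset U$) give that $P(x_1,x_2)$ is analytic on $U$. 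Since $(x_1^0,x_2^0)$ was arbitrary, $P$ is analytic on all of $R$.

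Finally, for the eigenvalue itself: since $\lambda(x_1,x_2)$ is a simple eigenvalue in $R$, the projector $P(x_1,x_2)$ has rank one, so $P(x_1,x_2)M(x_1,x_2)=\lambda(x_1,x_2) P(x_1,x_2)$ and therefore
\begin{equation*}
\lambda(x_1,x_2) = \operatorname{Tr}\bigl(P(x_1,x_2)\,M(x_1,x_2)\bigr),
\end{equation*}
which is analytic as a trace of a product of analytic matrix-valued functions. This completes the proof. I do not anticipate a genuinely hard step; the only point that deserves care is the first one, namely the local-uniform choice of the contour $\mathcal{C}$, which relies essentially on the nondegeneracy hypothesis (the same hypothesis that, as remarked in the text, forbids a direct extension of the one-dimensional Kato-type result to the higher-dimensional setting beyond the simple-eigenvalue case).
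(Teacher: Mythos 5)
Your proof is correct and follows essentially the same route as the paper: express the eigenprojector via the Riesz contour integral (\ref{eq:projdef}), deduce analyticity of $P$ from analyticity of the resolvent and differentiation under the integral sign, and then recover $\lambda = \operatorname{Tr}(PM)$. The paper states the analyticity of $P$ as ``obvious''; you have filled in exactly the details it omits (local choice of contour using nondegeneracy, uniform separation of the spectrum on a neighborhood, and passing derivatives under the contour integral), so there is no substantive difference in approach.
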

Since the projector $P$ (which can also be written as $P=\vct{v}\vct{v}^*$) is independent of the phase choice of $\vct{v}$, the projector  $P$ is always analytic even when $\vct{v}$ is not. 
In Section\,2.4.2 of \cite{kato2013perturbation}, Kato also gives a construction of an analytic family of $\vct{v}$ by solving an ODE. We define an analytic curve $\gamma: [a,b] \rightarrow \R^2$, starting at some point $\vct{x}_0\in R$:
\begin{align}
	\gamma(s) = (x_1(s),x_2(s))\in R\,, \quad \gamma(a) = \vct{x}_0\,,\quad s\in [a,b]\,.
	\label{eq:curve}
\end{align}
Applying this construction to the eigenvalue problem (\ref{eq:katoeig}) on $\gamma$ yields the following result.
\begin{theorem}
Let $\gamma$ be the curve given in (\ref{eq:curve}), $M$ be the family of matrices given in (\ref{eq:katoeig}) with the eigenvalue family $\lambda$. Let $\vct{v}_0$ be the eigenvector of a non-degenerate $\lambda(\vct{x_0})$ at $\gamma(a)=\vct{x}_0$. Consider the following initial value problem
\begin{align}
	\frac{\D}{\D s} \vct{v}(\gamma(s)) = Q(\gamma(s)) \vct{v}(\gamma(s))\,,\quad \vct{v}(\vct{x}_0) = \vct{v}_0\,,
	\label{eq:katoode}
\end{align}
where the matrix $Q$ is given by the commutator
\begin{align}
	Q(\gamma(s)) = \sb{\frac{\D P(\gamma(s))}{\D s}, P(\gamma(s))} = \frac{\D P(\gamma(s))}{\D s} P(\gamma(s))-  P(\gamma(s)) \frac{\D P(\gamma(s))}{\D s}\,.
\end{align}
The solution $\vct{v}$ on the curve $\gamma$ satisfies 
\begin{align}
	M(\gamma(s))\vct{v}(\gamma(s)) = 	\lambda(\gamma(s))\vct{v}(\gamma(s))\,,\quad s\in[a,b]
\end{align}
and is analytic on $[a,b]$.

\label{thm:katoode}
\end{theorem}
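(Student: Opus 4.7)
\medskip
\noindent\textbf{Proof proposal.} The plan is to verify the three assertions of the theorem separately: first the existence and analyticity of a solution to (\ref{eq:katoode}), and then the invariance $P(\gamma(s))\vct{v}(\gamma(s))=\vct{v}(\gamma(s))$, from which the eigenvalue identity $M\vct{v}=\lambda\vct{v}$ follows at once because the non-degeneracy of $\lambda$ along $\gamma$ makes the range of $P(\gamma(s))$ coincide with the $\lambda(\gamma(s))$-eigenspace.

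The first step is routine. By Theorem \ref{thm:eigana}, the projector $P$ is analytic on $R$. Composing with the analytic curve $\gamma$ makes $s\mapsto P(\gamma(s))$ and hence $Q(\gamma(s))=[\D P/\D s,\,P]$ analytic on $[a,b]$, so (\ref{eq:katoode}) is a linear ODE with analytic coefficients on a compact interval. Standard existence/uniqueness theory for linear analytic ODEs then delivers a unique solution $\vct{v}(\gamma(s))$, which is analytic on $[a,b]$. This handles the analyticity claim.

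The substantive step is the invariance $P\vct{v}=\vct{v}$. My strategy is to verify that $\vct{w}(s):=P(\gamma(s))\vct{v}(\gamma(s))$ satisfies the \emph{same} linear ODE (\ref{eq:katoode}) as $\vct{v}$ with the same initial value, and then invoke the uniqueness just obtained to conclude $\vct{w}\equiv\vct{v}$. The initial condition is immediate: since $\vct{v}_0$ is an eigenvector at the non-degenerate eigenvalue $\lambda(\vct{x}_0)$, we have $P(\vct{x}_0)\vct{v}_0=\vct{v}_0$, hence $\vct{w}(a)=\vct{v}(\gamma(a))$. For the ODE itself, the engine of the argument is the projector identity
\begin{equation*}
P P' P \;=\; 0, \qquad P'\defeq \D P(\gamma(s))/\D s,
\end{equation*}
which is obtained by differentiating $P^2=P$ to get $PP'+P'P=P'$ and then left-multiplying by $P$. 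A short calculation using this identity reduces both $\D \vct{w}/\D s = P'\vct{v}+PQ\vct{v}$ and $Q\vct{w}=QP\vct{v}$ to the common expression $P'P\,\vct{v}$, so $\D\vct{w}/\D s = Q\vct{w}$, and uniqueness closes the argument.

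The only real obstacle is algebraic bookkeeping: the commutator structure of $Q=P'P-PP'$ together with the two projector identities $PP'+P'P=P'$ and $PP'P=0$ must be combined in just the right order so that the cross terms arising when $P\vct{v}$ is differentiated cancel those produced by $PQ$ (and symmetrically by $QP$). Once that cancellation is pinned down, everything else reduces to invoking Theorem \ref{thm:eigana} and standard linear-ODE uniqueness.
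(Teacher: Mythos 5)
Your proof is correct and follows essentially the same Kato-style argument that the paper cites (Section 2.4.2 of Kato's book) but does not reproduce: analyticity follows from the analyticity of $P$ together with the smoothness theory for linear ODEs with analytic coefficients, and the invariance $P\vct{v}=\vct{v}$ is established exactly as you describe, by showing $\vct{w}=P\vct{v}$ satisfies the same initial value problem and invoking uniqueness. Your algebra is verified: $PQ = PP'P - P^2P' = -PP'$ and $QP = P'P^2 - PP'P = P'P$, so $\D\vct{w}/\D s = P'\vct{v} - PP'\vct{v} = P'P\vct{v} = Q\vct{w}$ as claimed.
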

By differentiating $P^2=P$, we obtain
\begin{align}
	P\frac{\D P}{\D s} +  \frac{\D P}{\D s}P = \frac{\D P}{\D s}\,.
	\label{eq:piden1}
\end{align}
Multiplying (\ref{eq:piden1}) by $P$ on the left produces
\begin{align}
	P\frac{\D P}{\D s} P = 0\,.
	\label{eq:piden2}
\end{align}
Since $\vct{v}$ stays an eigenvector along the curve $\gamma$ in Theorem\,\ref{thm:katoode}, we have $P\frac{\D P}{\D s}\vct{v}=P\frac{\D P}{\D s}P\vct{v} =0$ by (\ref{eq:piden2}) and $\frac{\D P}{\D s}P\vct{v}=\frac{\D P}{\D s}\vct{v}$. We use these two relations to simplify the ODE in (\ref{eq:katoode}) into the following
\begin{align}
	\frac{\D}{\D s} \vct{v}(\gamma(s)) = \frac{\D P(\gamma(s))}{\D s}\vct{v}(\gamma(s))\,.
	\label{eq:katoode2}
\end{align}

Next, we apply the well-known perturbation formulas for projectors to (\ref{eq:katoode2}) (see Section\,2.5.4 in \cite{kato2013perturbation}) to derive the derivative of eigenvectors in terms of eigenvectors. We summarize this fact together with the derivative of eigenvalues in the following theorem.
\begin{theorem}
	Let $\gamma$ be the curve given in (\ref{eq:curve}) and $M$ be the family of matrices given in (\ref{eq:katoeig}) with a non-degenerate eigenvalue $\lambda$ and the corresponding eigenvector $\vct{v}$. Then we have the following formulas:
	\begin{align}
		\frac{\D \lambda(\gamma(s))}{\D s} = \vct{v}^*(\gamma(s)) \frac{\D M(\gamma(s))}{\D s} \vct{v}(\gamma(s)),
				\label{eq:evalpert}
	\end{align}
	\begin{align}
		\frac{\D \vct{v}(\gamma(s))}{\D s} = - (M(\gamma(s)) - \lambda(\gamma(s)))^\dagger \frac{\D M(\gamma(s))}{\D s} \vct{v}(\gamma(s)),
		\label{eq:evecpert}
	\end{align}
where $(M - \lambda)^\dagger$ is the pseudoinverse of the matrix $M - \lambda$ ignoring the eigensubspace of $\lambda$.
\end{theorem}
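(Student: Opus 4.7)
The plan is to treat the two formulas separately, relying on the smoothness guaranteed by Theorem~\ref{thm:katoode} to ensure every derivative in sight makes sense along $\gamma$. Throughout I will suppress the argument $\gamma(s)$ for readability and write $M' = \tfrac{\D M}{\D s}$, $\vct{v}' = \tfrac{\D \vct{v}}{\D s}$, etc.

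For (\ref{eq:evalpert}), the plan is a one-line Hellmann-Feynman argument. I would differentiate the eigenvalue equation $M\vct{v} = \lambda \vct{v}$ along the curve to get $M'\vct{v} + M\vct{v}' = \lambda' \vct{v} + \lambda \vct{v}'$, then left-multiply by $\vct{v}^*$. Since $M$ is Hermitian we have $\vct{v}^* M = \lambda \vct{v}^*$, so the two $\vct{v}'$ terms cancel, and together with the normalization $\vct{v}^*\vct{v} = 1$ this leaves exactly $\lambda' = \vct{v}^* M' \vct{v}$.

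For (\ref{eq:evecpert}), the plan is to combine the same differentiated eigenvalue equation with the gauge information carried by the ODE (\ref{eq:katoode2}). Rearranging $M'\vct{v} + M\vct{v}' = \lambda' \vct{v} + \lambda \vct{v}'$ yields
\begin{equation*}
(M - \lambda)\vct{v}' = -(M' - \lambda' I)\vct{v}.
\end{equation*}
To invert $M-\lambda$ on the right-hand side I need $\vct{v}'$ to lie in the range of $M-\lambda$, i.e.\ orthogonal to $\vct{v}$. I would obtain this from (\ref{eq:katoode2}): using $\vct{v}' = P'\vct{v}$ and the identity $P P' P = 0$ from (\ref{eq:piden2}), one sees $\vct{v}^*\vct{v}' = \vct{v}^* P' \vct{v} = 0$, so $P\vct{v}' = 0$. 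Applying the pseudoinverse $(M-\lambda)^\dagger$, which inverts $M-\lambda$ on the orthogonal complement of $\vct{v}$ and kills $\vct{v}$ itself, gives
\begin{equation*}
\vct{v}' = (M-\lambda)^\dagger (M-\lambda)\vct{v}' = -(M-\lambda)^\dagger M' \vct{v},
\end{equation*}
where the $\lambda' \vct{v}$ piece is annihilated because $(M-\lambda)^\dagger \vct{v} = 0$.

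The only subtle point, which I would highlight explicitly, is the gauge choice implicit in Kato's construction. The formula (\ref{eq:evecpert}) is not gauge-invariant: an arbitrary $s$-dependent phase added to $\vct{v}$ would contribute an imaginary multiple of $\vct{v}$ to $\vct{v}'$ that the pseudoinverse cannot see. What makes the formula correct here is precisely that Theorem~\ref{thm:katoode} delivers $\vct{v}$ through (\ref{eq:katoode2}), whose right-hand side $P'\vct{v}$ is orthogonal to $\vct{v}$ by (\ref{eq:piden2}); this fixes the phase in exactly the way needed to make $P\vct{v}' = 0$ and the pseudoinverse formula exact. I expect this gauge bookkeeping to be the main obstacle to presenting the derivation cleanly, rather than any analytic difficulty.
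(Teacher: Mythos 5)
Your proof is correct, and I think it's a cleaner route than the one the paper implies. The paper's derivation of (\ref{eq:evecpert}) is citation-based: it invokes Kato's first-order perturbation formula for the projector, $P' = -(M-\lambda)^\dagger M' P - P M' (M-\lambda)^\dagger$ (Kato, Section~2.5.4), substitutes it into $\vct{v}' = P'\vct{v}$ from (\ref{eq:katoode2}), and uses $(M-\lambda)^\dagger\vct{v} = 0$ to drop the second term. You instead differentiate the eigenvalue equation $M\vct{v}=\lambda\vct{v}$ directly, rearrange to $(M-\lambda)\vct{v}' = -(M'-\lambda'I)\vct{v}$, and invert on the orthogonal complement. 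The two derivations are logically equivalent and both hinge on exactly the same gauge-fixing input: (\ref{eq:katoode2}) together with the identity $PP'P=0$ of (\ref{eq:piden2}) forces $P\vct{v}'=0$, which is what lets the pseudoinverse recover $\vct{v}'$ rather than only its component orthogonal to $\vct{v}$. Your version is more self-contained since it does not require the reader to recall the explicit form of Kato's projector expansion, and your closing paragraph on gauge dependence makes explicit something the paper leaves implicit — that (\ref{eq:evecpert}) is only valid for the specific phase delivered by the parallel-transport ODE, not for an arbitrary smooth eigenvector family. The Hellmann-Feynman argument for (\ref{eq:evalpert}) is the standard one and coincides with what the paper is summarizing.
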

By the definition of pseudoinverse, the range of $(M - \lambda)^\dagger$ is orthogonal to that of $P$, so we have the following formula
\begin{align}
	(M - \lambda)^\dagger P = P(M - \lambda)^\dagger = 0.	
	\label{eq:orthopv}
\end{align}

 We observe that (\ref{eq:katoode}), (\ref{eq:katoode2}), and (\ref{eq:evecpert}) are mathematically equivalent. In this paper, we will use (\ref{eq:katoode2}) for analytic purposes and (\ref{eq:evecpert}) for numerical ones.
Moreover, similar to Remark 2.10 in \cite{gopal2024high}, we can easily replace the pseudoinverse in (\ref{eq:evecpert}) with a true inverse for a similar matrix, which is more attractive numerically. Since we only deal with small matrices in this paper, this is not significant.

\begin{remark}
\label{rmk:svd}
The formula (\ref{eq:evecpert}) can be written by the spectral decomposition of $M$ in the following form
	\begin{align}
		\frac{\D \vct{v}}{\D s} = \sum_{\substack{j=1 \\ \lambda_j \ne \lambda }}^{N}\vct{v}_j \frac{\vct{v}_j^* \frac{\D M}{\D s} \vct{v}}{\lambda- \lambda_j},
		\label{eq:evecpertsvd}
	\end{align}
	where $\lambda_j$ and $\vct{v}_j$ are the rest of the eigenvalues and eigenvectors of $M$. 
This is equivalent to computing $\frac{\D \vct{v}}{\D s}$ by
the singular value decomposition of the matrix $M- \lambda$.
\end{remark}


\subsection{Ordinary differential equations \label{sec:ode}}
In this section, we introduce a smoothness result for solutions to systems of linear ODEs with analytic coefficients. It is a straightforward generalization of the continuity of solutions for ODEs with continuous coefficients, which can be found, for example, in \cite{pontryagin2014ordinary}. 

Consider $R = [t_0,t_1]\times \sb{-\frac{a}{2}, \frac{a}{2} }$ for some $t_1>t_0$ and $a>0$, and a family of $n\times n$ matrices $M: R\rightarrow \C^{n\times n}$. We assume that $M$ is analytic in $R$ and is a periodic function of $\mu$ with period $a$:
\begin{align}
	M(t,\mu) = 	M(t,\mu + ma)\,, \quad (t,\mu)\in R\,, m\in \Z\,.
\end{align}
Consider the following ODE system 
\begin{align}
	\frac{\D}{\D t} \vct{v}(t,\mu) = M(t,\mu)\vct{v}(t,\mu)\,, \quad  (t,\mu)\in R\,,
	\label{eq:ode}
\end{align}
with the initial condition
\begin{align}
	\vct{v}(t_0,\mu) = \vct{v}^0(\mu)\,, \quad  \mu \in \sb{-\frac{a}{2}, \frac{a}{2} }\,,
		\label{eq:ode0}
\end{align}
where we also assume $\vct{v}_0$ is analytic and has period $a$
\begin{align}
	\vct{v}^0(\mu) = 	\vct{v}^0(\mu + na) \,,\quad \mu \in\sb{-\frac{a}{2}, \frac{a}{2} }\,,\quad n\in \Z\,.
\end{align}
The variable $t$ parameterizes the ODE solution $\vct{v}$ and the other variable $\mu$ can be viewed as a numerical parameter.

By applying the Picard iteration to (\ref{eq:ode}), the solution $\vct{v}$ is given by the limit of the following uniformly convergent sequence of functions
\begin{align}
	\vct{v}_0(\mu), \quad	\vct{v}_0(\mu) + \int_{t_0}^t \D s M(s,\mu)\vct{v}_0(\mu)\,,\quad \ldots\,.
\end{align}
The form of the functions shows that the solution has the same periodicity as $M$ and $\vct{v}^0$. 
Moreover, since the limit of a sequence of uniformly convergent analytic functions is analytic, we conclude that the solution $\vct{v}$ is also analytic on $R$. If we further assume that $M$ is Hermitian, i.e. $M^*=M$ on $R$, and that the transpose of the initial condition $(\vct{v}^0)^*$ is also analytic on $\sb{-\half,\half}$, the same argument applied to the transpose of (\ref{eq:ode}) shows that the transpose of the solution $\vct{v}^*$ is also analytic on $R$.
\begin{lemma}
	\label{lem:ode}
	The solution $\vct{v}$ to (\ref{eq:ode}) subject to the initial condition (\ref{eq:ode0}) is analytic in $R$ and is a periodic function of the form 
	\begin{align}
	 \vct{v}(t,\mu) = 	 \vct{v}(t,\mu + m a)\,, \quad (t,\mu)\in R\,, m\in \Z\,.
\end{align}
Moreover, if we assume $M=M^*$ in $R$ and $(\vct{v}^0)^*$ is analytic, the transpose of the solution $\vct{v}^*$ is also analytic on $R$ and has the same periodicity as $\vct{v}$.
\end{lemma}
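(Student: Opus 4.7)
The plan is to follow the Picard iteration sketch already indicated in the text and to promote it from a purely formal construction into an honest analyticity and periodicity statement. I would begin by fixing an arbitrary closed rectangle $R$ and observing that, since $M$ is analytic on the compact set $R$, its entries extend to bounded holomorphic functions on some complex polydisk neighborhood $\widetilde{R}$ of $R$. In particular, both $\|M\|$ and $\|\vct{v}^0\|$ are uniformly bounded on such a neighborhood by a constant $C$. I would then define the Picard iterates
\begin{align*}
\vct{v}^{(0)}(t,\mu)=\vct{v}^0(\mu),\qquad
\vct{v}^{(n+1)}(t,\mu)=\vct{v}^0(\mu)+\int_{t_0}^{t} M(s,\mu)\,\vct{v}^{(n)}(s,\mu)\,\D s,
\end{align*}
and prove by induction the standard estimate $\|\vct{v}^{(n+1)}-\vct{v}^{(n)}\|\le C\cdot (C(t-t_0))^{n}/n!$, which extends to the complex neighborhood $\widetilde{R}$ with $t$ replaced by a complex variable in a suitable disk and $\mu$ by a complex variable in a suitable strip. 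This gives uniform convergence of the iterates on compact subsets of the complex neighborhood.

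Next I would verify the two structural properties inductively. Each iterate is manifestly analytic in $(t,\mu)$ on the complex neighborhood, because it is built from analytic functions by integration against $t$ (which preserves holomorphy in both variables) and pointwise multiplication. Each iterate is also periodic in $\mu$ with period $a$, because $M(s,\mu)$ and $\vct{v}^0(\mu)$ are both $a$-periodic and the Picard step respects this. Since uniform limits of holomorphic functions on open sets are holomorphic (Weierstrass), and uniform limits of periodic functions are periodic, the limit $\vct{v}$ inherits both properties: it is analytic on $R$ and $a$-periodic in $\mu$. Uniqueness of the solution to the IVP then identifies this Picard limit with the ODE solution, completing the first half of the lemma.

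For the Hermitian addendum, the natural move is to take the conjugate transpose of (\ref{eq:ode}):
\begin{align*}
\frac{\D}{\D t}\vct{v}^{*}(t,\mu)=\vct{v}^{*}(t,\mu)M^{*}(t,\mu)=\vct{v}^{*}(t,\mu)M(t,\mu),
\end{align*}
which is again a linear ODE with analytic $a$-periodic coefficients and analytic $a$-periodic initial data $(\vct{v}^0)^*(\mu)$. The very same Picard argument, now applied on the right to row vectors, produces an analytic $a$-periodic solution; by uniqueness it must coincide with $\vct{v}^*$, proving analyticity and periodicity of the transpose.

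I do not anticipate a serious obstacle; the only mildly delicate point is ensuring the Picard iterates can be extended off the real rectangle into a genuine complex neighborhood so that Weierstrass's theorem on uniform limits of holomorphic functions applies. This is routine because $M$ and $\vct{v}^0$ are assumed analytic on all of $R$, hence admit a common complex neighborhood on which they are bounded, and the factorial decay in the Picard estimate controls the series uniformly there.
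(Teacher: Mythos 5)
Your proposal is correct and takes essentially the same route as the paper's own argument: Picard iteration produces a uniformly convergent sequence of analytic, $a$-periodic functions, whose limit inherits both properties, and the Hermitian addendum follows by transposing the ODE and repeating the argument. The only thing you add is the (worthwhile) explicit step of extending the iterates to a complex neighborhood of $R$ before invoking Weierstrass, which the paper leaves implicit when it asserts that uniform limits of analytic functions are analytic.
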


\subsection{Pfaffian systems\label{sec:pfaf}}
In this section, we introduce the integrability condition for systems of first-order partial differential equations in $\R^2$. 

Let $R$ be a subset of $\R^2$ and $M_1, M_2: R\rightarrow \C^{n\times n}$ be two families of matrices containing twice continuously differentiable coefficients. Consider a system of first-order partial differential equations given by the formulas
\begin{align}
	\frac{\partial }{\partial x_1} \vct{v}(x_1,x_2) = M_1(x_1,x_2)\vct{v}(x_1,x_2)\,, \quad 	\frac{\partial }{\partial x_2} \vct{v}(x_1,x_2) = M_2(x_1,x_2)\vct{v}(x_1,x_2)\,,
	\label{eq:pfaf}
\end{align}
with the initial condition 
\begin{align}
	\vct{v}(x_1^0,x_2^0) = \vct{v}^0
	\label{eq:pfinit}
\end{align}
for some $(x_1^0,x_2^0) \in R$ and a constant vector $\vct{v}^0\in\C^n$. The system (\ref{eq:pfaf}) is often referred to as a Pfaffian system\,\cite{ciarlet2013linear}.

We observe that each equation in (\ref{eq:pfaf}) together with the initial condition (\ref{eq:pfinit}) defines an initial value problem in one direction and their solutions might not be compatible with each other. As a result, the system (\ref{eq:pfaf}) is overdetermined and may not define $\vct{v}$ as a function over $R$. The following theorem states that the system (\ref{eq:pfaf}) becomes integrable, i.e. it admits a solution $\vct{v}$ on $R$, if and only if the second order derivatives are commutative and $R$ is simply connected. (It is analogous to the classical Poincar\'e lemma for vector fields.) For the particular choice of derivatives with respect to $x_1,x_2$, the commutativity condition is given by the formula 
\begin{align}
	\frac{\partial }{\partial x_1} 	\frac{\partial }{\partial x_2} \vct{v} = \frac{\partial }{\partial x_2} 	\frac{\partial }{\partial x_1} \vct{v} \,.
	\end{align}
The solution is also unique by the uniqueness theorem for initial value problems. 
Thus we have the following observation, which is a special case of Theorem\,6.20 in \cite{ciarlet2013linear} stated in a slightly different form. 

\begin{theorem}
	Suppose that $R$ is a simply connected domain in $\R^2$.
	Then the Pfaffian system  (\ref{eq:pfaf}) with the initial condition (\ref{eq:pfinit}) has a unique solution $\vct{v}$ on $R$ if and only if the second order derivatives of $\vct{v}$ are commutative on $R$. 
\label{thm:pfaf}
\end{theorem}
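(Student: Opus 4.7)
The plan is to prove the two directions separately, with necessity being immediate from Schwarz's theorem and sufficiency/uniqueness being the substance of the argument.

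For the forward direction, if $\vct{v}$ solves (\ref{eq:pfaf}), then since $M_1, M_2$ are twice continuously differentiable, so is $\vct{v}$; Schwarz's theorem then forces $\partial_1 \partial_2 \vct{v} = \partial_2 \partial_1 \vct{v}$ automatically. Differentiating the first equation of (\ref{eq:pfaf}) with respect to $x_2$ and the second with respect to $x_1$ and subtracting also yields the natural compatibility relation
\begin{align*}
\bb{\frac{\partial M_1}{\partial x_2} - \frac{\partial M_2}{\partial x_1} + M_1 M_2 - M_2 M_1}\vct{v} = \vct{0},
\end{align*}
which is the form of the commutativity condition I will invoke in the converse direction.

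For the converse, I would first construct $\vct{v}$ on a small rectangle $R_0 \subset R$ centered at $(x_1^0, x_2^0)$ in two stages: solve the ODE $\partial_1 \vct{v} = M_1 \vct{v}$ with initial value $\vct{v}^0$ along the horizontal segment $\{(x_1, x_2^0)\}\cap R_0$, and then from each point of that segment integrate $\partial_2 \vct{v} = M_2 \vct{v}$ upward and downward to fill in the rectangle. By construction the second equation of (\ref{eq:pfaf}) is satisfied throughout $R_0$ and the first holds on the horizontal segment. Define the residual $\vct{g}(x_1,x_2) := \partial_1 \vct{v} - M_1 \vct{v}$; then $\vct{g}(\cdot, x_2^0) \equiv \vct{0}$. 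Differentiating $\vct{g}$ in $x_2$ and using the commutativity hypothesis together with $\partial_2 \vct{v} = M_2 \vct{v}$ (and interchanging $\partial_1$ with $\partial_2$ on $\vct{v}$) yields $\partial_2 \vct{g} = M_2 \vct{g}$, a homogeneous linear ODE in $x_2$. The standard ODE uniqueness theorem then forces $\vct{g} \equiv \vct{0}$ on $R_0$, so both PDEs of the system are satisfied.

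To globalize, I would exploit simple connectedness. For any $\vct{x}\in R$, cover a piecewise-axial path from $(x_1^0, x_2^0)$ to $\vct{x}$ by a finite chain of rectangles contained in $R$; the local construction applies on each, and value matching at the common edges follows from ODE uniqueness along the shared axis. Any two such paths are, by simple connectedness, joined by a homotopy which can itself be covered by finitely many rectangles of the above type, so the local consistency propagates across the homotopy and the constructed value at $\vct{x}$ is path independent. Global uniqueness of $\vct{v}$ is immediate: two solutions agree in a neighborhood of $(x_1^0,x_2^0)$ by local ODE uniqueness, and the set where they agree is both open and closed in $R$. The main obstacle is precisely this globalization step: local existence is essentially bookkeeping, but making the local pieces fit together into a single-valued function is where both hypotheses are used essentially — the commutativity condition kills the local monodromy through the identity $\partial_2 \vct{g} = M_2 \vct{g}$, while simple connectedness removes any topological obstruction to patching, exactly in the manner of the Poincar\'e lemma.
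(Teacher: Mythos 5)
The paper does not actually prove Theorem~\ref{thm:pfaf}: it defers to Theorem~6.20 of Ciarlet, stating only that this is a special case ``stated in a slightly different form.'' So you are supplying a proof where the paper supplies a citation, and the argument you give is the standard Frobenius-integrability construction (build the solution by stacked one-dimensional ODE solves, show the residual in the missing direction satisfies a homogeneous linear ODE with zero initial data, invoke ODE uniqueness, then globalize using simple connectedness and the monodromy argument). This is a reasonable and self-contained substitute for the citation, and the computation behind $\partial_2 \vct{g}=M_2\vct{g}$ is correct: expanding $\partial_2\vct{g}$, applying Schwarz to the constructed (hence $C^2$) $\vct{v}$, substituting $\partial_2\vct{v}=M_2\vct{v}$, and rewriting $M_2\partial_1\vct{v}=M_2\vct{g}+M_2M_1\vct{v}$ leaves exactly the compatibility term as the remainder.

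There is one point worth making precise. Both the theorem statement and your phrasing express the integrability condition as ``the second order derivatives of $\vct{v}$ are commutative,'' i.e.\ $\bb{\partial_2 M_1-\partial_1 M_2+M_1M_2-M_2M_1}\vct{v}=\vct{0}$, which involves the solution $\vct{v}$. As a hypothesis for the converse direction this is circular as written, because the $\vct{v}$ you build by stacked ODE solves is not yet known to satisfy the full Pfaffian system, and the residual identity you need, namely $\partial_2\vct{g}=M_2\vct{g}-C\vct{v}$ with $C:=\partial_2 M_1-\partial_1 M_2+[M_1,M_2]$, requires $C\vct{v}=\vct{0}$ \emph{for the constructed} $\vct{v}$. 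The clean statement of the hypothesis is the Frobenius matrix condition $C\equiv 0$ (or, in the paper's actual application, $C$ annihilating the rank-one invariant subspace that the parallel-transport flow preserves — see the discussion surrounding~(\ref{eq:comp}) and~(\ref{eq:comp2}) where the condition is projected onto $\tilde{\vct{u}}$). Your argument goes through verbatim under this reading, and the looseness is inherited from the theorem statement, but you should say explicitly that you interpret the hypothesis as a condition on $M_1,M_2$ (or on the invariant subspace), not on a putative solution. Otherwise the converse is, strictly parsed, vacuous. The globalization via a chain of axis-aligned rectangles, with simple connectedness to make the construction path-independent and the open-closed argument for uniqueness, is the standard Poincar\'e-lemma-style completion and is fine.
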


\section{Numerical preliminaries \label{sec:npre}}
This section introduces the numerical tools used in this paper. 

\subsection{Trapezoidal rule for periodic functions \label{sec:trapz}}
We introduce the trapezoidal rule for approximating integrals of periodic functions.

Let $f:\sb{0,L}\rightarrow \C$ be an analytic and periodic function of the form
\begin{align}
	f(x) = 	f(x+Lm)\,, \quad x\in \sb{0,L}\,,\quad m\in \Z\,.
	\label{eq:trapzper}
\end{align}
Let $N$ be a positive integer and $h=\frac{L}{N}$. 
We denote an $N$-point trapezoidal rule approximation of the integral of $f$ over $[0,L]$ by $I_{N}(f)$, defined by the formula
\begin{align}
	I_N(f) = h\sum_{j=0}^{N-1} f(j h)\,.
	\label{eq:trapz}
\end{align}
The following fact is a slightly different version of Theorem\,3.2 in \cite{trefethen2014exponentially}, stating that the approximation $I_N$ converges exponentially as the number of points $N$ increases.
\begin{theorem}
	Suppose that $f:\sb{0,L}\rightarrow \C$ is analytic and periodic in the form of (\ref{eq:trapzper}). Then for any positive integer $N$, there exist positive real numbers $C$ and $a$ such that
	\begin{align}
		\abs{I_N(f) - \int_{0}^{L} f(x)\,\D x } < C e^{-a N}\,.
	\end{align}
	\label{thm:trapz}
\end{theorem}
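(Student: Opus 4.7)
The plan is to exploit the Fourier representation of $f$ together with the aliasing identity for the trapezoidal rule, and then bound the resulting tail sum using exponential decay of Fourier coefficients.

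First I would expand $f$ in its Fourier series
\begin{align*}
f(x) = \sum_{k\in\Z} \hat{f}_k \, e^{2\pi\I k x/L}, \qquad \hat{f}_k = \frac{1}{L}\int_0^L f(x) e^{-2\pi \I k x/L}\,\D x\,.
\end{align*}
The key analytic input is that since $f$ is analytic on $[0,L]$ and extends to an $L$-periodic analytic function, it admits an analytic extension to some open strip $\{z \in \C : |\mathrm{Im}\, z| < b\}$ for some $b > 0$. Shifting the contour of the Fourier integral by $\pm \I b$ (with the sign chosen to match the sign of $k$) yields the standard bound $|\hat{f}_k| \le C_0 \, e^{-a|k|}$ with $a = 2\pi b/L$ and some constant $C_0 > 0$. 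The exact integral is simply $\int_0^L f(x)\,\D x = L\,\hat{f}_0$.

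Next I would substitute the Fourier series into the trapezoidal sum $I_N(f) = h\sum_{j=0}^{N-1} f(jh)$ with $h = L/N$ and swap the order of summation (justified by uniform convergence of the Fourier series of an analytic periodic function). The inner sum becomes the geometric sum
\begin{align*}
\sum_{j=0}^{N-1} e^{2\pi \I k j/N} = \begin{cases} N & \text{if } N \mid k \\ 0 & \text{otherwise,}\end{cases}
\end{align*}
which is the discrete orthogonality at the $N$-th roots of unity. This collapses the double sum into the aliasing identity
\begin{align*}
I_N(f) = L \sum_{m\in\Z} \hat{f}_{mN}\,.
\end{align*}

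Subtracting the exact integral $L\hat{f}_0$ leaves $I_N(f) - \int_0^L f\,\D x = L\sum_{m\ne 0}\hat{f}_{mN}$. I would then bound this tail using the exponential decay of $\hat{f}_k$:
\begin{align*}
\left|I_N(f) - \int_0^L f\,\D x\right| \le L C_0 \sum_{m\ne 0} e^{-a|m|N} = \frac{2 L C_0 \, e^{-aN}}{1 - e^{-aN}}\,,
\end{align*}
which is of the required form $C e^{-aN}$ for $N$ large enough (and absorbing the mild denominator into the constant). The main (and essentially only) obstacle is justifying the exponential decay of $\hat{f}_k$ rigorously from the analyticity hypothesis, i.e.\ producing the strip of analyticity and performing the contour shift, but this is a standard application of Cauchy's theorem for periodic analytic functions.
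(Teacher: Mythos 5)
Your proof is correct. The paper itself offers no proof of this theorem; it simply cites Theorem~3.2 of Trefethen and Weideman's \emph{The exponentially convergent trapezoidal rule} (2014), so there is nothing in the source to compare against line by line. Your route --- expand $f$ in its Fourier series, use discrete orthogonality of the $N$-th roots of unity to collapse $I_N(f)$ into the aliasing sum $L\sum_{m\in\Z}\hat f_{mN}$, and then bound the tail $L\sum_{m\neq 0}\hat f_{mN}$ via the exponential decay $|\hat f_k|\le C_0 e^{-a|k|}$ that follows from analyticity in a strip --- is the standard Fourier-theoretic proof and is complete. The cited reference instead represents the error directly as a contour integral of $f$ against a periodic kernel (roughly $(e^{\pm 2\pi\I Nz/L}-1)^{-1}$) over the boundary of a rectangle in the strip of analyticity and bounds that; the two arguments are close cousins, both living off the same strip of analyticity, and either yields the stated bound. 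One small cleanup: the denominator $1-e^{-aN}$ is bounded below by $1-e^{-a}$ for every $N\ge 1$, so the bound holds uniformly in $N$ and you do not actually need the ``for $N$ large enough'' caveat.
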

We observe that Theorem\,\ref{thm:trapz} is easily generalized to higher dimensions by recursively applying (\ref{eq:trapz}) to each dimension.

\subsection{Discrete Fourier transform in two dimensions \label{sec:dft}}
Suppose that $g:D^*\rightarrow \C$ is analytic and $\Lambda^*$-periodic. We denote its Fourier coefficient by $g_{\vct{R}}$ at the lattice point $\vct{R} = m_1\vct{a}_1 + m_2\vct{a}_1 \in \Lambda$ for some integer $m_1,m_2$ and $g_{\vct{R}}$ is given by the formula (see (\ref{eq:ufourc}))
\begin{align}
	g_{\vct{R}} =& \frac{V_{\rm puc}} {{  (2\pi)^2}}\int_{D^*}\D\vct{k}\, e^{-\I \vct{k}\cdot \vct{R}}g(\vct{k})\\
	 =& \int_T \D\kappa_1\D\kappa_2 \,e^{-2\pi\I m_1\kappa_1} e^{-2\pi\I m_2\kappa_2}g(\vct{k}(\kappa_1,\kappa_2)) \,,
	\label{eq:predft}
\end{align}
where (\ref{eq:intvarc}) and (\ref{eq:ab}) are used for obtaining the second equality. For simplicity, let $N$ be a positive even integer and $h=\frac{1}{N}$. Suppose that $m_1, m_2$ are restricted to $-N/2,-(N/2-1),\ldots,N/2-1$. Applying the trapezoidal rule approximation (\ref{eq:trapz}) to both variables $\kappa_1$ and $\kappa_2$ in (\ref{eq:predft}) yields  the approximation $\hat{g}_{m_1 m_2}$ of the Fourier coefficient  $g_{\vct{R}}$ via a discrete Fourier transform
\begin{align}
\hat{g}_{m_1 m_2}= \frac{1}{N^2} \sum_{j_1=-N/2}^{N/2-1}\sum_{j_2=-N/2}^{N/2-1} e^{-\frac{2\pi\I}{N} m_1 j_1} e^{-\frac{2\pi\I}{N} m_1 j_1} g(\vct{k}(j_1 h, j_2 h))\,.
	\label{eq:dft}
\end{align}
We observe that the approximation $\hat{g}_{mn}$ converges to $g_{\vct{R}}$ exponentially as $N$ increases by the two-dimensional version of Theorem\,(\ref{thm:trapz}).
We denote the discrete Fourier transform on the right of (\ref{eq:dft}) by $\mathscr{F}_N $, turning (\ref{eq:dft}) into
\begin{align}
	\hat{g}_{m_1 m_2} = \bb{ \mathscr{F}_N(g)}_{m_1 m_2}\,.
\end{align}
It is straightforward to verify that the inverse $\mathscr{F}^{-1}_N$ is given by the formula
\begin{align}
	g(\vct{k}(j_1 h, j_2 h)) =\bb{ \mathscr{F}^{-1}_N(\hat{g})}_{j_1 j_2} = \sum_{m_1=-N/2}^{N/2-1}\sum_{m_2=-N/2}^{N/2-1} e^{\frac{2\pi\I}{N} m_1 j_1} e^{\frac{2\pi\I}{N} m_2 j_2} \hat{g}_{m_1 m_2}\,,
\end{align}
for $j_1,j_2 = -N/2,-(N/2-1),\ldots,N/2-1$\,. Both $\mathscr{F}_N$ and $\mathscr{F}^{-1}_N$ can be applied via the Fast Fourier transform (FFT) in  $O(N^2\log{N})$ operations. The details can be found, for example, in \cite{briggs1995dft}.

Since derivatives $\frac{\partial}{\partial \kappa_1}$ and $\frac{\partial}{\partial \kappa_2}$ on the basis $e^{2\pi \I m_1 \kappa_1}e^{2\pi\I  m_2 \kappa_2}$ produce $2\pi\I m_1$ and $2\pi\I m_2$ respectively, we compute an approximation of $\frac{\partial}{\partial \kappa_1} g$ and  $\frac{\partial}{\partial \kappa_2} g$ by the formulas
\begin{align}
	\begin{split}
			\frac{\partial}{\partial \kappa_1} g(\vct{k}(j_1 h, j_2 h)) \approx \sum_{m_1=-N/2}^{N/2-1}\sum_{m_2=-N/2}^{N/2-1} e^{\frac{2\pi\I}{N} m_1 j_1} e^{\frac{2\pi\I}{N} m_2 j_2} (2\pi\I m_1\hat{g}_{m_1 m_2})\,,\\
			\frac{\partial}{\partial \kappa_2} g(\vct{k}(j_1 h, j_2 h)) \approx \sum_{m_1=-N/2}^{N/2-1}\sum_{m_2=-N/2}^{N/2-1} e^{\frac{2\pi\I}{N} m_1 j_1} e^{\frac{2\pi\I}{N} m_2 j_2} (2\pi\I m_2\hat{g}_{m_1 m_2})\,,
				\end{split}
	\label{eq:ddftg}
\end{align}
for $j_1,  j_2 = -N/2,-(N/2-1),\ldots,N/2-1$\,.
It is convenient to introduce the notation $\mathscr{D}_1$ and $\mathscr{D}_2$ for the component-wise multiplication by $2\pi\I m_1$ and $2\pi\I m_2$ in (\ref{eq:ddftg}) respectively, so that (\ref{eq:ddftg}) is compactly written as 
\begin{align}
	\frac{\partial}{\partial \kappa_1} g \approx  &  \mathscr{F}^{-1}_N \mathscr{D}_1 \mathscr{F}_N(g)\,,\quad \frac{\partial}{\partial \kappa_2} g \approx    \mathscr{F}^{-1}_N \mathscr{D}_2 \mathscr{F}_N(g)\,.
\end{align}
Since the approximation $\hat{g}$ in (\ref{eq:dft}) converges exponentially, so are those in (\ref{eq:ddftg}). 

\subsection{Fourth-order Runge-Kutta method \label{sec:rk4}}
A standard fourth-order Runge-Kutta method (see  \cite{dahlquist2003numerical}, for example)  solves the following initial value problem 
\begin{equation*}
	y'(t) = f(t,y)\,,\quad y(t_0) = y_0\,,
\end{equation*}
defined for $t\in [t_0,t+L]$ via the formulas
\begin{align*}
&	t_{i+1} = t_i + h \\
&	k_1 = h f(t_i,y_i)\,,\quad	k_2 = h f(t_i +\frac{1}{2}h, y_i + \frac{1}{2}k_1)\,,\\
&	k_3 = h f(t_i +\frac{1}{2}h, y_i + \frac{1}{2}k_2)\,,\quad k_4 = h f(t_i + h, y_i + k_3)\,,\\
&   y(t_{i+1})= y(t_i) +\frac{1}{6} (k_1 +2 k_2 +  2k_3 +  k_4)\,,
\end{align*}
with $i=0,1,\ldots,N$ and  $h=\frac{L}{N}$\,. For a smooth $f$, the global truncation error is $O(h^4)$. More explicitly, the approximation $y(t,h)$ at a fixed point $t$ has an expansion of the form
\begin{align}
	y(t,h) = y(t) + c_4(t) h^4 + c_5(t) h^5 + O(h^6)\,.
\end{align}
Applying the Richardson extrapolation via the formulas
\begin{align}
\begin{split}
	&\Delta_1(t) = \frac{1}{15}(16y(t,h/2) - y(t,h)), 	\quad \Delta_2(t) = \frac{1}{15}(16y(t,h/4) - y(t,h/2))\,,\\
	& \Delta_3(t) = \frac{1}{31}(32\Delta_2(t)- \Delta_1(t))\,,
\end{split}
	\label{eq:richard}
\end{align}
yields an approximation $\Delta_3(t)$ with $O(h^6)$ global truncation error.

\section{Analytic apparatus \label{sec:ap}}
This section introduces the analytic apparatus for constructing optimal Wannier functions in Section\,\ref{sec:method1}. In Section\,\ref{sec:para}, we define the parallel transport equation for assigning eigenvectors. In Section\,\ref{sec:bc}, we introduce the Berry connection and curvature. Moreover,
we relate the integrability of Pfaffian systems arising from parallel transport to the Berry curvature. Section\,\ref{sec:gt} introduces gauge transformations. Section\,\ref{sec:chern} introduces the first Chern number as the topological obstruction to exponentially localized Wannier functions. Section\,\ref{sec:opt} contains the formulas for the variance of Wannier functions and the optimal gauge choice. 

Consider the eigenvalue problem defined by some analytic family of matrices $H:D^* \rightarrow \C^{n\times n}$ introduced in Section\,\ref{sec:tb} by the formula
\begin{align}
	H(\vct{k}) \vct{u}(\vct{k}) = E(\vct{k}) \vct{u}(\vct{k})\,,\quad \vct{k}\in D^*\,,
	\label{eq:heig3}
\end{align}
where $\vct{u}$ and $E$ are the family of eigenvectors and eigenvalues. We assume that $E$ is never degenerate. The family of matrices $H$ is $\Lambda^*$-periodic and analytic on $D^*$ (see (\ref{eq:hper2})) and is assumed to satisfy the condition (\ref{eq:anac}). By applying Theorem\,\ref{thm:eigana} to (\ref{eq:heig3}), we conclude the both the eigenvalue $E$ and the projection $P=\vct{u}\vct{u}^*$ are $\Lambda^*$-periodic and analytic on $D^*$\,. More explicitly, the projector $P$ satisfies
\begin{align}
	P(\vct{k})= 	P(\vct{k} + \vct{G}) \,,\quad \vct{k}\in D^*\,, \vct{G}\in \Lambda^*\,.
	\label{eq:pper}
\end{align}


\subsection{Parallel transport equation \label{sec:para}}
In this section, we apply results in Section\,\ref{sec:eig} to derive the parallel transport equation to assign eigenvector $\vct{u}$ in (\ref{eq:heig3}) along curves in $D^*$.

Suppose that $\gamma: [0,1]\rightarrow D^*$ is an analytic curve with $\gamma(0) = \vct{k}^0$. From Kato's construction in Theorem\,\ref{thm:katoode}, we can assign eigenvectors analytically as a function of $\vct{k}$ along $\gamma\subset D^*$ by solving the following ODE 
\begin{align}
	\frac{\D \vct{u}(\gamma(s))}{\D s} =\frac{\D P(\gamma(s))}{\D s}  \vct{u}(\gamma(s))\,,
	\label{eq:oden}
\end{align}
subject to the initial condition
\begin{align}
	\vct{u}(\vct{k}^0) = \vct{u}^0\,,
			\label{eq:icondn}
\end{align}
where $\vct{u}_0$ is the eigenvector of $H(\vct{k}_0)$. We observe that 
\begin{align}
\vct{u}^*\frac{\D P}{\D s} \vct{u} = \vct{u}^*P\frac{\D P}{\D s}P \vct{u}	
\end{align}
since $\vct{u}$ is an eigenvector by Theorem\,\ref{thm:katoode}. Due to (\ref{eq:piden2}), we always have
\begin{align}
\vct{u}^*(\gamma(s))\frac{\D \vct{u}(\gamma(s))}{\D s} =0
		\label{eq:para}
\end{align}
along $\gamma$. Hence the change of $\vct{u}$ is always orthogonal to $\vct{u}$ and (\ref{eq:oden}) is referred to as the parallel-transport equation. Moreover, the condition (\ref{eq:para}) implies that the normalization is unchanged:
\begin{align}
	\norm{\vct{u}(\vct{k})} = 1\,,\quad \mbox{for any $\vct{k}$ in $\gamma$.}
\end{align}

The apparent difficulty in extending such solutions along curves  (families of one-dimensional assignments) to be a globally smooth two-dimensional assignment of $\vct{u}$ is as follows. Consider two curves passing through some $\vct{k}\in D^*$ with the tangent vector in two different directions. If we choose the directions to be in the
$\vct{e}_x$ and $\vct{e}_y$ (see (\ref{eq:xydir})), we obtain a Pfaffian system (see Section\,\ref{sec:pfaf}) given by the formulas
\begin{align}
		\frac{ \partial \vct{u}(\vct{k})}{\partial {k_x}} =\frac{\partial P(\vct{k})}{\partial {k_x}}  \vct{u}(\vct{k})\,,\quad		\frac{ \partial \vct{u}(\vct{k})}{\partial {k_y}} =\frac{\partial P(\vct{k})}{\partial {k_y}}  \vct{u}(\vct{k})\,.
		\label{eq:odek1}
\end{align}
Equivalently, if the directions are given by $\vct{b}_1$ and $\vct{b}_2$, we obtain a similar system
\begin{align}
	\frac{ \partial \vct{u}(\vct{k})}{\partial {\kappa_1}} =\frac{\partial P(\vct{k})}{\partial \kappa_1}  \vct{u}(\vct{k})\,,\quad		\frac{ \partial \vct{u}(\vct{k})}{\partial \kappa_2} =\frac{\partial P(\vct{k})}{\partial \kappa_2}  \vct{u}(\vct{k})\,,
	\label{eq:odekp1}
\end{align}
where the parameterization $\vct{k}=\vct{k}(\kappa_1,\kappa_2)$ is given in (\ref{eq:xydir}). (We will use (\ref{eq:odek1}) primarily for analytic purposes and (\ref{eq:odekp1}) for numerical ones.)
For any solution $\vct{u}$ obtained from solving (\ref{eq:oden}) in any subset of $D^*$, the solution $\vct{u}$ must satisfy both equations in (\ref{eq:odek1}). However, the system (\ref{eq:odek1}) is overdetermined and does not always meet the integrability condition in Theorem\,\ref{thm:pfaf}. 

The non-integrability of the Pfaffian system (\ref{eq:odek1}) can also be understood by solving (\ref{eq:oden}) with the initial condition (\ref{eq:icondn}) around a closed loop $\gamma_c$ (that starts and ends at the same point $\vct{k}^0$). The solution at the endpoint is also an eigenvector of $H(\vct{k}^0)$ of the same band (as we assume the eigenvalue is  non-degenerate), but it could differ from the initial $\vct{u}^0$ by a path-dependent phase factor $e^{\I \varphi_{\gamma_c}}$:
\begin{align}
	\vct{u}^0 \rightarrow e^{\I \varphi_{\gamma_c}}\vct{u}^0\,.
\end{align}
This is analogous to the case in classical differential geometry, where parallel transporting a vector on a surface around a closed loop results in a change in the angle between the initial and final vector when the Gaussian curvature is nonzero\,\cite{berry1989quantum}.

\subsection{Berry connection, curvature and integrability \label{sec:bc}}
This section introduces the so-called Berry connection and Berry curvature related to the system (\ref{eq:odek1}). These quantities play an important role in defining the assignment of $\vct{u}$ in $D^*$ and the localization of its corresponding Wannier functions. 
The approach here is based on standard ideas in differential geometry but is somewhat unconventional. The standard definitions of the Berry connection and curvature in physics literature are reproduced.

Let $U$ be a subset of $D^*$ that contains the curve $\gamma$ in (\ref{eq:oden}). Suppose that we define a continuously differentiable vector field $\vct{A}:U \rightarrow \R^2$ by the formula
\begin{align}
	\vct{A}(\vct{k}) = (A_x(\vct{k}), A_y(\vct{k}))\,.
	\label{eq:avec}
\end{align}
We modify (\ref{eq:odek1}) by adding an extra term so that we obtain a new Pfaffian system, given by the formulas
\begin{align}
		\frac{ \partial\tilde{\vct{u}}}{\partial {k_x}}= \frac{\partial P}{\partial {k_x} } \tilde{\vct{u}}  - \I A_x \tilde{\vct{u}}\,,\quad 
				\frac{ \partial\tilde{\vct{u}}}{\partial {k_y}}= \frac{\partial P}{\partial {k_y} } \tilde{\vct{u}}  - \I A_y \tilde{\vct{u}}\,.
		\label{eq:odekn1}
\end{align}
When (\ref{eq:odek1}) and (\ref{eq:odekn1}) are solved on the same curve $\gamma$ subject to the same initial condition (\ref{eq:icondn}),
we observe that  the solution $\vct{u}(\vct{k})$ to (\ref{eq:odek1}) only differs from $\tilde{\vct{u}}(\vct{k})$ to (\ref{eq:odekn1}) by a phase factor $e^{-\I\varphi_{\gamma}(\vct{k})}$ at any $\vct{k}$ in $\gamma$\,, where $\varphi_{\gamma}$ is given by the line integral of $\vct{A}$ from $\vct{k}^0$ to $\vct{k}$ along $\gamma$\,:
\begin{align}
	\varphi_{\gamma}(\vct{k}) = \int_{\vct{k}^0}^{\vct{k}} \vct{A}\cdot \D \vct{l}\,.
\end{align}
The vector field $\vct{A}$  and the phase $\varphi_{\gamma}$ are referred to as the Berry connection and the Berry phase\,\cite{berry1984quantal, nakahara2018geometry}. Due to (\ref{eq:para}), $A_x, A_y$ are also given by the formulas
\begin{align}
	A_x = \I \tilde{\vct{u}}^* \frac{\partial \tilde{\vct{u}}}{\partial {k_x}}\,,\quad A_y = \I \tilde{\vct{u}}^* \frac{\partial \tilde{\vct{u}}}{\partial {k_y}}\,,
	\label{eq:phybcon}
\end{align}
which are the definition in physics literature\,\cite{berry1984quantal,nakahara2018geometry}. Similarly, the system (\ref{eq:odek1}) becomes
\begin{align}
	\frac{ \partial\tilde{\vct{u}}}{\partial{\kappa_1}}= \frac{\partial P}{\partial{\kappa_1} } \tilde{\vct{u}}  - \I A_{1} \tilde{\vct{u}}\,,\quad 
	\frac{ \partial\tilde{\vct{u}} }{\partial{\kappa_2}}= \frac{\partial P}{\partial{\kappa_2} } \tilde{\vct{u}} - \I A_2 \tilde{\vct{u}}\,,
	\label{eq:odekk}
\end{align}
where $A_1$ and $A_2$ are respectively the components of $\vct{A}$ in (\ref{eq:avec}) in the $\vct{b}_1$ and $\vct{b}_2$ direction 
\begin{align}
	A_1 = \vct{b}_1\cdot\vct{A}\,,\quad 	A_2 = \vct{b}_2\cdot\vct{A}\,.
\end{align}

Next, we show that the new system (\ref{eq:odekn1}) can be chosen to be integrable locally by choosing $A_x$ and $A_y$ appropriately. Consider a point $\vct{k}$ in a simply-connected subset $U\subset D^*$. Theorem\,\ref{thm:pfaf} states that the Pfaffian system (\ref{eq:odekn1}) is integrable (so it defines a function $\tilde{\vct{u}}$ defined on $U$) if and only if the mixed derivatives are commutative:
\begin{align}
	\frac{\partial}{\partial {k_y}}\frac{\partial}{\partial {k_x}} \tilde{\vct{u}}(\vct{k}) = 	\frac{\partial}{\partial {k_x}}\frac{\partial}{\partial {k_y}}  \tilde{\vct{u}}(\vct{k})\,, \quad \vct{k}\in U\,.
	\label{eq:mix}
\end{align}
By differentiating the two equations in (\ref{eq:odekn1}) followed by simple manipulations, the condition (\ref{eq:mix}) is equivalent to 
\begin{align}
\bb{\frac{\partial A_{y}}{\partial {k_x}} - \frac{\partial A_{x}}{\partial {k_y}}}\tilde{\vct{u}} = \I \sb{\frac{\partial P}{\partial {k_x}},\frac{\partial P}{\partial {k_y}}} \tilde{\vct{u}}\,,
	\label{eq:comp}
\end{align}
where $[\frac{\partial P}{\partial {k_x}},\frac{\partial P}{\partial {k_y}}]$ is the commutator between $\frac{\partial P}{\partial {k_x}}$ and $\frac{\partial P}{\partial {k_y}}$ given by the formula
\begin{align}
	\sb{\frac{\partial P}{\partial {k_x}},\frac{\partial P}{\partial {k_y}}} = \frac{\partial P}{\partial {k_x}}\frac{\partial P}{\partial {k_y}} - \frac{\partial P}{\partial {k_y}}\frac{\partial P}{\partial {k_x}}\,.
\end{align}
We observe that $ \frac{\partial P}{\partial {k_x}}\frac{\partial P}{\partial {k_y}}\tilde{\vct{u}}$ is in the span of $\tilde{\vct{u}}$ since
\begin{align}
	P \frac{\partial P}{\partial {k_x}}\frac{\partial P}{\partial {k_y}}\tilde{\vct{u}} = (-\frac{\partial P}{\partial {k_x}} P + \frac{\partial P}{\partial_{k_x}}) \frac{\partial P}{\partial {k_y}}\tilde{\vct{u}} = \frac{\partial P}{\partial {k_x}}\frac{\partial P}{\partial {k_y}}\tilde{\vct{u}}\,,
\end{align}
where the first equality is obtained by applying (\ref{eq:piden1})
and the second equality comes from (\ref{eq:piden2}).
Similar argument can be applied to show that $ \frac{\partial P}{\partial_{k_y}}\frac{\partial P}{\partial_{k_x}}\tilde{\vct{u}}$ is also in the span of $\tilde{\vct{u}}$. As a result, by projecting the condition (\ref{eq:comp}) onto $\tilde{\vct{u}}$, we obtain the following  condition equivalent to (\ref{eq:comp}) given by the formula
\begin{align}
	\frac{\partial A_{y}}{\partial {k_x}} - \frac{\partial A_{x}}{\partial {k_y}}= \I \tilde{\vct{u}}^*\sb{\frac{\partial P}{\partial {k_x}},\frac{\partial P}{\partial {k_y}}} \tilde{\vct{u}}\,.
\end{align}
By the cyclic property of the matrix trace and $P = \tilde{\vct{u}}\tilde{\vct{u}}^*$, we obtain
\begin{align}
	\frac{\partial A_{y}}{\partial {k_x}} - \frac{\partial A_{x}}{\partial {k_y}} = \I \Tr\bb{P{\sb{\frac{\partial P}{\partial {k_x}},\frac{\partial P}{\partial_{k_y}}}}}\,.
		\label{eq:comp2}
\end{align}
We observe that the right-hand side of (\ref{eq:comp2}) is a real quantity. It is often referred to as the Berry curvature of the band\,\cite{brouder2007exponential}. We denote the Berry curvature by $\Omega_{x y}$, so we have
\begin{align}
	\Omega_{x y} = \I \Tr\bb{P{\sb{\frac{\partial P}{\partial {k_x}},\frac{\partial P}{\partial {k_y}}}}}\,.
			\label{eq:bcurv}	
\end{align}
The quantity on the left of (\ref{eq:comp2}) is the typical definition of the Berry curvature in the physics literature. As long as the Berry connection $(A_x,A_y)$ is chosen to satisfy the condition (\ref{eq:comp2}) in $U$, the Pfaffian system (\ref{eq:odekn1}) gives a well-defined assignment of $\tilde{\vct{u}}$  in $U$. 
We summarize this observation in the following.

\begin{theorem}
	Suppose $U$ is a simply connected subset of $\R^2$ and the system (\ref{eq:odekn1}) satisfies an initial condition of the form (\ref{eq:icondn}) at some $\vct{k}_0\in U$. Suppose further that the components of the Berry connection $\vct{A}$ in (\ref{eq:odekn1}) are chosen to satisfy the integrability condition (\ref{eq:comp2}) in $U$. Then the Pfaffian system  (\ref{eq:odekn1}) defines a unique solution $ \tilde{\vct{u}}$ on $U$ by Theorem\,\ref{thm:pfaf}.
	\label{thm:intecond}
\end{theorem}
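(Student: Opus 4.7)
The plan is to observe that nearly all the required computation has already been carried out in the paragraphs preceding the theorem, so the proof amounts to assembling those pieces and invoking Theorem \ref{thm:pfaf}. First I would recall that Theorem \ref{thm:pfaf} reduces the question to two things: the simple connectedness of the domain, which is given by hypothesis, and the equality of mixed partial derivatives of $\tilde{\vct{u}}$ on $U$. Thus the entire task is to show that the assumed condition (\ref{eq:comp2}) on the Berry connection $\vct{A}$ forces $\partial_{k_y}\partial_{k_x}\tilde{\vct{u}} = \partial_{k_x}\partial_{k_y}\tilde{\vct{u}}$ pointwise in $U$.

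To do this, I would differentiate the first equation of (\ref{eq:odekn1}) in $k_y$ and the second in $k_x$, substitute the equations of (\ref{eq:odekn1}) back in to eliminate first derivatives of $\tilde{\vct{u}}$, and take the difference. After cancellation of the symmetric second-derivative-of-$P$ terms, this yields the vector identity (\ref{eq:comp}), rewritten here for clarity as
\begin{equation*}
\left(\frac{\partial A_{y}}{\partial k_x} - \frac{\partial A_{x}}{\partial k_y}\right)\tilde{\vct{u}} \;=\; \I\left[\frac{\partial P}{\partial k_x},\frac{\partial P}{\partial k_y}\right]\tilde{\vct{u}}.
\end{equation*}
At this point the scalar integrability condition (\ref{eq:comp2}) only says that the $\tilde{\vct{u}}$-component of this identity holds, so I need to argue that the full vector identity follows. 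This is where the projector identities (\ref{eq:piden1}) and (\ref{eq:piden2}) intervene: as already shown in the text, each of $\partial_{k_x}P\,\partial_{k_y}P\,\tilde{\vct{u}}$ and $\partial_{k_y}P\,\partial_{k_x}P\,\tilde{\vct{u}}$ is fixed by $P$ and therefore lies in the one-dimensional range of $P=\tilde{\vct{u}}\tilde{\vct{u}}^*$. Consequently, both sides of the displayed vector identity are scalar multiples of $\tilde{\vct{u}}$, and equality of their $\tilde{\vct{u}}$-coefficients—which is exactly (\ref{eq:comp2})—implies the full vector equation.

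With the commutativity of the mixed partials established on $U$, and $U$ simply connected by assumption, Theorem \ref{thm:pfaf} applies directly to the Pfaffian system (\ref{eq:odekn1}) with the prescribed initial condition at $\vct{k}_0\in U$, yielding existence and uniqueness of $\tilde{\vct{u}}$ on $U$. The main obstacle, if any, is the step passing from the scalar condition (\ref{eq:comp2}) back to the vector equation (\ref{eq:comp}); but this is handled cleanly by the observation above that both sides of (\ref{eq:comp}) are forced to lie in $\mathrm{span}(\tilde{\vct{u}})$, so a single scalar equation suffices.
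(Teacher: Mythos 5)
Your proof is correct and takes essentially the same route as the paper: the paper also derives (\ref{eq:comp}) by differentiating the Pfaffian system, uses the projector identities (\ref{eq:piden1})–(\ref{eq:piden2}) to show both $\frac{\partial P}{\partial k_x}\frac{\partial P}{\partial k_y}\tilde{\vct{u}}$ and $\frac{\partial P}{\partial k_y}\frac{\partial P}{\partial k_x}\tilde{\vct{u}}$ lie in $\mathrm{span}(\tilde{\vct{u}})$, and then observes that projecting (\ref{eq:comp}) onto $\tilde{\vct{u}}$ (yielding (\ref{eq:comp2})) is therefore equivalent to the full vector identity. You make the equivalence of the scalar and vector conditions slightly more explicit as the crux of the argument, but the decomposition, the key lemma, and the appeal to Theorem \ref{thm:pfaf} on the simply connected domain are all identical to what the paper does.
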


	It should be observed that Theorem\,\ref{thm:intecond} only ensures local integrability in a simply connected region in $D^*$. As we will see in Section\,\ref{sec:method1}, extending it to a global one in $D^*$ may encounter topological obstructions since $D^*$ as a torus is not simply connected. Moreover, even when there is no such obstruction, the condition (\ref{eq:comp2}) is not sufficient to ensure a $\Lambda^*$-periodic assignment of $\tilde{\vct{u}}$ in $D^*$ since (\ref{eq:comp2}) only specifies the divergence-free part  (see Theorem\,\ref{thm:vecdecom}) of the Berry connection.

\subsection{Gauge transformations \label{sec:gt}}

Suppose that $\tilde{\vct{u}}$ satisfies (\ref{eq:odekn1}) defined by some Berry connection $\vct{A}$ satisfying the condition (\ref{eq:comp2}) in a simply-connected domain $U\subseteq D^*$. Given some initial condition, by Theorem\,\ref{thm:intecond}, we have a well-defined assignment of $\tilde{\vct{u}}$ in $U$. Thus, we can multiply $\tilde{\vct{u}}$ by a phase factor $e^{-\I\varphi}$ with a continuously differentiable function $\varphi: U\rightarrow \R$. We define the gauge transformation of $\dbtilde{\vct{u}}$ by the formula
\begin{align}
	\dbtilde{\vct{u}}(\vct{k}) = e^{-\I\varphi(\vct{k})}\tilde{\vct{u}} (\vct{k})\,, \quad \vct{k}\in U.
	\label{eq:gauget}
\end{align}
We observe that the new $\dbtilde{\vct{u}}$ satisfies the same system of differential equations as (\ref{eq:odekn1}) but with a new Berry connection $\tilde{\vct{A}}=(\tilde{A}_x,\tilde{A}_y)$:
\begin{align}
	\frac{ \partial\dbtilde{\vct{u}}}{\partial {k_x}}= \frac{\partial P}{\partial {k_x} } \dbtilde{\vct{u}}  - \I \tilde{A}_x \dbtilde{\vct{u}}\,,\quad 
	\frac{ \partial\dbtilde{\vct{u}}}{\partial {k_y}}= \frac{\partial P}{\partial {k_y} } \dbtilde{\vct{u}}  - \I \tilde{A}_y \dbtilde{\vct{u}}\,.
	\label{eq:odekn1new}
\end{align}
where the components of $\tilde{\vct{A}}$ are given by the formulas
\begin{align}
	\tilde{A}_x = A_x + \frac{\partial \varphi}{\partial k_x}\,,\quad \tilde{A}_y = A_y + \frac{\partial \varphi}{\partial k_y}\,.
	\label{eq:bcontran}
\end{align}
We observe that gauge transformations do not affect the divergence-free part of $\vct{A}$. Namely, we have
\begin{align}
	\frac{\partial \tilde{A}_{y}}{\partial k_x} - \frac{\partial \tilde{A}_{x}}{\partial k_y} = \frac{\partial A_{y}}{\partial k_x} - \frac{\partial A_{x}}{\partial k_y} = \Omega_{xy}\,,
	\label{eq:gind}
\end{align}
where the Berry curvature $\Omega_{xy}$ is defined in (\ref{eq:bcurv}).

\begin{remark}
	The Berry curvature $\Omega_{x y}$ given by (\ref{eq:bcurv}) only depends on the projector, thus is independent of the gauge choice of the eigenvector $\tilde{\vct{u}}$ in (\ref{eq:gauget}). As a result, in the physics literature, the Berry curvature is often computed via the following formula\,\cite{simon1983holonomy}
	\begin{align}
		\Omega_{xy} = \frac{\partial A_{y}}{\partial {k_x}} - \frac{\partial A_{x}}{\partial {k_y}} = \I \frac{\partial \vct{u}}{\partial {k_x}}^* \frac{\partial \vct{u}}{\partial {k_y}} - \I \frac{\partial \vct{u}}{\partial {k_y}}^* \frac{\partial \vct{u}}{\partial {k_x}}\,,
	\end{align}
where the derivatives $\frac{\partial \vct{u}}{\partial_{k_x}}$ and $\frac{\partial \vct{u}}{\partial_{k_y}}$ are given in (\ref{eq:odek1}). Then the perturbation formula (\ref{eq:evecpertsvd}) can be applied to compute (\ref{eq:odek1}) via the formulas
\begin{align}
			\frac{\partial \vct{u}}{\partial k_x} = - (H - E)^\dagger \frac{\partial H}{\partial k_x} \vct{u}\,, \quad 			\frac{\partial \vct{u}}{\partial k_y} = - (H - E)^\dagger \frac{\partial H}{\partial k_y} \vct{u}\,.
\end{align}
\label{rmk:bc}
\end{remark}

\subsection{First Chern number\label{sec:chern}}
In this section, we introduce the first Chern number associated with a non-degenerate band and various formulas for computing it. The first Chern number is the only obstruction to the construction of exponentially localized Wannier functions\,\cite{brouder2007exponential}.

Consider the eigenvalue problem in (\ref{eq:heig3}). Analogous to the Gauss-Bonnet formula in classical differential geometry,
the first Chern number of a band is defined by the integral of the Berry curvature (see (\ref{eq:bcurv})) over $D^*$\,\cite{simon1983holonomy,nakahara2018geometry}
\begin{equation}
	C_1 = \frac{1}{2\pi}\int_{D^*}\D \vct{k}\, \Omega_{x y}(\vct{k}) = \frac{1}{2\pi}\int_{D^*}\D k_x \D k_y\, \Omega_{x y}(\vct{k})\,,
	\label{eq:chern}
\end{equation}
and $C_1$ only takes integer values, i.e. 
$	C_1 \in \Z\,$.
Since $\Omega_{xy}$ is independent of the gauge choice (see (\ref{eq:gind})), so is the first Chern number $C_1$. Thus $C_1$ is an intrinsic property of the matrix $H$.

The following theorem states that the first Chern number being nonzero is the topological obstruction to exponentially localized Wannier functions. It is first observed in \cite{thouless1984wannier} and later formalized in \cite{panati2007triviality, brouder2007exponential}. This fact will emerge in the construction of Wannier functions in Section\,\ref{sec:method1}.
\begin{theorem}
There exists an analytic and $\Lambda^*$-periodic assignment of $\vct{u}$ on $D^*$ (so that the corresponding Wannier function is exponentially localized) if and only if the first Chern number is zero, i.e.
$
	C_1 = 0\,.
$
\label{thm:chern}
\end{theorem}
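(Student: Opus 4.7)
The plan is to prove both directions separately. The necessity direction, that an analytic $\Lambda^*$-periodic assignment forces $C_1 = 0$, is essentially an integration-by-parts computation on the torus. The sufficiency direction, that $C_1 = 0$ suffices for such an assignment to exist, is the constructive heart of the theorem and will make crucial use of Theorem \ref{thm:poiss} together with the parallel transport machinery of Section \ref{sec:bc}.

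For necessity, I would assume an analytic $\Lambda^*$-periodic $\vct{u}$ is given and define the Berry connection by the physics formula (\ref{eq:phybcon}): $A_x = \I\vct{u}^*\partial_{k_x}\vct{u}$ and $A_y = \I\vct{u}^*\partial_{k_y}\vct{u}$. Since $\vct{u}$ is analytic and $\Lambda^*$-periodic, so are $A_x$ and $A_y$. By Remark \ref{rmk:bc}, $\Omega_{xy} = \partial_{k_x}A_y - \partial_{k_y}A_x$, so applying (\ref{eq:kxy0}) to each term yields $\int_{D^*}\Omega_{xy}\,\D\vct{k} = 0$, hence $C_1 = 0$.

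For sufficiency, assume $C_1 = 0$, and build $\vct{u}$ in three steps. First, since $\int_{D^*}\Omega_{xy}\,\D\vct{k} = 0$, Theorem \ref{thm:poiss} produces an analytic $\Lambda^*$-periodic $F$ with $-(\partial_{k_x}^2 + \partial_{k_y}^2)F = \Omega_{xy}$; setting $\vct{A} = (\partial_{k_y}F, -\partial_{k_x}F)$ gives an analytic $\Lambda^*$-periodic Berry connection whose curl is exactly $\Omega_{xy}$, so the integrability condition (\ref{eq:comp2}) holds pointwise on $D^*$. Second, I lift the Pfaffian system (\ref{eq:odekn1}) to the universal cover $\R^2$ of $D^*$; since $\R^2$ is simply connected, Theorem \ref{thm:intecond} applied to a simply-connected exhaustion yields a unique analytic $\tilde{\vct{u}}:\R^2\to\C^n$ for any chosen initial condition.

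The main obstacle is step three: showing that a single explicit gauge transformation forces $\tilde{\vct{u}}$ to descend to a $\Lambda^*$-periodic function on $D^*$. Because $H$, $P$, and $\vct{A}$ are all $\Lambda^*$-periodic, for any $\vct{G}\in\Lambda^*$ the shifted function $\tilde{\vct{u}}(\vct{k}+\vct{G})$ satisfies the same Pfaffian system as $\tilde{\vct{u}}(\vct{k})$, and since both are normalized eigenvector-valued they differ by a scalar $c(\vct{k},\vct{G})$. Differentiating $\tilde{\vct{u}}(\vct{k}+\vct{G}) = c(\vct{k},\vct{G})\tilde{\vct{u}}(\vct{k})$ and substituting the system shows that $\partial_{k_x}c = \partial_{k_y}c = 0$, so $c$ is a constant depending only on $\vct{G}$; this yields holonomy phases $e^{\I\theta_1},e^{\I\theta_2}$ for the two fundamental cycles generated by $\vct{b}_1,\vct{b}_2$. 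Applying the linear gauge transformation
\begin{equation*}
\dbtilde{\vct{u}}(\vct{k}) = e^{-\I(\theta_1\kappa_1 + \theta_2\kappa_2)}\tilde{\vct{u}}(\vct{k}), \qquad \vct{k}=\vct{k}(\kappa_1,\kappa_2),
\end{equation*}
a direct check using $\tilde{\vct{u}}(\kappa_1+1,\kappa_2) = e^{\I\theta_1}\tilde{\vct{u}}(\kappa_1,\kappa_2)$ and its $\kappa_2$-analogue shows the two non-periodicities cancel, so $\dbtilde{\vct{u}}$ is $\Lambda^*$-periodic and remains analytic as a product of analytic functions. The hardest technical point will be the cocycle argument verifying that $c$ depends only on $\vct{G}$ (not on $\vct{k}$) and that a single linear phase simultaneously kills both fundamental holonomies, which is exactly where the algebraic consequences of the integrability condition (\ref{eq:comp2}) combined with the $\Lambda^*$-periodicity of $\vct{A}$ enter.
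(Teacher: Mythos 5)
Your proof is correct, but it does not follow the paper's main line of attack (the sweep construction of Section \ref{sec:method1}); instead it closely parallels the alternative construction the paper relegates to Appendix \ref{sec:method2}. For necessity, your integration-by-parts argument using (\ref{eq:kxy0}) is exactly the observation the paper makes after (\ref{eq:lineint}) (``periodicity of $A_1, A_2$ will imply $C_1=0$''). For sufficiency, your three steps --- solve $-(\partial^2_{k_x}+\partial^2_{k_y})F=\Omega_{xy}$ via Theorem \ref{thm:poiss} (solvability $\Leftrightarrow C_1=0$), take the divergence-free $\vct{A}=(\partial_{k_y}F,-\partial_{k_x}F)$ so the Pfaffian system integrates on $\R^2$, then kill the two holonomy constants with a single linear phase --- are precisely Stages 1--3 of Appendix \ref{sec:method2}, where the paper computes the harmonic component $(h_1,h_2)$ playing the role of your $(\theta_1,\theta_2)$ and applies the same gauge $e^{-\I(h_1\kappa_1+h_2\kappa_2)}$. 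Your cocycle derivation that $c(\vct{k},\vct{G})$ is $\vct{k}$-independent (by differentiating $\tilde{\vct{u}}(\vct{k}+\vct{G})=c\,\tilde{\vct{u}}(\vct{k})$ and using $\Lambda^*$-periodicity of $P$ and $\vct{A}$) is an equivalent but slightly cleaner phrasing of the paper's rectangle argument around Figure \ref{fig:path}.

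The paper's preferred route in Section \ref{sec:method1} is genuinely different: it never precomputes $\Omega_{xy}$ or $F$. Instead it parallel-transports along $\gamma_0$ and the family $\gamma_{\kappa_1}$, patches up periodicity with the explicit gauge corrections $e^{-\I\varphi_1(\kappa_1+1/2)}$ and $e^{-\I\varphi_2(\kappa_1)(\kappa_2+1/2)}$, and makes $C_1$ visible as the winding number of $z(\kappa_1)$ via the residue formula (\ref{eq:reschern}) (so a continuous branch of $\log z$ exists iff $C_1=0$, Lemma \ref{lem:topo}). The tradeoff is deliberate: the sweep construction only needs one-dimensional ODE solves, extends to multi-band (where the Berry curvature is no longer gauge-invariant so your Step 1 fails), and is the one the paper implements numerically --- at the cost of producing an assignment that is exponentially localized but not yet optimal, requiring the extra divergence-elimination of Stage 3. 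Your approach produces the optimal, divergence-free gauge in one shot, but requires computing $\Omega_{xy}$ pointwise in advance and, as the paper notes in Remark \ref{rmk:alt}, exploits a gauge-invariance special to the single-band case.

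One point you flag as the ``hardest technical point'' --- that a single linear phase kills both fundamental holonomies --- is actually immediate once $c$ is shown to be $\vct{k}$-independent: any function $c:\Lambda^*\to U(1)$ satisfying $c(\vct{G}_1+\vct{G}_2)=c(\vct{G}_1)c(\vct{G}_2)$ (which follows trivially from the defining relation) is determined by $c(\vct{b}_1)=e^{\I\theta_1}$ and $c(\vct{b}_2)=e^{\I\theta_2}$, and $e^{-\I(\theta_1\kappa_1+\theta_2\kappa_2)}$ visibly transforms this cocycle to the trivial one. No further use of the integrability condition is needed there.
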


The integral in the definition (\ref{eq:chern}) is parameterized by variables $k_x,k_y$. 
When the $\kappa_1, \kappa_2$ parameterization in (\ref{eq:xydir}) is used, the formula in (\ref{eq:comp2}) is modified accordingly as 
\begin{align}
	\frac{\partial A_{2}}{\partial {\kappa_1}} - \frac{\partial A_{1}}{\partial {\kappa_2}} = \Omega_{12}\,,
		\label{eq:comp3}
\end{align}
where the Berry curvature $\Omega_{12}$ in the $\kappa_1, \kappa_2$ parameterization is given by the formula
\begin{align}
	\Omega_{12} = \I \Tr\bb{P{\sb{\frac{\partial P}{\partial {\kappa_1}},\frac{\partial P}{\partial {\kappa_2}}}}}\,.
\end{align}
By applying the change of variables formulas (\ref{eq:diffc}-\ref{eq:intvarc}), we have the following identity for the two parameterizations: 
\begin{align}
	C_1 = \frac{1}{2\pi}\int_{D^*}\D k_x \D k_y\, \Omega_{x y}(\vct{k}) = \frac{1}{2\pi}\int_{T}\D \kappa_1 \D \kappa_2\, \Omega_{1 2}(\vct{k}(\kappa_1,\kappa_2))\,.
	\label{eq:chernt}
\end{align}
By integrating (\ref{eq:comp3}) over $T$ and applying (\ref{eq:chernt}), we obtain the following formula
\begin{align}
	\int_{T}\D\kappa_1 \D\kappa_2 \bb{\frac{\partial A_{2}}{\partial {\kappa_1}} - \frac{\partial A_{1}}{\partial {\kappa_2}}} = 2\pi C_1\,.
	\label{eq:tint}
\end{align}
Applying Green's theorem to the left-hand side gives the following formula for computing $C_1$ by a line integral over the boundary of $T$:
\begin{align}
\begin{split}
	2\pi C_1 = & \int_{-\frac{1}{2}}^{\frac{1}{2}} \D \kappa_1\, (A_1(\vct{k}(\kappa_1,-1/2)) - A_1(\vct{k}(\kappa_1,1/2)))\\ 	
	&   + \int_{-\frac{1}{2}}^{\frac{1}{2}} \D \kappa_2\, (A_2(\vct{k}(1/2,\kappa_2)) - A_2(\vct{k}(-1/2,\kappa_2)))\,.	 
\end{split}
	\label{eq:lineint}
\end{align}
It should be observed that (\ref{eq:lineint}) does not require $A_1, A_2$ to be periodic on $T$; only continuity in their derivatives is needed in order to apply the Green's theorem to (\ref{eq:tint}). In fact, periodicity of $A_1, A_2$ will imply $C_1 = 0$ automatically by (\ref{eq:lineint}).

\begin{remark}
When $H$ has time-reversal symmetry (see (\ref{eq:trs})), by the realty of the Berry curvature $\Omega_{xy}$ and the symmetry of the projector in (\ref{eq:ptrs}) 
, we have
\begin{align}
	\Omega_{xy}(-\vct{k}) = - \Omega_{xy}(\vct{k})\,, \quad \vct{k}\in D^*\,.
		\label{eq:bcurvsym}
\end{align}	
In the $\kappa_1, \kappa_2$ parameterization, this symmetry becomes
\begin{align}
	\Omega_{12}(\vct{k}(-\kappa_1,-\kappa_2)) = - \Omega_{12}(\vct{k}(\kappa_1,\kappa_2))\,, \quad (\kappa_1,\kappa_2)\in T\,.
	\label{eq:bcurv12sym}
\end{align}	
Thus $C_1$ is automatically zero by (\ref{eq:chern}). As a result, time-reversal symmetry ensures that no topological obstruction will be encountered.
\label{rmk:trsp}
\end{remark}

\subsection{Gauge choice and Wannier localization\label{sec:opt} }
In this section, we introduce formulas for the moment functions (\ref{eq:Ru}) and (\ref{eq:R2u}) in terms of the Berry connection $\vct{A}$ introduced in Section\,\ref{sec:bc} and the optimal choice of the Berry connection in terms of the variance of the Wannier functions. We show that optimal Wannier functions correspond to divergence-free Berry connections and are unique up to lattice vector translations.
Results of this form are known and can be found in \,\cite{blount1962formalisms, marzari1997maximally}. 
The discussion in this section assumes that we have found a vector field $\vct{A}$ that is $\Lambda^*$-periodic and analytic on $D^*$. Furthermore, we also assume that the system (\ref{eq:odekn1})  defined by the vector field $\vct{A}$ produces a $\Lambda^*$-periodic and analytic assignment $\tilde{\vct{u}}$ on $D^*$.

\subsubsection{Variance formulas for Wannier functions}

We apply Theorem\,\ref{thm:vecdecom} to the vector field $\vct{A}$ to yield the formula
\begin{align}
	\vct{A} = (A_x, A_y) = -\bb{\frac{\partial \psi }{\partial k_x}, \frac{\partial \psi }{\partial k_y}} + \bb{\frac{\partial F }{\partial k_y}, -\frac{\partial F }{\partial k_x}} + (h_x,h_y)\,,
		\label{eq:adecom}
\end{align}
where $\psi$ and $F$ are $\Lambda^*$-periodic potentials on $D^*$ that generate the curl-free and divergence-free component respectively, and $h_x, h_y$ are constants that define the harmonic components. Obviously, by Theorem\,\ref{thm:intecond}, it is necessary that  $\vct{A}$ satisfies (\ref{eq:comp2}) in $D^*$. This is equivalent to the following:
\begin{align}
	\frac{\partial^2 F }{\partial k_x^2} + 	\frac{\partial^2 F }{\partial k_y^2} = -\Omega_{xy}\,, \quad \mbox{$F$ is $\Lambda^*$-periodic on $D^*$}\,.
		\label{eq:divless}
\end{align}
In other words, the divergence-free component of $\vct{A}$ is determined by the Berry curvature completely.


In the following lemma, we express the moment functions for the Wannier function determined by $\tilde{\vct{u}}$ in terms of quantities in (\ref{eq:adecom}). The formulas are a simple consequence of substituting  (\ref{eq:odekn1}) into (\ref{eq:Ru}) and (\ref{eq:R2u}) and using 
\begin{align}
	\tilde{\vct{u}}^* \frac{\partial P}{\partial k_x}\tilde{\vct{u}}=0\,,\quad 	\tilde{\vct{u}}^* \frac{\partial P}{\partial k_y}\tilde{\vct{u}}=0\,,
	\label{eq:para2}
\end{align}
which are consequences of (\ref{eq:para}). The details can be found in Appendix\,\ref{sec:appvar}.
\begin{lemma}
\label{lem:var}
Suppose that the Berry connection $\vct{A}$ is defined in (\ref{eq:adecom}) and its corresponding system (\ref{eq:odekn1}) defines a $\Lambda^*$-periodic and analytic assignment of $\tilde{\vct{u}}$ on $D^*$. We have the following formula for the moment functions:
\begin{align}
	\langle \vct{R} \rangle = (h_x,h_y)\,,
		\label{eq:mean2}
\end{align}
\begin{align}
	\langle \norm{\vct{R}}^2 \rangle = \frac{V_{\rm puc}}{(2\pi)^2}\int_{D^*} \D\vct{k}\,\bb{ \norm{\frac{\partial P(\vct{k})}{\partial k_x}\tilde{\vct{u}}(\vct{k})}^2 + \norm{\frac{\partial P(\vct{k})}{\partial k_y}\tilde{\vct{u}}(\vct{k})}^2  +  \norm{\vct{A}(\vct{k})}^2}\,,
	\label{eq:varr2}
\end{align}
where 
\begin{align}
\frac{V_{\rm puc}}{(2\pi)^2}\int_{D^*} \D\vct{k}\, \norm{\vct{A}(\vct{k})}^2 	&=
\frac{V_{\rm puc}}{(2\pi)^2}\int_{D^*} \D\vct{k} \bigg[  \bb{\frac{\partial \psi}{\partial k_x}}^2 + \bb{\frac{\partial \psi}{\partial k_y}}^2 \\
	& + \bb{\frac{\partial F}{\partial k_x}}^2 + \bb{\frac{\partial F}{\partial k_y}}^2 \bigg]+ h_x^2 + h_y^2  \nonumber\,.
\end{align}
Thus the variance is given by the formula
\begin{align}
	 \langle \norm{\vct{R}}^2 \rangle - \norm{\langle \vct{R} \rangle}^2  &= \frac{V_{\rm puc}}{(2\pi)^2}\int_{D^*} \D\vct{k}\,\bigg[ \norm{\frac{\partial P(\vct{k})}{\partial k_x}\tilde{\vct{u}}(\vct{k})}^2 + \norm{\frac{\partial P(\vct{k})}{\partial k_y}\tilde{\vct{u}}(\vct{k})}^2   \label{eq:varr}\\
& +\vspace{1cm} \bb{\frac{\partial \psi}{\partial k_x}}^2 + \bb{\frac{\partial \psi}{\partial k_y}}^2 + \bb{\frac{\partial F}{\partial k_x}}^2 + \bb{\frac{\partial F}{\partial k_y}}^2
\bigg] \,. \nonumber
\end{align}
\end{lemma}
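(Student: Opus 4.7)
The proof will be obtained by direct substitution of the Pfaffian system (\ref{eq:odekn1}) into the moment formulas (\ref{eq:Ru}) and (\ref{eq:R2u}), followed by an application of the Helmholtz--Hodge decomposition (\ref{eq:adecom}) of $\vct{A}$. The whole argument is essentially bookkeeping on integration by parts, with the identities (\ref{eq:para2}) playing the role of an algebraic simplifier.

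First I would handle $\langle \vct{R} \rangle$. Substituting the first equation in (\ref{eq:odekn1}) into $\tilde{\vct{u}}^* \frac{\partial \tilde{\vct{u}}}{\partial k_x}$ and using (\ref{eq:para2}) together with $\tilde{\vct{u}}^*\tilde{\vct{u}}=1$ gives $\tilde{\vct{u}}^* \frac{\partial \tilde{\vct{u}}}{\partial k_x} = -\I A_x$, and likewise in $k_y$, so (\ref{eq:Ru}) reduces to $\langle \vct{R}\rangle = \frac{V_{\rm puc}}{(2\pi)^2}\int_{D^*}\vct{A}(\vct{k})\,\D\vct{k}$. Plugging in the decomposition (\ref{eq:adecom}), the gradient and curl pieces vanish termwise under $\int_{D^*}$ by (\ref{eq:kxy0}) (both $\psi$ and $F$ are $\Lambda^*$-periodic), while the harmonic piece contributes $(h_x,h_y)$ times the volume $|D^*|=(2\pi)^2/V_{\rm puc}$. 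This gives (\ref{eq:mean2}).

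Next I would handle $\langle\norm{\vct{R}}^2\rangle$. Using (\ref{eq:odekn1}) again,
\begin{align*}
\norm{\tfrac{\partial \tilde{\vct{u}}}{\partial k_x}}^2 = \norm{\tfrac{\partial P}{\partial k_x}\tilde{\vct{u}}}^2 + A_x^2 + 2\,\mathrm{Re}\!\left(\I A_x\, \tilde{\vct{u}}^* \tfrac{\partial P}{\partial k_x} \tilde{\vct{u}}\right),
\end{align*}
and since $\partial P/\partial k_x$ is Hermitian the cross term is purely imaginary and is in any case killed by (\ref{eq:para2}); the same holds in $k_y$. Summing and integrating over $D^*$ produces (\ref{eq:varr2}).

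The remaining task is to evaluate $\int_{D^*}\norm{\vct{A}}^2$. Squaring (\ref{eq:adecom}) componentwise and summing gives the four squared partial derivatives of $\psi$ and $F$ plus $h_x^2+h_y^2$, together with two families of cross terms: harmonic--against--potential terms such as $h_x \partial_{k_x}\psi$, which integrate to zero by (\ref{eq:kxy0}), and the gradient--against--curl cross term $2(\partial_{k_y}\psi\,\partial_{k_x}F - \partial_{k_x}\psi\,\partial_{k_y}F)$, which vanishes after one integration by parts in each factor (boundary terms are killed by $\Lambda^*$-periodicity) combined with commutativity of the mixed partials of $F$. This yields the stated formula for $\int_{D^*}\norm{\vct{A}}^2$, and subtracting $\norm{\langle\vct{R}\rangle}^2 = h_x^2+h_y^2$ from $\langle\norm{\vct{R}}^2\rangle$ produces (\ref{eq:varr}). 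There is no genuine obstacle in this proof; the only step that requires any care is verifying the cancellation of the gradient--curl cross term, which is where the $L^2$-orthogonality of the Helmholtz--Hodge components on the torus actually gets used.
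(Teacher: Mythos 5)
Your proof is correct and follows the paper's own argument in Appendix~A.2 almost line by line: substitute the Pfaffian system~(\ref{eq:odekn1}) into~(\ref{eq:Ru}) and~(\ref{eq:R2u}), kill the $\psi$- and $F$-derivative integrals via~(\ref{eq:kxy0}) and the cross terms via~(\ref{eq:para2}), and then square the Helmholtz--Hodge decomposition of $\vct{A}$, where the gradient--curl cross term vanishes after integration by parts and the harmonic--potential cross terms again vanish by~(\ref{eq:kxy0}). (The sign of the cross term in your expansion of $\norm{\partial_{k_x}\tilde{\vct{u}}}^2$ should be $-2\,\mathrm{Re}(\I A_x\,\tilde{\vct{u}}^*\partial_{k_x}P\,\tilde{\vct{u}})$, but this is immaterial since the term vanishes identically by~(\ref{eq:para2}).)
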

\begin{remark}
We observe that Wannier center $\langle \vct{R} \rangle$ is determined by the harmonic components in (\ref{eq:adecom}), which is analogous to the Zak phase in the one-dimensional case\,\cite{zak1989berry,gopal2024high}. Thus $h_x$ and $h_y$ are proportional to the Zak phase in the $\vct{e}_x$ and $\vct{e}_y$ direction respectively.
\end{remark}
\subsubsection{Optimal Berry connections}

The derivatives of $F$ are determined by the Berry curvature completely (see (\ref{eq:divless})). To minimize the variance (\ref{eq:varr}), the only variable quantities are those related to $\psi$. We show next that a gauge transformation can be applied to make $\psi$ vanish, thus obtaining an optimal Wannier function whose variance only contains gauge-independent quantities.

To minimize the variance of the Wannier function defined by $\tilde{\vct{u}}$ in Lemma\,\ref{lem:var}, we apply the following gauge transformation 
\begin{align}
	\dbtilde{\vct{u}}(\vct{k}) = e^{-\I\varphi(\vct{k})}\tilde{\vct{u}}(\vct{k})\,,\quad \vct{k}\in D^*\,,
	\label{eq:gauget2}
\end{align}
where $\varphi: D^*\rightarrow \R$ is analytic (not necessarily $\Lambda^*$-periodic) on $D^*$. In the following, we first give the constraint on $\varphi$ such that the transformed $\tilde{\vct{u}}$ stays analytic.

According to the transformation rule (\ref{eq:bcontran}), the Berry connection $\vct{A}$ in Lemma\,\ref{lem:var} is transformed into a new $\tilde{\vct{A}}$ due to (\ref{eq:gauget2}). Since  $\vct{A}$ is $\Lambda^*$-periodic, for the new  $\tilde{\vct{A}}$ to remain $\Lambda^*$-periodic, both $\frac{\partial \varphi}{\partial k_x}$ and $\frac{\partial \varphi}{\partial k_y}$ must be $\Lambda^*$-periodic. Suppose their Fourier series are given by the formulas
\begin{align}
	\frac{\partial \varphi(\vct{k})}{\partial k_x} = \varphi_{x, \vct{0}} + \sum_{\substack{\vct{R}\in\Lambda\\ \vct{R} \ne \vct{0}}}  \varphi_{x, \vct{R}}\cdot e^{\I \vct{R}\cdot\vct{k}}\,,\quad 	\frac{\partial \varphi(\vct{k})}{\partial k_y} = \varphi_{y, \vct{0}} + \sum_{\substack{\vct{R}\in\Lambda\\ \vct{R} \ne \vct{0}}}  \varphi_{y, \vct{R}}\cdot e^{\I \vct{R}\cdot\vct{k}}\,,\quad\vct{k}\in D^*\,, \label{eq:phixyf}
\end{align}
where $\varphi_{x, \vct{R}}$ and $\varphi_{y, \vct{R}}$ are the Fourier coefficients for the derivatives and the zeroth ones are singled out. Integrating (\ref{eq:phixyf}) shows that $\varphi$ must be of the form
\begin{align}
	\varphi (\vct{k}) = \vct{c}_0\cdot\vct{k} + f(\vct{k})\,,\quad \vct{k}\in D^*\,,
	\label{eq:phi}
\end{align}
where $\vct{c}_0=(\varphi_{x,\vct{0}},\varphi_{y,\vct{0}})$ contains the zeroth Fourier coefficients in (\ref{eq:phixyf}) and $f$ is some $\Lambda^*$-periodic function. Furthermore, we have assumed that the assignment of $\tilde{\vct{u}}$ is analytic and $\Lambda^*$-periodic on $D^*$. As a result, for the new $e^{-\I\varphi}\tilde{\vct{u}}$ to be analytic and $\Lambda^*$-periodic,  we must impose the following condition for the  term linear in $\vct{k}$ in (\ref{eq:phi}):
\begin{align}
	\vct{c}_0\cdot(\vct{k}+\vct{G}) = \vct{c}_0\cdot \vct{k} + \vct{c}_0\cdot \vct{G} = \vct{c}_0\cdot \vct{k} + 2\pi l\,,\quad \vct{k}\in D^*\,,\vct{G}\in\Lambda^*\,,
	\label{eq:lattice}
\end{align}
for some integer $l$. By (\ref{eq:lambda}) and (\ref{eq:gr}), we observe that the above requirement is equivalent to that the constant vector $\vct{c}_0$ is given by some lattice point $\vct{R}_0$ in $\vct{\Lambda}$. Thus we have the following observation.
\begin{lemma}
Let the Berry connection $\vct{A}$ in the Pfaffian system (\ref{eq:odekn1}) be given by (\ref{eq:adecom}). Suppose that (\ref{eq:odekn1}) defines a $\Lambda^*$-periodic and analytic assignment of $\tilde{\vct{u}}$ on $D^*$. Then if a gauge transformation in the form of (\ref{eq:gauget}) defined by a function $\varphi:D^*\rightarrow \R$ does not change the smoothness of $\tilde{\vct{u}}$, the function $\varphi$ must be of the following form
\begin{align}
	\varphi(\vct{k}) = \vct{R}_0\cdot\vct{k} + f(\vct{k})\,,
\end{align}
where $\vct{R}_0$ is a lattice point in $\Lambda$ and $f$ is a $\Lambda^*$-periodic and analytic function on $D^*$.
\label{lem:phi}
\end{lemma}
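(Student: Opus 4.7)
The proof essentially retraces the discussion in the paragraphs immediately preceding the lemma, but with the implications collected into a clean argument in four steps.

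The plan is to start from the constraint that the transformed eigenvector $\dbtilde{\vct{u}} = e^{-\I\varphi}\tilde{\vct{u}}$ must remain analytic and $\Lambda^*$-periodic in $D^*$. By the gauge-transformation formulas (\ref{eq:bcontran}), the new Berry connection differs from $\vct{A}$ by $\grad \varphi$; since $\vct{A}$ is $\Lambda^*$-periodic by hypothesis, and since $\tilde{\vct{A}}$ must also be $\Lambda^*$-periodic in order for the transformed system (\ref{eq:odekn1new}) to admit a $\Lambda^*$-periodic solution, the gradient $\grad \varphi$ must itself be $\Lambda^*$-periodic. This is the step that reduces matters to analyzing $\grad \varphi$.

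Next, I would expand each component of $\grad \varphi$ in its Fourier series as in (\ref{eq:phixyf}), isolating the zero-mode constants $\varphi_{x,\vct{0}}$ and $\varphi_{y,\vct{0}}$. Integrating these series termwise (the nonzero modes produce a $\Lambda^*$-periodic analytic antiderivative by Theorem\,\ref{thm:poiss}-type reasoning, while the constants integrate to a linear term) yields the representation $\varphi(\vct{k}) = \vct{c}_0\cdot\vct{k} + f(\vct{k})$ with $\vct{c}_0 = (\varphi_{x,\vct{0}},\varphi_{y,\vct{0}})$ and $f$ analytic and $\Lambda^*$-periodic. Analyticity of $\varphi$ (needed for $\dbtilde{\vct{u}}$ to remain analytic) transfers to $f$ once the linear part is subtracted.

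The crux is the third step: enforcing $\Lambda^*$-periodicity of the phase factor $e^{-\I\varphi}$ itself. Since $\tilde{\vct{u}}$ is $\Lambda^*$-periodic, the periodicity of $\dbtilde{\vct{u}}$ requires $e^{-\I\varphi(\vct{k}+\vct{G})} = e^{-\I\varphi(\vct{k})}$ for every $\vct{G}\in\Lambda^*$. The $f$ part already satisfies this, so the condition collapses to
\begin{align*}
\vct{c}_0\cdot\vct{G} \in 2\pi\Z \quad \text{for all } \vct{G}\in \Lambda^*,
\end{align*}
which is the content of (\ref{eq:lattice}). The step I expect to require the most care is the final conversion of this arithmetic condition into $\vct{c}_0\in\Lambda$: expanding $\vct{c}_0 = \alpha_1\vct{a}_1 + \alpha_2\vct{a}_2$ and using the duality (\ref{eq:ab}) gives $\vct{c}_0\cdot\vct{b}_j = 2\pi\alpha_j$, and taking $\vct{G} = \vct{b}_1$ and $\vct{G} = \vct{b}_2$ in turn forces each $\alpha_j\in\Z$, i.e., $\vct{c}_0 = \vct{R}_0 \in \Lambda$ by (\ref{eq:lambda}). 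Combining this with the decomposition $\varphi = \vct{c}_0\cdot\vct{k} + f$ from the previous step gives exactly the claimed form, and completes the proof.
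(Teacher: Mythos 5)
Your proof is correct and follows essentially the same route as the paper's: derive $\Lambda^*$-periodicity of $\grad\varphi$ from (\ref{eq:bcontran}), split off the zero Fourier mode to get $\varphi = \vct{c}_0\cdot\vct{k}+f$, then impose periodicity of $e^{-\I\varphi}$ to force $\vct{c}_0\cdot\vct{G}\in 2\pi\Z$. The only difference is that you spell out the final step (expanding $\vct{c}_0 = \alpha_1\vct{a}_1+\alpha_2\vct{a}_2$ and testing against $\vct{b}_1,\vct{b}_2$ via (\ref{eq:ab}) to force $\alpha_j\in\Z$), where the paper simply cites (\ref{eq:lambda}) and (\ref{eq:gr}); that is a welcome clarification but not a genuinely different argument.
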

The transformation of the Berry connection $\vct{A}$ in (\ref{eq:bcontran}) shows the gauge transformation defined by $\varphi$ in Lemma\,\ref{lem:phi} will only affect the divergence (curl-free component) and the harmonic components of $\vct{A}$. More explicitly, the transformed $\tilde{\vct{A}}$ is given by
\begin{align}
	\tilde{\vct{A}} = (\tilde{A}_x, \tilde{A}_y) = \bb{-\frac{\partial \psi }{\partial k_x} + \frac{\partial f }{\partial k_x}, -\frac{\partial \psi }{\partial k_y} + \frac{\partial f }{\partial k_y}} + \bb{\frac{\partial F }{\partial k_y}, -\frac{\partial F }{\partial k_x}} + (h_x,h_y)  + \vct{R}_0 \,.
		\label{eq:adecom2}
\end{align}
By applying Lemma\,\ref{lem:var} to the new $\tilde{\vct{A}}$, we observe that the choice of $\vct{R}_0$ only shifts the Wannier center by a lattice vector without affecting the variance. Moreover, the optimal choice for minimum variance is to choose $f=\psi$ so that $\tilde{\vct{A}}$ becomes divergence-free
\begin{align}
	\tilde{\vct{A}} = (\tilde{A}_x, \tilde{A}_y) =  \bb{\frac{\partial F }{\partial k_y}, -\frac{\partial F }{\partial k_x}} + (h_x,h_y)  + \vct{R}_0 \,.
		\label{eq:adecom3}
\end{align}
As discussed below Theorem\,\ref{thm:vecdecom}, $\psi$ can be obtained by solving
\begin{align}
	\frac{\partial^2 \psi }{\partial k_x^2} + 	\frac{\partial^2 \psi }{\partial k_y^2} = -\bb{\frac{\partial A_x}{\partial k_x} + \frac{\partial A_y}{\partial k_y} }\,, \quad \mbox{$g$ is $\Lambda^*$-periodic on $D^*$}\,,
		\label{eq:divless2}
\end{align}
where the right-hand side of (\ref{eq:divless2}) is the negative of the divergence of $\vct{A}$. Solving (\ref{eq:divless2}) can be viewed as computing the Newton step for minimizing the quadratic objective defined by (\ref{eq:varr}) with the Laplacian being the Hessian; the Laplacian is diagonal in the Fourier series representation and can be inverted with little cost via the fast Fourier transform (see Section\,\ref{sec:pois} and \ref{sec:dft}).

Furthermore, we observe that, if the divergence-free Berry connection $\tilde{\vct{A}}$ in (\ref{eq:adecom3}) exists, it will be unique up to a lattice vector in $\Lambda$. In other words, if two divergence-free  $\vct{A}_1$ and $\vct{A}_2$ that both produce some $\Lambda^*$-periodic and analytic assignment on $D^*$, we must have $\vct{A}_1 - \vct{A}_2=\vct{d}$ for some $\vct{d}\in\Lambda$. The reason is as follows. Since both $\vct{A}_1$ and $\vct{A}_2$ are divergence-free, only their harmonic components can differ so their difference is a constant vector $\vct{d}$. Thus $\vct{A}_1$ and $\vct{A}_2$ can be converted to each other by a gauge transformation defined by $e^{-\I\vct{d}\cdot\vct{k}}$. By Lemma\,\ref{lem:phi}, the vector $\vct{d}$ must be a lattice vector in $\Lambda$.

Due to the uniqueness result above, we simply choose $\vct{R}_0=\vct{0}$ in (\ref{eq:adecom3}). Moreover, since $\vct{R}_0$ in (\ref{eq:adecom3}) only shift the Wannier center by $\vct{R}_0$ by Lemma\,\ref{lem:var} and all Wannier functions centered at difference lattice points are copies of each other (see (\ref{eq:copy})), such a choice does not affect the physical consequences of Wannier functions. We summarize the above optimal conditions and their corresponding formulas for the Wannier center and the variance in the following theorem.
\begin{theorem}
	Let $\vct{A}$ be the Berry connection defined in (\ref{eq:adecom}) and its corresponding Pfaffian system (\ref{eq:odekn1}) defines a $\Lambda^*$-periodic and analytic assignment of $\tilde{\vct{u}}$ on $D^*$. Suppose that $\psi$ is given by (\ref{eq:divless2}). 
	Then the new $\dbtilde{\vct{u}}$ given by the following gauge transform 
		\begin{align}
			\dbtilde{\vct{u}}(\vct{k}) = e^{-\I \psi(\vct{k})}\tilde{\vct{u}}(\vct{k})\,, \quad \vct{k}\in D^*\,,
	\end{align}
is the optimal assignment for minimizing the variance (\ref{eq:varr}). The optimal Berry connection corresponding to $\dbtilde{\vct{u}}$ is given by the formula
\begin{align}
	\tilde{\vct{A}} = (\tilde{A}_x, \tilde{A}_y) =  \bb{\frac{\partial F }{\partial k_y}, -\frac{\partial F }{\partial k_x}} + (h_x,h_y) \,.
		\label{eq:adecom4}
\end{align}
The Wannier center and the variance of the optimal Wannier function corresponding to $\dbtilde{\vct{u}}$ are given by
    \begin{align}
    	\langle \vct{R} \rangle = (h_x,h_y)\,,
    	\label{eq:center}
    \end{align}
    \begin{align}
	& \langle \norm{\vct{R}}^2 \rangle - \norm{\langle \vct{R} \rangle}^2 \nonumber\\ 
	  &= \frac{V_{\rm puc}}{(2\pi)^2}\int_{D^*} \D\vct{k}\,\bigg[ \norm{\frac{\partial P(\vct{k})}{\partial k_x}\dbtilde{\vct{u}}(\vct{k})}^2 + \norm{\frac{\partial P(\vct{k})}{\partial k_y}\dbtilde{\vct{u}}(\vct{k})}^2  + \bb{\frac{\partial F}{\partial k_x}}^2 + \bb{\frac{\partial F}{\partial k_y}}^2
\bigg] \,.
\label{eq:var2}
    \end{align}
    \label{thm:opt}
Furthermore, such a divergence-free Berry connection $\tilde{\vct{A}}$ is unique up to a lattice vector in $\Lambda$. 
\end{theorem}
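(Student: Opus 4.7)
The plan is to assemble the ingredients already established in this section: the variance formula of Lemma\,\ref{lem:var}, the constraints on admissible gauges in Lemma\,\ref{lem:phi}, and the Helmholtz-Hodge decomposition of Theorem\,\ref{thm:vecdecom}. The key observation is that the variance in (\ref{eq:varr}) is a quadratic functional in the components of $\vct{A}$, but under admissible gauge transformations only the curl-free potential $\psi$ is modifiable; the divergence-free potential $F$ is pinned down by the (gauge-invariant) Berry curvature through (\ref{eq:divless}), and the harmonic piece is only shifted by a lattice vector per Lemma\,\ref{lem:phi}.

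First, I would verify that the choice $\varphi = \psi$ in (\ref{eq:gauget2}) is admissible. Since $\vct{A}$ is analytic and $\Lambda^*$-periodic, the explicit Fourier formula (\ref{eq:psi}) furnished by Theorem\,\ref{thm:vecdecom} produces a $\Lambda^*$-periodic analytic $\psi$. Hence Lemma\,\ref{lem:phi} applies with $\vct{R}_0 = \vct{0}$ and $f = \psi$, guaranteeing that the transformed $\dbtilde{\vct{u}}$ remains analytic and $\Lambda^*$-periodic in $D^*$. The transformation law (\ref{eq:bcontran}) applied to the decomposition (\ref{eq:adecom}) then yields exactly (\ref{eq:adecom4}): the curl-free component has been canceled, while the divergence-free and harmonic parts survive unchanged.

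Next, I would apply Lemma\,\ref{lem:var} to the transformed $\dbtilde{\vct{u}}$ together with the new $\tilde{\vct{A}}$ in (\ref{eq:adecom4}). Since the curl-free potential of $\tilde{\vct{A}}$ vanishes, the variance formula (\ref{eq:varr}) collapses to (\ref{eq:var2}), and the Wannier-center expression (\ref{eq:mean2}) gives (\ref{eq:center}). To establish optimality, I would consider an arbitrary admissible gauge $\varphi(\vct{k}) = \vct{R}_0\cdot\vct{k} + f(\vct{k})$ allowed by Lemma\,\ref{lem:phi}. Applying (\ref{eq:bcontran}) shows the resulting Berry connection has Helmholtz-Hodge potentials $(\psi - f,\, F)$ and harmonic part $(h_x, h_y) + \vct{R}_0$. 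Reapplying Lemma\,\ref{lem:var} with this decomposition, the variance picks up precisely the extra nonnegative contribution $\frac{V_{\rm puc}}{(2\pi)^2}\int_{D^*}\D\vct{k}\,\norm{\grad_{\vct{k}}(\psi - f)}^2$, which vanishes exactly when $f - \psi$ is constant, while $\vct{R}_0$ merely shifts the Wannier center without altering the variance.

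Finally, for the uniqueness claim: if $\tilde{\vct{A}}_1$ and $\tilde{\vct{A}}_2$ are two divergence-free, $\Lambda^*$-periodic, analytic Berry connections both arising from $\Lambda^*$-periodic analytic assignments of the eigenvector, then the uniqueness part of Theorem\,\ref{thm:vecdecom} forces their difference to lie entirely in the harmonic component, so $\tilde{\vct{A}}_1 - \tilde{\vct{A}}_2 = \vct{d}$ for some constant $\vct{d}\in\R^2$. The gauge $\varphi(\vct{k}) = \vct{d}\cdot\vct{k}$ relates the two assignments, and Lemma\,\ref{lem:phi} forces $\vct{d}\in\Lambda$. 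The main subtlety I anticipate is bookkeeping within the variance formula under a general admissible gauge: one must confirm that the projector-dependent terms remain invariant (as they should, since $P = \dbtilde{\vct{u}}\dbtilde{\vct{u}}^* = \tilde{\vct{u}}\tilde{\vct{u}}^*$), and that no unexpected cross terms between $\psi - f$, $F$, and the harmonic part appear when Lemma\,\ref{lem:var} is invoked on the new connection; orthogonality of the Helmholtz-Hodge components in the $L^2$-inner product over $D^*$ is precisely what makes this step clean.
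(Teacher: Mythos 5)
Your proposal is correct and follows essentially the same approach as the paper: it applies Lemma\,\ref{lem:var} to read off the variance as a quadratic functional in the Helmholtz--Hodge potentials, uses Lemma\,\ref{lem:phi} to characterize admissible gauges as $\varphi = \vct{R}_0\cdot\vct{k} + f$ with $f$ $\Lambda^*$-periodic, selects $f=\psi$ to annihilate the curl-free part, and argues uniqueness via Lemma\,\ref{lem:phi} applied to the constant difference of two divergence-free connections. The one spot worth tightening is the uniqueness paragraph: the reason $\tilde{\vct{A}}_1 - \tilde{\vct{A}}_2$ has no divergence-free $F$-part is not the uniqueness of the Helmholtz--Hodge decomposition per se, but that both connections have the same curl $\Omega_{xy}$ (gauge invariance of the Berry curvature), so $F_1$ and $F_2$ solve the same Poisson equation (\ref{eq:divless}) and coincide up to a constant; you note this fact earlier but should invoke it explicitly in the uniqueness step.
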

We emphasize that Theorem\,\ref{thm:opt} assumes the existence of analytic assignment of $\tilde{\vct{u}}$. Such an assignment will be explicitly constructed in Section\,\ref{sec:method1}. The uniqueness of the divergence-free Berry connection implies the optimally localized Wannier function is unique up to a lattice translation and an apparent constant phase factor. The constant phase will also be fixed to ensure the realty of the Wannier function in the construction in Section\,\ref{sec:method1} in cases when $H$ has time-reversal symmetry (see (\ref{eq:trs})).
\begin{remark}
	As observed in \cite{blount1962formalisms}, the optimal condition that the Berry connection is divergence-free (or satisfies the Coulomb gauge) can also be derived by applying the calculus of variations to (\ref{eq:varr2}). Since $F$-related terms are determined by the Berry curvature and $h_x, h_y$ are fixed up to a lattice vector (see Lemma\,\ref{lem:phi}), $\psi$ is the only variable quantity. Varying $\psi$ gives rise to the condition that $\psi$ satisfies the Laplace equation:
	\begin{align}
			\frac{\partial^2 \psi }{\partial k_x^2} + 	\frac{\partial^2 \psi }{\partial k_y^2} = 0\,, \quad \mbox{$\psi$ is $\Lambda^*$-periodic on $D^*$}\,.
	\end{align}
Obviously, the solution $\psi$ is a constant, which also leads to (\ref{eq:var2}).
\end{remark}

\begin{remark}
	The vector $\dbtilde{\vct{u}}$ in (\ref{eq:var2}) and $\tilde{\vct{u}}$ in (\ref{eq:varr}) can be replaced by a $\vct{u}$ of any phase choice since the quantity is independent of the phase of the vector. 
\end{remark}

\section{Construction of optimal Wannier functions \label{sec:method1}}
In this section, we describe a scheme for constructing globally optimal Wannier functions in the sense of Theorem\,\ref{thm:opt} in terms of the variance defined in (\ref{eq:var}).
We assume we are given a family of $n$ by $n$ matrix $H$ (a tight-banding Hamiltonian introduced in Section\,\ref{sec:tb}) that is analytic and $\Lambda^*$-periodic on $D^*$ defined by two primitive reciprocal lattice vectors $\vct{b}_1$ and $\vct{b}_2$ as in (\ref{eq:rgortho}). (We also implicitly assume the continuation condition (\ref{eq:katoeig}) is satisfied for $H$.) The corresponding real space primitive lattice vectors $\vct{a}_1$ and $\vct{a}_2$ define a lattice $\Lambda$ as in (\ref{eq:lambda}) and satisfy (\ref{eq:ab}). Moreover, it is assumed that we have picked an eigenvalue $E$ and eigenvector $\vct{u}$ of interests that define an eigenvalue equation as in (\ref{eq:heig3}). 
%

The approach in this section is purely based on the parallel transport equation in the form of (\ref{eq:oden}) along different lines in $D^*$\,. 
Along each line, we assign eigenvectors $\vct{u}$ according to the approach in  \cite{gopal2024high}, which constructs the optimal one-dimensional Wannier function.
Doing so over a family of lines in $D^*$ defines a two-dimensional assignment of $\vct{u}$. The analyticity of such an assignment is purely a consequence of the smoothness result of ODEs introduced in Section\,\ref{sec:ode}. After this step, a single gauge transform is performed to eliminate the divergence of the Berry connection to achieve the optimality in Theorem\,\ref{thm:opt}. This approach can be viewed as a direct extension of the method for constructing optimal single band Wannier function in one dimension in \cite{gopal2024high}. 
The topological construction -- the first Chern number $C_1 \ne 0$ introduced in Section\,\ref{sec:chern} -- emerges automatically in the parallel transport stage, causing the construction to fail as expected. When a topological obstruction is present, this approach still produces an analytic assignment of $\vct{u}$ on $D^*$ viewed as a subset of $\R^2$. (The assignment is discontinuous when $D^*$ is viewed as a torus.)



%



The method can be divided into three stages. In Stage 1, the parallel transport equation is solved on $\gamma_0$ in Figure\,\ref{fig:stages}. A simple correction is introduced (as in the one-dimensional case\,\cite{gopal2024high}) that yields an analytic and periodic assignment on $\gamma_0$. In Stage 2, the result in Stage 1 serves as the initial conditions for the parallel transport equation on the path $\gamma_{\kappa_1}$ for $\kappa_1\in\sb{-\half,\half}$ in Figure\,\ref{fig:stages}, followed by applying similar corrections as in Stage 1.  We show that the assignment after Stage 2 is analytic and $\Lambda^*$-periodic on $D^*$ when the topological obstruction is not present (i.e. $C_1 = 0$). Moreover, when the matrix $H$ has time-reversal symmetry, we show that the assignment results in a real Wannier function automatically provided that the initial condition in Stage 1 is chosen to be real. In Stage 3, the divergence of the Berry connection of the assignment is computed and eliminated by a single gauge transformation to yield the optimal assignment. The new Wannier function remains real if the result after Stage 2 is real.

\begin{figure}[h]
	\centering	
	\includegraphics[scale=0.50]{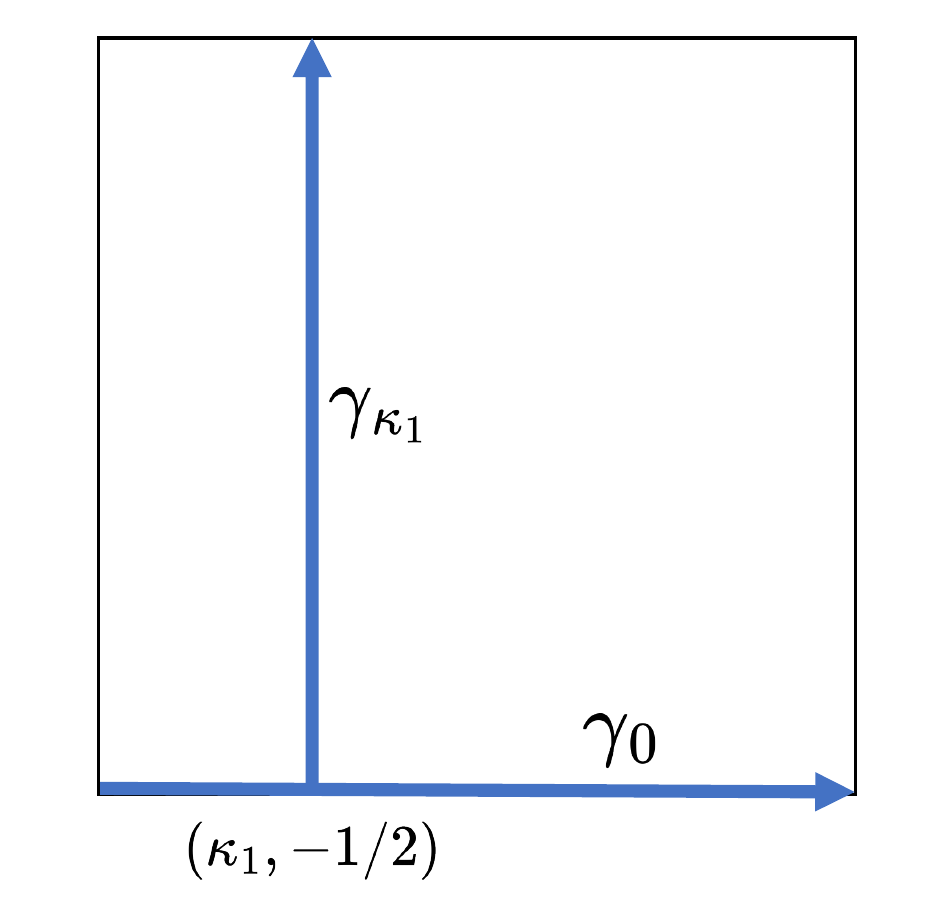}	
	\vspace{-1em}
	\caption{ The path $\gamma_{0}$ in Stage 1 and $\gamma_{\kappa_1}$ for $\kappa_1\in \sb{-\half,\half}$ in Stage 2 shown in $T$.
	\label{fig:stages}}	
\end{figure}

When describing the construction in the rest of this section, we parameterize $D^*$ in (\ref{eq:Ds}) by $T$ in (\ref{eq:torus}) via (\ref{eq:xydir}).
Hence, for any $\vct{k}\in D^*$, the eigenvector $\vct{u}$ is parameterized as 
\begin{align}
	\vct{u}(\vct{k}) = 	\vct{u}(\vct{k}(\kappa_1, \kappa_2))\,,\quad (\kappa_1, \kappa_2) \in T\,.
\end{align}
The matrix $H$ is also parameterized similarly by
\begin{align}
	H(\vct{k}) = 	H(\vct{k}(\kappa_1, \kappa_2))\,,\quad (\kappa_1, \kappa_2) \in T\,.
\end{align}
Formulas in Section\,\ref{sec:coord} can be applied to convert various expressions of $\vct{k}$ into those of $\kappa_1,\kappa_2$ and vice versa.

\begin{remark}
	\label{rmk:alt}
	Although the approach in this section is a direct extension of the method for one-dimensional Wannier problems, it is arguably not the most natural one from a physics point of view. In Appendix\,\ref{sec:method2}, we provide an alternative construction, where the gauge-invariant vector fields in (\ref{eq:adecom4}) are directly computed without creating the assignment of $\vct{u}$ first by exploiting the gauge-invariant nature of the Berry curvature in (\ref{eq:bcurv}) (see Remark\,\ref{rmk:bc}). Since the invariance no longer holds when multiple eigenvalues are considered, such an approach is not generalizable to the multiband case. However, it highlights the roles of the gauge-invariant vector fields in constructing Wannier functions, thus it is included as a complementary viewpoint.
\end{remark}

\subsection{Stage 1: constructing an assignment on a line \label{sec:stage1}} 
In the first stage, we start from the lower-left corner of $D^*$ (see Figure\,\ref{fig:stages}) given by
\begin{align}
\vct{k}^0= -\frac{1}{2}\vct{b}_1 -\frac{1}{2} \vct{b_2}\,,
\end{align}
and construct an assignment of the eigenvectors on the line $\gamma_0: \sb{-\half,\half}\rightarrow D^*$ defined by
\begin{align}
	\gamma_0(\kappa_1) = \vct{k}(\kappa_1,-1/2)= \kappa_1\vct{b}_1 -\frac{1}{2} \vct{b_2}\,,
	\label{eq:gam0}
\end{align}
where the starting point $	\gamma_0(-1/2) = \vct{k}^0$\,.

First, we compute the eigenvalue $E^0$ and eigenvector $\vct{v}^0$ at $\vct{k}^0$ given by the formula
\begin{align}
	H\bb{\vct{k}^0}\vct{v}^0 = E^0  \vct{v}^0 \,.
	\label{eq:init0}
\end{align}
Next, we assign $\vct{u}\bb{\gamma_0(\kappa_1)} =\vct{u}\bb{\vct{k}(\kappa_1, -1/2)}$ for $\kappa_1\in \sb{-\half,\half}$ by solving the parallel transport equation (see (\ref{eq:oden})) given by the formula
\begin{align}
	\frac{\partial \vct{u}\bb{\vct{k}(\kappa_1, -1/2)}}{\partial \kappa_1}   = 	\frac{\partial P\bb{\vct{k}(\kappa_1, -1/2)}}{\partial \kappa_1}   \vct{u} \bb{\vct{k}(\kappa_1, -1/2)}\,,\quad \kappa_1\in \sb{-\frac{1}{2},\frac{1}{2}}\,,
		\label{eq:dvk11}
\end{align}
subject to the initial condition
\begin{align}
	\vct{u}\bb{\vct{k}(-1/2, -1/2)}=	\vct{u}\bb{\vct{k}^0} = \vct{v}^0 \,.
		\label{eq:dvk10}
\end{align}
By Theorem\,\ref{thm:katoode}, the assignment $\vct{u}\bb{\vct{k}(\kappa_1, -1/2)}$ is analytic in  $\kappa_1\in\sb{-\half,\half}$. 
By periodicity, we always have
$	H\bb{\vct{k}(-1/2, -1/2)} = 	H\bb{\vct{k}(1/2, -1/2)}$\,,
but the same condition does not hold for $\vct{u}$ in general. In other words, we generally have
\begin{align}
\vct{u}\bb{\vct{k}(-1/2, -1/2)} \ne \vct{u}\bb{\vct{k}(1/2, -1/2)}. 
\end{align}
Hence, $\vct{u}$ is discontinuous as a periodic function on $\sb{-\frac{1}{2}, \frac{1}{2}}$.
Since both $\vct{u}\bb{-1/2, -1/2}$ and $\vct{u}\bb{1/2, -1/2}$ are both eigenvectors of the same (non-degenerate) eigenvalue, they can only differ by a constant phase factor, which we denote by $e^{\I \varphi_1}$ for some real $\varphi_1$:
\begin{align}
	\vct{u}\bb{\vct{k}(1/2, -1/2)} = e^{\I \varphi_1}\vct{u}\bb{\vct{k}(-1/2, -1/2)}\,.
	\label{eq:phik2}
\end{align}
Obviously, $\varphi_1$ can be computed via
\begin{align}
	\varphi_1 = -\I \log\bb{\vct{u}\bb{\vct{k}(-1/2, -1/2)}^*\vct{u}\bb{\vct{k}(1/2, -1/2)}}.
		\label{eq:logphik2}
\end{align}
Next, we apply a gauge transformation (see (\ref{eq:gauget})) to turn $\vct{u}$ into an analytic and periodic function on $\sb{-\frac{1}{2}, \frac{1}{2}}$\,.

We apply the gauge transformation given by the formula
\begin{align}
	\tilde{\vct{u}} \bb{\vct{k}(\kappa_1, -1/2)} = e^{-\I \varphi_1 (\kappa_1 + \half)}\vct{u} \bb{\vct{k}(\kappa_1, -1/2)}\,,\quad \kappa_1 \in \sb{-\frac{1}{2},\frac{1}{2}}\,,
	\label{eq:stage1gt}
\end{align}
so that we have
\begin{align}
	\tilde{\vct{u}} \bb{\vct{k}(-1/2, -1/2)} = \tilde{\vct{u}} \bb{\vct{k}(1/2,-1/2)} \,.
	\label{eq:uperk1}
\end{align}
Thus $\tilde{\vct{u}}\bb{\vct{k}(\kappa_1, -1/2)}$ is continuous and periodic on $\sb{-\frac{1}{2}, \frac{1}{2}}$.
It is obvious that the new $\tilde{\vct{u}}$ satisfies the following equation
\begin{align}
	\frac{\partial \tilde{\vct{u}}\bb{\vct{k}(\kappa_1, -1/2)}}{\partial \kappa_1}   = 	\frac{\partial P\bb{\vct{k}(\kappa_1, -1/2)}}{\partial \kappa_1}   \tilde{\vct{u}} \bb{\vct{k}(\kappa_1, -1/2)} -\I \varphi_1\tilde{\vct{u}}\bb{\vct{k}(\kappa_1, -1/2)}\,,
	\label{eq:dvk1}
\end{align}
subject to the same initial condition (\ref{eq:dvk10}).
By (\ref{eq:uperk1}) and  the periodicity of $P$ (see (\ref{eq:pper}))\,,
 the formula (\ref{eq:dvk1}) shows that the derivative satisfies
\begin{align}
	\frac{\partial}{\partial \kappa_1}  \tilde{\vct{u}}\bb{\vct{k}(-1/2, -1/2)} = \frac{\partial }{\partial \kappa_1} \tilde{\vct{u}}\bb{\vct{k}(1/2, -1/2)}\,.	
		\label{eq:ud1perk1}
\end{align}
By repetitive differentiation (\ref{eq:dvk1}) with respect to $\kappa_1$, the same argument shows that the derivative of $\tilde{\vct{u}}$ of any order $n=0,1,2,\ldots$ with respect to $\kappa_1$ satisfies
\begin{align}
	\frac{\partial^n}{\partial \kappa^n_1}  \tilde{\vct{u}}\bb{\vct{k}(-1/2, -1/2)} = 	\frac{\partial^n}{\partial \kappa^n_1}\tilde{\vct{u}}\bb{\vct{k}(1/2, -1/2)}\,.
		\label{eq:udperk1}
\end{align}
As a result, the constructed $\tilde{\vct{u}}\bb{\vct{k}(\kappa_1,-1/2)}$ defines an analytic and periodic function on $\sb{-\frac{1}{2}, \frac{1}{2}}$\,.
We summarize the result of the construction after the first stage in the following lemma.
\begin{lemma}
Suppose that $\varphi_1$ is the constant given in (\ref{eq:phik2}). Then the solution $\tilde{\vct{u}}$ to (\ref{eq:dvk1}) subject to the initial condition (\ref{eq:dvk10}) is analytic and periodic on the line $\gamma_0$ defined in (\ref{eq:gam0}).
\label{lem:stage1}
\end{lemma}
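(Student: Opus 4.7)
The plan is to produce the analytic periodic $\tilde{\vct{u}}$ in three substeps: (i) solve the plain parallel transport ODE (\ref{eq:dvk11})--(\ref{eq:dvk10}) on $[-1/2,1/2]$ and invoke Theorem\,\ref{thm:katoode} together with Lemma\,\ref{lem:ode} to conclude that the raw solution $\vct{u}\bb{\vct{k}(\kappa_1,-1/2)}$ is an analytic $\C^n$-valued function of $\kappa_1$ on a complex neighborhood of $[-1/2,1/2]$; (ii) note that $H\bb{\vct{k}(-1/2,-1/2)}=H\bb{\vct{k}(1/2,-1/2)}$ by (\ref{eq:hper2}), so both endpoint values lie in the one-dimensional eigenspace of the non-degenerate eigenvalue $E\bb{\vct{k}^0}$ and hence differ only by a unit phase, which is exactly the constant $e^{\I\varphi_1}$ defined in (\ref{eq:phik2})--(\ref{eq:logphik2}); (iii) absorb this phase by the gauge factor $e^{-\I\varphi_1(\kappa_1+\tfrac12)}$ in (\ref{eq:stage1gt}). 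Since this factor is entire in $\kappa_1$, analyticity of $\tilde{\vct{u}}$ on the same complex neighborhood is immediate, and by construction the endpoint values agree, giving (\ref{eq:uperk1}).

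For periodicity of $\tilde{\vct{u}}$ in the sense of a smooth analytic extension to $\R$, I would carry out the bootstrap already sketched after (\ref{eq:ud1perk1}). The modified ODE (\ref{eq:dvk1}) reads
\begin{equation*}
\frac{\partial \tilde{\vct{u}}}{\partial\kappa_1}
=\frac{\partial P}{\partial\kappa_1}\tilde{\vct{u}}-\I\varphi_1\tilde{\vct{u}}.
\end{equation*}
Evaluating at $\kappa_1=\pm 1/2$ and using (\ref{eq:uperk1}) together with the $\Lambda^*$-periodicity of $P$ (hence of $\partial_{\kappa_1}P$) gives (\ref{eq:ud1perk1}), the $n=1$ case of the matching relation (\ref{eq:udperk1}). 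I would then argue by induction on $n$: differentiating the modified ODE $n-1$ times with respect to $\kappa_1$ expresses $\partial_{\kappa_1}^n\tilde{\vct{u}}$ as a polynomial in $\partial_{\kappa_1}^j P$ and $\partial_{\kappa_1}^j\tilde{\vct{u}}$ for $j<n$; since all factors match at $\kappa_1=\pm 1/2$ (the $P$-derivatives by periodicity of $P$, the $\tilde{\vct{u}}$-derivatives by inductive hypothesis), the same holds at order $n$.

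Once every derivative of $\tilde{\vct{u}}$ matches at the two endpoints, I would deduce the claimed analyticity and periodicity as follows. Since $\tilde{\vct{u}}$ is analytic on an open complex neighborhood $U_0$ of $[-1/2,1/2]$, its Taylor series at $\kappa_1=1/2$ and at $\kappa_1=-1/2$ have identical coefficients, so the translated function $\kappa_1\mapsto\tilde{\vct{u}}(\kappa_1-1)$ is analytic on a neighborhood $U_1$ of $[1/2,3/2]$ and agrees with $\tilde{\vct{u}}$ to all orders at $\kappa_1=1/2$; by the identity theorem for analytic functions, the two agree on the overlap $U_0\cap U_1$, and iterating this construction yields a single analytic $1$-periodic extension to a complex neighborhood of $\R$.

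The main obstacle I anticipate is this last passage from ``all derivatives match at the endpoints'' to genuine analyticity of the periodic extension; $C^\infty$ matching alone would not be enough in the non-analytic setting, but here the analyticity of $\tilde{\vct{u}}$ supplied by Theorem\,\ref{thm:katoode} and Lemma\,\ref{lem:ode} is precisely what lets the Taylor-series identity close the argument. The bookkeeping in the inductive step (propagating matching through the successive $\kappa_1$-derivatives of $\partial_{\kappa_1}P$) is routine but needs to be stated carefully so the reader sees that the $-\I\varphi_1$ term, being a constant multiplier, never breaks the induction.
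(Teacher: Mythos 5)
Your proposal follows essentially the same route the paper takes: parallel transport via (\ref{eq:dvk11}) gives an analytic solution by Theorem\,\ref{thm:katoode}, the endpoint phase mismatch $e^{\I\varphi_1}$ is absorbed by the linear gauge factor in (\ref{eq:stage1gt}), and differentiating the corrected equation (\ref{eq:dvk1}) together with the $\Lambda^*$-periodicity of $P$ propagates the endpoint matching (\ref{eq:uperk1}) to all orders as in (\ref{eq:udperk1}). The only place you go beyond the paper's text is in spelling out the final step --- that matching all derivatives at $\pm 1/2$, \emph{combined with} analyticity on a complex neighborhood of $[-1/2,1/2]$, forces the $1$-periodic extension to be analytic by the identity theorem --- which the paper leaves implicit, so your version is a welcome clarification rather than a different argument.
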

By $\frac{\partial}{\partial {\kappa_1}} P^*= \frac{\partial}{\partial {\kappa_1}} P$, we can transpose (\ref{eq:dvk11}) with its initial condition and repeat the same procedure above to show the same result for $\tilde{\vct{u}}^*$, the transpose of $\tilde{\vct{u}}$ in Lemma\,\ref{lem:stage1}.
\begin{corollary}
$\tilde{\vct{u}}^*$ is analytic and periodic on the line $\gamma_0$.
\end{corollary}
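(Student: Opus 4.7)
My plan is to mirror the argument used to establish Lemma\,\ref{lem:stage1}, applied to the conjugate transpose of the parallel transport equation. The first step is to take the Hermitian adjoint of the ODE (\ref{eq:dvk1}). Since the projector $P$ is Hermitian, we have $\left(\frac{\partial P}{\partial \kappa_1}\right)^* = \frac{\partial P}{\partial \kappa_1}$, so transposing (\ref{eq:dvk1}) yields an ODE of the form
\begin{align*}
\frac{\partial \tilde{\vct{u}}^*\bb{\vct{k}(\kappa_1,-1/2)}}{\partial \kappa_1} = \tilde{\vct{u}}^*\bb{\vct{k}(\kappa_1,-1/2)}\,\frac{\partial P\bb{\vct{k}(\kappa_1,-1/2)}}{\partial \kappa_1} + \I \varphi_1\, \tilde{\vct{u}}^*\bb{\vct{k}(\kappa_1,-1/2)},
\end{align*}
which is once again a linear ODE with analytic coefficients (now acting from the right) subject to the initial condition $\tilde{\vct{u}}^*\bb{\vct{k}^0}=(\vct{v}^0)^*$.

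Next I would invoke Lemma\,\ref{lem:ode}, whose concluding remark states precisely that when the coefficient matrix is Hermitian and the transpose of the initial datum is analytic, the transpose of the solution is analytic on the full interval. Applying this gives analyticity of $\tilde{\vct{u}}^*$ on $\gamma_0$. Alternatively, one can simply repeat the Picard iteration used implicitly in the proof of Lemma\,\ref{lem:stage1}, noting that term-by-term taking the transpose preserves analyticity since uniform limits of analytic functions are analytic.

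For periodicity on $\gamma_0$, I would take the complex conjugate of the relations (\ref{eq:uperk1}) and (\ref{eq:udperk1}). The zeroth-order statement becomes $\tilde{\vct{u}}^*\bb{\vct{k}(-1/2,-1/2)} = \tilde{\vct{u}}^*\bb{\vct{k}(1/2,-1/2)}$, and all higher $\kappa_1$-derivatives match at the endpoints by the same argument: differentiating the transposed ODE repeatedly and using the periodicity of $P$ (see (\ref{eq:pper})) together with the endpoint identity for $\tilde{\vct{u}}^*$ itself. This yields matching of derivatives of every order, so $\tilde{\vct{u}}^*$ extends as an analytic and periodic function on $\gamma_0$.

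No serious obstacle is expected; the only small subtlety is that the adjoint ODE has its coefficient matrix acting from the right rather than from the left, but this changes nothing in the Picard/Lemma\,\ref{lem:ode} argument. The Hermiticity of $P$ is what makes the right-hand side of the transposed equation have the same analyticity as the original, and the constant correction $-\I\varphi_1$ merely contributes its complex conjugate $+\I\varphi_1$, preserving the endpoint-matching mechanism in (\ref{eq:stage1gt})–(\ref{eq:uperk1}).
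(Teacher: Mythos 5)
Your proposal is correct and follows essentially the same route as the paper: transpose the parallel transport ODE using $\bigl(\frac{\partial P}{\partial \kappa_1}\bigr)^*=\frac{\partial P}{\partial \kappa_1}$, observe the coefficients of the transposed ODE remain analytic (the constant $-\I\varphi_1$ simply becomes $+\I\varphi_1$, with $\varphi_1$ real), then invoke Lemma\,\ref{lem:ode} / Picard iteration for analyticity and conjugate the endpoint-matching relations (\ref{eq:uperk1})--(\ref{eq:udperk1}) for periodicity. The only cosmetic difference is that you transpose the gauge-corrected equation (\ref{eq:dvk1}) directly, whereas the paper transposes (\ref{eq:dvk11}) and then re-applies the correction; the two orderings are equivalent.
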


\subsection{Stage 2: constructing an assignment on the torus \label{sec:stage2}}
In the second stage, for  each $\kappa_1\in \sb{-\half,\half}$, we assign the eigenvectors along a family of lines $\gamma_{\kappa_1}:\sb{-\half,\half}\rightarrow D^*$ in Figure\,\ref{fig:stages} defined by
\begin{align}
	\gamma_{\kappa_1}(\kappa_2) = \kappa_1\vct{b}_1 + \kappa_2 \vct{b_2}\,, 
	\label{eq:gamk1}
\end{align}
where the starting points $\gamma_{\kappa_1}(-1/2)=\kappa_1\vct{b}_1  -\half\vct{b_2}$ for $\kappa_1 \in \sb{-\half,\half}$ are on the line $\gamma_0$ in (\ref{eq:gam0}).

Similar to the first stage, for  each $\kappa_1\in \sb{-\half,\half}$, the assignment is done by parallel transporting the eigenvector according to the equation
\begin{align}
	\frac{\partial \vct{u}\bb{\vct{k}(\kappa_1, \kappa_2)}}{\partial \kappa_2}   = 	\frac{\partial P\bb{\vct{k}(\kappa_1, \kappa_2)}}{\partial \kappa_2}   \vct{u} \bb{\vct{k}(\kappa_1, \kappa_2)}\,,\quad \kappa_2\in \sb{-\frac{1}{2},\frac{1}{2}}\,,
			\label{eq:dvk2}
\end{align}
with the initial condition
\begin{align}
	\vct{u}\bb{\vct{k}(\kappa_1, -1/2)}=	\tilde{\vct{u}}\bb{\vct{k}(\kappa_1,-1/2)}\,,
		\label{eq:dvk20}
\end{align}
where $\tilde{\vct{u}}$ is obtained in the first stage in Lemma\,\ref{lem:stage1}. By construction, the initial condition $\tilde{\vct{u}}$ and its transpose $\tilde{\vct{u}}^*$ are analytic and periodic on $\sb{-\half,\half}$\,. By the periodicity of $\frac{\partial}{\partial {\kappa_2}} P$ and $\frac{\partial}{\partial {\kappa_2}} P = \frac{\partial}{\partial {\kappa_2}} P^*$, we apply Lemma\,\ref{lem:ode} to (\ref{eq:dvk2}) with the initial conditions (\ref{eq:dvk20}) to conclude the following lemma.

\begin{lemma}
The solution $\vct{u}\bb{\vct{k}(\kappa_1,\kappa_2)}$ to (\ref{eq:dvk2}) with the initial condition (\ref{eq:dvk20}) and its transpose $\vct{u}^*\bb{\vct{k}(\kappa_1,\kappa_2)}$ are analytic in $\kappa_1, \kappa_2 \in\sb{-\half,\half}$. Furthermore,  they are periodic in $\kappa_1$ for any $\kappa_2\in\sb{-\half,\half}$:
\begin{align}
	\begin{split}
	&\vct{u}\bb{\vct{k}(\kappa_1, \kappa_2)} = \vct{u}\bb{\vct{k}(\kappa_1+n, \kappa_2 )}\,, \quad n\in\Z\,,\\
	&\vct{u}^*\bb{\vct{k}(\kappa_1, \kappa_2)} = \vct{u}^*\bb{\vct{k}(\kappa_1+n, \kappa_2 )}\,, \quad n\in\Z\,.
		\end{split}
\end{align}

\label{lem:stage21}
\end{lemma}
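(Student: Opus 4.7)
The plan is to recognize that the setup is a direct application of Lemma\,\ref{lem:ode}, with the correspondence $t \leftrightarrow \kappa_2$, $\mu \leftrightarrow \kappa_1$, period $a = 1$, and the rectangle $R = [-1/2,1/2] \times [-1/2, 1/2]$. The coefficient matrix of the ODE (\ref{eq:dvk2}) is $M(\kappa_2, \kappa_1) = \frac{\partial P(\vct{k}(\kappa_1,\kappa_2))}{\partial \kappa_2}$, and the initial data at $\kappa_2 = -1/2$ is $\vct{v}^0(\kappa_1) = \tilde{\vct{u}}(\vct{k}(\kappa_1,-1/2))$ coming from Stage 1.

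First I would verify the three hypotheses of Lemma\,\ref{lem:ode}. Analyticity of $M$ on $R$ follows from the analyticity of $P$ on $D^*$ (Theorem\,\ref{thm:eigana}) together with the analytic parameterization (\ref{eq:xydir}); periodicity of $M$ in $\kappa_1$ with period $1$ follows from the $\Lambda^*$-periodicity of $P$ stated in (\ref{eq:pper}) via (\ref{eq:zper}). Analyticity and periodicity (in $\kappa_1$) of the initial datum $\vct{v}^0(\kappa_1)$ are exactly the content of Lemma\,\ref{lem:stage1}. Applying Lemma\,\ref{lem:ode} then immediately gives that $\vct{u}(\vct{k}(\kappa_1,\kappa_2))$ is analytic on $R$ and satisfies
\begin{equation*}
  \vct{u}(\vct{k}(\kappa_1,\kappa_2)) = \vct{u}(\vct{k}(\kappa_1 + n, \kappa_2)), \quad n \in \Z.
\end{equation*}

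For the transpose statement, I would invoke the second sentence of Lemma\,\ref{lem:ode}. This requires $M^* = M$ and analyticity of $(\vct{v}^0)^*$. The Hermiticity $M^* = M$ is immediate since $P = \vct{u}\vct{u}^*$ is Hermitian, so $\partial_{\kappa_2} P$ is Hermitian too. Analyticity and periodicity of $(\vct{v}^0)^*(\kappa_1) = \tilde{\vct{u}}^*(\vct{k}(\kappa_1,-1/2))$ in $\kappa_1$ are precisely the corollary following Lemma\,\ref{lem:stage1}. Lemma\,\ref{lem:ode} then yields analyticity of $\vct{u}^*$ on $R$ together with the same periodicity in $\kappa_1$.

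I do not anticipate any real obstacle — the entire argument is a bookkeeping verification that the hypotheses of Lemma\,\ref{lem:ode} are met, both for $\vct{u}$ and for its transpose. The only minor point worth flagging is that periodicity in $\kappa_2$ is \emph{not} asserted here (and indeed will not hold in general at this stage); it is deferred to a later construction, so the proof must be careful to claim periodicity only in the $\kappa_1$ variable.
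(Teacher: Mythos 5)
Your proof is correct and follows essentially the same route as the paper: identify $(\ref{eq:dvk2})$, $(\ref{eq:dvk20})$ as an instance of the ODE family in Lemma\,\ref{lem:ode} with $t\leftrightarrow\kappa_2$, $\mu\leftrightarrow\kappa_1$, $a=1$; verify analyticity and periodicity of $\frac{\partial P}{\partial \kappa_2}$ and of the initial datum from Lemma\,\ref{lem:stage1} and its corollary; and invoke the Hermiticity $\bb{\frac{\partial P}{\partial \kappa_2}}^*=\frac{\partial P}{\partial \kappa_2}$ for the transposed statement. Your closing remark that periodicity in $\kappa_2$ is \emph{not} being claimed is the right thing to flag, and the paper likewise defers it to the subsequent gauge correction.
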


Lemma\,\ref{lem:stage21} shows that the only task remains is to make $\vct{u}\bb{\vct{k}(\kappa_1, \kappa_2)}$ also periodic in $\kappa_2$ so that $\vct{u}\bb{\vct{k}(\kappa_1, \kappa_2)}$ is analytic and periodic in $(\kappa_1,\kappa_2)\in T$. To do so, we repeat the same procedure in the first stage along each curve $\gamma_{\kappa_1}$ for $\kappa_1\in\sb{-\half,\half}$. First, we apply a similar argument for obtaining (\ref{eq:phik2}): since $\vct{u}\bb{\vct{k}(\kappa_1, -1/2)}$ and $\vct{u}\bb{\vct{k}(\kappa_1, -1/2)}$ are eigenvectors of the same eigenvalue due to periodicity, they only differ by a $\kappa_1$-dependent phase factor $z:\sb{-\half,\half}\rightarrow \C$, where $z(\kappa_1)$ is on the unit circle in the complex plane, i.e. $\abs{z(\kappa_1)} = 1$. More explicitly, we have
\begin{align}
	\vct{u}\bb{\vct{k}(\kappa_1, 1/2)}=	z(\kappa_1) \vct{u}\bb{\vct{k}(\kappa_1, -1/2)}\,,\quad \kappa_1 \in \sb{-\half,\half}\,,
	\label{eq:zdiff}
\end{align}
where $z(\kappa_1)$ can be computed via
\begin{align}
	z(\kappa_1) = \vct{u}^*\bb{\vct{k}(\kappa_1, -1/2)}\vct{u}\bb{\vct{k}(\kappa_1, 1/2)}\,.
	\label{eq:z}
\end{align}
By Lemma\,\ref{lem:stage21} and (\ref{eq:z}), the analyticity and periodicity of  $\vct{u}$ and $\vct{u}^*$ imply that $z$ is also analytic and periodic on $\sb{-\half,\half}$. 

Naturally, we wish to repeat the procedure in (\ref{eq:stage1gt}) to make $\vct{u}\bb{\vct{k}(\kappa_1, \kappa_2)}$ an analytic and periodic function in $\kappa_2$. This involves computing
\begin{align}
	\varphi_2(\kappa_1) = -\I\log(z(\kappa_1))\,,
	\label{eq:phitopo}
\end{align}
by which the following gauge transformation needs to be applied:
\begin{align}
	\tilde{\vct{u}}(\vct{k}(\kappa_1,\kappa_2)) = e^{-\I\varphi_2(\kappa_1)(\kappa_2+\half)}\vct{u}(\vct{k}(\kappa_1,\kappa_2))\,,\quad \kappa_1,\kappa_2\in \sb{-\half,\half}\,.
	\label{eq:stage2gt}
\end{align}
Using similar arguments for (\ref{eq:uperk1}), (\ref{eq:ud1perk1}) and  (\ref{eq:udperk1}), we conclude that $\tilde{\vct{u}}(\vct{k}(\kappa_1,\kappa_2)) $ is analytic and periodic in $\kappa_2\in\sb{-\half,\half}$. However, since $\varphi_2$ is $\kappa_1$-dependent, the transformed vector $\tilde{\vct{u}}$ is not necessarily analytic and periodic in $\kappa_1$ unless $\varphi_2$ is. Although $z$ is analytic and periodic (see (\ref{eq:z})) if the trajectory of $z$ winds around the origin in the complex plane before returning to its starting point,
the phase $\varphi_2$ is at best discontinuous as a periodic function on $[-\half,\half]$, since the $\log$ function has no continuous branch that includes the entire unit circle.  
To understand when this happens, we next establish the relation between the first Chern number $C_1$ (see (\ref{eq:chern})) and the winding number of the path of $z$ around the origin. The relation is well-known and can be found in \cite{gresch2017z2pack}, for example.

We compute the Berry connection $A_1$ and $A_2$ on the boundary of $T$ and use (\ref{eq:lineint}) to compute $C_1$, where the following quantities are required:
\begin{align}
	& A_2(\vct{k}(-1/2,\kappa_2)) - A_2(\vct{k}(1/2,\kappa_2))\,,\quad \kappa_2 \in \sb{-\half,\half}\,,\label{eq:ba2}\\
	& A_1(\vct{k}(\kappa_1, -1/2))- A_1(\vct{k}(\kappa_1, 1/2))\,,\quad \kappa_1 \in \sb{-\half,\half}\,. \label{eq:ba1}
\end{align}
We apply (\ref{eq:phybcon})) to compute the Berry connection $A_i=\I \vct{u}^*\frac{\partial}{\partial {\kappa_i}} \vct{u}$ for $i=1,2$.
Combining it with (\ref{eq:dvk2}) and (\ref{eq:para}), we conclude that $A_2$ is identically zero on $T$, so the term in (\ref{eq:ba2}) is zero. Next, we use the phase factor $z$ in (\ref{eq:zdiff}) to express the quantity in (\ref{eq:ba1}). By differentiating (\ref{eq:zdiff}), followed by multiplying it by $\vct{u}^*\bb{\vct{k}(\kappa_1, 1/2)}$, we have
\begin{align}
	A_1\bb{\vct{k}(\kappa_1, 1/2)} &= \I \vct{u}^*\bb{\vct{k}(\kappa_1, 1/2)}\frac{\partial}{\partial {\kappa_1}}\vct{u}\bb{\vct{k}(\kappa_1, 1/2)}\\
	& = \I \frac{z'(\kappa_1)}{z(\kappa_1)} + A_1\bb{\vct{k}(\kappa_1, -1/2)}\,,\label{eq:adiff}
\end{align}
where we have used (\ref{eq:zdiff}) and $\bar{z} = z^{-1}$ to obtain (\ref{eq:adiff}). Combining (\ref{eq:adiff}) and the fact that $A_2 = 0$, we apply (\ref{eq:lineint}) to obtain the following formula
\begin{align}
	C_1 = \frac{1}{2\pi\I}\int_{-\half}^{\half}\D \kappa_1\,\frac{z'(\kappa_1)}{z(\kappa_1)} = \frac{1}{2\pi\I}\int_{\Gamma}\D z\,\frac{1}{z}\,,
	\label{eq:reschern}
\end{align}
where the second equality is obtained by a simple change of variable and $\Gamma$ is the trajectory of $z(\kappa_1)$ on the unit circle as $\kappa_1$ goes from $-\half$ to $\half$. By the residue theorem, (\ref{eq:reschern}) shows that the first Chern number $C_1$ is the winding number of $\Gamma$ around the origin as $\kappa_1$ goes from $-\half$ to $\half$. This fact is useful for two reasons. 
First, since $z$ is analytic and periodic,  (\ref{eq:reschern}) can be evaluated to obtain $C_1$ numerically in a reliable and efficient manner. Second, and more importantly, when $C_1 \ne 0$, $z$ will at least go once around the unit circle before returning to its starting point so that $\varphi_2$ in (\ref{eq:phitopo}) is discontinuous as a periodic function on $[-\half,\half]$, which is the manifestation of the topological obstruction in Theorem\,\ref{thm:chern}. Conversely, when $C_1 = 0$\,, $z$ returns to the starting point without going around the origin. This means that, by choosing suitable branches if necessary, $\log$ over $\Gamma$ can be chosen to be an analytic function, so that $\varphi_2$ in (\ref{eq:phitopo}) is analytic and periodic. Thus we have the following lemma.
\begin{lemma}
\label{lem:topo}
The function $\varphi_2$ in (\ref{eq:phitopo}) is an analytic and periodic function on $\sb{-\half,\half}$ if and only if the first Chern number $C_1$ is zero. Furthermore, $\varphi_2$ is at best a discontinuous function if $C_1$ is nonzero.
\end{lemma}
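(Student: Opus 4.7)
The plan is to build $\varphi_2$ as an antiderivative of the logarithmic derivative $z'/z$ and then read off its periodicity from the Chern-number formula (\ref{eq:reschern}). Since $z:\sb{-\half,\half}\to\C$ is analytic, periodic, and takes values on the unit circle (so never vanishes), the quotient $z'/z$ is itself analytic and periodic on $\sb{-\half,\half}$, which is what lets the whole approach go through cleanly.

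First I would define
\[
\varphi_2(\kappa_1) \;=\; \varphi_{2,0} \;-\; \I \int_{-\half}^{\kappa_1} \frac{z'(s)}{z(s)}\, \D s,
\]
where $\varphi_{2,0}\in\R$ is chosen so that $e^{\I\varphi_{2,0}} = z(-\half)$. Analyticity of $\varphi_2$ on $\sb{-\half,\half}$ is inherited from $z'/z$. To see that this $\varphi_2$ is a legitimate branch of $-\I\log z$, i.e.\ that $e^{\I\varphi_2(\kappa_1)} = z(\kappa_1)$ holds throughout, I would differentiate the ratio $e^{\I\varphi_2}/z$, obtain identically zero by construction, and then apply the initial condition. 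This is the standard way of constructing a continuous logarithm along a curve avoiding the origin.

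Next I would compute the total increment of this $\varphi_2$ over the fundamental domain and invoke (\ref{eq:reschern}):
\[
\varphi_2(\half) - \varphi_2(-\half) \;=\; -\I \int_{-\half}^{\half} \frac{z'(s)}{z(s)}\, \D s \;=\; 2\pi C_1.
\]
This immediately gives both directions of the equivalence. If $C_1 = 0$, the constructed $\varphi_2$ satisfies $\varphi_2(\half) = \varphi_2(-\half)$, so together with the analyticity of its periodic extension (using that all derivatives of $\varphi_2$, being derivatives of an analytic periodic function, agree at the endpoints), $\varphi_2$ is analytic and periodic. If $C_1 \ne 0$, then $\varphi_2$ fails periodicity by the nonzero amount $2\pi C_1$.

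For the \emph{at best discontinuous} claim when $C_1\ne 0$, I would argue that any other choice of $-\I\log z$ differs from the constructed $\varphi_2$ by an integer multiple of $2\pi$ on each maximal interval of continuity; forcing the periodic image to be a well-defined function on the circle therefore requires jumps of $\pm 2\pi$ summing to $-2\pi C_1$, and each such jump is a genuine discontinuity because no continuous branch of $\log$ exists on a closed loop winding nontrivially around the origin. The argument is essentially bookkeeping once (\ref{eq:reschern}) is available; the only mildly delicate point, and the main thing I would be careful about, is articulating the uniqueness-of-branch statement cleanly so that the "at best" phrasing is unambiguous.
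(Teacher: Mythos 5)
Your proof is correct and follows essentially the same route as the paper, which establishes the lemma via the identity (\ref{eq:reschern}) identifying $C_1$ with the winding number of $z$ about the origin, together with the fact that a continuous branch of $\log$ along the trajectory of $z$ exists precisely when that winding number vanishes. Constructing the continuous branch explicitly as the antiderivative of $-\I z'/z$ and computing the holonomy $2\pi C_1$ across $\sb{-\half,\half}$ makes the argument a bit more concrete than the paper's appeal to the residue theorem, but the underlying mechanism is identical.
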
 

Suppose that $C_1 = 0$. Then we apply the gauge transformation defined in (\ref{eq:stage2gt}) to obtain $\tilde{\vct{u}}$. By Lemma\,\ref{lem:topo}, we conclude that  $\tilde{\vct{u}}$ is analytic and $\Lambda^*$-periodic on $D^*$. Moreover, for each $\kappa_1\in\sb{-\half,\half}$, $\tilde{\vct{u}}$ satisfies the new differential equation, obtained from (\ref{eq:dvk2}) according to the transformation (\ref{eq:stage2gt}):
\begin{align}
	\frac{\partial \tilde{\vct{u}}\bb{\vct{k}(\kappa_1, \kappa_2)}}{\partial \kappa_2}   = 	\frac{\partial P\bb{\vct{k}(\kappa_1, \kappa_2)}}{\partial \kappa_2} \tilde{\vct{u}} \bb{\vct{k}(\kappa_1, \kappa_2)} -\I\varphi_2(\kappa_1)\tilde{\vct{u}}\bb{\vct{k}(\kappa_1, \kappa_2)}\,,
	\label{eq:dvk2new}
\end{align}
subject to the same initial condition (\ref{eq:dvk20}).
We summarize the results in Stage 2 in the following theorem, which is the main tool for computing exponentially localized Wannier functions. 
\begin{theorem}
	Suppose that $C_1 =0$. Then $\tilde{\vct{u}}$ in (\ref{eq:stage2gt}) is analytic and $\Lambda^*$-periodic on $D^*$. Thus the Wannier function corresponding to $\tilde{\vct{u}}$ is exponentially localized.
	\label{thm:stage2}
\end{theorem}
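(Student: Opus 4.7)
The proof naturally splits into three assertions: analyticity of $\tilde{\vct{u}}$ in $T$, $\Lambda^*$-periodicity of $\tilde{\vct{u}}$ on the torus, and exponential localization of the corresponding Wannier function. My plan is to assemble these in order by pulling together the ingredients already established in Stages 1 and 2, and then invoking the general Fourier-decay estimate for analytic periodic functions.

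\textbf{Step 1 (analyticity).} By Lemma~\ref{lem:stage21}, the pre-gauge solution $\vct{u}(\vct{k}(\kappa_1,\kappa_2))$ obtained by solving (\ref{eq:dvk2}) with initial condition (\ref{eq:dvk20}) is analytic in $(\kappa_1,\kappa_2)\in T$. Under the hypothesis $C_1=0$, Lemma~\ref{lem:topo} yields that $\varphi_2:\sb{-\half,\half}\to\R$ defined in (\ref{eq:phitopo}) is analytic and $1$-periodic in $\kappa_1$ (a suitable branch of $\log$ can be chosen along the closed curve $\Gamma$ because the winding number of $\Gamma$ about the origin equals $C_1=0$). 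Hence the scalar factor $e^{-\I\varphi_2(\kappa_1)(\kappa_2+\half)}$ is analytic on $T$, and so the product $\tilde{\vct{u}}$ in (\ref{eq:stage2gt}) is analytic on $T$.

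\textbf{Step 2 ($\Lambda^*$-periodicity).} For the $\kappa_1$-direction, the pre-gauge $\vct{u}$ is periodic in $\kappa_1$ by Lemma~\ref{lem:stage21}, and $\varphi_2$ is periodic in $\kappa_1$ by Step~1; consequently $\tilde{\vct{u}}$ inherits $\kappa_1$-periodicity together with periodicity of all derivatives (being a product of analytic periodic functions). For the $\kappa_2$-direction, I evaluate at the boundary: at $\kappa_2=-\half$ the gauge factor equals $1$, so $\tilde{\vct{u}}(\vct{k}(\kappa_1,-\half))=\vct{u}(\vct{k}(\kappa_1,-\half))$; at $\kappa_2=\half$, the gauge factor equals $e^{-\I\varphi_2(\kappa_1)}=1/z(\kappa_1)$ by (\ref{eq:phitopo}), which combined with (\ref{eq:zdiff}) gives
\begin{align*}
\tilde{\vct{u}}(\vct{k}(\kappa_1,\tfrac{1}{2}))=\frac{1}{z(\kappa_1)}\,z(\kappa_1)\,\vct{u}(\vct{k}(\kappa_1,-\tfrac{1}{2}))=\tilde{\vct{u}}(\vct{k}(\kappa_1,-\tfrac{1}{2})).
\end{align*}
To upgrade this boundary continuity to smooth periodicity on the torus, I use the exact analogue of the argument in Stage 1: $\tilde{\vct{u}}$ satisfies the first-order ODE (\ref{eq:dvk2new}) in $\kappa_2$ whose coefficients $\tfrac{\partial P}{\partial\kappa_2}$ and $\varphi_2(\kappa_1)$ are $\Lambda^*$-periodic and independent of $\kappa_2$, respectively. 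Together with the matching of the zeroth derivative shown above, repeated differentiation of (\ref{eq:dvk2new}) in $\kappa_2$ then forces $\partial_{\kappa_2}^n \tilde{\vct{u}}$ to agree at $\kappa_2=\pm\half$ for every $n\ge 0$, exactly as in the derivation of (\ref{eq:udperk1}). Combined with analyticity (Step~1) and $\kappa_1$-periodicity, this establishes that $\tilde{\vct{u}}$ is analytic and $\Lambda^*$-periodic on $D^*$.

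\textbf{Step 3 (exponential localization).} With $\tilde{\vct{u}}$ analytic and $\Lambda^*$-periodic, each component $u_i$ admits the Fourier series (\ref{eq:ufour22}) whose coefficients $u_{i,\vct{R}}$ decay exponentially in $\norm{\vct{R}}$ (standard Paley--Wiener for tori, equivalently the two-dimensional version of Theorem~\ref{thm:trapz} applied in reverse). Substituting into the Wannier function formula (\ref{eq:wann2}) and using that the atomic orbitals $\phi_i$ are themselves localized, one obtains the exponential localization of $W_{\vct{0}}$ centred at $\vct{R}=\vct{0}$, and by (\ref{eq:copy}) at every lattice site.

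\textbf{Main obstacle.} The only genuine obstruction is Step~2, and specifically the smooth matching of $\tilde{\vct{u}}$ across the seam $\kappa_2=\pm\half$. The boundary value matches cleanly thanks to the algebraic identity $e^{-\I\varphi_2}z=1$, but the analyticity across the seam requires that $\varphi_2$ itself be a globally analytic periodic function of $\kappa_1$, which is precisely where the hypothesis $C_1=0$ enters via Lemma~\ref{lem:topo}. Once a global analytic branch of $\log z$ exists, the remainder of the argument is the direct two-dimensional extension of the Stage~1 bootstrap for higher-order derivatives.
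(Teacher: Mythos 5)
Your proof is correct and follows essentially the same route as the paper: Lemma~\ref{lem:stage21} for analyticity and $\kappa_1$-periodicity of the pre-gauge solution, the boundary-matching and derivative bootstrap along $\kappa_2$ (mirroring the Stage~1 argument via (\ref{eq:dvk2new})), and Lemma~\ref{lem:topo} to justify that $\varphi_2$ is an analytic periodic function of $\kappa_1$ when $C_1=0$. The paper states this proof essentially as a one-line synthesis of those lemmas, which you have correctly unpacked.
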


When $C_1\ne 0$, the assignment will be at best discontinuous and we terminate the construction. For the remainder of the construction, we assume that $C_1 = 0$ so that $\tilde{\vct{u}}$ is analytic and $\Lambda^*$-periodic on $D^*$.

It should be observed that the integrability condition (\ref{eq:comp2}) in Theorem\,\ref{thm:intecond} is not explicitly used in the above construction; there is no contradiction since the paths in the construction that cover $D^*$, namely $\gamma_0$ in (\ref{eq:gam0}) and $\gamma_{\kappa_1}$ in (\ref{eq:gamk1}), do not intersect at any point. 
Although  the Berry connection $A_2$ corresponding to $\tilde{\vct{u}}$ is given by the formula (due to (\ref{eq:stage2gt} and (\ref{eq:dvk2}))
\begin{align}
	A_2(\vct{k}(\kappa_1,\kappa_2)) = \varphi_2(\kappa_1)\,, \quad \kappa_1,\kappa_2\in \sb{-\half,\half},
\end{align} 
information about $A_1$ of $\tilde{\vct{u}}$ is not directly available from the construction (except for the value at $\kappa_1\in\sb{-\half,\half}$ and $\kappa_2=-1/2$ by (\ref{eq:dvk1}), which is transformed accordingly based on (\ref{eq:stage2gt})). Nonetheless, by Theorem\,\ref{thm:intecond}, $A_1$ is automatically specified to satisfy the integrability condition (\ref{eq:comp2}) over $D^*$\,. Due to the lack of control over $A_1$ in the construction, the constructed $\tilde{\vct{u}}$ is not guaranteed to be optimal as stated in Theorem\,\ref{thm:opt}. In other words, although the Wannier function corresponding to $\tilde{\vct{u}}$ is already exponentially localized, the variance of its Fourier coefficients can still be further reduced. 

Despite the fact the constructed $\tilde{\vct{u}}$ is not optimal, we will prove in Theorem\,\ref{thm:real}
in Section\,\ref{sec:real} that, if $H$ has time-reversal symmetry (see (\ref{eq:trs})),  for any $\vct{k}\in D^*$, we will have
\begin{align}
		\bar{\tilde{\vct{u}}}(\vct{k}) = \tilde{\vct{u}}(-\vct{k})\,,
			\label{eq:utrs}
\end{align}
if $\vct{v}^0$ in the initial condition (\ref{eq:dvk10}) is chosen to be real (which is possible again due to time-reversal symmetry). 
When $D^*$ is parameterized by $(\kappa_1,\kappa_2)\in T$, this is equivalent to
\begin{align}
	\bar{\tilde{\vct{u}}}(\vct{k}(\kappa_1,\kappa_2)) = \tilde{\vct{u}}(\vct{k}(-\kappa_1,-\kappa_2))\,.
	\label{eq:utrs2}
\end{align}
The condition (\ref{eq:utrs}) implies that the Fourier coefficients of $\tilde{\vct{u}}$ are real, thus the Wannier function defined by (\ref{eq:wann}) is also real.  

In the next stage, we turn $\tilde{\vct{u}}$ into the globally optimal assignment.

\begin{remark}
The assignment constructed along $\gamma_{0}$ and $\gamma_{\kappa_1}$ for each $\kappa_1$ corresponds to the so-called hybrid Wannier functions \cite{marzari2012maximally}, lower dimensional ``slices'' that make up the higher dimensional ones. According to the analysis in \cite{gopal2024high} (that can be easily adapted to the matrix models here), every assignment along those lines is globally optimal. Furthermore, the integrand  $\frac{z'(\kappa_1)}{z(\kappa_1)}$ in (\ref{eq:reschern}) corresponds to the Wannier center of the hybrid Wannier function. It is well-known that the changes in the Wannier centers contain topological information of the band  \cite{gresch2017z2pack}.
\label{rmk:hwann}
\end{remark}

\begin{remark}
	Due to the global optimality of one-dimensional assignments along $\gamma_{0}$ and $\gamma_{\kappa_1}$ for each $\kappa_1$ (see Remark\,\ref{rmk:hwann}),  it is well-known that when the Berry curvature $\Omega_{xy}$ of the band (see (\ref{eq:bcurv})) is zero, the global optimality of one-dimensional assignments results in that of the higher-dimensional assignment\,\cite{marzari1997maximally}. Thus, when  $\Omega_{xy}$ is identically zero, the steps in Section\,\ref{sec:stage3} are not needed.
\end{remark}


\subsection{\label{sec:div}Stage 3: eliminating the divergence of the Berry connection \label{sec:stage3}}
Assuming $C_1=0$, according to Theorem\,\ref{thm:stage2}, we have already obtained an analytic and $\Lambda^*$-periodic assignment $\tilde{\vct{u}}$ on $D^*$. Hence, all results in Section\,\ref{sec:opt} are applicable. By Theorem\,\ref{thm:opt}, it remains to eliminate the divergence (curl-free component) of the Berry connection corresponding to $\tilde{\vct{u}}$ to achieve the globally optimal assignment. In this section, we describe the steps for achieving this goal.

It should be observed that the sole purpose of Stage 3 is to modify $\tilde{\vct{u}}$ (by a gauge transformation) to achieve the minimum variance in terms of its Fourier coefficients. Hence, it is optional if the goal is only to construct an exponentially localized Wannier function. (The Wannier function is also real by (\ref{eq:utrs}) if the matrix $H$ has time-reversal symmetry.)

To eliminate the divergence of the Berry connection of $\tilde{\vct{u}}$, we first compute the Berry connection in the $\vct{b}_1, \vct{b}_2$ direction by the formulas
\begin{align}
	A_1 =\I \tilde{\vct{u}}^*\frac{\partial}{\partial {\kappa_1}}\tilde{\vct{u}}\,,\quad A_2 =\I \tilde{\vct{u}}^* \frac{\partial}{\partial {\kappa_2}} \tilde{\vct{u}}\,.
	\label{eq:bckform}
\end{align}
We use (\ref{eq:diffc}) to express the Berry connection in (\ref{eq:bckform}) in the $\vct{e}_x, \vct{e}_y$ basis by the formulas
\begin{align}
	A_{x} = \frac{1}{2\pi} \vct{a}_1\cdot\vct{e}_x A_1 + \frac{1}{2\pi} \vct{a}_2\cdot\vct{e}_x A_2\,,\quad  A_{y} = \frac{1}{2\pi} \vct{a}_1\cdot\vct{e}_y A_1 + \frac{1}{2\pi} \vct{a}_2\cdot\vct{e}_y A_2\,.
	\label{eq:bcxyform}
\end{align}
Next, 
we compute the divergence $\frac{\partial A_x}{\partial {k_x}} + \frac{\partial A_{y}}{\partial {k_y}}$ and denote it by $g$\,:
\begin{align}
	g = \frac{\partial A_x}{\partial {k_x}} + \frac{\partial A_{y}}{\partial {k_y}} \,.
	\label{eq:rhs}
\end{align}
By Theorem\,\ref{thm:vecdecom} and (\ref{eq:pois}), we compute the potential $\psi: D^*\rightarrow \R$ generating the curl-free component
by solving the following Poisson's equation:
\begin{align}
	\frac{\partial^2  \psi}{\partial {k^2_x}} + \frac{\partial^2 \psi}{\partial {k^2_y}}  = - g \,, \quad \mbox{$\psi$ is $\Lambda^*$-periodic on $D^*$}\,.
		\label{eq:divlessc3}
\end{align}
Since $A_1$ and $A_2$ are analytic and $\Lambda^*$-periodic on $D^*$ by construction, so are $A_{x}$ and $A_{y}$ by (\ref{eq:bcxyform}). Hence, due to (\ref{eq:kxy0}), we always have
\begin{align}
	\int_{D^*} g(\vct{k})\, \D \vct{k} = \int_{D^*} \bb{\frac{\partial A_x(\vct{k})}{\partial {k_x}} + \frac{\partial A_{y}(\vct{k})}{\partial {k_y}}}\,\D\vct{k} = 0 \,.
	\label{eq:solva}
\end{align}
By applying Theorem\,\ref{thm:poiss}, we conclude that (\ref{eq:divlessc3}) is always solvable. Moreover, the solution $\psi$ is analytic and $\Lambda^*$-periodic on $D^*$, whose Fourier series is given by the formula
\begin{align}
	\psi (\vct{k}) = \sum_{\substack{\vct{R}\in\Lambda\\ \vct{R}\ne \vct{0}}} \frac{g_{\vct{R}}}{\norm{\vct{R}}^2}\cdot e^{\I \vct{R}\cdot\vct{k}}\,, \quad \vct{k}\in D^*\,,
	\label{eq:psisol}
\end{align}
where $g_{\vct{R}}$ is the Fourier coefficient of $g$ at $\vct{R}\in \Lambda$ given by the formula
\begin{align}
	g_{\vct{R}} = \frac{V_{\rm puc}}{(2\pi)^2} \int_{D^*} \D\vct{k}\, e^{-\I\vct{R}\cdot\vct{k}} \cdot g(\vct{k})\,.
\end{align}
Finally, we apply the following gauge transformation
\begin{align}
	\dbtilde{\vct{u}}(\vct{k}) = e^{-\I\psi(\vct{k})}\tilde{\vct{u}}(\vct{k})\,,\quad \vct{k}\in D^* \,,
	\label{eq:gtstage3}
\end{align}
that eliminates the divergence of the Berry connection $(A_x, A_y)$ by Theorem\,\ref{thm:opt}. Obviously, the new $\dbtilde{\vct{u}}$ is also analytic and $\Lambda^*$-periodic on $D^*$.
Thus we conclude that $\dbtilde{\vct{u}}$ is the globally optimal assignment (up to a lattice vector in $\Lambda$), whose Wannier function's center and variance are given by (\ref{eq:center}) and (\ref{eq:var2}) respectively.

Suppose that $\tilde{\vct{u}}$ satisfies the symmetry in (\ref{eq:utrs}). By combining (\ref{eq:bckform}), (\ref{eq:bcxyform}) with the definition of $g$ in  (\ref{eq:rhs}), we have
\begin{align}
	g(\vct{k}) = -g(-\vct{k})\,,\quad \vct{k}\in D^*\,.
	\label{eq:ggreal}
\end{align}
Combining (\ref{eq:psisol}) and (\ref{eq:ggreal}) shows that 
\begin{align}
	\psi(\vct{k}) = -\psi(-\vct{k})\,,\quad \vct{k}\in D^*\,.
	\label{eq:psitrs}
\end{align}
Combining (\ref{eq:gtstage3}), (\ref{eq:psitrs}) and  (\ref{eq:utrs}), we obtain the symmetry relation
\begin{align}
	\bar{\dbtilde{\vct{u}}}(\vct{k}) = \dbtilde{\vct{u}}(-\vct{k})\,,\quad \vct{k}\in D^*\,.
		\label{eq:dbttrs}
\end{align}
As a result, the Fourier coefficients of ${\dbtilde{\vct{u}}}$ are also real, so is its corresponding Wannier function defined by (\ref{eq:wann2}).

This completes the construction of the assignment as stated in Section\,\ref{sec:ps}.
We conclude
that the Wannier function defined by (\ref{eq:wann2}) corresponding to the assignment $\dbtilde{\vct{u}}$ is exponentially
localized and has the optimal variance given in (\ref{eq:var2}). Furthermore, it is also real if the matrix $H$ has time-reversal symmetry.
\subsection{Realty of Wannier functions\label{sec:real}}
In this section, we prove that when the matrix $H$ has time-reversal symmetry (see \ref{eq:trs}), the assignment $\tilde{\vct{u}}$ obtained in Theorem\,\ref{thm:stage2} can be chosen to satisfy (\ref{eq:utrs2}). Consequently, its Fourier coefficients are real, and so is the corresponding Wannier function defined by (\ref{eq:wann2}). Such a choice can be made by choosing the initial vector $\vct{v}^0$ in (\ref{eq:init0}) to be real. First, we show that such a choice for $\vct{v}^0$ is always possible.

\begin{lemma}
	Suppose the matrix $H$ has time-reversal symmetry (\ref{eq:trs}).  Then $\vct{v}^0$ in (\ref{eq:init0}) can be chosen to be a real vector.
	\label{lem:realv}
\end{lemma}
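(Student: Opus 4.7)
The plan is to exploit the special location of the base point $\vct{k}^0 = -\tfrac{1}{2}\vct{b}_1 - \tfrac{1}{2}\vct{b}_2$, which is a half-period: it satisfies $-\vct{k}^0 = \vct{k}^0 + \vct{b}_1 + \vct{b}_2$, with $\vct{G} := \vct{b}_1 + \vct{b}_2 \in \Lambda^*$. This single observation reduces everything to the standard fact that a real symmetric matrix has real eigenvectors.

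First I would combine the time-reversal symmetry (\ref{eq:trs}) with the $\Lambda^*$-periodicity (\ref{eq:hper2}) at the point $\vct{k}^0$. By time-reversal symmetry,
\begin{equation*}
\bar{H}(\vct{k}^0) = H(-\vct{k}^0) = H(\vct{k}^0 + \vct{G}),
\end{equation*}
and by periodicity $H(\vct{k}^0 + \vct{G}) = H(\vct{k}^0)$. Hence $\bar{H}(\vct{k}^0) = H(\vct{k}^0)$, so the matrix $H(\vct{k}^0)$ is real. Since $H(\vct{k}^0)$ is also Hermitian by assumption, it is a real symmetric matrix.

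Next, I would invoke the non-degeneracy hypothesis on the eigenvalue $E$. Because $H(\vct{k}^0)$ is real symmetric, it admits a complete orthonormal basis of real eigenvectors. The eigenvalue $E^0 = E(\vct{k}^0)$ is simple, so its eigenspace is one-dimensional, and any nonzero vector in that eigenspace is a real scalar multiple of a real unit eigenvector. Choosing $\vct{v}^0$ to be such a real unit vector yields an admissible initial condition in (\ref{eq:init0}).

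There is essentially no obstacle here; the only thing to verify carefully is that the half-period choice of $\vct{k}^0$ is what makes the argument work (a generic base point would only give $\bar{H}(\vct{k}^0) = H(-\vct{k}^0)$ with no way to identify the right-hand side with $H(\vct{k}^0)$). This is why the construction in Section\,\ref{sec:stage1} starts at the corner of $T$ rather than at, say, the origin of $D^*$. The same argument actually applies at any of the four time-reversal-invariant momenta (the half-periods), but the corner $\vct{k}^0$ is the only one needed for the subsequent stages.
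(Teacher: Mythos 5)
Your proof is correct and takes essentially the same approach as the paper: both reduce to observing that $\vct{k}^0 = -\tfrac{1}{2}\vct{b}_1 - \tfrac{1}{2}\vct{b}_2$ is a time-reversal-invariant point (a half-period), so that (\ref{eq:trs}) together with $\Lambda^*$-periodicity forces $\bar{H}(\vct{k}^0) = H(\vct{k}^0)$, and then use non-degeneracy of $E^0$ to pin down a real eigenvector. The only cosmetic difference is the last step: you cite the spectral theorem for real symmetric matrices, whereas the paper explicitly notes that $\bar{\vct{v}}^0 = e^{2\I\varphi_0}\vct{v}^0$ for some real $\varphi_0$ and multiplies through by $e^{\I\varphi_0}$.
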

\begin{proof}
	By periodicity and the symmetry (\ref{eq:trs}) of $H$, we have 
	\begin{align}
		\bar{H}(\vct{k}(-1/2,-1/2)) = H(\vct{k}(1/2,1/2)) = H(\vct{k}(-1/2,-1/2)) \,.
		\label{eq:equiv}
	\end{align}
	By complex conjugating (\ref{eq:init0}), the relations in (\ref{eq:equiv}) show that both $\vct{v}^0$ and $\bar{\vct{v}}^0$ satisfy the eigenvalue equation (\ref{eq:init0}). Since the eigenvalue is non-degenerate, $\vct{v}^0$ and $\bar{\vct{v}}^0$ can only differ by a phase factor, i.e. $\bar{\vct{v}}^0 = e^{2\I\varphi_0}\vct{v}^0$, for some real $\varphi_0$. If $\vct{v}^0$ is not real, $\pm e^{\I\varphi_0}\vct{v}^0$ will be real. 
\end{proof}
Next, we prove the fact that $\varphi_2$ in  (\ref{eq:phitopo}) is an even function when time-reversal symmetry is present, which will be used for proving the realty of Wannier functions.
\begin{lemma}
		Suppose that $H$ has time-reversal symmetry. Then $\varphi_2:\sb{-\half,\half}\rightarrow \R$ in  (\ref{eq:phitopo}) is even, i.e. $			\varphi_2(\kappa_1) = 	\varphi_2(-\kappa_1)$\,.
\label{lem:even}
\end{lemma}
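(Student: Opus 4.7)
\textbf{Proof plan for Lemma~\ref{lem:even}.} The plan is to propagate a conjugation-reflection symmetry from the real initial condition $\vct{v}^0$ (available by Lemma~\ref{lem:realv}) through Stages~1 and~2 of the construction, conclude that $z(\kappa_1)=z(-\kappa_1)$, and appeal to the definition $\varphi_2=-\I\log z$ in (\ref{eq:phitopo}). The key global identity, following from (\ref{eq:ptrs}) and the $\Lambda^*$-periodic identification $-\vct{k}(\kappa_1,\kappa_2)\equiv\vct{k}(-\kappa_1,-\kappa_2)\pmod{\Lambda^*}$, is $\bar{P}(\vct{k}(\kappa_1,\kappa_2))=P(\vct{k}(-\kappa_1,-\kappa_2))$.

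First, I would establish the Stage~1 symmetry $\tilde{\vct{u}}(\vct{k}(\kappa_1,-1/2))=\bar{\tilde{\vct{u}}}(\vct{k}(-\kappa_1,-1/2))$. Setting $\vct{w}(\kappa_1):=\bar{\tilde{\vct{u}}}(\vct{k}(-\kappa_1,-1/2))$, I would conjugate the gauge-corrected equation (\ref{eq:dvk1}) and pre-compose with $\kappa_1\mapsto-\kappa_1$: the chain-rule sign from the reflection exactly cancels the sign in $\partial_{\kappa_1}\bar{P}(\vct{k}(\kappa_1,-1/2))=-\partial_{\kappa_1}P(\vct{k}(-\kappa_1,-1/2))$, and the reality of $\varphi_1$ ensures that the correction $-\I\varphi_1$ reappears unchanged. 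Hence $\vct{w}$ satisfies the very same ODE (\ref{eq:dvk1}) as $\tilde{\vct{u}}(\vct{k}(\cdot,-1/2))$. At $\kappa_1=-1/2$, Lemma~\ref{lem:stage1} and the realty of $\vct{v}^0$ give $\vct{w}(-1/2)=\bar{\tilde{\vct{u}}}(\vct{k}(1/2,-1/2))=\bar{\tilde{\vct{u}}}(\vct{k}(-1/2,-1/2))=\bar{\vct{v}}^0=\vct{v}^0=\tilde{\vct{u}}(\vct{k}(-1/2,-1/2))$, and uniqueness for the linear ODE closes this step.

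Second, I would propagate the symmetry through the Stage~2 transports. For each fixed $\kappa_1$, set $\tilde{\vct{V}}(\kappa_2):=\bar{\vct{u}}(\vct{k}(-\kappa_1,-\kappa_2))$ and $\vct{V}(\kappa_2):=\vct{u}(\vct{k}(\kappa_1,\kappa_2))$. The same sign-tracking applied to (\ref{eq:dvk2}) shows that $\tilde{\vct{V}}$ solves the parallel transport equation (\ref{eq:dvk2}) in $\kappa_2$. At $\kappa_2=-1/2$, the TRS identity together with the non-degeneracy of the eigenvalue forces $\tilde{\vct{V}}(-1/2)$ to lie in the one-dimensional eigenspace spanned by $\vct{V}(-1/2)$, so $\tilde{\vct{V}}(-1/2)=e^{\I\theta(\kappa_1)}\vct{V}(-1/2)$ for some phase; uniqueness for the linear ODE then promotes this to
\[
\tilde{\vct{V}}(\kappa_2)=e^{\I\theta(\kappa_1)}\vct{V}(\kappa_2),\qquad \kappa_2\in[-1/2,1/2].
\]
Evaluating at $\kappa_2=-1/2$ and taking an inner product with $\vct{V}(-1/2)$, the Stage~1 symmetry lets me recognize the result as $\overline{\vct{u}^*(\vct{k}(-\kappa_1,-1/2))\vct{u}(\vct{k}(-\kappa_1,1/2))}$, so that $e^{\I\theta(\kappa_1)}=\overline{z(-\kappa_1)}$ by (\ref{eq:z}). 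Evaluating at $\kappa_2=1/2$, the Stage~1 symmetry combined with (\ref{eq:zdiff}) gives $\tilde{\vct{V}}(1/2)=\tilde{\vct{u}}(\vct{k}(\kappa_1,-1/2))$ and $\vct{V}(1/2)=z(\kappa_1)\tilde{\vct{u}}(\vct{k}(\kappa_1,-1/2))$, collapsing the identity into $e^{\I\theta(\kappa_1)}z(\kappa_1)=1$, i.e.\ $e^{\I\theta(\kappa_1)}=\overline{z(\kappa_1)}$. Comparing the two expressions and using $|z|=1$ yields $z(\kappa_1)=z(-\kappa_1)$, whence $\varphi_2(\kappa_1)=\varphi_2(-\kappa_1)$ by (\ref{eq:phitopo}).

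The principal technical subtlety lies in the Stage~1 computation: several compositions of conjugation, reflection, and $\Lambda^*$-periodic shifts must be reconciled simultaneously, and one must verify that the gauge correction $-\I\varphi_1$ in (\ref{eq:dvk1}) is preserved under the symmetry precisely because $\varphi_1\in\R$. Once this sign bookkeeping is in order, Stage~2 and the final comparison are routine applications of uniqueness for first-order linear ODEs.
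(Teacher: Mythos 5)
Your proposal is correct, but it takes a genuinely different route from the paper's. The paper's proof is potential-theoretic: it integrates (\ref{eq:adiff}), applies Green's theorem to obtain $\varphi_2(\kappa_1)=\int_{-\half}^{\kappa_1}\D s\int_{-\half}^{\half}\D\kappa_2\,\Omega_{12}(\vct{k}(s,\kappa_2))$, invokes $C_1=0$ (via Remark\,\ref{rmk:trsp}) to reflect the domain of integration, and applies the Berry-curvature antisymmetry (\ref{eq:bcurv12sym}) to conclude. You instead run a conjugation-reflection symmetry directly through the transport ODEs: conjugating (\ref{eq:dvk1}) and the \emph{uncorrected} (\ref{eq:dvk2}), using $\bar{P}(\vct{k}(\kappa_1,\kappa_2))=P(\vct{k}(-\kappa_1,-\kappa_2))$ and $\varphi_1\in\R$, and appealing to uniqueness for the resulting initial-value problems to force $z(\kappa_1)=z(-\kappa_1)$. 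This inverts a logical dependency in the paper: rather than proving evenness of $\varphi_2$ via the curvature integral and then feeding it into the uniqueness proof of Theorem\,\ref{thm:real}, you re-derive the Stage-1 conjugation identity and run the Stage-2 uniqueness argument against the un-gauge-corrected transport so that $\varphi_2$ never appears, obtaining $z$ even (and hence $C_1=0$) as a byproduct without separately invoking Remark\,\ref{rmk:trsp}. Both are valid; your version is more self-contained and stays entirely within the ODE framework (it also largely subsumes the Stage-1 and Stage-2 arguments the paper later repeats in proving Theorem\,\ref{thm:real}), while the paper's integral identity has independent value as the hybrid-Wannier-center formula noted in Remark\,\ref{rmk:hwann}. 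One small step worth making explicit: $z$ even gives $\varphi_2$ even only after fixing a consistent analytic branch of $\log$, which holds because $\frac{\D}{\D\kappa_1}\bigl[\varphi_2(\kappa_1)-\varphi_2(-\kappa_1)\bigr]$ vanishes identically along the continuous branch and the difference vanishes at $\kappa_1=0$.
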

\begin{proof}
	We integrate (\ref{eq:adiff}) from $-1/2$ to $\kappa_1$ for $\kappa_1 \in \sb{-\half,\half}$.  By applying Green's theorem over $\sb{-\half, \kappa_1}\times \sb{-\half,\half} \subseteq T$ as in Section\,\ref{sec:chern}, we obtain
	\begin{align}
		\varphi_2(\kappa_1) = -\I\log{z(\kappa_1)} = \int_{-\half}^{\kappa_1} \D s \int_{-\half}^{\half} \D \kappa_2 \, \Omega_{12}(\vct{k}(s,\kappa_2))\,.
		\label{eq:phi2int}
	\end{align}
	(For $\kappa_1=1/2$, we obtain (\ref{eq:reschern}).) By Remark\,\ref{rmk:trsp}, time-reversal symmetry implies the first Chern number $C_1=0$. Due to (\ref{eq:chernt}), $C_1=0$ implies that 
	\begin{align}
		 \int_{-\half}^{\kappa_1} \D s \int_{-\half}^{\half} \D \kappa_2 \, \Omega_{12}(\vct{k}(s,\kappa_2)) =  -\int_{\kappa_1}^{\half} \D s \int_{-\half}^{\half} \D \kappa_2 \, \Omega_{12}(\vct{k}(s,\kappa_2))\,.
		 \label{eq:cancel}
	\end{align}
	We replace $\kappa_1$ with $-\kappa_1$ in (\ref{eq:phi2int})) and apply (\ref{eq:cancel}) to obtain
	\begin{align}
	\varphi_2(-\kappa_1) = -\int_{-\kappa_1}^{\half} \D s \int_{-\half}^{\half} \D \kappa_2 \, \Omega_{12}(\vct{k}(s,\kappa_2))\,.
	\label{eq:phi2intnew}
\end{align}	
We apply a change of variable $s\rightarrow -s$ and $\kappa_2\rightarrow -\kappa_2$ to (\ref{eq:phi2intnew}) and use the symmetry (\ref{eq:bcurv12sym}) to yield the desired result:
	\begin{align}
	\varphi_2(-\kappa_1) = \int_{-\half}^{\kappa_1} \D s \int_{-\half}^{\half} \D \kappa_2 \, \Omega_{12}(\vct{k}(s,\kappa_2)) = \varphi_2(\kappa_1)\,.
\end{align}	
\end{proof}

In the following, we prove the realty of Wannier functions. The idea is to use the time-reversal symmetry to run the construction in Section\,\ref{sec:stage1} and \ref{sec:stage2} backward in time $\kappa_1, \kappa_2$. It can be viewed as an extension of the proof for the realty of Wannier functions in \cite{gopal2024high} to two dimensions for matrix models.
\begin{theorem}
		Suppose the matrix $H$ has time-reversal symmetry (\ref{eq:trs}) and $\vct{v}^0$ is chosen to be real according to Lemma\,\ref{lem:realv}. Then the assignment $\tilde{\vct{u}}$ in Theorem\,\ref{thm:stage2} satisfies
		\begin{equation}
			\bar{\tilde{\vct{u}}}(\vct{k}(\kappa_1,\kappa_2)) = \tilde{\vct{u}}(\vct{k}(-\kappa_1,-\kappa_2))\,, \quad (\kappa_1,\kappa_2)\in T\,.
			\label{eq:utrs22}
		\end{equation}
Hence, the Fourier coefficients of  $\tilde{\vct{u}}$ are real, and so is its corresponding Wannier function.
\label{thm:real}
\end{theorem}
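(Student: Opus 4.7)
The plan is to propagate the symmetry $\bar{\vct{u}}(\vct{k}) = \vct{u}(-\vct{k})$ (modulo accumulated phases) through Stage 1 and Stage 2 of the construction, and to verify that the phase corrections $\varphi_1$ and $\varphi_2$ absorb those residual phases exactly. The three inputs are the relation $\bar{P}(\vct{k}) = P(-\vct{k})$ from time-reversal symmetry (see (\ref{eq:ptrs})), the reality of $\vct{v}^0$ provided by Lemma\,\ref{lem:realv}, and the evenness $\varphi_2(-\kappa_1) = \varphi_2(\kappa_1)$ established in Lemma\,\ref{lem:even}.

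For Stage 1, I would conjugate the parallel transport equation (\ref{eq:dvk11}) and note that $\bar{P}(\vct{k}(\kappa_1,-1/2)) = P(\vct{k}(-\kappa_1,1/2)) = P(\vct{k}(-\kappa_1,-1/2))$, where the first equality uses (\ref{eq:ptrs}) and the second uses $\Lambda^*$-periodicity of $P$. Substituting $\kappa_1 \to -\kappa_1$ then shows that $\bar{\vct{u}}(\vct{k}(-\kappa_1,-1/2))$ satisfies the same ODE (in $\kappa_1$) as $\vct{u}(\vct{k}(\kappa_1,-1/2))$. The initial data at $\kappa_1 = -1/2$ are $\bar{\vct{u}}(\vct{k}(1/2,-1/2)) = e^{-\I\varphi_1}\vct{v}^0$ and $\vct{v}^0$ respectively (using (\ref{eq:phik2}) and reality of $\vct{v}^0$), so uniqueness of linear ODEs yields $\bar{\vct{u}}(\vct{k}(-\kappa_1,-1/2)) = e^{-\I\varphi_1}\vct{u}(\vct{k}(\kappa_1,-1/2))$. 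The Stage 1 correction $e^{-\I\varphi_1(\kappa_1+\half)}$ absorbs this extra $e^{-\I\varphi_1}$ exactly, giving $\bar{\tilde{\vct{u}}}(\vct{k}(-\kappa_1,-1/2)) = \tilde{\vct{u}}(\vct{k}(\kappa_1,-1/2))$ for the Stage 1 output.

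For Stage 2, I would run the same conjugate-and-reflect argument on the $\kappa_2$-parallel transport (\ref{eq:dvk2}), using the Stage 1 symmetry as the initial data at $\kappa_2 = -1/2$ and (\ref{eq:zdiff}) to relate $\vct{u}(\vct{k}(-\kappa_1,1/2))$ back to $\vct{u}(\vct{k}(-\kappa_1,-1/2))$. The resulting identity is $\bar{\vct{u}}(\vct{k}(-\kappa_1,-\kappa_2)) = \bar{z}(-\kappa_1)\vct{u}(\vct{k}(\kappa_1,\kappa_2))$, and Lemma\,\ref{lem:even} folds $\bar{z}(-\kappa_1)$ to $\bar{z}(\kappa_1) = e^{-\I\varphi_2(\kappa_1)}$. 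The Stage 2 correction $e^{-\I\varphi_2(\kappa_1)(\kappa_2+\half)}$ then cancels this residual phase exactly, producing $\bar{\tilde{\vct{u}}}(\vct{k}(-\kappa_1,-\kappa_2)) = \tilde{\vct{u}}(\vct{k}(\kappa_1,\kappa_2))$, which after a relabeling is the desired (\ref{eq:utrs22}). Substituting $\vct{k} \to -\vct{k}$ in the Fourier integral (\ref{eq:ufourc}) and using (\ref{eq:utrs22}) then gives $\bar{u}_{i,\vct{R}} = u_{i,\vct{R}}$, so the Wannier function is real by (\ref{eq:wann2}) since the atomic orbitals $\phi_i$ are real-valued. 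The main obstacle is the bookkeeping of the two nested phase corrections against the reflection $(\kappa_1,\kappa_2) \to (-\kappa_1,-\kappa_2)$; Lemma\,\ref{lem:even} is precisely the identity needed for the Stage 2 cancellation to close.
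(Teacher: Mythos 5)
Your proposal is correct and follows essentially the same conjugate-and-reflect strategy as the paper: both reduce the claim to uniqueness of linear ODE solutions, both use the TRS property of the projector $\bar{P}(\vct{k}) = P(-\vct{k})$, the reality of $\vct{v}^0$, and Lemma\,\ref{lem:even} for the evenness of $\varphi_2$. The only difference is bookkeeping — the paper conjugates and reflects the already-corrected equations (\ref{eq:dvk1}) and (\ref{eq:dvk2new}) and closes the initial-condition loop using the periodicity established in Lemma\,\ref{lem:stage1} and Theorem\,\ref{thm:stage2}, while you work with the uncorrected equations (\ref{eq:dvk11}) and (\ref{eq:dvk2}), extract the constant residual phases via (\ref{eq:phik2}) and (\ref{eq:zdiff}), and verify explicitly that the gauge corrections $e^{-\I\varphi_1(\kappa_1+\frac{1}{2})}$ and $e^{-\I\varphi_2(\kappa_1)(\kappa_2+\frac{1}{2})}$ absorb them — but this is the same argument executed at a slightly different level of the construction.
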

\begin{proof}
	In Section\,\ref{sec:stage1}, $\tilde{\vct{u}}$ is constructed by solving (\ref{eq:dvk1}) with the initial condition (\ref{eq:dvk10}). We define a new vector $\vct{w}_1(\kappa_1, -1/2) = \bar{\tilde{\vct{u}}}(\vct{k}(-\kappa_1, -1/2))$. 
	We observe that, due to (\ref{eq:ptrs}), the periodicity of $P$ and the chain rule, $\vct{w}_1$ satisfies the same equation (\ref{eq:dvk1}):
	\begin{equation}
		\frac{\partial \vct{w}_1(\kappa_1, -1/2)}{\partial \kappa_1}   = 	\frac{\partial P\bb{\vct{k}(\kappa_1, -1/2)}}{\partial \kappa_1} \vct{w}_1(\kappa_1, -1/2) -\I \varphi_1 \vct{w}_1(\kappa_1, -1/2)
		\label{eq:dvk1r}
	\end{equation}
	 for $\kappa_1\in\sb{-\half,\half}$\,, subject to the initial condition
\begin{equation}
	\vct{w}_1(-1/2, -1/2) = \bar{\tilde{\vct{u}}}(\vct{k}(1/2, -1/2)) = \bar{\tilde{\vct{u}}}(\vct{k}(-1/2, -1/2)) =  \bar{\vct{v}}^0 = \vct{v}^0\,,
	\label{eq:dvk1r0}
\end{equation}
where the second equality is due to periodicity (see Lemma\,\ref{lem:stage1}) and the third one is due to the realty of $\vct{v}^0$ by assumption.
We observe that (\ref{eq:dvk1r}) with the initial condition (\ref{eq:dvk1r0}) is identical to (\ref{eq:dvk1}) with the initial condition (\ref{eq:dvk10}). By the uniqueness theorem of initial value problems, we conclude that $\vct{w}_1(\kappa_1, -1/2)= \tilde{\vct{u}}\bb{\vct{k}(\kappa_1, -1/2)}$\,, which implies that 
\begin{align}
	\tilde{\vct{u}}\bb{\vct{k}(\kappa_1, -1/2)} = \bar{\tilde{\vct{u}}}\bb{\vct{k}(-\kappa_1, -1/2)}\,.
	\label{eq:g0trs}
\end{align}

In Section\,\ref{sec:stage2}, for each $\kappa_1\in\sb{-\half,\half}$,  $\tilde{\vct{u}}$ in (\ref{eq:stage2gt}) is constructed by solving (\ref{eq:dvk2new}) with the initial condition (\ref{eq:dvk20}). 
Similarly, we define a new vector $\vct{w}_2(\kappa_1,\kappa_2) =  \bar{\tilde{\vct{u}}} \bb{\vct{k}(-\kappa_1, -\kappa_2)}$. Again, due to (\ref{eq:ptrs}) and the chain rule, for each $\kappa_1\in\sb{-\half,\half}$, $\vct{w}_2(\kappa_1,\kappa_2)$ satisfies the same equation as (\ref{eq:dvk2new}):
\begin{equation}
	\frac{\partial \vct{w}_2(\kappa_1,\kappa_2)}{\partial \kappa_2}   = 	\frac{\partial P\bb{\vct{k}(\kappa_1, \kappa_2)}}{\partial \kappa_2} \vct{w}_2(\kappa_1,\kappa_2) -\I\varphi_2(\kappa_1)\vct{w}_2(\kappa_1,\kappa_2)\,,
	\label{eq:dvk22}
\end{equation}
where we have used $\varphi_2(\kappa_1)=\varphi_2(-\kappa_1)$ in Lemma\,\ref{lem:even}. The initial condition is given by the formula
\begin{align}
	\vct{w}(\kappa_1, -1/2) = \bar{\tilde{\vct{u}}}\bb{\vct{k}(-\kappa_1, 1/2)} = \bar{\tilde{\vct{u}}}\bb{\vct{k}(-\kappa_1, -1/2)}=\tilde{\vct{u}}\bb{\vct{k}(\kappa_1, -1/2)}\,,
	\label{eq:dvk220}
\end{align}
where the second equality is due to periodicity (see Theorem\,\ref{thm:stage2}) and the third one is due to (\ref{eq:g0trs}). We observe that (\ref{eq:dvk22}) with the initial condition (\ref{eq:dvk220})  is identical to  (\ref{eq:dvk2new})  with the initial condition (\ref{eq:dvk20}). By the uniqueness theorem of initial value problems, we conclude that $ \vct{w}_2(\kappa_1,\kappa_2) =  \tilde{\vct{u}} \bb{\vct{k}(\kappa_1, \kappa_2)}$. Taking the complex conjugate yields the desirable result
\begin{align}
	\bar{\tilde{\vct{u}}} \bb{\vct{k}(\kappa_1, \kappa_2)} =  \tilde{\vct{u}} \bb{\vct{k}(-\kappa_1, -\kappa_2)}\,,\quad (\kappa_1,\kappa_2) \in T\,.
\end{align}

\end{proof}

\section{Numerical procedures \label{sec:numerics}}
In this section, we describe the numerical procedures for constructing optimally localized Wannier functions in Section\,\ref{sec:method1}. Section\,\ref{sec:init}--\ref{sec:nstage1} correspond to Section\,\ref{sec:stage1}, Section\,\ref{sec:npara2}--\ref{sec:ntopo} correspond to \ref{sec:stage2}, and Section\,\ref{sec:nbc}--\ref{sec:nwann} correspond to Section\,\ref{sec:stage3}.
The detailed description of the algorithms is in Section\,\ref{sec:algo}.

 The inputs to the algorithms are the primitive lattice vectors $\vct{a}_1, \vct{a}_2$ defining the lattice $\Lambda$ in (\ref{eq:lambda}),  the reciprocal primitive lattice $\vct{b}_1, \vct{b}_2$ defining the reciprocal lattice $\Lambda^*$ in (\ref{eq:rgortho}) and the torus $D^*$ in (\ref{eq:Ds}), a family of $n$ by $n$ matrix $H$, and a real number $h=\frac{1}{N}$, where $N$ is an even integer that specifies number of points for discretizing $D^*$ in each dimension. Furthermore, we assume the family of eigenvalues $E$ and eigenvectors $\vct{u}$ of interest has been chosen. 

Roughly speaking, the numerical procedure involves using a fourth-order Runger-Kutta (RK4) with Richardson extrapolation (see Section\,\ref{sec:rk4}) that achieves $O(h^6)$ global truncation error for solving the parallel transport equation at $N$ equispaced points in each dimension of $D^*$.  Since all quantities of interests by construction are analytic and $\Lambda^*$-periodic, their Fourier coefficients are approximated (with exponential convergence) via a discrete Fourier transform introduced in Section\,\ref{sec:dft}.  Subsequently, all differentiation operations for computing Berry connections (and Berry curvatures in Appendix\,\ref{sec:method2}) and solving Poisson's equations are carried out with the help of their Fourier series; almost no accuracy is lost during these steps. 
As a result, the accuracy of the procedure in this paper is determined by the accuracy of the ODE solver, which is $O(h^6)$ in this case, for sufficiently large $N$. Moreover, the computation time is also dominated by solving the parallel transport equation. It should be observed that we choose RK4 in this paper for simplicity. The choice is by no means optimal and can be easily replaced by higher-order methods for better convergence, such as spectral deferred correction schemes\,\cite{dutt2000spectral}, or multistep methods for fewer function evaluations.

It also should be observed that, although we solve the parallel transport equation explicitly in this paper to obtain the band structure and the Wannier function simultaneously, the two parts can be decoupled according to Remark\,\ref{rmk:twist} after band structures are obtained. The results produced by this approach is only a second-order scheme but it can be done with very little cost. We refer the reader to Example 1 in Section\,\ref{sec:eg1} for details.

\subsection{Discretizing the torus $D^*$ \label{sec:init}}
We parameterize $D^*$ by $T$ in the form of (\ref{eq:xydir}) and define $(N+1)^2$ equispaced points $\cb{(\kappa^{(j_1)}_1,\kappa^{(j_2)}_2)}$ in $T$ by the formulas
\begin{align}
	\kappa^{(j_1)}_1 = j_1 h\,,\quad 	\kappa^{(j_2)}_2 = j_2 h\,, \quad j_1,j_2=-N/2, -(N/2-1),\ldots, N/2\,,
	\label{eq:npts}
\end{align} 
According to (\ref{eq:xydir}), these points 
correspond to 
\begin{equation}
	\vct{k}^{(j_1,j_2)} = \kappa^{(j_1)}_1\vct{b}_1 + \kappa^{(j_2)}_2\vct{b}_2	
		\label{eq:nkpts}
\end{equation}
 in $D^*$\,. Given any function $f$ defined on $D^*$, we use the notation $f^{(j_1,j_2)}$ to represent the computed value of $f$ at $\vct{k}^{(j_1,j_2)}$ so that we have
 \begin{equation}
 	f^{(j_1,j_2)} \approx f(\vct{k}^{(j_1,j_2)})\,.
 \end{equation}
 
\subsection{Step 1: computing an assignment on the line $\gamma_0$ \label{sec:nstage1}}
We compute the parallel transport of eigenvectors described in Section\,\ref{sec:stage1} on the line $\gamma_0$ in (\ref{eq:gam0}). 
First, we obtain an initial condition by finding the eigenvalue $E^0$ and eigenvector $\vct{v}^0$ 
\begin{align}
	H(\vct{k}(-1/2,-1/2))\vct{v}^0 = 	E^0\vct{v}^0\,.
	\label{eq:nint}
\end{align}
This is done by the standard QR algorithm in $O(n^3)$ operations. When $H$ has time-reversal symmetry, we choose $\vct{v}^0$ to be a real vector (see Lemma\,\ref{lem:realv}) for the realty of the Wannier function by Theorem\,\ref{thm:real}. If this is not the case, any choice of $\vct{v}^0$ is accepted.

Instead of using (\ref{eq:dvk11}), we solve its equivalent version stated in (\ref{eq:evalpert}) and (\ref{eq:evecpert}). Namely, we solve the following system of ODEs
\begin{align}
\begin{split}
	&\frac{\partial }{\partial \kappa_1} E(\vct{k}(\kappa_1,-1/2)) = \vct{u}^*(\vct{k}(\kappa_1,-1/2))\frac{\partial H}{\partial \kappa_1}\vct{u}(\vct{k}(\kappa_1,-1/2))\,,\\
		&\frac{\partial }{\partial \kappa_1} \vct{u}(\vct{k}(\kappa_1,-1/2)) = -(H- E)^\dagger \frac{\partial H}{\partial \kappa_1}\vct{u}(\vct{k}(\kappa_1,-1/2))\,,
\end{split}
		\label{eq:ne1}
\end{align}
subject to the initial condition computed in (\ref{eq:nint})
\begin{equation}
	E(\vct{k}(-1/2,-1/2)) = E^0\,,	\quad \vct{u}(\vct{k}(-1/2,-1/2)) = \vct{v}^0\,.
\end{equation}
We solve the system above by RK4 for $\kappa_1 \in \sb{-\half,\half}$ with $h=1/N$, so that we obtain the eigenvalues and eigenvectors
\begin{equation}
	E^{(j_1,-N/2)}\,,\quad \vct{u}^{(j_1,-N/2)}\,,\quad  \mbox{ for $j_1=-N/2,\ldots,N/2\,, j_2 = -N/2$}.
	\label{eq:neu1}
\end{equation}
Richardson extrapolation is done by repeating the above steps with $h$ replaced by $h/2$ and $h/4$, followed by taking their differences as in (\ref{eq:richard}).

At each step, the pseudoinverse $(H- E)^\dagger$ is computed by a singular value decomposition of
$H- E$ (see Algorithm 1A), discarding the component corresponding to the smallest singular value, and applying
the inverse of the decomposition directly to the right-hand side. The details can be found, for example, in \cite{dahlquist2003numerical} and this approach is related to the form in Remark\,\ref{rmk:svd}. At every step, each pseudoinverse computation requires $O(n^3)$ operations, so computing (\ref{eq:neu1}) requires $O(n^3 N)$ operations.

Next, we compute the phase $\varphi_1$ in (\ref{eq:phik2}) by (\ref{eq:logphik2}), which is given by
\begin{align}
	\varphi_1 = -\I\log	\bb{\vct{u}^{*(-N/2,-N/2)}\vct{u}^{(N/2,-N/2)}}\,.
	\label{eq:nphi1}
\end{align}
Next, we apply the gauge transform defined by (\ref{eq:stage1gt}) and we have
\begin{align}
	\tilde{\vct{u}}^{(j_1,j_2)} = e^{-\I \varphi_1 (j_1 h+\half)} \vct{u}^{(j_1,j_2)}
	\label{eq:ntu}
\end{align}
for $j_1=-N/2,\ldots,N/2 - 1$ and $j_2 = -N/2$. We observe that $j_1=N/2$ is removed since $\tilde{\vct{u}}$ is periodic by Lemma\,\ref{lem:stage1}.

\subsection{Step 2: computing parallel transport on lines $\gamma_{\kappa_1}$ \label{sec:npara2}}
We compute the parallel transport described in Section\,\ref{sec:stage1} on the line $\gamma_{\kappa_1}$ in (\ref{eq:gamk1}) for $\kappa_1\in\sb{-\half,\half}$.  For each $\kappa^{(j_1)}_1 = j_1 h$  with $j_1=-N/2,\ldots,-N/2-1$, the parallel transport equations are given by
\begin{align}
\begin{split}
	&\frac{\partial }{\partial \kappa_2} E(\vct{k}(\kappa^{(j_1)}_1,\kappa_2)) = \vct{u}^*(\vct{k}(\kappa^{(j_1)}_1,\kappa_2))\frac{\partial H}{\partial \kappa_2}\vct{u}(\vct{k}(\kappa^{(j_1)}_1, \kappa_2))\,,\\
		&\frac{\partial }{\partial \kappa_2} \vct{u}(\vct{k}(\kappa^{(j_1)}_1,\kappa_2)) = -(H- E)^\dagger \frac{\partial H}{\partial \kappa_2}\vct{u}(\vct{k}(\kappa^{(j_1)}_1,\kappa_2))\,,
\end{split}
\label{eq:ne2}
\end{align}
subject to the initial condition
\begin{align}
		E(\vct{k}(\kappa^{(j_1)}_1,-1/2)) = E^{(j_1,-N/2)}\,,	\quad \vct{u}(\vct{k}(\kappa^{(j_1)}_1,-1/2)) = \tilde{\vct{u}}^{(j_1,-N/2)} \,,
\end{align}
which are outputs from Section\,\ref{sec:nstage1}. Similar to solving (\ref{eq:neu1}), given each $\kappa^{(j_1)}_1$, the ODE system is solved by RK4 for $\kappa_2 \in \sb{-\half,\half}$ with $h=1/N$. Thus we obtained the computed eigenvalues  and eigenvectors  
\begin{align}
	E^{(j_1,j_2)}\,, \quad\vct{u}^{(j_1,j_2)}\,,\quad \mbox{for $j_1=-N/2,\ldots,N/2-1\,,j_2 = -N/2,\ldots,N/2$\,. }
	\label{eq:neu2}
\end{align}
Again, Richardson extrapolation is done by repeating the above steps with $h$ replaced by $h/2$ and $h/4$, followed by taking their differences as in (\ref{eq:richard}). For each $\kappa^{(j_1)}_1$, the cost of ODE solves is the same as the one in computing (\ref{eq:neu1}), and it is repeated $N$ times for different $j_1$. Thus the total cost of computing (\ref{eq:neu2}) is $O(n^3 N^2)$ operations. This is the dominant cost of the entire construction.

\begin{remark}
	We observe that the dependence on the eigenvalue in (\ref{eq:ne1}) and (\ref{eq:ne2}) can be removed by computing the eigenvalue from the computed eigenvector $\vct{u}$ via the Rayleigh quotient
	\begin{equation}
		E = \frac{\vct{u}^*H\vct{u}}{\vct{u}^*\vct{u}}\,.
	\end{equation}
This also squares the error of the computed eigenvector, doubling the convergence rate for $E$.
\end{remark}

\subsection{Step 3: determining the topological obstruction \label{sec:ntopo}}
In order to apply (\ref{eq:reschern}) to compute the first Chern number $C_1$,
we determine the value $z^{(j_1)} $ of $z$ in (\ref{eq:z}) at $\kappa_1 =\kappa_1^{(j_1)} $ with $j_1=-N/2,\ldots,-N/2-1$ by the formula
\begin{equation}
	z^{(j_1)} = \vct{u}^{*(j_1,-N/2)}\vct{u}^{(j_1,N/2)}\,,
	\label{eq:nzz}
\end{equation}
where the quantities on right are outputs from Section\,\ref{sec:npara2}. Lemma\,\ref{lem:stage21} implies that $z$ is analytic and periodic in $\kappa_1$. This permits efficient and accurate evaluation of  $z'(\kappa^{(j_1)}_1)$ by the Fourier series of $z$. We first compute the Fourier  coefficients $\hat{z}_m$ with $m=-N/2,\ldots,N/2-1$ via a discrete Fourier transform (see Section\,\ref{sec:dft} for the two-dimensional version)
\begin{align}
	\hat{z}_m = \frac{1}{N}\sum_{j_1 = -N/2}^{N/2-1} e^{-\frac{2\pi \I}{N} m j_1} z^{(j_1)} \,.
	\label{eq:zfour}
\end{align}
We differentiate $z$ in the Fourier basis and apply the inverse of the discrete Fourier transform, obtaining an approximation $z'^{(j_1)}$ of $z'$ at $\kappa_1 =\kappa_1^{(j_1)} $ with $j_1=-N/2,\ldots,-N/2-1$ by the formula 
\begin{equation}
	z'^{(j_1)} =  \sum_{m = -N/2}^{N/2-1} e^{\frac{2\pi \I}{N} m j_1} (2\pi\I m \hat{z}_m)\,.
		\label{eq:zpfour}
\end{equation}
Both (\ref{eq:zfour}) and (\ref{eq:zpfour}) are computed via the FFT in $O(N\log N)$ operations.

Then we apply the trapezoidal rule (\ref{eq:trapz}) to (\ref{eq:reschern}) and round the real part of the sum to the nearest integer to obtain $C_1$. More explicitly, we have
\begin{equation}
	C_1 = \left\lfloor \Re{\frac{h}{2\pi\I} \sum_{j_1 = -N/2}^{N/2-1} \frac{z'^{(j_1)}}{{z^{(j_1)}}}} \right\rceil\,,
	\label{eq:nchern}
\end{equation}
where $\lfloor \cdot\rceil$ denotes the function for rounding to the nearest integer. By Theorem\,\ref{thm:trapz}, all approximations above converge exponentially, thus highly accurate for sufficiently large $N$. Hence, the rounding operation in (\ref{eq:nchern}) is guaranteed to produce the correct integer value.

If $C_1$ is computed and found to be nonzero, we terminate the construction and return $\vct{u}^{(j_1,j_2)}$ for $j_1=-N/2,\ldots,N/2-1$ and $j_2 = -N/2,\ldots,N/2$ computed in Section\,\ref{sec:npara2}\,. By Lemma\,\ref{lem:stage21}, $\vct{u}(\vct{k}(\kappa_1,\kappa_2))$ is analytic in both $\kappa_1$ and $\kappa_2$. It is also periodic in $\kappa_1$ but not in $\kappa_2$. This is the best assignment one can hope for in the presence of the obstruction $C_1\ne 0$ by Lemma\,\ref{lem:topo}\,.

If $C_1$ is found to be zero, we continue the construction and compute $\varphi_2$ by (\ref{eq:phitopo}). Thus we determine the value $\varphi_2^{(j_1)}$ of $	\varphi_2$  at $\kappa_1 =\kappa_1^{(j_1)} $ with $j_1=-N/2,\ldots,-N/2-1$ by the formula
\begin{equation}
	\varphi_2^{(j_1)} = -\I \log(z^{(j_1)})\,,\quad j_1 = -N/2,\ldots,N/2-1\,.
	\label{eq:nphi2}
\end{equation}
By Lemma\,\ref{lem:topo}, the branch for evaluating the log function is chosen so that $\varphi_2$ is periodic and analytic. (This may not coincide with the principal branch.)
Then we carry out the gauge transform in (\ref{eq:stage2gt}) to obtain the new eigenvectors
\begin{align}
	\tilde{\vct{u}}^{(j_1,j_2)} = e^{-\I \varphi_2^{(j_1)} (j_2 h+\half)} \vct{u}^{(j_1,j_2)}
	\label{eq:ntu}
\end{align}
for $j_1=-N/2,\ldots,N/2-1$ and $j_2 = -N/2,\ldots,N/2-1$. The point $j_2=N/2$ is removed since $\tilde{\vct{u}}$ is periodic (and analytic) in $\kappa_2$ by Theorem\,\ref{thm:stage2}.

If $H$ has time-reversal symmetry, Theorem\,\ref{thm:real} shows that $\tilde{\vct{u}}^{(j_1,j_2)}$ has the approximate symmetry
\begin{equation}
	\bar{\tilde{\vct{u}}}^{(j_1,j_2)} \approx \tilde{\vct{u}}^{(-j_1,-j_2)}
	\label{eq:ntrs}
\end{equation}
for $j_1=-N/2,\ldots,N/2-1$ and $j_2 = -N/2,\ldots,N/2-1$.

 \begin{remark}
	If two vectors $\vct{v}_1 = \vct{u}(\vct{k})$ and $\vct{v}_2(\vct{k} + h\vct{n} )$	for small $h$ in the direction given by $\vct{n}$ are obtained by direct eigensolves, the phases of $\vct{v}_1$ and $\vct{v}_2$ are independent. In \cite{marzari1997maximally} and \cite{vanderbilt2018berry}, it is pointed out that the (approximate) parallel transported vector $\widetilde{\vct{v}}_2$ from $\vct{v}_1$ given by 
	$		\widetilde{\vct{v}}_2 = e^{\I \beta}\vct{v}_2$, 
	where $\beta$ is the phase given by $\beta = -\Im\log(\vct{v}_1^*\vct{v}_2)$. 
	Provided the eigenvectors at each grid point (up to any phase) have been obtained,	this approach can be used to obtain $\tilde{\vct{u}}^{(j_1,j_2)} $ in (\ref{eq:ntu}) directly. However, the scheme is only a second-order scheme; we refer the reader to Table\,\ref{tab:eg1twist} for Example 1 in Section\,\ref{sec:eg1} for more details.
	
	\label{rmk:twist}
\end{remark}

\begin{remark}
	It should be observed that the Wannier function corresponding to $\tilde{\vct{u}}$ is already exponentially localized. Furthermore, it is also real if $H$ has time-reversal symmetry by (\ref{eq:ntrs}). The following Section\,\ref{sec:nbc} and \ref{sec:div} only compute the optimal Wannier function by reducing the variance of the Fourier coefficients of $\tilde{\vct{u}}$. Hence, they can be skipped for  Section\,\ref{sec:nwann} directly if the optimality of the Wannier function is not needed with all $\dbtilde{\vct{u}}$ replace by $\tilde{\vct{u}}$ in (\ref{eq:ntu}).
\end{remark}

\subsection{Step 4: computing the Berry connection\label{sec:nbc}}
Having computed $\tilde{\vct{u}}^{(j_1,j_2)}$ for  $j_1, j_2=-N/2,\ldots,N/2-1$, we first compute the derivatives
\begin{align}
	\frac{\partial }{\partial \kappa_1} 	\tilde{\vct{u}}^{(j_1,j_2)}\,, \quad 	\frac{\partial }{\partial \kappa_2} 	\tilde{\vct{u}}^{(j_1,j_2)}
\end{align}
by the Fourier series of $\tilde{\vct{u}}$ in order to compute the Berry connection in (\ref{eq:bckform}).
We denote the components of $	\tilde{\vct{u}}$ by $	\tilde{u}_{i}$ for $i=1,2,\ldots,n$.  We compute the derivatives above by first computing the approximate Fourier coefficients by a discrete Fourier transform, then differentiating its Fourier series with respect to $\kappa_1$ and $\kappa_2$, followed by the inverse transform. More explicitly, by notations in Section\,\ref{sec:dft}, for each component $i=1,\ldots,n$, we have
\begin{align}
		\frac{\partial }{\partial \kappa_1} 	\tilde{u}_{i}^{(j_1,j_2)} = \bb{\mathscr{F}^{-1}_N \mathscr{D}_1 \mathscr{F}_N(	\tilde{u}_{i})}_{j_1j_2} \,, \quad 		\frac{\partial }{\partial \kappa_2} 	\tilde{u}_{i}^{(j_1,j_2)} = \bb{\mathscr{F}^{-1}_N \mathscr{D}_2 \mathscr{F}_N(	\tilde{u}_{i})}_{j_1j_2} \,,
		\label{eq:ndktu}
\end{align}
for $j_1, j_2=-N/2,\ldots,N/2-1$, where $\mathscr{D}_1 $ and $\mathscr{D}_2$  are defined below (\ref{eq:ddftg}). This involves in total $2n$ FFTs.

By applying (\ref{eq:bckform}), we compute the components of the Berry connection  parallel to $\vct{b}_1$ and $\vct{b}_2$ via
\begin{align}
	A_1^{(j_1,j_2)} = \sum_{i=1}^n 	\bar{	\tilde{u}}_{i}^{(j_1,j_2)} \frac{\partial }{\partial \kappa_1} 	\tilde{u}_{i}^{(j_1,j_2)}\,, \quad	A_2^{(j_1,j_2)} = \sum_{i=1}^n 	\bar{	\tilde{u}}_{i}^{(j_1,j_2)} \frac{\partial }{\partial \kappa_2} 	\tilde{u}_{i}^{(j_1,j_2)} \,,
	\label{eq:na12}
\end{align}
for  $j_1, j_2=-N/2,\ldots,N/2-1$\,.
By (\ref{eq:bcxyform}), the $x,y$ components of the Berry connection are given by
\begin{align}
		\begin{split}
	A_x^{(j_1,j_2)} = \frac{1}{2\pi}\vct{a}_1\cdot\vct{e}_x A_1 ^{(j_1,j_2)} + \frac{1}{2\pi}\vct{a}_2\cdot\vct{e}_x A_2^{(j_1,j_2)}\,,\\
	A_y^{(j_1,j_2)} = \frac{1}{2\pi}\vct{a}_1\cdot\vct{e}_y A_1 ^{(j_1,j_2)} + \frac{1}{2\pi}\vct{a}_2\cdot\vct{e}_y A_2^{(j_1,j_2)}\,,
		\end{split}
		\label{eq:nbcxy}
\end{align}	
for  $j_1, j_2=-N/2,\ldots,N/2-1$. 

\begin{remark}
	\label{rmk:var}
	We observe that the moment functions given by  (\ref{eq:Ru}) and (\ref{eq:R2u}) for $\tilde{\vct{u}}$ can be computed by first computing 
	\begin{align}
			\begin{split}
		\frac{\partial }{\partial k_x} 	\tilde{u}_{i}^{(j_1,j_2)} =&   \frac{1}{2\pi}\vct{a}_1\cdot\vct{e}_x \frac{\partial }{\partial \kappa_1} 	\tilde{u}_{i}^{(j_1,j_2)} +  \frac{1}{2\pi}\vct{a}_2\cdot\vct{e}_x\frac{\partial }{\partial \kappa_2} 	\tilde{u}_{i}^{(j_1,j_2)}\,, \\
		\frac{\partial }{\partial k_y} 	\tilde{u}_{i}^{(j_1,j_2)} =&   \frac{1}{2\pi}\vct{a}_1\cdot\vct{e}_y \frac{\partial }{\partial \kappa_1} 	\tilde{u}_{i}^{(j_1,j_2)} +  \frac{1}{2\pi}\vct{a}_2\cdot\vct{e}_y \frac{\partial }{\partial \kappa_2} 	\tilde{u}_{i}^{(j_1,j_2)}\,, 
			\end{split}
	\end{align}
by results computed in 	(\ref{eq:ndktu}) with the formula (\ref{eq:diffc}). Then, for the $x,y$ component of the Wannier center (\ref{eq:Ru}), we have
\begin{align}
	\begin{split}
	&\langle \vct{R}_x \rangle \approx  \I h^2 \sum_{j_1,j_2=-N/2}^{N/2-1} \sum_{i=1}^n \bar{	\tilde{u}}_{i}^{(j_1,j_2)} \frac{\partial }{\partial k_x} 	\tilde{u}_{i}^{(j_1,j_2)}\,,\\
	&\langle \vct{R}_y \rangle \approx  \I h^2 \sum_{j_1,j_2=-N/2}^{N/2-1} \sum_{i=1}^n \bar{	\tilde{u}}_{i}^{(j_1,j_2)} \frac{\partial }{\partial k_y} 	\tilde{u}_{i}^{(j_1,j_2)}\,,
	\end{split}
\end{align}
and
\begin{align}
	\langle \norm{\vct{R}}^2 \rangle \approx  h^2 \sum_{j_1,j_2=-N/2}^{N/2-1} \sum_{i=1}^n \bb{\abs{\frac{\partial }{\partial k_x} 	\tilde{u}_{i}^{(j_1,j_2)}}^2 + \abs{\frac{\partial }{\partial k_y} 	\tilde{u}_{i}^{(j_1,j_2)}}^2}\,.
\end{align}
By Theorem\,\ref{thm:trapz}, all trapezoidal rule approximations above converge exponentially as $N$ increases, and the accuracy is only limited by how accurate $\tilde{\vct{u}}$ is computed.
\end{remark}

\subsection{Step 5: eliminating the divergence of the Berry connection \label{sec:ndiv}}
Based on the computed Berry connection in (\ref{eq:nbcxy}), we solve the Poisson's equation (\ref{eq:divlessc3}) by finding the divergence $g$ in (\ref{eq:rhs}). Similar to steps in Section\,\ref{sec:nbc}, all derivatives are done in the Fourier domain, where the equation (\ref{eq:divless3}) is diagonal, thus easily solved.
First, we compute the Fourier coefficients $	\hat{A}_x,	\hat{A}_y$ of $A_x, A_y$ in (\ref{eq:nbcxy}) by
\begin{align}
	\hat{A}_x = \mathscr{F}_N(A_x)\,,\quad 	\hat{A}_y = \mathscr{F}_N(A_y)\,.
	\label{eq:naxyfour}
\end{align}
By the change of variable formulas in (\ref{eq:diffc}), we compute the Fourier coefficients $\hat{g}$ of $g$ in (\ref{eq:rhs}) by the formula
\begin{align}
			\begin{split}
	\hat{g} &=   \bb{\frac{1}{2\pi}\vct{a}_1\cdot\vct{e}_x\mathscr{D}_1+ \frac{1}{2\pi}\vct{a}_2\cdot\vct{e}_x \mathscr{D}_2}\hat{A}_x +  \bb{\frac{1}{2\pi}\vct{a}_1\cdot\vct{e}_y \mathscr{D}_1+ \frac{1}{2\pi}\vct{a}_2\cdot\vct{e}_y \mathscr{D}_2}\hat{A}_y \,,
			\end{split}
			\label{eq:ndivf}
\end{align}
 where $\mathscr{D}_1 $ and $\mathscr{D}_2$  are defined as in (\ref{eq:ndktu}), approximating $\frac{\partial}{\partial \kappa_1}$ and $\frac{\partial}{\partial \kappa_2}$ respectively in the Fourier basis. 
 Having computed $\hat{g}$ in (\ref{eq:ndivf}), we solve the Poisson's equation by (\ref{eq:psisol}) to obtain the potential $\psi$ for the divergence of the Berry connection. We first compute the Fourier coefficients $\hat{\psi}$ of $\psi$ by the formula
 \begin{align}
 		\hat{\psi}_{m_1 m_2} = 
 	\begin{cases}
 	 0\,,	 & m_1=m_2=0\,,\\
 	 \frac{\hat{g}_{m_1 m_2}}{\norm{m_1\vct{a_1} + m_2\vct{a_2}}^2} \,, &  \mbox{$m_1\ne 0$ or $m_2\ne 0$ }\,,
 	\end{cases}
 	\label{eq:npsifour}
 \end{align}
for  $m_1, m_2=-N/2,\ldots,N/2-1$. We apply the inverse discrete Fourier transform to $\hat{\psi}$ to obtain
\begin{align}
	\psi^{(j_1,j_2)}  = \bb{ \mathscr{F}^{-1}_N (\hat{\psi})}_{j_1 j_2}
	\label{eq:npsi}
\end{align}
for  $j_1, j_2=-N/2,\ldots,N/2-1$\,.
Thus solving for $	\psi^{(j_1,j_2)}$ takes in total $3$ FFTs. 

Finally, we apply the gauge transformation in (\ref{eq:gtstage3}) that eliminates the divergence of the Berry connection of $\tilde{\vct{u}}$ to obtain
\begin{align}
		\dbtilde{\vct{u}}^{(j_1,j_2)} = 	e^{-\I 	\psi^{(j_1,j_2)}}\tilde{\vct{u}}^{(j_1,j_2)}
\end{align}
for  $j_1, j_2=-N/2,\ldots,N/2-1$\,, which is the optimal assignment by Theorem\,\ref{thm:opt}.

\subsection{Step 6: computing the Wannier function \label{sec:nwann}}
To compute the Wannier function corresponding to $	\dbtilde{\vct{u}}$, we use the definition in (\ref{eq:wann2}). We compute the Fourier coefficients $\hat{\dbtilde{u}}_i$ of $	\dbtilde{u}_i$, the $i$-th component of $	\dbtilde{\vct{u}}$ for $i=1,\ldots,n$ by the formula
\begin{align}
	\hat{\dbtilde{u}}_i = { \mathscr{F}_N (\dbtilde{u}_i)}\,.
	\label{eq:dbufour}
\end{align}
 This takes $n$ FFTs.
By (\ref{eq:wann2}), we have
\begin{align}
	W_0(\vct{r}) \approx \sum_{i=1}^n \sum_{m_1, m_2 =-N/2}^{N/2-1} 	\hat{\dbtilde{u}}_{i, m_1 m_2}	 \cdot \phi_i(\vct{r} +\vct{R}_{m_1 m_2})
	\label{eq:nwann}
\end{align}
where $\vct{R}_{m_1 m_2} = m_1 \vct{a}_1 + m_2 \vct{a}_2$. 

\subsection{Detailed description of the algorithms\label{sec:algo}}
This section contains a detailed description of the algorithms described in Section\,\ref{sec:numerics}.

\noindent \underline{\textbf{Algorithm}} 

\noindent \textbf{Initialization} 
\begin{enumerate}
	\item Choose an even integer $N$ and compute $h=1/N$.
	\item Form $\cb{(\kappa_1^{(j_1)}, \kappa_2^{(j_2)}) }_{j_1,j_2=-N/2}^{N/2}$   by (\ref{eq:npts}) and $\cb{\vct{k}^{(j_1,j_2)} }_{j_1,j_2=-N/2}^{N/2}$   by (\ref{eq:nkpts}).
\end{enumerate}
\noindent \textbf{Step 1} [Parallel transport on $\gamma_{0}$] 
\begin{enumerate}
	\item Compute $E^0$ and $\vct{v}^0$ in (\ref{eq:nint}) by QR and normalize $\norm{{v}^0} = 1$.
	\item \textbf{If}  $H$ has time-reversal symmetry \textbf{then}\\
	Choose $\vct{v}^0$ to be a real vector.\\
	\textbf{End if}
	\item Solve (\ref{eq:ne1}) by RK4  to compute $\cb{E_{(1)}^{(j_1,-N/2)}}_{j_1=-N/2}^{N/2}$ and $\cb{\vct{u}_{(1)}^{(j_1,-N/2)}}_{j_1=-N/2}^{N/2}$\,,\\
	where Algorithm 1A is at each time step for evaluating $\frac{\partial}{\partial \kappa_1}E$ and $\frac{\partial}{\partial \kappa_1}\vct{u}$\,.
	\item Set $h_{(2)} = h/2$ and repeat 1 to obtain  $\cb{E_{(2)}^{(j_1,-N/2)}}_{j_1=-N/2}^{N/2}$ and $\cb{\vct{u}_{(2)}^{(j_1,-N/2)}}_{j_1=-N/2}^{N/2}$\,.
	\item Set $h_{(3)} = h/4$ and repeat 1 to obtain  $\cb{E_{(3)}^{(j_1,-N/2)}}_{j_1=-N/2}^{N/2}$ and $\cb{\vct{u}_{(3)}^{(j_1,-N/2)}}_{j_1=-N/2}^{N/2}$\,.
	\item Use outputs from 3 to 5 to perform Richardson extrapolation in (\ref{eq:richard}) to obtain \\
	$\cb{E^{(j_1,-N/2)}}_{j_1=-N/2}^{N/2}$ and $\cb{\vct{u}^{(j_1,-N/2)}}_{j_1=-N/2}^{N/2}$ with $O(h^6)$ error.
	\item Compute $\varphi_1$ by (\ref{eq:nphi1}).
	\item \textbf{Do} $j_1=-N/2,\ldots,N/2-1$
	\item[] Set $\tilde{\vct{u}}^{(j_1,-N/2)} = e^{-\I\varphi_1 (j_1 h+\half)}\vct{u}^{(j_1,-N/2)}$.
\item[]	\textbf{End do}
\end{enumerate}
\noindent \textbf{Step 2} [Parallel transport on $\gamma_{\kappa_1}$] 
\begin{enumerate}
	\item Solve (\ref{eq:ne2}) by RK4  to compute $\cb{E_{(1)}^{(j_1,j_2)}}_{j_1,j_2=-N/2}^{N/2}$ and $\cb{\vct{u}_{(1)}^{(j_1,j_2)}}_{j_1,j_2=-N/2}^{N/2}$\,,\\
	where Algorithm 1A is at each time step for evaluating $\frac{\partial}{\partial \kappa_1}E$ and $\frac{\partial}{\partial \kappa_1}\vct{u}$\,.
	\item Set $h_{(2)} = h/2$ and repeat 1 to obtain  $\cb{E_{(2)}^{(j_1,j_2)}}_{j_1,j_2=-N/2}^{N/2}$ and $\cb{\vct{u}_{(2)}^{(j_1,j_2)}}_{j_1,j_2=-N/2}^{N/2}$\,.
	\item Set $h_{(3)} = h/4$ and repeat 1 to obtain  $\cb{E_{(3)}^{(j_1,j_2)}}_{j_1,j_2=-N/2}^{N/2}$ and $\cb{\vct{u}_{(3)}^{(j_1,j_2)}}_{j_1,j_2=-N/2}^{N/2}$\,.
	\item Use outputs from 1 to 3 to perform Richardson extrapolation in (\ref{eq:richard}) to obtain \\
	$\cb{E^{(j_1,j_2)}}_{j_1,j_2=-N/2}^{N/2}$ and $\cb{\vct{u}^{(j_1,j_2)}}_{j_1,j_2=-N/2}^{N/2}$ with $O(h^6)$ error.
\end{enumerate}
\noindent \textbf{Step 3} [Computing the first Chern number]
\begin{enumerate}
	\item Compute  $\cb{z^{(j_1)}}_{j_1=-N/2}^{N/2-1}$ by (\ref{eq:nzz}).
	\item Compute $\cb{\hat{z}_m}_{m=-N/2}^{N/2-1}$ in (\ref{eq:zfour}) and $\cb{z'^{(j_1)}}_{j_1=-N/2}^{N/2-1}$ in (\ref{eq:zpfour}) by FFT.
	\item Compute $C_1$ by (\ref{eq:nchern}).
	\item \textbf{If} $C_1 \ne 0$ \textbf{then} 
	\item[] \textbf{Return} $\cb{E^{(j_1,j_2)}}_{j_1,j_2=-N/2}^{N/2}$ and $\cb{\vct{u}^{(j_1,j_2)}}_{j_1,j_2=-N/2}^{N/2}$\,.
	\item[] \textbf{Stop}[Topological obstruction encountered]
	\item[] \textbf{Else if} $C_1 = 0$ \textbf{then}
	\item Compute $\cb{\varphi_2^{(j_1)}}_{j_1=-N/2}^{N/2-1}$ by (\ref{eq:nphi2}).
	\item \textbf{Do} $j_1,j_2=-N/2,\ldots,N/2-1$
	\item[] Set $\tilde{\vct{u}}^{(j_1,j_2)} = e^{-\I\varphi^{(j_1)}_2 (j_2 h+\half)}\vct{u}^{(j_1,j_2)}$.
	\item[]	\textbf{End do}
	\item[] \textbf{End if}

	\item \textbf{If} Optimal Wannier function is not required \textbf{then} 
	\item[] Go to \textbf{Step 6}.
	\item[] \textbf{End if}
\end{enumerate}

\noindent \textbf{Step 4} [Computing the Berry connection]
\begin{enumerate}
	\item \textbf{Do} $i=1,2,$\ldots$,n$
	\item[] Compute $\cb{\frac{\partial }{\partial \kappa_1} 	\tilde{u}_{i}^{(j_1,j_2)}}_{j_1,j_2=-N/2}^{N/2-1}, \cb{\frac{\partial }{\partial \kappa_2} 	\tilde{u}_{i}^{(j_1,j_2)}}_{j_1,j_2=-N/2}^{N/2-1}$ in (\ref{eq:ndktu}) by FFT.
	\item[] \textbf{End do}
		\item Compute $\cb{A_1^{(j_1,j_2)}}_{j_1,j_2=-N/2}^{N/2-1}, \cb{A_2^{(j_1,j_2)}}_{j_1,j_2=-N/2}^{N/2-1}$ by (\ref{eq:na12}).
	\item Compute $\cb{A_x^{(j_1,j_2)}}_{j_1,j_2=-N/2}^{N/2-1}, \cb{A_y^{(j_1,j_2)}}_{j_1,j_2=-N/2}^{N/2-1}$ by (\ref{eq:nbcxy}).
\end{enumerate}
\noindent \textbf{Step 5} [Eliminating the divergence]
\begin{enumerate}
\item Compute $\hat{A}_x, \hat{A}_y$ in (\ref{eq:naxyfour}) by FFT.
\item Compute $\hat{g}$ by (\ref{eq:ndivf}).
\item Compute $\hat{\psi}$ by (\ref{eq:npsifour}).
\item Compute $\cb{\psi^{(j_1,j_2)}}_{j_1,j_2=-N/2}^{N/2-1}$ in (\ref{eq:npsi}) by FFT.
\item \textbf{Do} $j_1,j_2=-N/2,\ldots,N/2-1$
\item[]	Set $\dbtilde{\vct{u}}^{(j_1,j_2)} = e^{-\I\psi^{(j_1,j_2)}}\tilde{\vct{u}}^{(j_1,j_2)}$\,.
\item[]	\textbf{End do}
\end{enumerate}
\noindent \textbf{Step 6} [Computing the Wannier function]\\
(Replace all $\dbtilde{\vct{u}}$ below by $\tilde{\vct{u}}$ if from \textbf{Step 3}.)
\begin{enumerate}
	\item \textbf{Do} $i=1,2,$\ldots$,n$
	\item[]Compute $\hat{\dbtilde{u}}_i$ in (\ref{eq:dbufour}) by FFT.
	\item[] \textbf{End do}
	\item Compute $W_0$ by (\ref{eq:nwann}).
\end{enumerate}
\textbf{Return}  $\cb{E^{(j_1,j_2)}}_{j_1,j_2=-N/2}^{N/2-1}$\,, $\cb{\dbtilde{\vct{u}}^{(j_1,j_2)}}_{j_1,j_2=-N/2}^{N/2-1}$ and $W_0$\,.

\vspace{0.5cm}
\noindent \underline{\textbf{Algorithm 1A}} \\
\noindent \textit{Input:} $j\in\cb{1,2}\,,\kappa_1,\kappa_2 \in [0,1], \vct{k}=\vct{k}(\kappa_1,\kappa_2)$\,, $H(\vct{k})\,, \frac{\partial}{\partial \kappa_j}H(\vct{k})  \in \mathbb{C}^{n\times n} $,  $\vct{u}(\vct{k}) \in \mathbb{C}^{n}$, $E(\vct{k}) \in \mathbb{R}$

\noindent \textit{Output:} $\frac{\partial}{\partial \kappa_j}\vct{u}(\vct{k}) \in \mathbb{C}^{n}$, $\frac{\partial}{\partial \kappa_j}E(\vct{k}) \in \mathbb{R}$
\begin{enumerate}
\item Compute $\frac{\partial}{\partial \kappa_j}E(\vct{k}) = \vct{u}^*(\vct{k}) \frac{\partial H(\vct{k}) }{\partial \kappa_j}\vct{u}(\vct{k})$.
\item Compute $\vct{q} =  \frac{\partial H(\vct{k}) }{\partial \kappa_j} \vct{u}(\vct{k})$.
\item Compute the singular value decomposition of $H(\vct{k}) - E(\vct{k}) = \mathbf{U} \mathbf{\Sigma} \mathbf{V}^*$ using the QR algorithm.
\item Set $\frac{\partial}{\partial \kappa_j}\vct{u}(\vct{k}) = \mathbf{V}_{1:n,1:n-1} (\mathbf{\Sigma}_{1:n-1,1:n-1})^{-1} (\mathbf{U}_{1:n,1:n-1})^* \vct{q}$.
\end{enumerate}

\section{Numerical results \label{sec:results}}

This section contains numerical results of the algorithms described in Section\,\ref{sec:numerics} applied to a $3\times 3$ matrix on a square lattice in Section\,\ref{sec:eg1} and the Haldane model  \cite{haldane1988model} ($2\times 2$ matrix on a hexagonal lattice), where a topologically trivial version ($C_1 = 0$) is in Section\,\ref{sec:trivial} and a non-trivial version ($C_1 \ne 0$) in Section\,\ref{sec:ntrivial}.

In all examples, we increase $N$, the number of discretization points in both dimensions (see Section\,\ref{sec:init}), while keeping all other parameters the same. We report the accuracy of the eigenvectors computed by numerically solving the parallel transport equation in Step 1--3 in Section\,\ref{sec:numerics}, measured by the largest difference of the projectors in terms of the Frobenius norm 
\begin{align}
	\mathrm{E_{evec}} = \max_{-N/2\le i,j\le N/2-1} \norm{P_{\rm para}^{(i,j)} - P_{\rm eig}^{(i,j)}}_{\rm F}\,,
\end{align}
where $P_{\rm para}^{(i,j)} $ and $P_{\rm eig}^{(i,j)} $ are the projectors formed by eigenvectors computed by solving the ODE and a QR eigenvalue solver, respectively, at $(\kappa_i, \kappa_j)$. Besides, we also report the computed error in the first Chern number by the formula
\begin{align}
 		\mathrm{E}_{\rm Ch} = 	\abs{\tilde{C}_1 - C_1}\,,
\end{align}
where $C_1$ is the exact integer value and $\tilde{C}_1$ is the computed value by (\ref{eq:nchern}) without rounding to an integer:
\begin{align}
	\tilde{C}_1 =  \Re{\frac{h}{2\pi\I} \sum_{j_1 = -N/2}^{N/2-1} \frac{z'^{(j_1)}}{{z^{(j_1)}}}}\,.
	\label{eq:nchernn}
\end{align}
Moreover, we report the largest value of the potential $\tilde{\psi}$ for the divergence (curl-free component) of the Berry connection of  $\dbtilde{\vct{u}}^{(j_1,j_2)}$ after Step 5 (Section\,\ref{sec:ndiv}) as a measure on its accuracy and optimality. We denote the largest value of the potential $\tilde{\psi}$ for the curl-free component by the formula
\begin{align}
		\mathrm{E_{div}} = \max_{-N/2\le i,j\le N/2-1} \abs{\tilde{\psi}^{(i,j)}}\,.
\end{align}

Furthermore, for Example 1 and 2, we report the Wannier center (\ref{eq:mean2}) and variance (\ref{eq:varr}) (computed according to Remark\,\ref{rmk:var}) before and after eliminating the divergence, the time $t_{\rm para}$ for solving the parallel transport equation (Step 1--3 in Section\,\ref{sec:numerics}), and the time $t_{\rm div}$ for eliminating the divergence for achieving the optimal Wannier function (Step 4--5 in Section\,\ref{sec:numerics}). 

In Example 1, we also test the parallel transport scheme in Remark\,\ref{rmk:twist}; instead of solving the ODE explicitly, we replace Step 1--3  in Section\,\ref{sec:numerics} with the approach in Remark\,\ref{rmk:twist}. We report the maximum difference $\mathrm{E_{para}}$ in terms of the phase between the two approaches, given by the formula
\begin{align}
	\mathrm{E_{para}} =   \max_{-N/2\le i,j\le N/2-1}  \| {{\tilde{\vct{u}}^{(j_1,j_2)}  - \tilde{\vct{u}}_{\rm twist}^{(j_1,j_2)}}} \|\,,
	\label{eq:paraerr}
\end{align}
where $\tilde{\vct{u}}^{(j_1,j_2)} $ is obtained by solving the parallel transport equation as described in Section\,\ref{sec:numerics} and $\tilde{\vct{u}}_{\rm twist}^{(j_1,j_2)} $  is computed by first computing the eigenvalues and eigenvectors by QR, followed by the alignment scheme in Remark\,\ref{rmk:twist}. For the other two examples, all results are obtained by the approach in Section\,\ref{sec:numerics}.

All algorithms are implemented in MATLAB R2023b. All timing experiments are performed on a MacBook Pro with an M2 Max CPU.

\subsection{Example 1: A $3\times 3$ matrix model \label{sec:eg1}}

The $3\times 3$ matrix $H$ in this example is taken from Example 2.6 in \cite{kaxiras2019quantum}. The model is defined on a square lattice with the real space lattice spanned by $\vct{a}_1 =\frac{a}{\sqrt{2}} (1,-1)$ and $\vct{a}_2 =\frac{a}{\sqrt{2}} (1,1)$\,, and the reciprocal lattice by $\vct{b}_1 = \frac{\sqrt{2}\pi}{a} (1,-1)$ and $\vct{b}_2 = \frac{\sqrt{2}\pi}{a} (1,1)$. We choose the constant $a=1$ here. The elements in the matrix $H$ are given by the formulas
\begin{align}
	\begin{split}
	H_{11} & = H_{22}  = \epsilon_p+2 t_{p p}\left[\cos \left(\left(k_x-k_y\right) a / \sqrt{2}\right)+\cos \left(\left(k_x+k_y\right) a / \sqrt{2}\right)\right]\,,\\
	H_{33} &  = \epsilon_d+2 t_{d d}\left[\cos \left(\left(k_x-k_y\right) a / \sqrt{2}\right)+\cos \left(\left(k_x+k_y\right) a / \sqrt{2}\right)\right]\,,\\
	H_{13} &= \bar{H}_{31} = -t_{p d} \mathrm{e}^{\mathrm{i} k_x a / \sqrt{2}} 2 \mathrm{i} \sin \left(k_x a / \sqrt{2}\right)\,,\\
	H_{23} & = \bar{H}_{32} = t_{p d} \mathrm{e}^{\mathrm{i} k_x a / \sqrt{2}} 2 \mathrm{i} \sin \left(k_y a / \sqrt{2}\right)\,,\\
	H_{12} & = \bar{H}_{21} = 0\,,
	\end{split}
\end{align}
where the constants are chosen as $t_{dd}=0.1$, $t_{pd}=2$, $t_{pp} = -0.25$,  $t_{pd} = 2$, $\epsilon_d =1 $ and $\epsilon_p=-2$. The eigenvalues parameterized by $\kappa_1$ and $\kappa_2$ are shown in Figure\,\ref{fig:eg1eval}, and the top non-degenerate band is chosen for the construction. This model has time-reversal symmetry and its constructed $\varphi_2$ (see (\ref{eq:phitopo})) is shown in Figure\,\ref{fig:eg1phi2}, which is an even function as proved in Lemma\,\ref{lem:phi}. The first Chern number $C_1 = 0$ and there is no topological obstruction.

Table\,\ref{tab:eg1} shows the timings and errors of the approach in  Section\,\ref{sec:numerics}. The computation time $t_{\rm para}$ scales as $O(N^2)$ as discussed in Section\,\ref{sec:npara2} and the time for eliminating the divergence is dominated by FFTs.
It also shows that the error $\mathrm{E}_{\rm evec}$ of computed eigenvectors by solving the parallel transport equation scales as $O(h^6)$ with $h=1/N$. The error in the divergence $\mathrm{E}_{\rm div}$ shows that, for $N\le 200$, the sampling over $D^*$ is not sufficient to resolve the frequency content of the Berry connection. As a result, for achieving 10-digit accuracy, the optimal choice is roughly $N=200$. 

The three components of the assignment $\tilde{\vct{u}}$ computed by the approach in Section\,\ref{sec:numerics} after Stage 2 (see Section\,\ref{sec:stage2}) is shown in Figure\,\ref{fig:eg1tu1}--\ref{fig:eg1tu3}, together with the absolute value of their Fourier coefficients in the $\log_{10}$ scale. The Fourier coefficients decay exponentially asymptotically.
Their real and imaginary part are even and odd, respectively, under the transform $\kappa_1 \rightarrow -\kappa_1$ and $\kappa_2 \rightarrow -\kappa_2$, as proved in Theorem\,\ref{thm:real}\,. After eliminating the divergence of the Berry connection of $\tilde{\vct{u}}$ in Stage 3 in Section\,\ref{sec:stage3}, the third component of $\dbtilde{\vct{u}}$ is shown in Figure\,\ref{fig:eg1ttu3}. The other components are not shown as they are visually very similar to those of $\tilde{\vct{u}}$ in  Figure\,\ref{fig:eg1tu1}--\ref{fig:eg1tu2}. The Berry connection of $\tilde{\vct{u}}$ is shown in Figure\,\ref{fig:eg1bcdiv}, whose Helmholtz-Hodge decomposition is shown in Figure\,\ref{fig:eg1divf}. After eliminating its divergence, the Berry connection of $\dbtilde{\vct{u}}$ is shown in Figure\,\ref{fig:eg1bcdivless}.

In comparison to Table\,\ref{tab:eg1}, the error $\mathrm{E}_{\rm para}$ in Table\,\ref{tab:eg1twist} indicates the accuracy of  the parallel transport scheme in Remark\,\ref{rmk:twist}; the error is of order $h^2$, so the scheme is a second-order one and 
produces reasonably accurate assignments. 
The advantage of this approach is that it decouples the computation of eigenvectors and Wannier functions, and it can be done with very little computational cost. It is a viable approach when high accuracy is not required.

Table\,\ref{tab:var1} contains the Wannier center and variance of the solution obtained by parallel transport described in Section\,\ref{sec:numerics} and the optimal one after the divergence of the Berry connection is eliminated. The Wannier center is not changed and the variance is reduced. The solution from only doing parallel transport, although not optimal, is very close to the optimal one in terms of the variance.  


\begin{figure}[h]
	\centering	
	\subfigure[]{
		\includegraphics[scale=0.35]{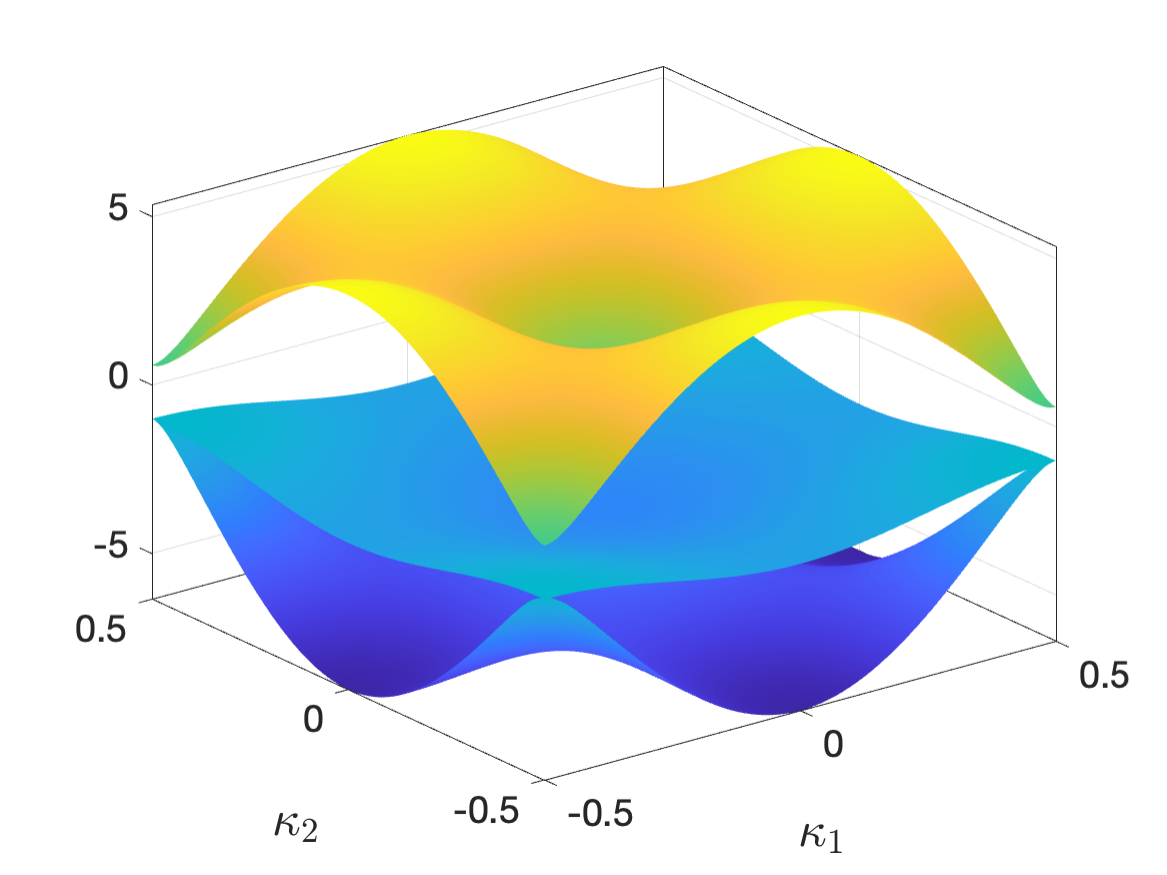}
			\label{fig:eg1eval}
	}
	\subfigure[]{
	\includegraphics[scale=0.35]{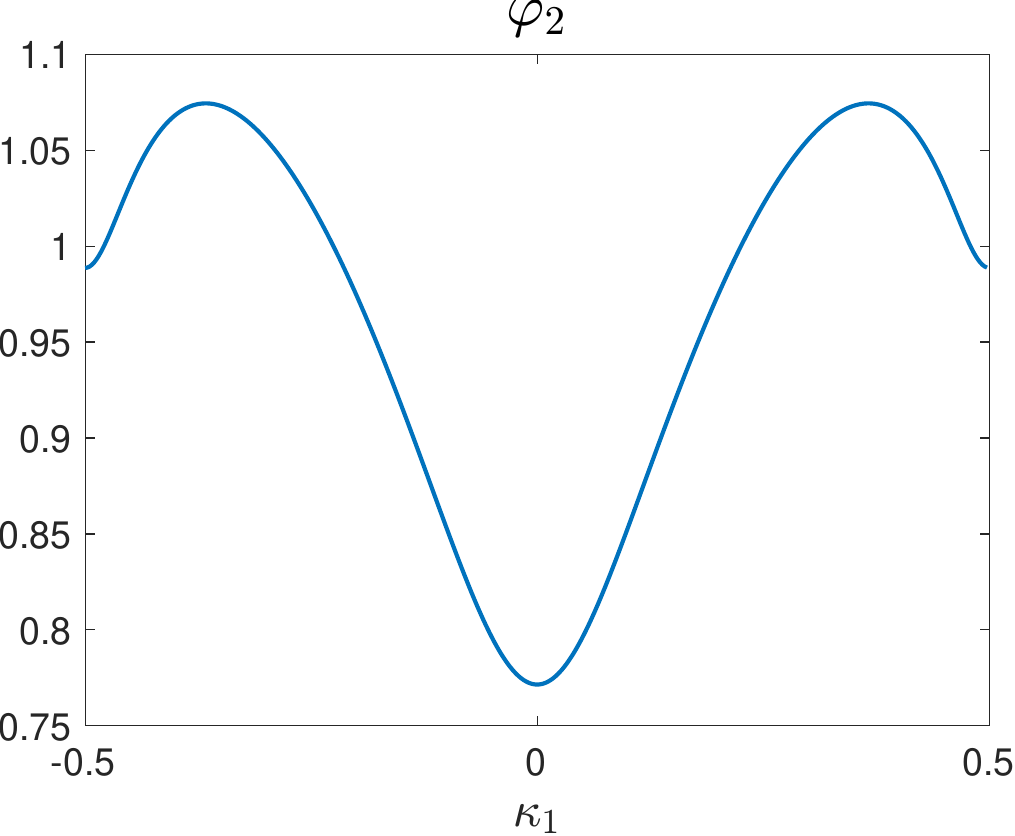}
	\label{fig:eg1phi2}
	}
\vspace{-1em}
\caption{ (a) Plot of eigenvalues of $H$ in Example 1. The top band (non-degenerate) is picked. (b) The phase $\varphi_2$ in (\ref{eq:phitopo}) for Example 1. It is an even function as proved in Lemma\,\ref{lem:phi}.}

\end{figure}

\begin{table}[h!]
	\centering
	\begin{tabular}{c c c c c c}
		\hline
		$N$ & $t_{\rm para}$ (s)& $t_{\rm div}$ (s)& $\mathrm{E_{evec}}$  & $\mathrm{E_{Ch}}$ & $\mathrm{E_{div}}$ \\
		\hline \hline
		50 & 1.76    & 0.053 & 4.16e-10& 6.84e-10 & 1.41e-3\\
		100 & 6.12  & 0.110 & 4.18e-10 & 3.30e-16 & 3.38e-6\\
		200 & 23.2   & 0.305& 6.26e-12 & 2.21e-19 & 3.07e-11\\
		400 & 93.5  & 1.10 & 9.93e-14 & 4.49e-18 & 7.49e-12\\
		\hline 
	\end{tabular}
	\caption{\label{tab:eg1}
		Timings and errors for Example 1 by the approach in Section\,\ref{sec:numerics}. The error $\mathrm{E}_{\rm evec}$ shows sixth-order convergence and the error $\mathrm{E}_{\rm div}$ indicates if the sampling over $D^*$ is sufficient.}
\end{table}

\begin{table}[h]
	\centering
	\begin{tabular}{c  c c c c  c}

		\hline
		$N$ & $t_{\rm para}$ (s)& $t_{\rm div}$ (s)& $\mathrm{E_{para}}$ & $\mathrm{E_{Ch}}$  & $\mathrm{E_{div}}$ 	   \\
		\hline \hline
		50 & 0.131    & 0.081 & 2.59e-3& 6.95e-10  &1.61e-3 \\
		100 & 0.200   & 0.145 & 6.48e-4 & 3.18e-16 &  3.487e-6\\
		200 & 0.488    & 0.375 & 1.62e-4&  2.61e-16 & 7.60e-12\\
		400 & 1.60  &1.32 & 4.05e-5& 2.04e-16  &6.81e-12 \\
		800 & 6.07 &5.13 & 1.01e-5 &   7.23e-19&3.27e-11\\
		\hline 
	\end{tabular}
	\caption{\label{tab:eg1twist}
		Timings and errors for Example 1 by the parallel transport scheme in Remark\,\ref{rmk:twist}. ($t_{\rm para}$ includes the time for obtaining the eigenvalues and eigenvectors.) The error $\mathrm{E_{para}}$ defined in (\ref{eq:paraerr}) measures the accuracy of the scheme and  $\mathrm{E_{para}}$  decreases proportional to $1/N^2$. However, such a parallel transport scheme is computationally cheap and decouples eigenvector and Wannier function computations.}
\end{table}

\begin{table}[h!]
	\centering
	\begin{tabular}{c c c |c c c}
		\multicolumn{3}{c}{After Stage 2} &
		\multicolumn{3}{c}{After Stage 3 (Optimal solution)} \\
		\hline
		$\langle \vct{R}_x \rangle$ & 		$\langle \vct{R}_y \rangle$ & 			$\langle \norm{\vct{R}}^2 \rangle$ - $\norm{\langle \vct{R} \rangle}^2$ & $\langle \vct{R}_x \rangle$ & 		$\langle \vct{R}_y \rangle$ & 			$\langle \norm{\vct{R}}^2 \rangle$ - $\norm{\langle \vct{R} \rangle}^2$ \\
		\hline\hline 
		-0.217677 &  -2.67e-15 & 0.317890  & -0.217677 & -2.23e-15 &	0.313797			\\
		\hline
	\end{tabular}
	\caption{The Wannier center and the variance for Example 1 computed by the approach in Section\,\ref{sec:numerics} before and after eliminating the divergence of the Berry connection computed for $N=400$. The solution after Stage 2 is already close to the optimal one. \label{tab:var1}}
\end{table}

\begin{figure}[h!]
	\centering
    \includegraphics[scale=0.25]{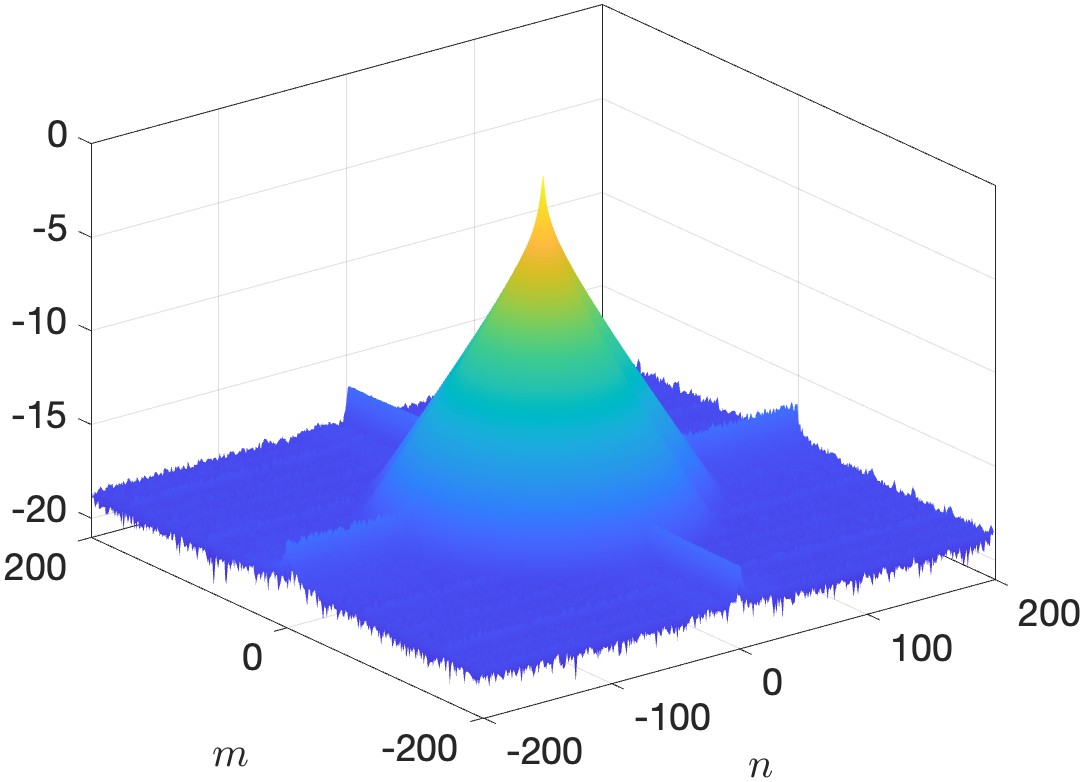}
	\includegraphics[scale=0.40]{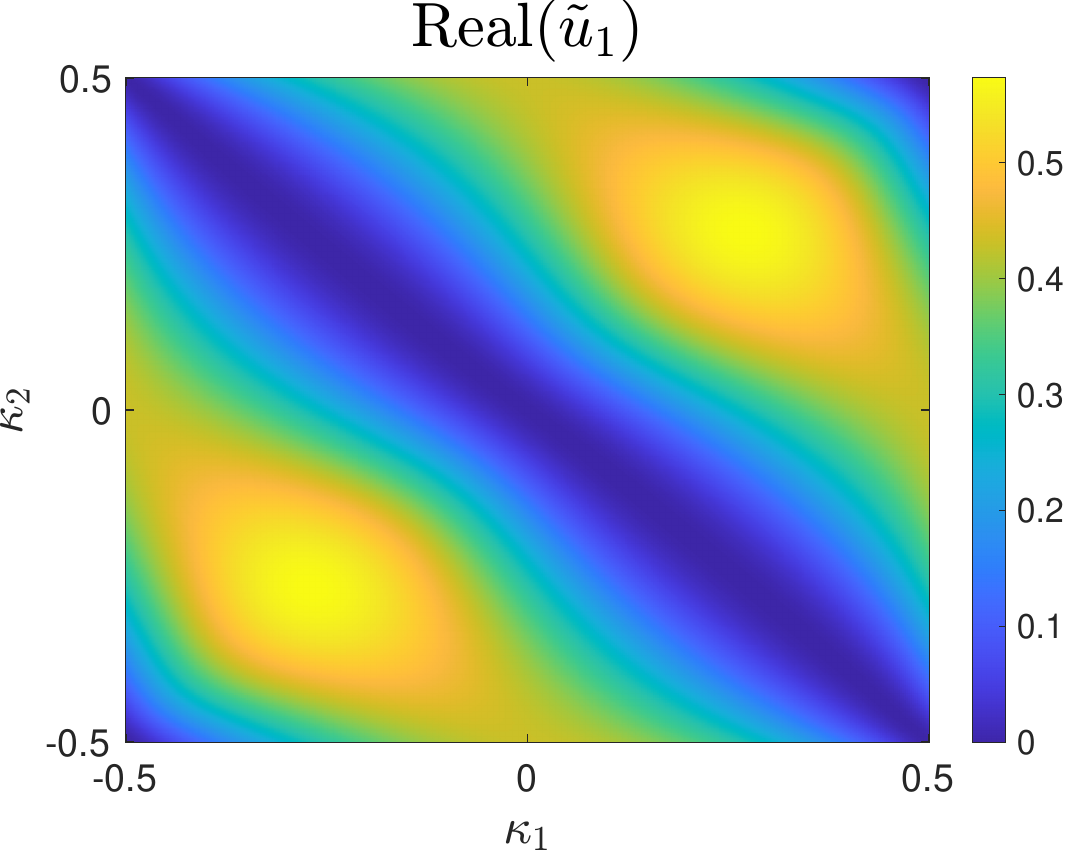}
	\includegraphics[scale=0.40]{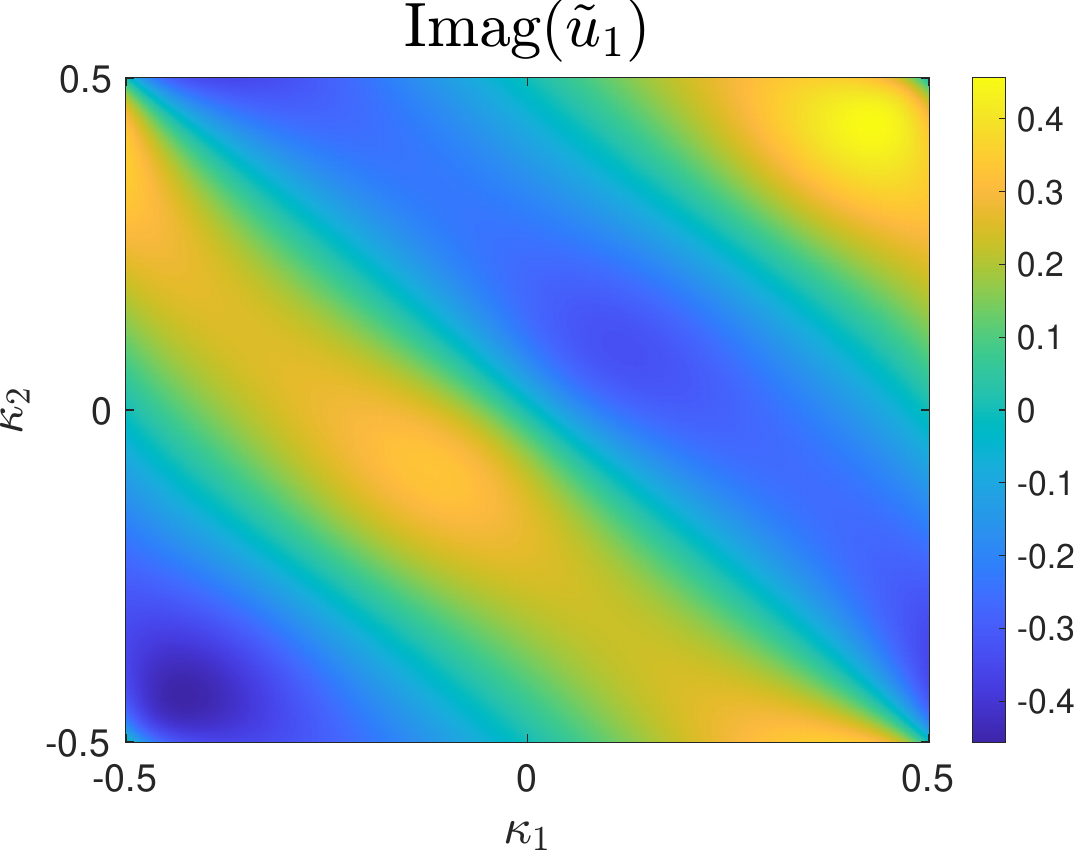}

	\vspace{-1em}
	\caption{ Plot of the real and imaginary  part of the component $\tilde{u}_1$ with the absolute value of the Fourier coefficients of $\tilde{u}_1$ in the  $\log_{10}$ scale  in Example 1. 
		}
	\label{fig:eg1tu1}
\end{figure}

\begin{figure}[h]
	\centering
	
	\includegraphics[scale=0.25]{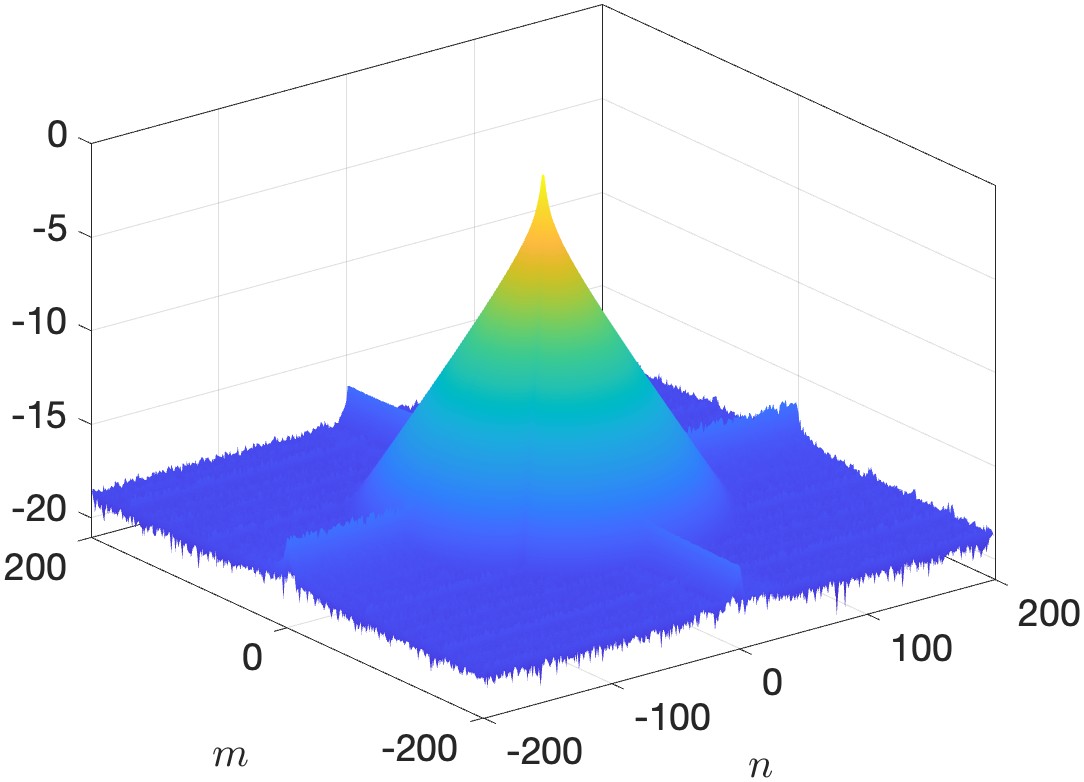}
	
	\includegraphics[scale=0.40]{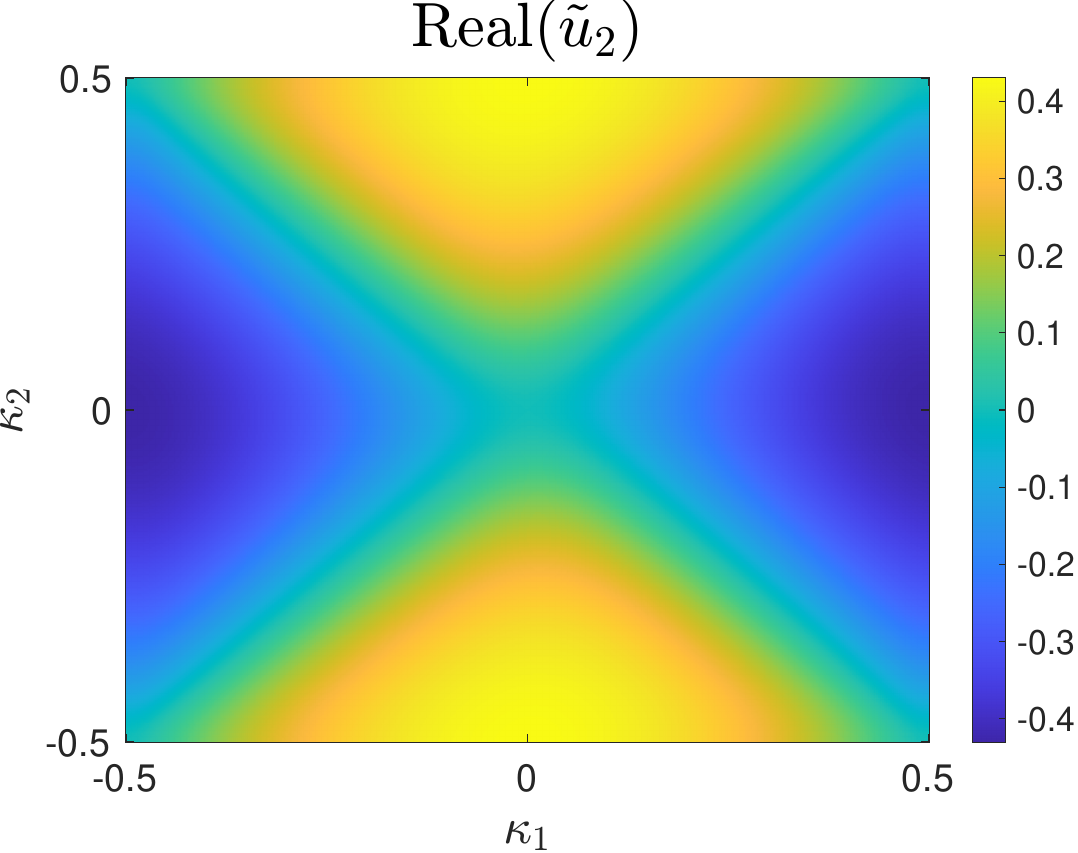}
	\includegraphics[scale=0.40]{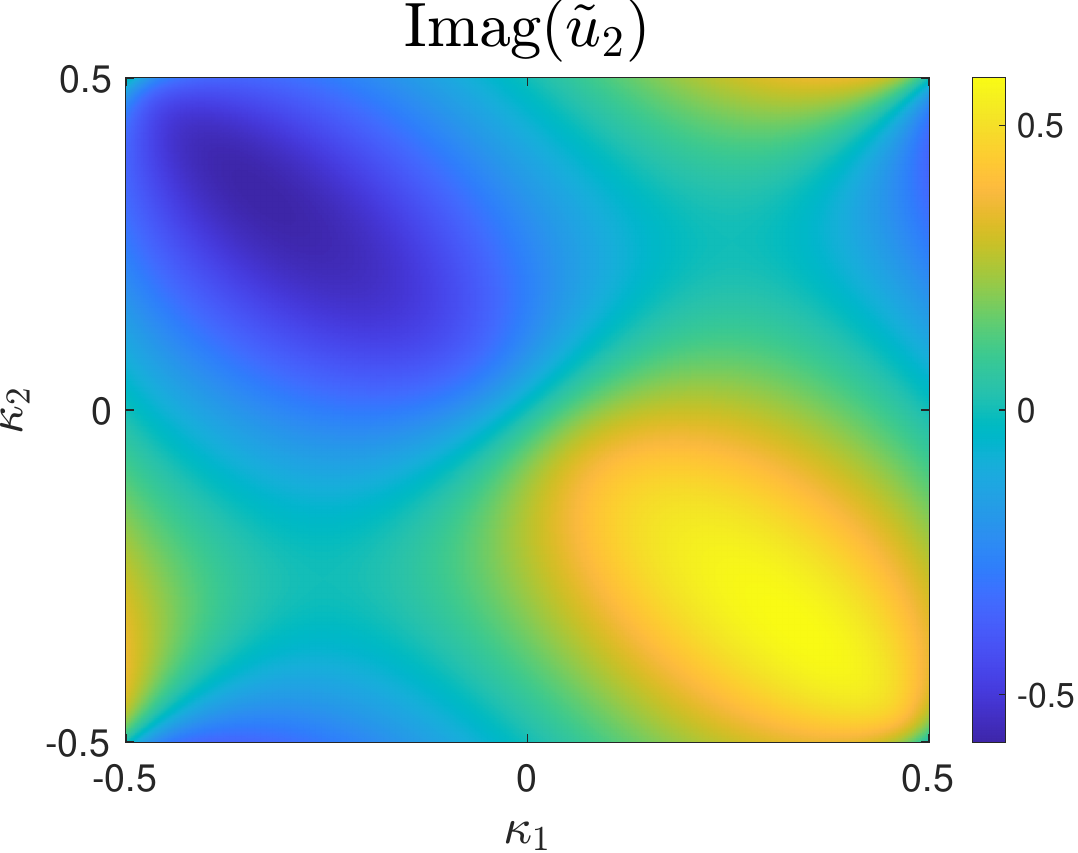}
	
			\vspace{-1em}
	\caption{ Plot of the real and imaginary  part of the component $\tilde{u}_2$ with the absolute value of the Fourier coefficients of $\tilde{u}_2$ in the  $\log_{10}$ scale in Example 1.  }
	\label{fig:eg1tu2}
	
\end{figure}

\begin{figure}[h!]
	\centering
	
	\includegraphics[scale=0.25]{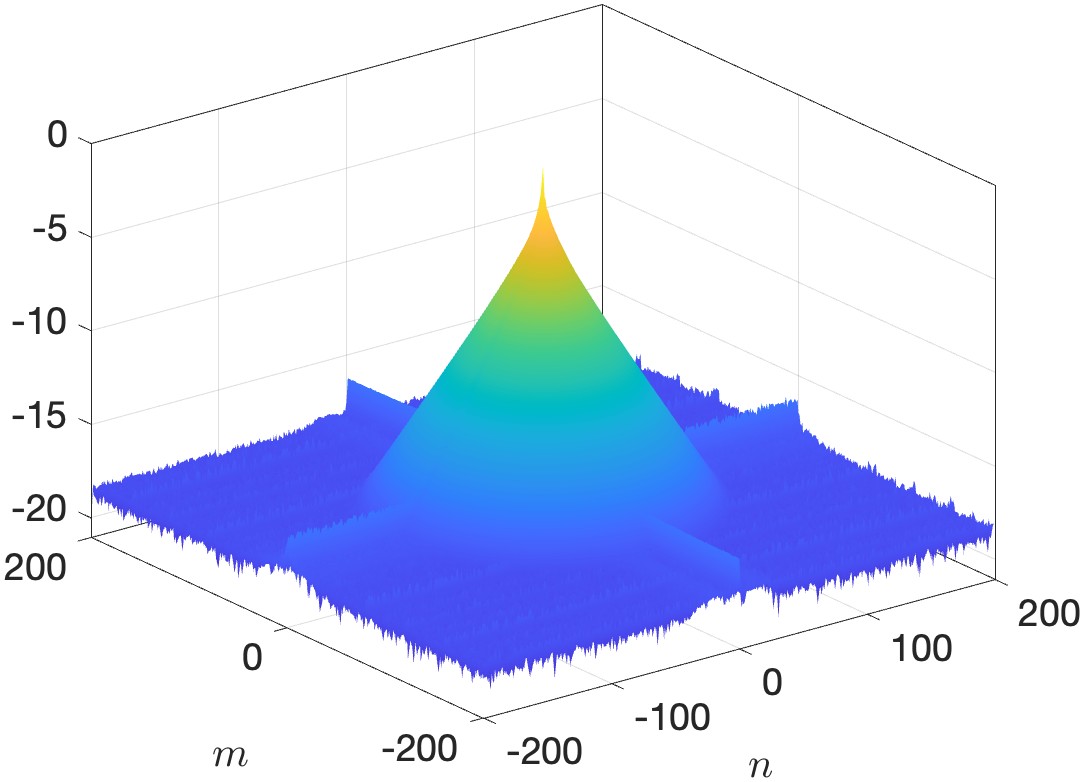}
	
	\includegraphics[scale=0.4]{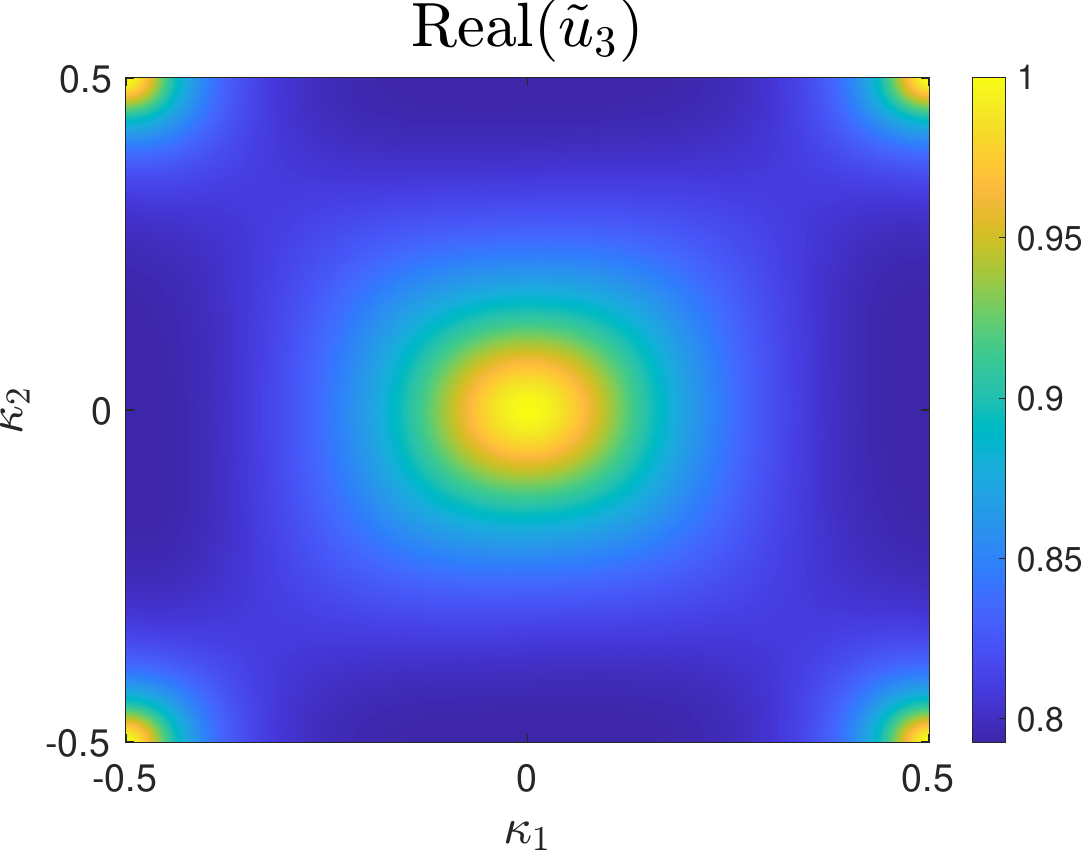}
	\includegraphics[scale=0.4]{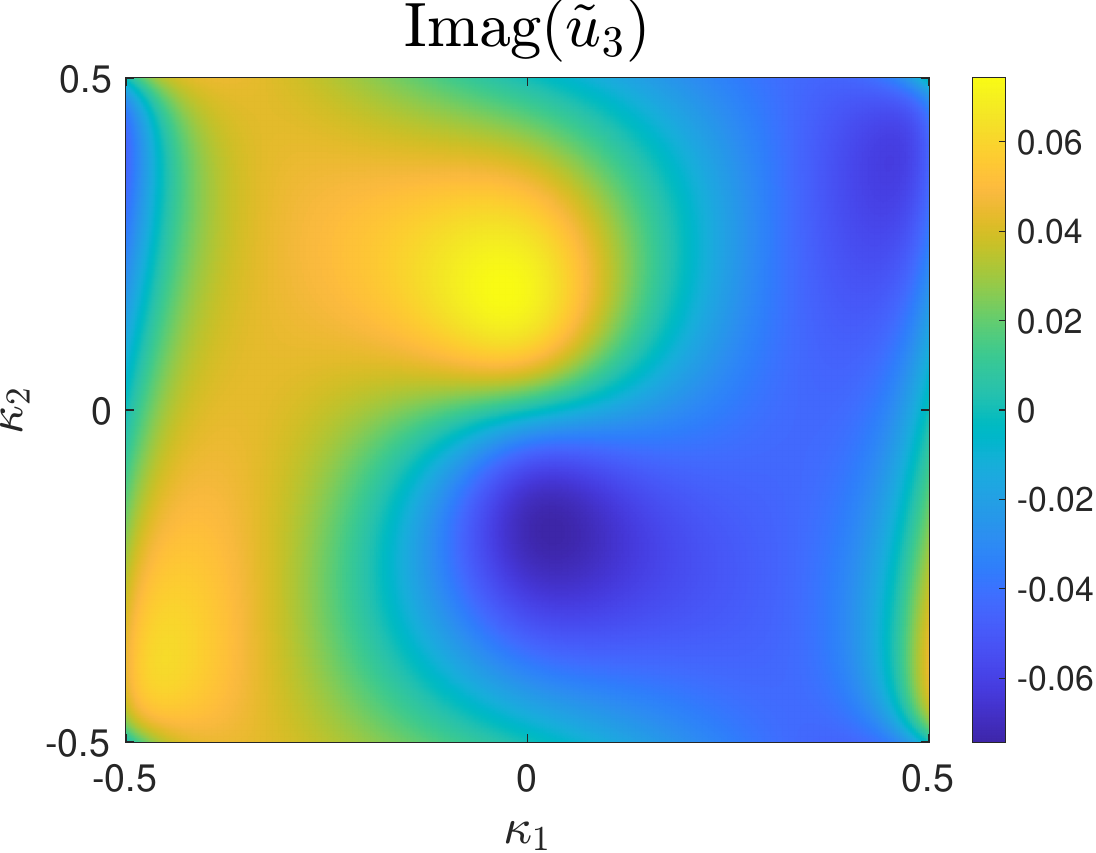}
	
		\vspace{-1em}
	\caption{ Plot of the real and imaginary  part of the component $\tilde{u}_3$ with the absolute value of the Fourier coefficients of $\tilde{u}_3$ in the  $\log_{10}$ scale in Example 1.  }
	\label{fig:eg1tu3}
	
\end{figure}

\begin{figure}[h!]
	\centering	
	\includegraphics[scale=0.25]{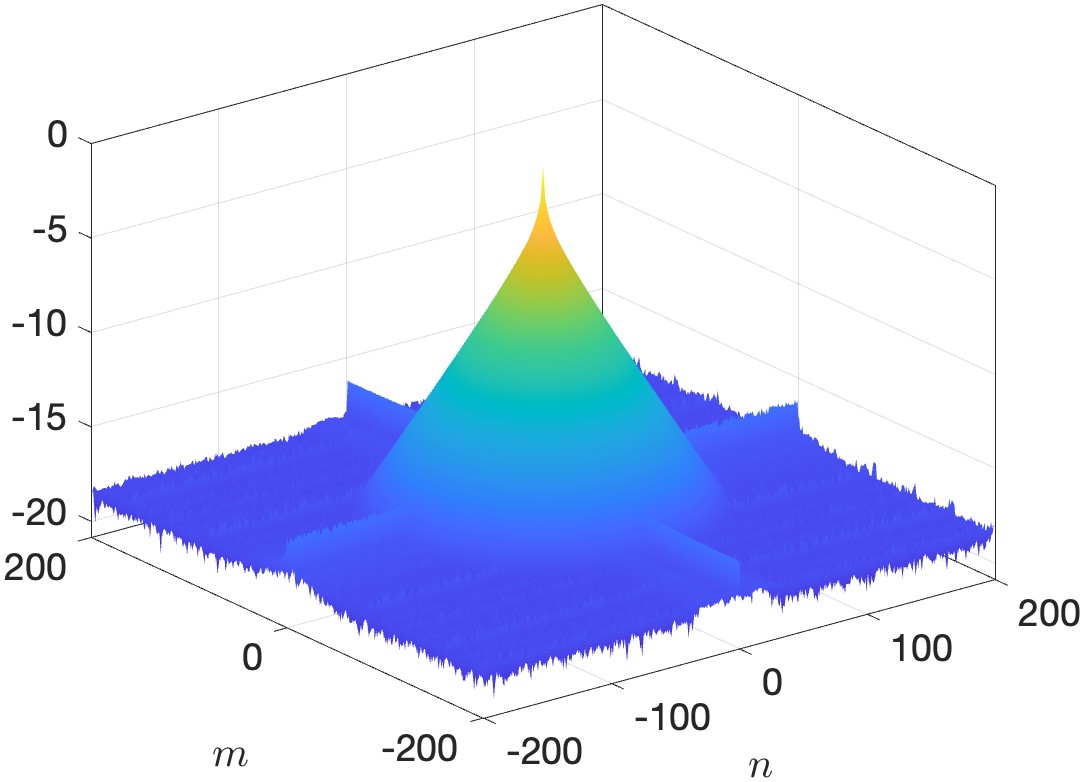}	
	
	\includegraphics[scale=0.4]{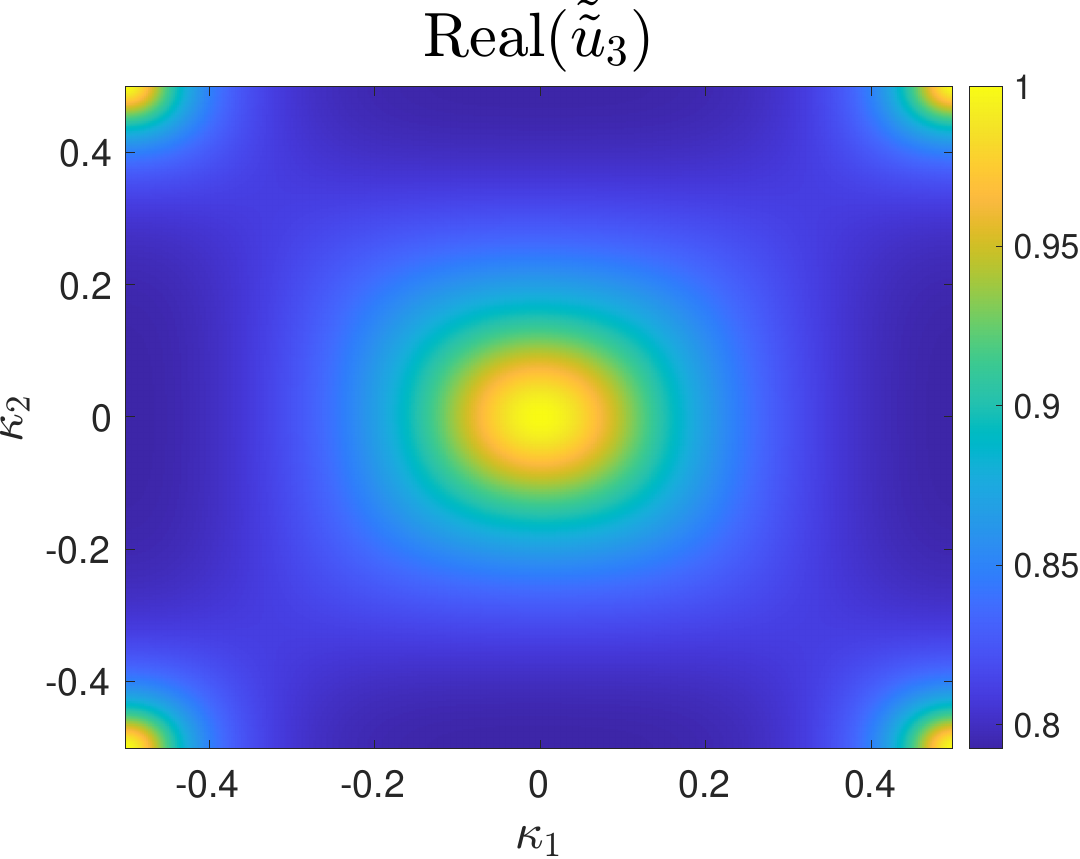}
	\includegraphics[scale=0.4]{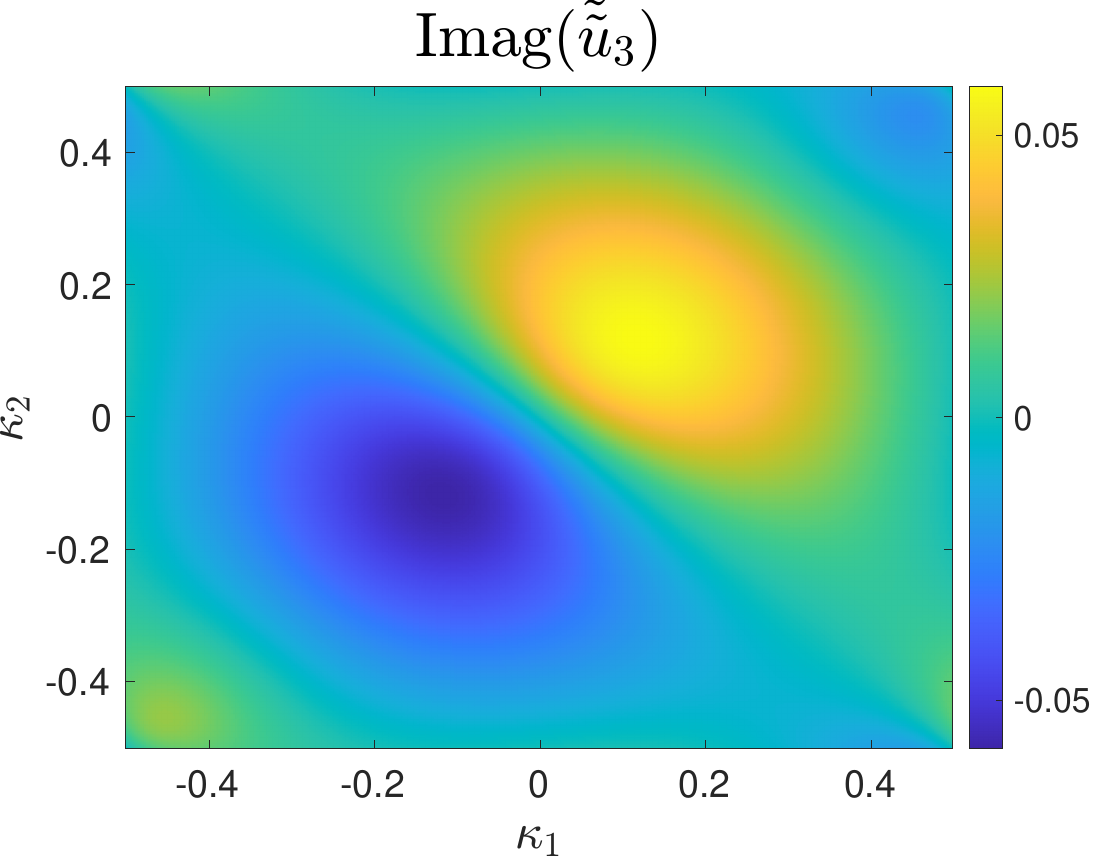}
	
	\vspace{-1em}
	\caption{ Plot of the same quantities as in Figure\,\ref{fig:eg1tu3}  after eliminating the divergence of the Berry connection.   }
\label{fig:eg1ttu3}
	
\end{figure}

\begin{figure}[h!]
	\centering	

	\subfigure[]{
		\includegraphics[scale=0.55]{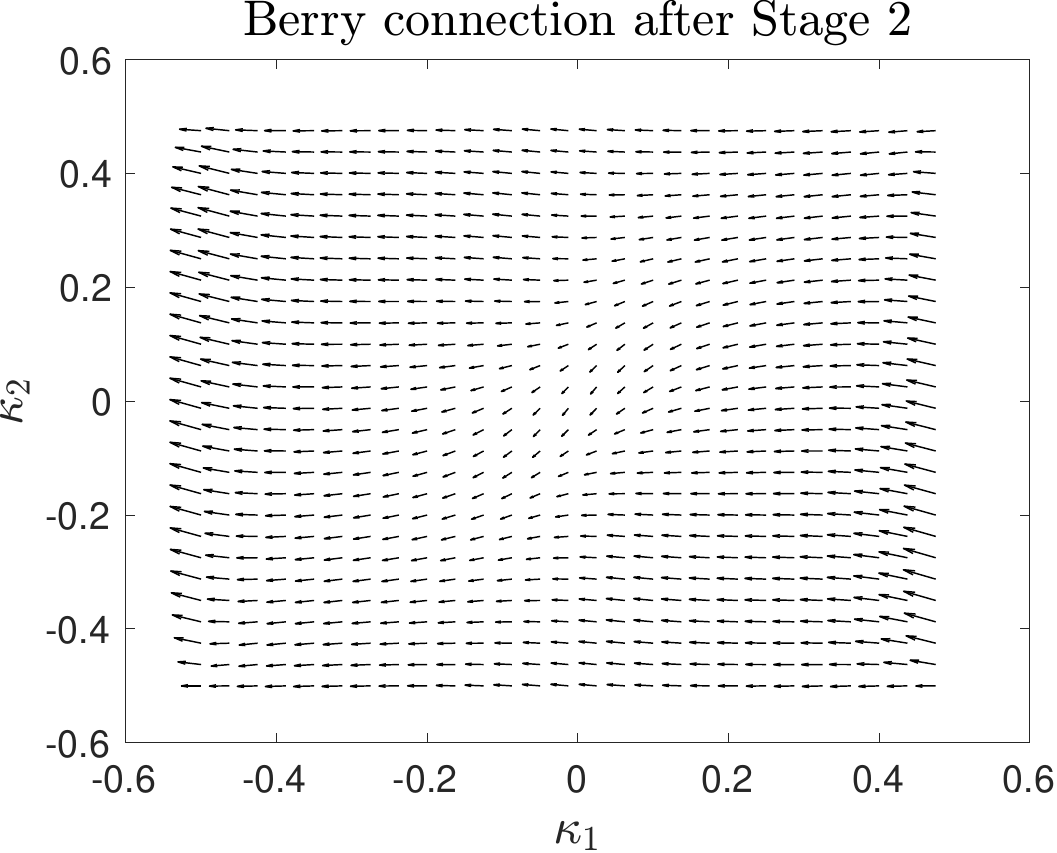}
		\label{fig:eg1bcdiv}
	}
	\subfigure[]{
		\includegraphics[scale=0.55]{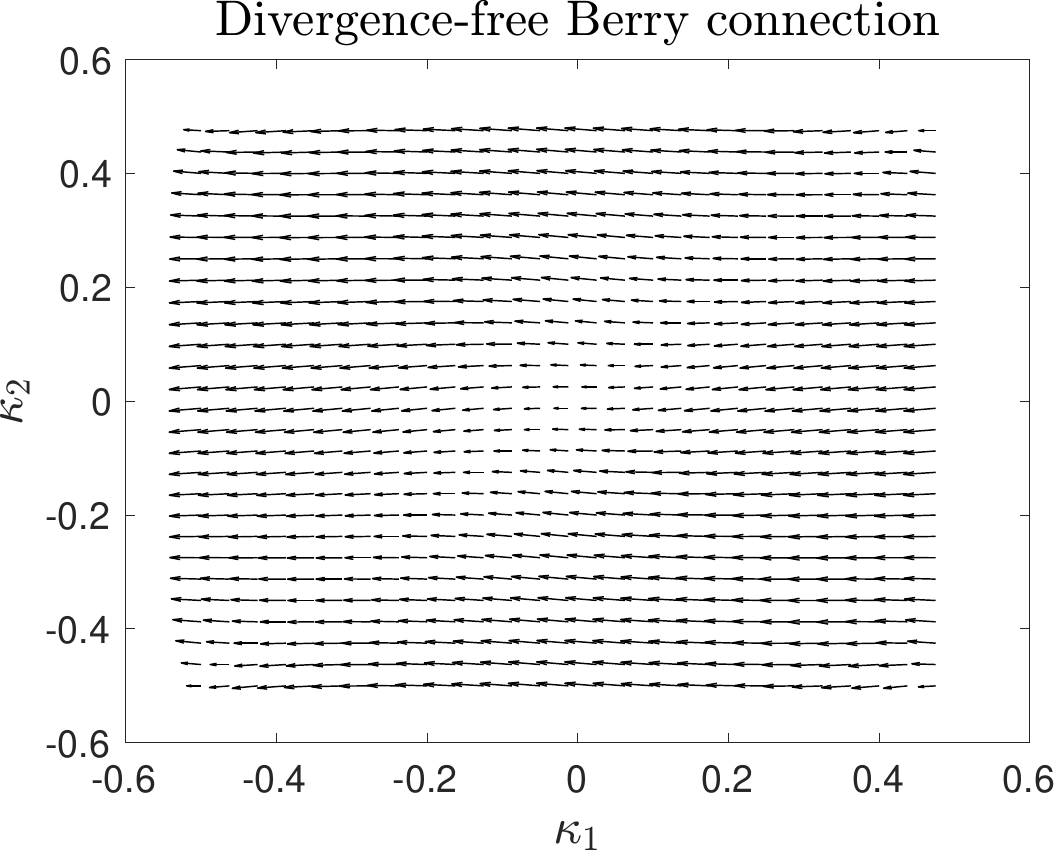}
		\label{fig:eg1bcdivless}
	}
	\vspace{-1em}
	\caption{ (a) Plot of the Berry connection after Stage 2 for Example 1. (b) Plot of the Berry connection in (a) after eliminating its divergence (curl-free component) in Stage 3. All vectors shown are in the $\vct{e}_x,\vct{e}_y$ basis. }

\end{figure}

\begin{figure}[h]
	\centering	
	
		\includegraphics[scale=0.44]{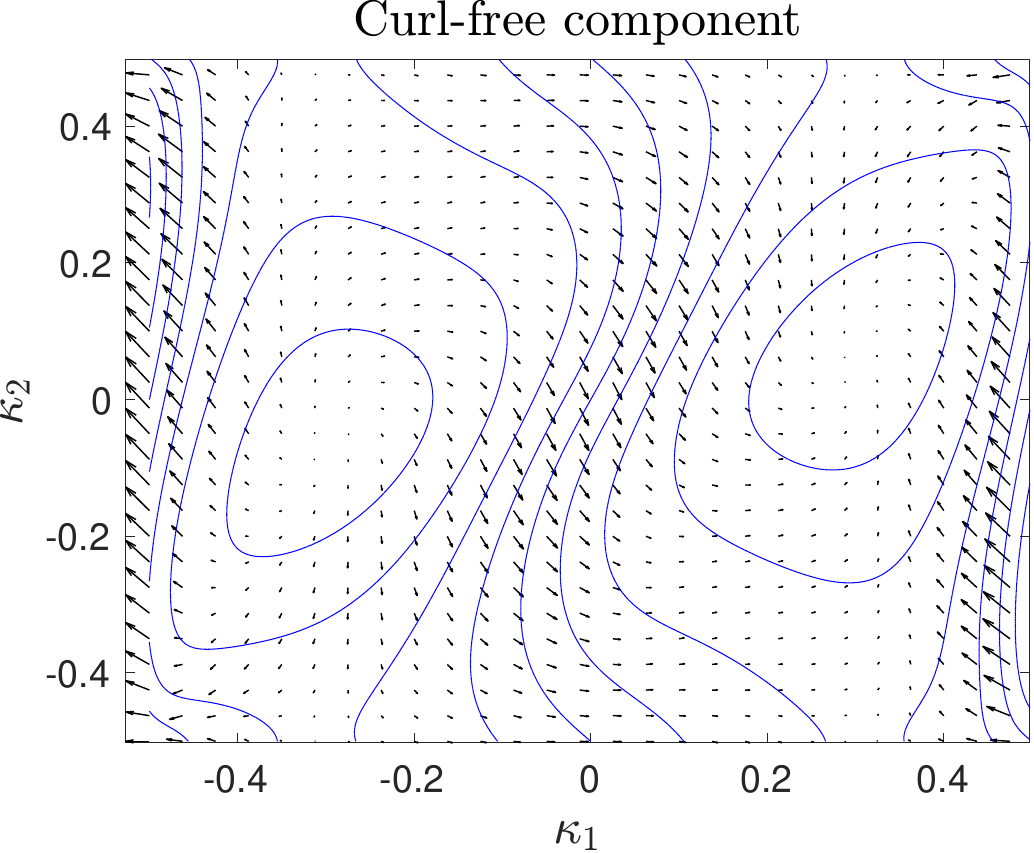}
		
		\includegraphics[scale=0.44]{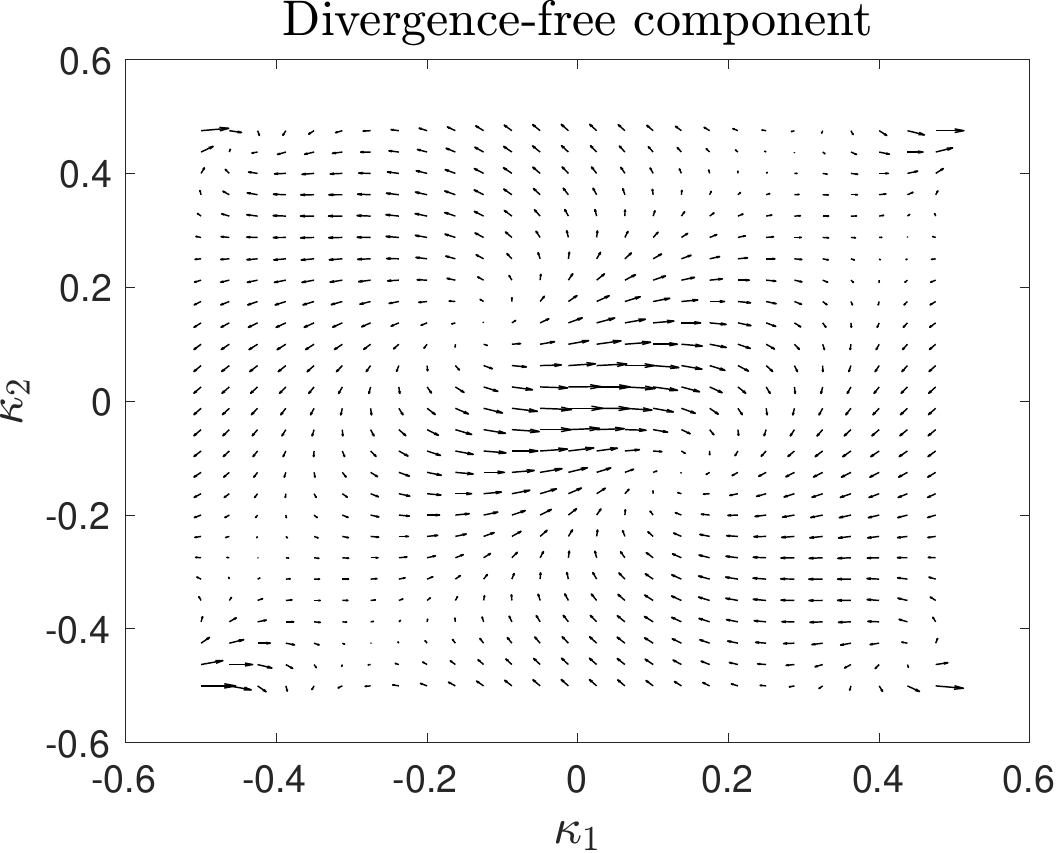}
		
		\includegraphics[scale=0.44]{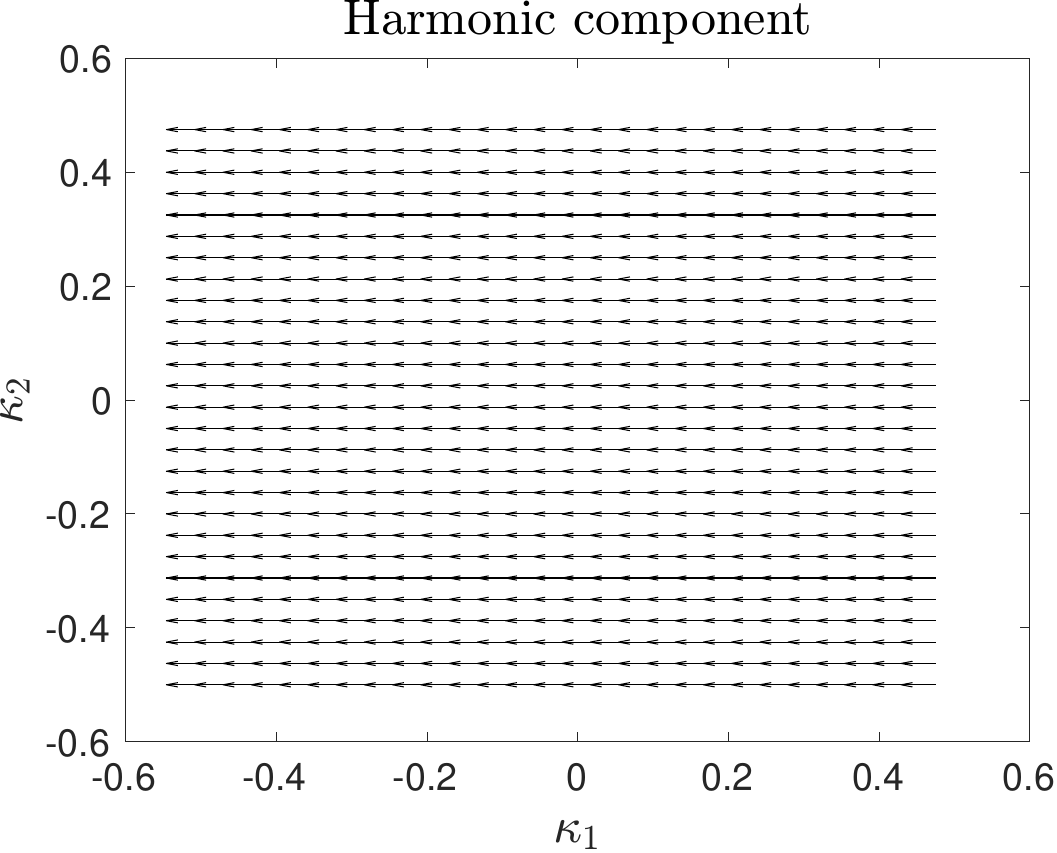}

	\vspace{-1em}
	\caption{ \label{fig:eg1divf} The Helmholtz-Hodge decomposition of the Berry connection in Figure\,\ref{fig:eg1bcdiv}. All vectors shown are in the $\vct{e}_x,\vct{e}_y$ basis. The equipotential lines are shown for the curl-free vector field, which is eliminated to obtain the optimal solution in Figure\,\ref{fig:eg1bcdivless}.}

\end{figure}

\clearpage
\subsection{Example 2: Haldane model (topologically trivial case) \label{sec:trivial}}

In this example, we consider a topologically trivial version of the Haldane model \cite{haldane1988model} on a Honeycomb lattice. It is $2\times 2$ matrix model with time-reversal symmetry, thus topologically trivial (i.e. $C_1 =0$)\,. The real space lattice vectors are given by $\vct{a}_1 = \frac{a}{2} (\sqrt{3},1)$ and $\vct{a}_2 = \frac{a}{2} (\sqrt{3},-1)$ with the reciprocal lattice vectors $\vct{b}_1 = \frac{2\pi}{\sqrt{3} a }(1,\sqrt{3})$ and $\vct{b}_2 = \frac{2\pi}{\sqrt{3} a }(1,-\sqrt{3})$\,. We choose $a=1$ here. The matrix $H$ is given by 
\begin{align}
	H(\vct{k}) =  \sb{\mat{V_0&  t_1(1 + e^{-\I\vct{k}\cdot\vct{a}_1} + e^{-\I\vct{k}\cdot\vct{a}_2})\\ 
						  t_1(1 + e^{\I\vct{k}\cdot\vct{a}_1} + e^{\I\vct{k}\cdot\vct{a}_2})					& -V_0 }}\,,
\end{align}
where the constants are chosen as $t_1 = 1$ and  $V_0 = 0.5$. The eigenvalues as a function of the parameterization $\kappa_1$ and $\kappa_2$ are shown in Figure\,\ref{fig:eg2eval}, and the top band is chosen for the construction. This model has time-reversal symmetry and its constructed $\varphi_2$ (see (\ref{eq:phitopo})) is shown in Figure\,\ref{fig:eg2phi2}, which is an even function as proved in Lemma\,\ref{lem:phi}.

Table\,\ref{tab:eg2} shows the timings and errors listed at the beginning of Section\,\ref{sec:results}. The overall trends are very similar to those in Example 1. The error in the divergence $\mathrm{E}_{\rm div}$ shows that, for $N\le 200$, the sampling over $D^*$ is not sufficient to resolve the frequency content of the Berry connection. As a result, for achieving 10-digit accuracy, the optimal choice is roughly $N=200$.

Similar to Example 1, Table\,\ref{tab:var2} shows that the Wannier center and variance of the solution obtained by parallel transport and the optimal one after the divergence of the Berry connection is eliminated. The Wannier center is not changed and the variance is reduced. The solution from parallel transport, although not optimal, is also close to the optimal one in terms of the variance.  

The two components of the assignment $\tilde{\vct{u}}$ after Stage 2 in Section\,\ref{sec:stage2} is shown in Figure\,\ref{fig:eg2tu1}--\ref{fig:eg2tu2}, together with the absolute value of their Fourier coefficients in the  $\log_{10}$ scale. The Fourier coefficients decay exponentially asymptotically.
Their real and imaginary part are even and odd, respectively, under the transform $\kappa_1 \rightarrow -\kappa_1$ and $\kappa_2 \rightarrow -\kappa_2$, as proved in Theorem\,\ref{thm:real}\,. After eliminating the divergence of the Berry connection of $\tilde{\vct{u}}$ in Stage 3 in Section\,\ref{sec:stage3}, the first component of $\dbtilde{\vct{u}}$ is shown in Figure\,\ref{fig:eg2ttu1}. The other components are not shown as they are visually very similar to those of $\tilde{\vct{u}}$ in  Figure\,\ref{fig:eg2tu1}.
The Berry connection of $\tilde{\vct{u}}$ is shown in Figure\,\ref{fig:eg2bcdiv}, whose Helmholtz-Hodge decomposition is shown in Figure\,\ref{fig:eg2divf}. After eliminating its divergence, the Berry connection of $\dbtilde{\vct{u}}$ is shown in Figure\,\ref{fig:eg2bcdivless}.

\begin{figure}[h]
	\centering	
	\subfigure[]{
		\includegraphics[scale=0.35]{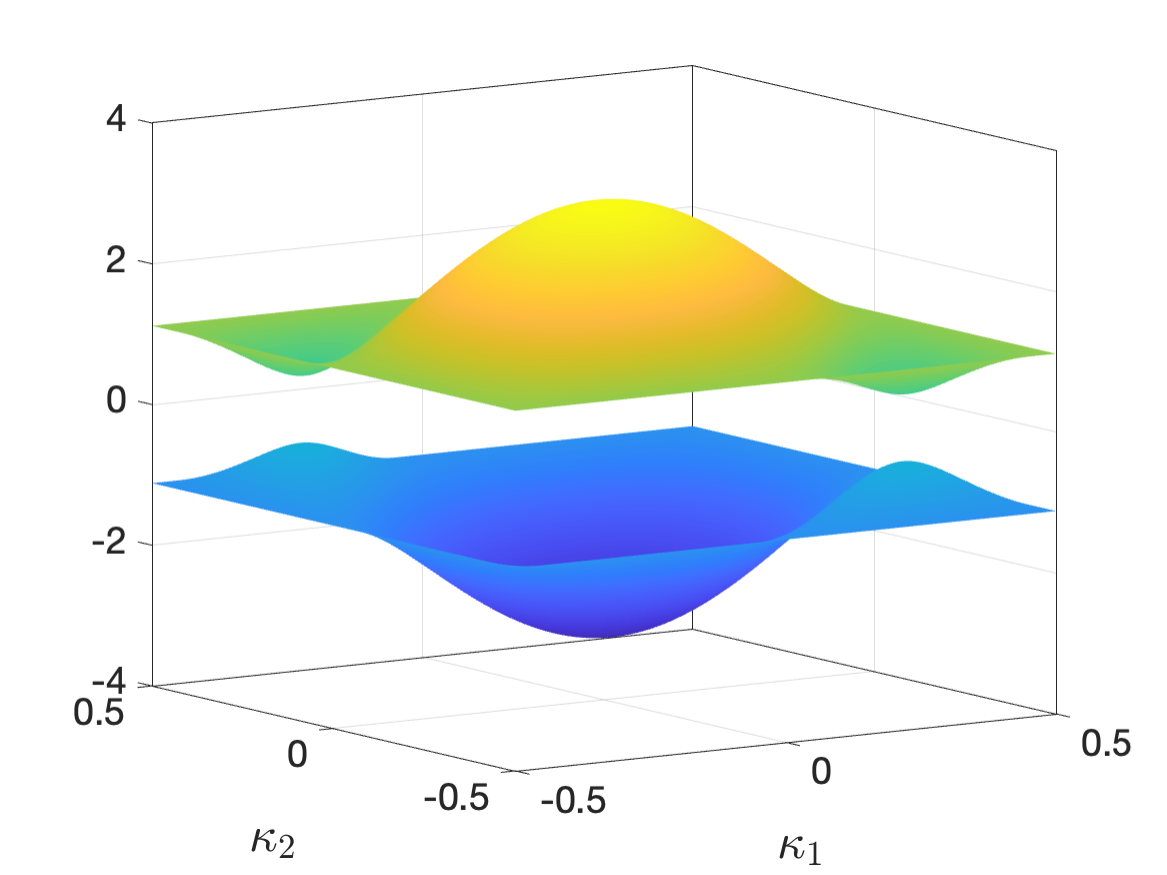}
		\label{fig:eg2eval}
	}
	\subfigure[]{
		\includegraphics[scale=0.35]{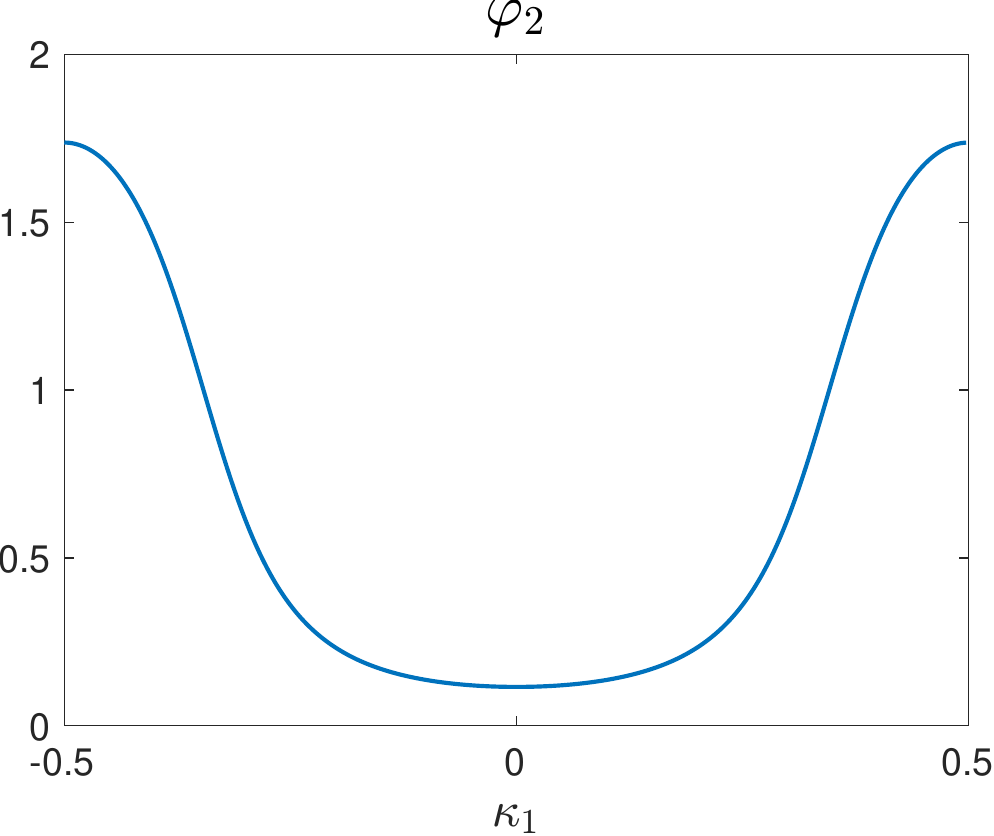}
		\label{fig:eg2phi2}
	}
	\vspace{-1em}
	\caption{ (a) Plot of eigenvalues of $H$ in Example 2. The band on top is picked. (b) The phase $\varphi_2$ in (\ref{eq:phitopo}) for Example 2. It is an even function as proved in Lemma\,\ref{lem:phi}.}	
	
\end{figure}

\begin{table}[h]
	\centering
	\begin{tabular}{c c c c c c}
		\hline
		$N$ & $t_{\rm para}$ (s)& $t_{\rm div}$ (s)& $\mathrm{E_{evec}}$  & $\mathrm{E_{Ch}}$ & $\mathrm{E_{div}}$ \\
		\hline 
		\hline
		50 & 1.30    & 0.040 & 5.66e-10& 4.56e-13 & 1.14e-4\\
		100 & 4.51  & 0.091 & 7.09e-12 & 1.50e-17 & 1.68e-9\\
		200 & 16.6   & 0.289& 1.07e-13 & 2.93e-17 & 2.53e-12\\
		400 & 66.3  & 1.07 & 2.10e-14 & 5.00e-18 & 1.14e-11\\
		\hline 
	\end{tabular}
	\caption{\label{tab:eg2}
		Timings and errors for Example 2. The error $\mathrm{E}_{\rm evec}$ shows sixth-order convergence and the error $\mathrm{E}_{\rm div}$ indicates if the sampling over $D^*$ is sufficient.}
\end{table}

\begin{table}[h!]
	\centering
	\begin{tabular}{c c c |c c c}
		\multicolumn{3}{c}{After Stage 2} &
		\multicolumn{3}{c}{After Stage 3 (Optimal solution)} \\
		\hline
		$\langle \vct{R}_x \rangle$ & 		$\langle \vct{R}_y \rangle$ & 			$\langle \norm{\vct{R}}^2 \rangle$ - $\norm{\langle \vct{R} \rangle}^2$ & $\langle \vct{R}_x \rangle$ & 		$\langle \vct{R}_y \rangle$ & 			$\langle \norm{\vct{R}}^2 \rangle$ - $\norm{\langle \vct{R} \rangle}^2$ \\
		\hline\hline 
		-0.184913 & 4.29e-16 & 0.270171 & -0.184913 & -2.23e-16 &	0.233954			\\
		\hline
	\end{tabular}
	\caption{The Wannier center and variance for Example 2 before and after eliminating the divergence of the Berry connection computed for $N=400$. The solution after Stage 2 is already relatively close to the optimal one. \label{tab:var2}}
\end{table}

\begin{figure}[h]
	\centering
	\includegraphics[scale=0.25]{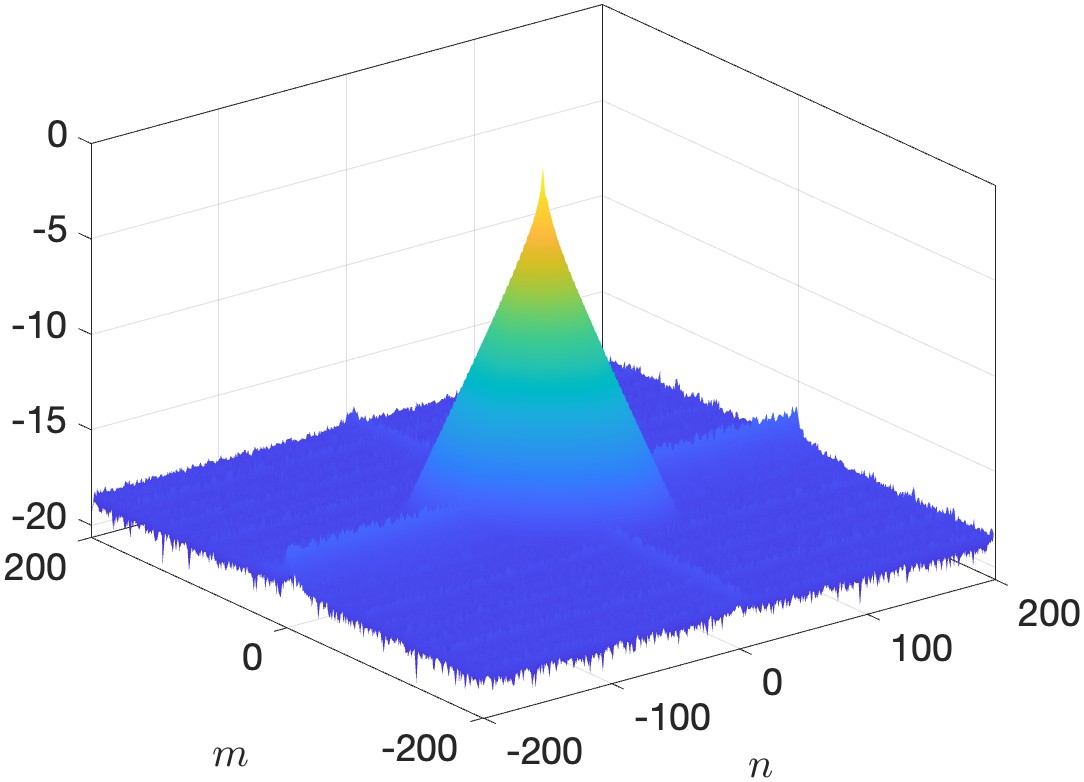}		
	
	\includegraphics[scale=0.40]{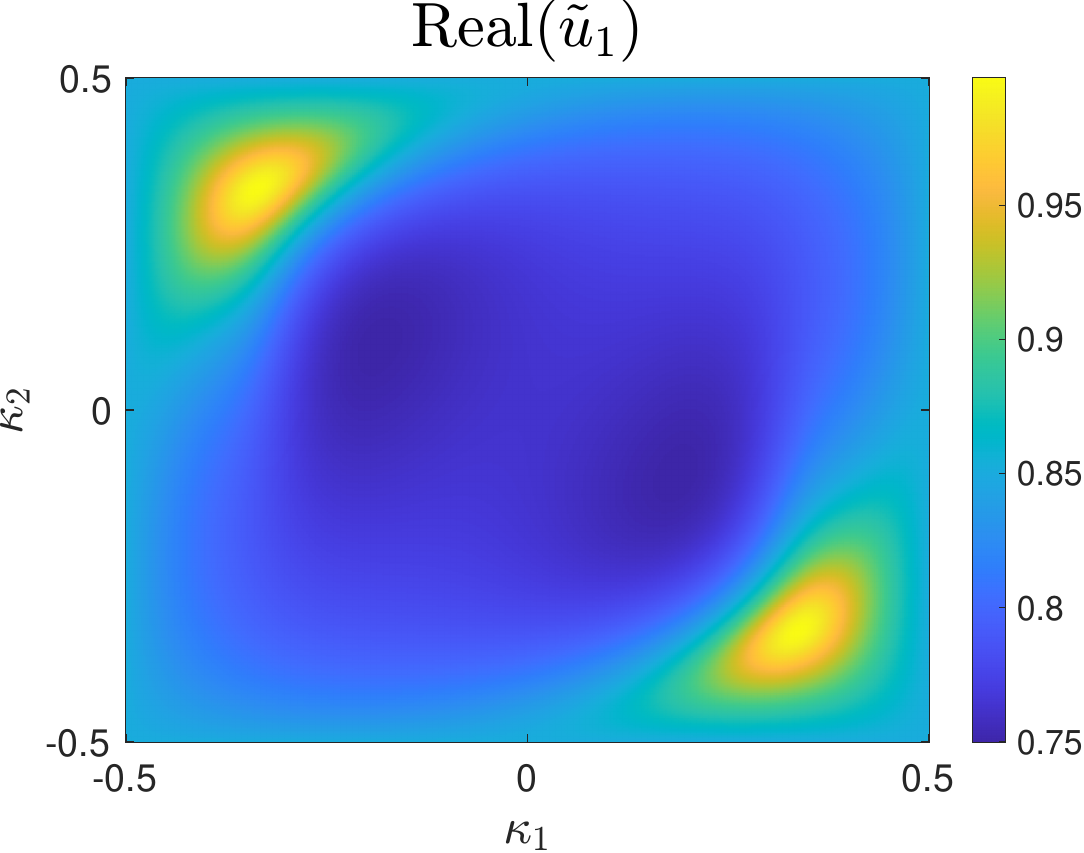}
	\includegraphics[scale=0.40]{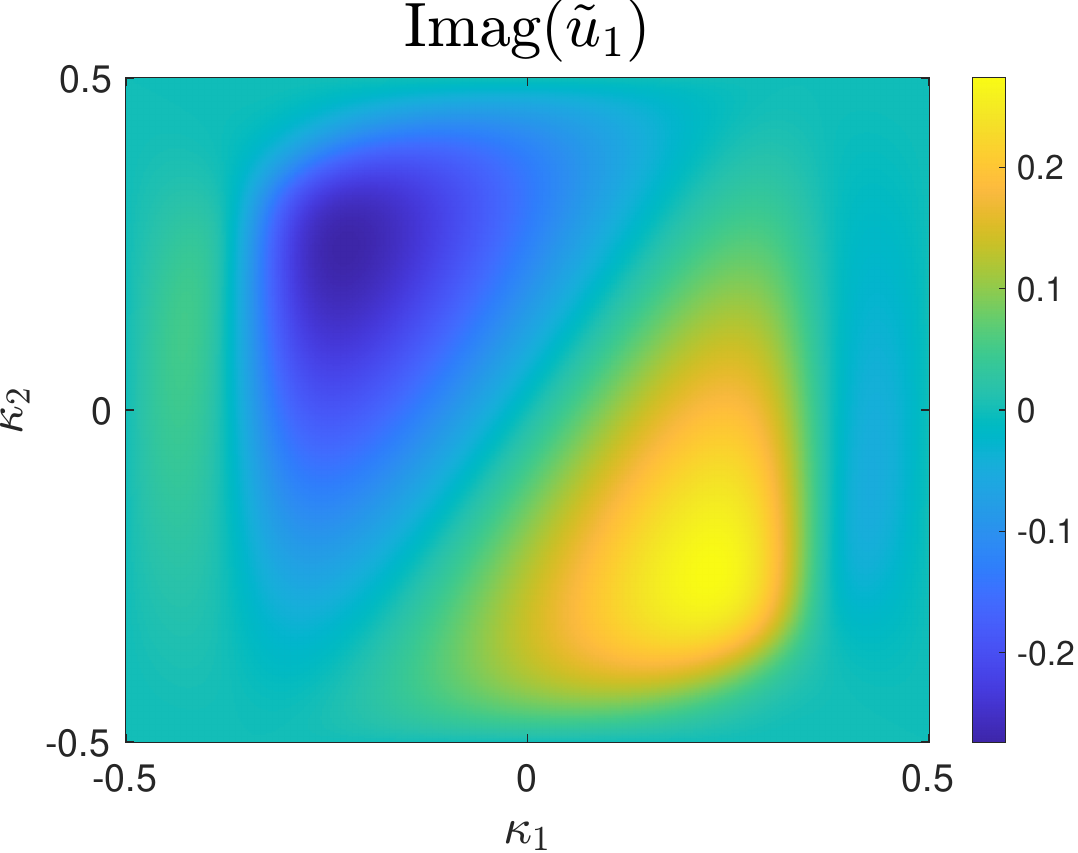}
	
		\vspace{-1em}
	\caption{ Plot of the real and imaginary  part of the component $\tilde{u}_1$ with the absolute value of the Fourier coefficients of $\tilde{u}_1$ in the $\log_{10}$ scale in Example 2.  }
\label{fig:eg2tu1}
\end{figure}

\begin{figure}[h]
	\centering

	\includegraphics[scale=0.25]{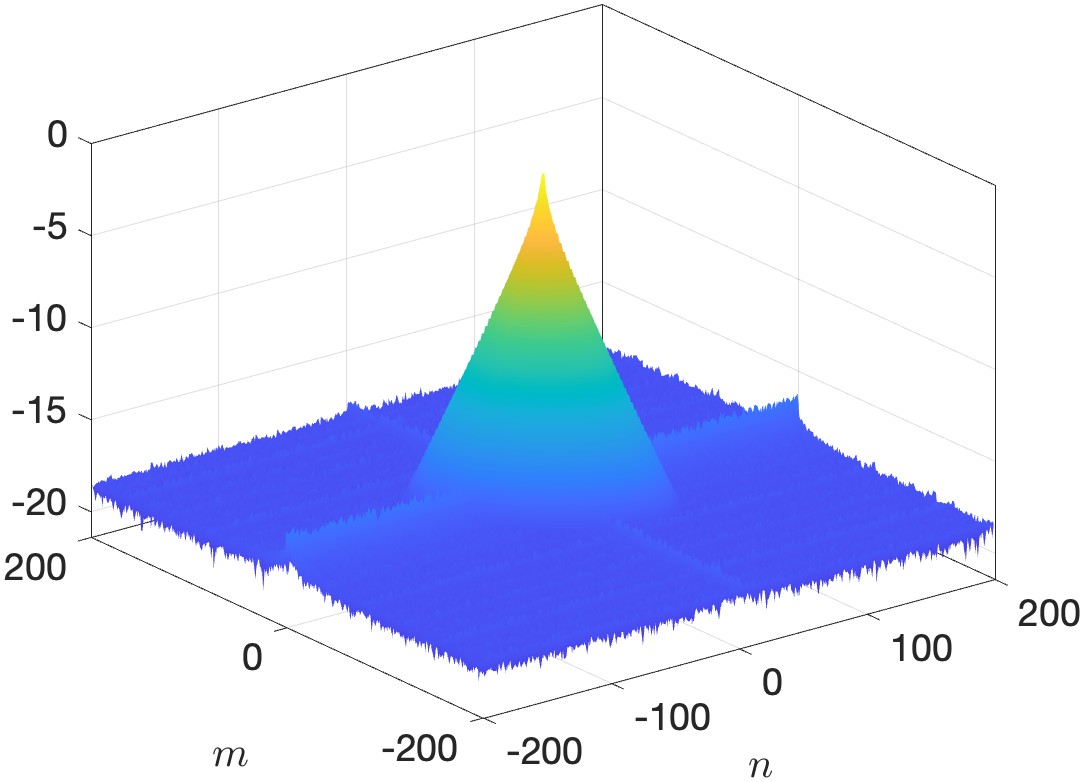}		
	
	\includegraphics[scale=0.4]{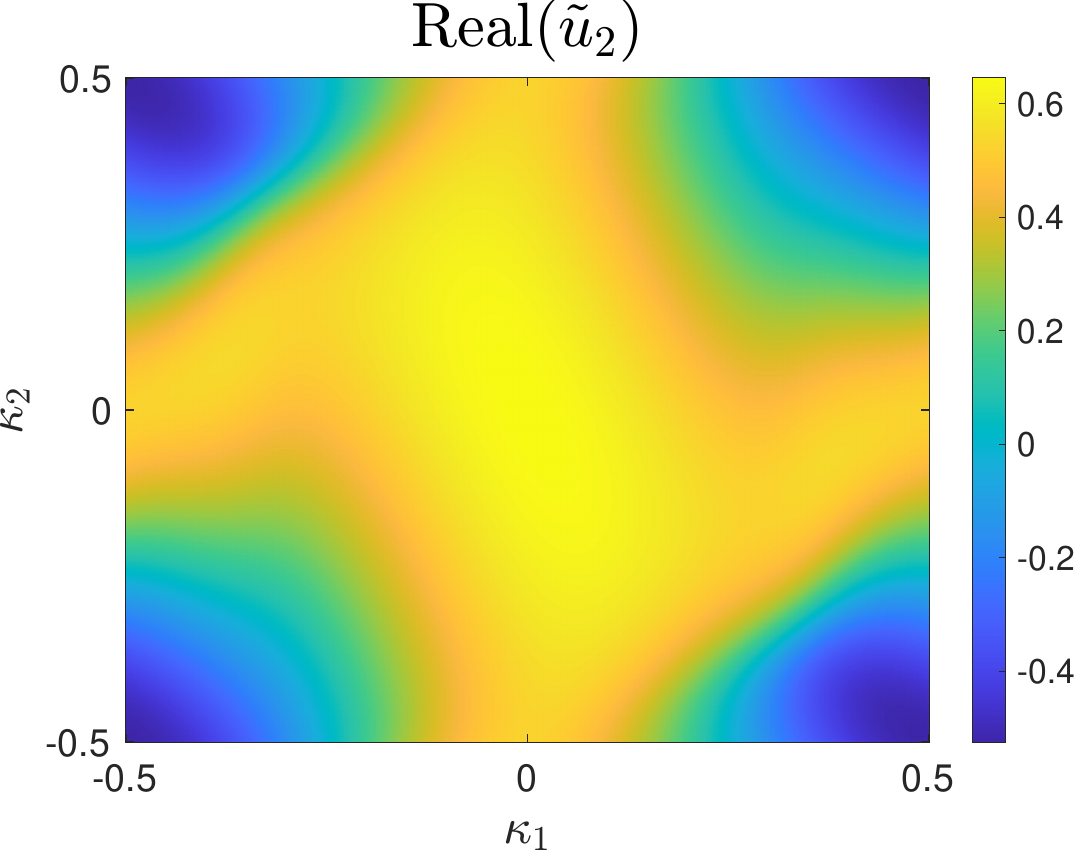}
	\includegraphics[scale=0.4]{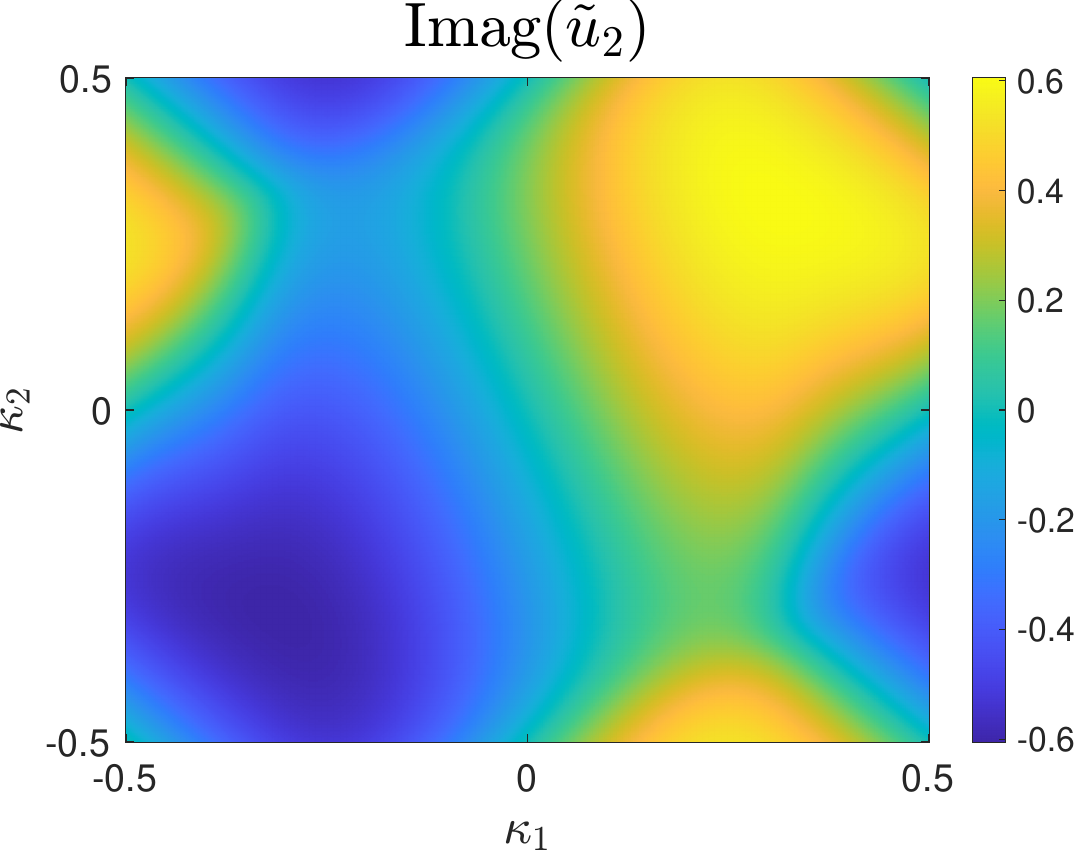}

		\vspace{-1em}
		\caption{ Plot of the real and imaginary  part of the component $\tilde{u}_2$ with the absolute value of the Fourier coefficients of $\tilde{u}_2$ in the $\log_{10}$ scale in Example 2.  }
\label{fig:eg2tu2}
\end{figure}
\begin{figure}[h]
	\centering	
		\includegraphics[scale=0.25]{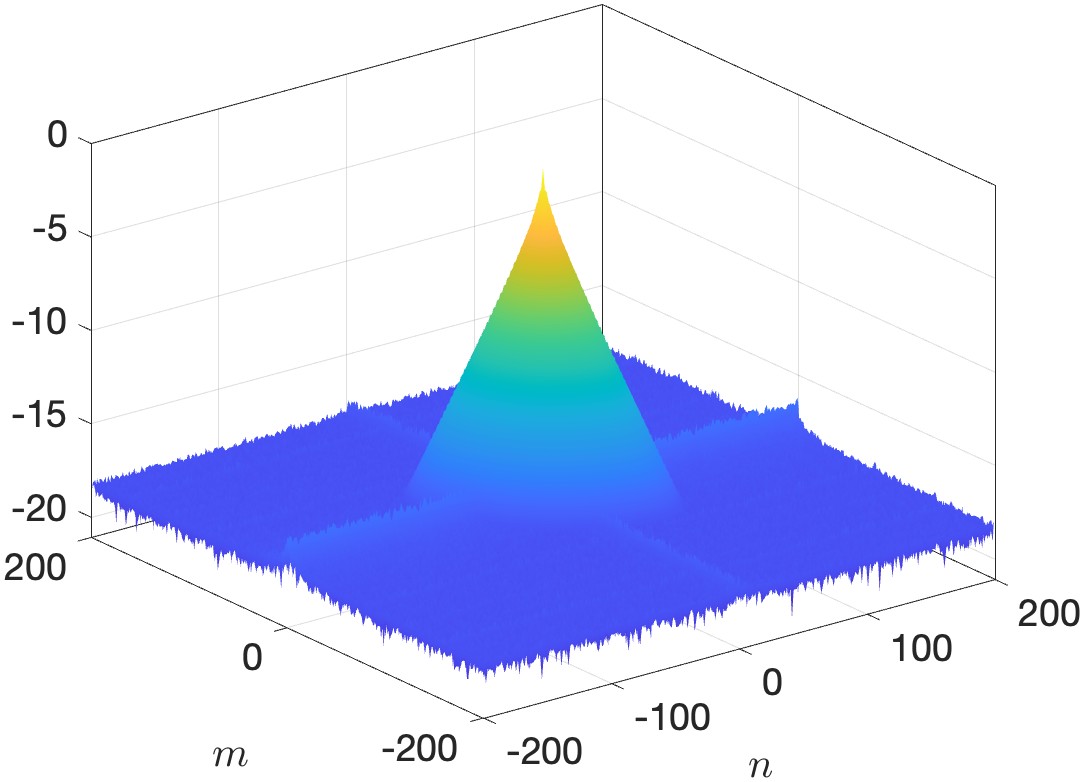}	
		
	\includegraphics[scale=0.4]{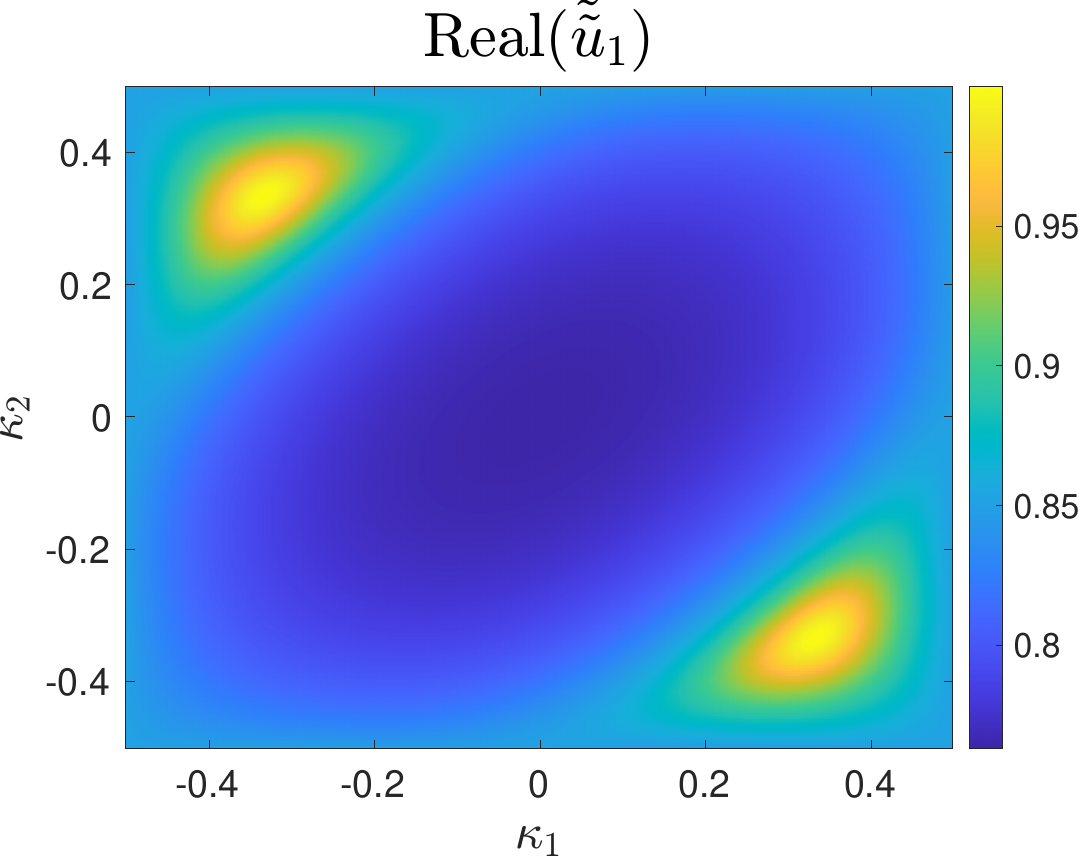}
	\includegraphics[scale=0.4]{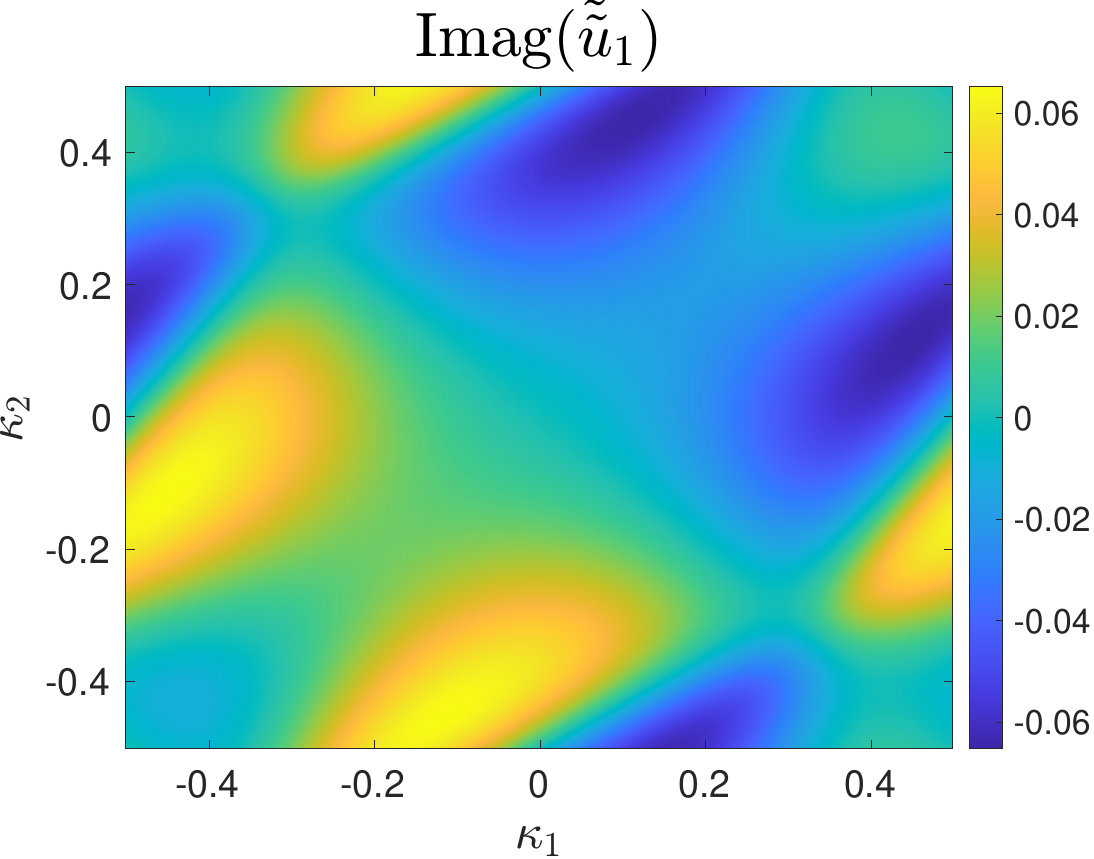}

	\vspace{-1em}
	\caption{ Plot of the quantities in  Figure\,\ref{fig:eg2tu1}  after eliminating the divergence of the Berry connection. Other components are not shown as they are visually very similar to those in  Figure\,\ref{fig:eg2tu1}.  }
	\label{fig:eg2ttu1}	
\end{figure}
\begin{figure}[h]
	\centering	
	
	\subfigure[]{
		\includegraphics[scale=0.55]{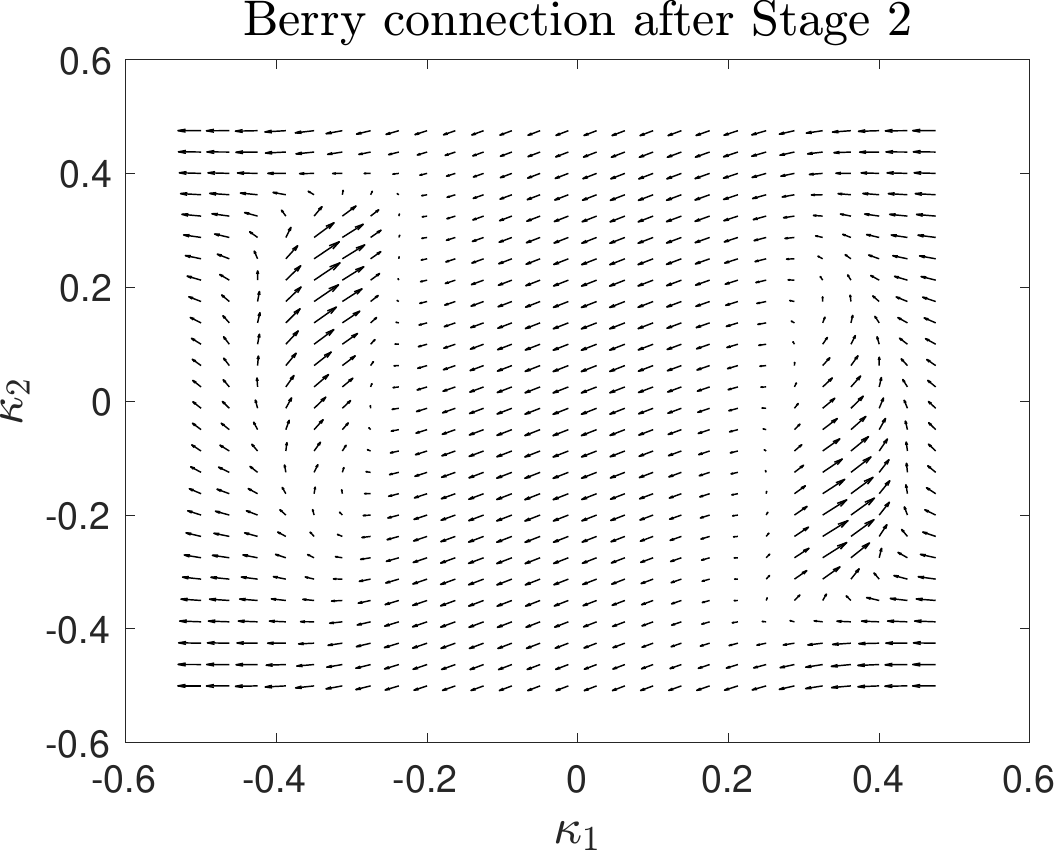}
		\label{fig:eg2bcdiv}
	}
	\subfigure[]{
		\includegraphics[scale=0.55]{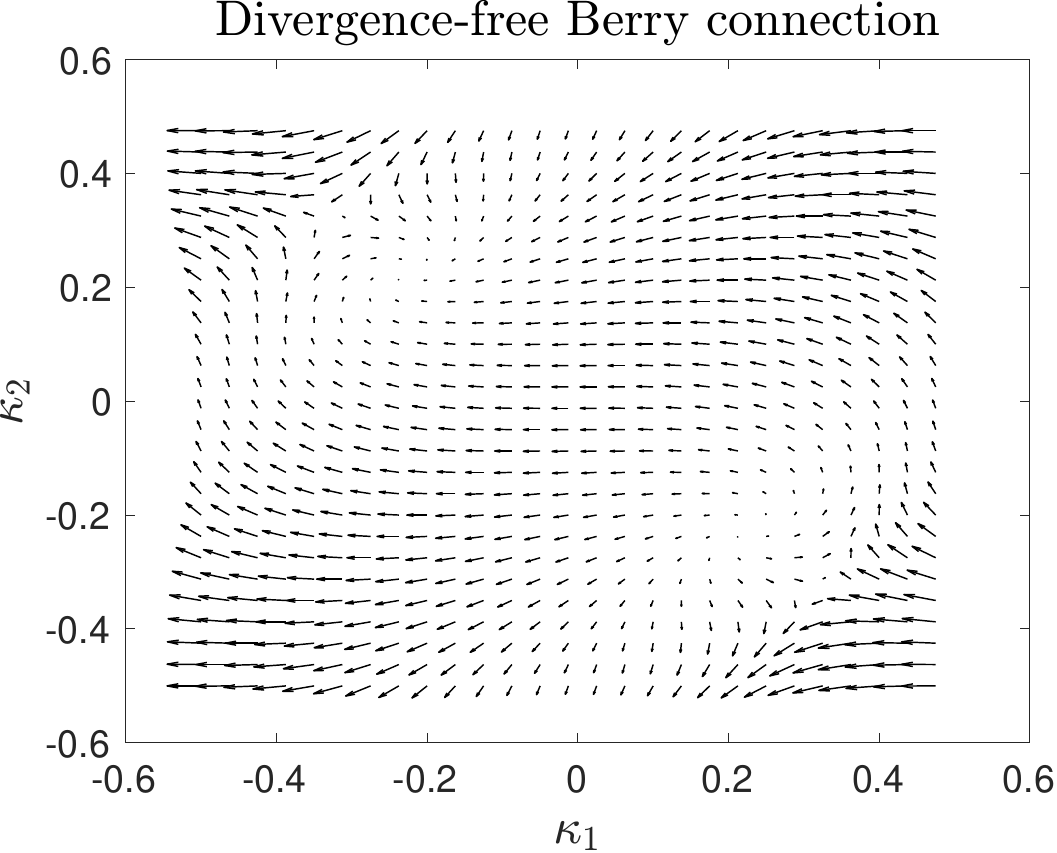}
		\label{fig:eg2bcdivless}
	}
	\vspace{-1em}
	\caption{ (a) Plot of the Berry connection after Stage 2 for Example 2. (b) Plot of the Berry connection after eliminating its curl-free component in Stage 3. All vectors shown are in the $\vct{e}_x,\vct{e}_y$ basis. }

\end{figure}
\begin{figure}[h]
	\centering	
	\includegraphics[scale=0.44]{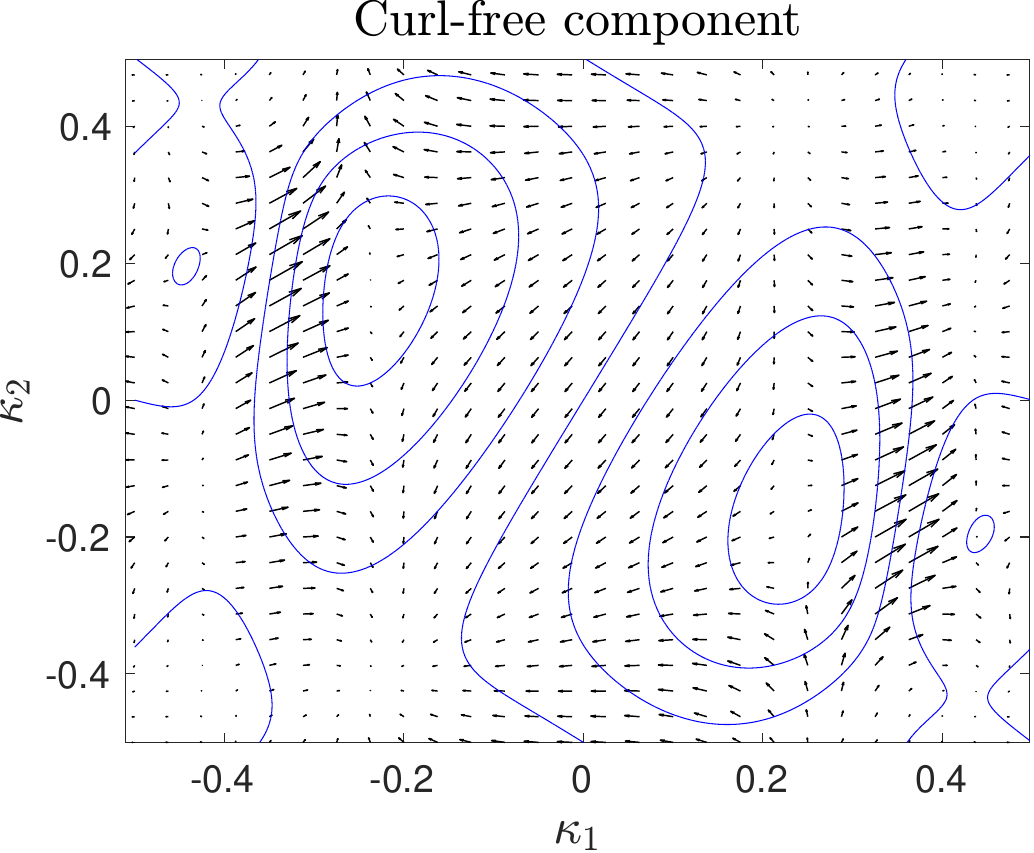}
	
	\includegraphics[scale=0.44]{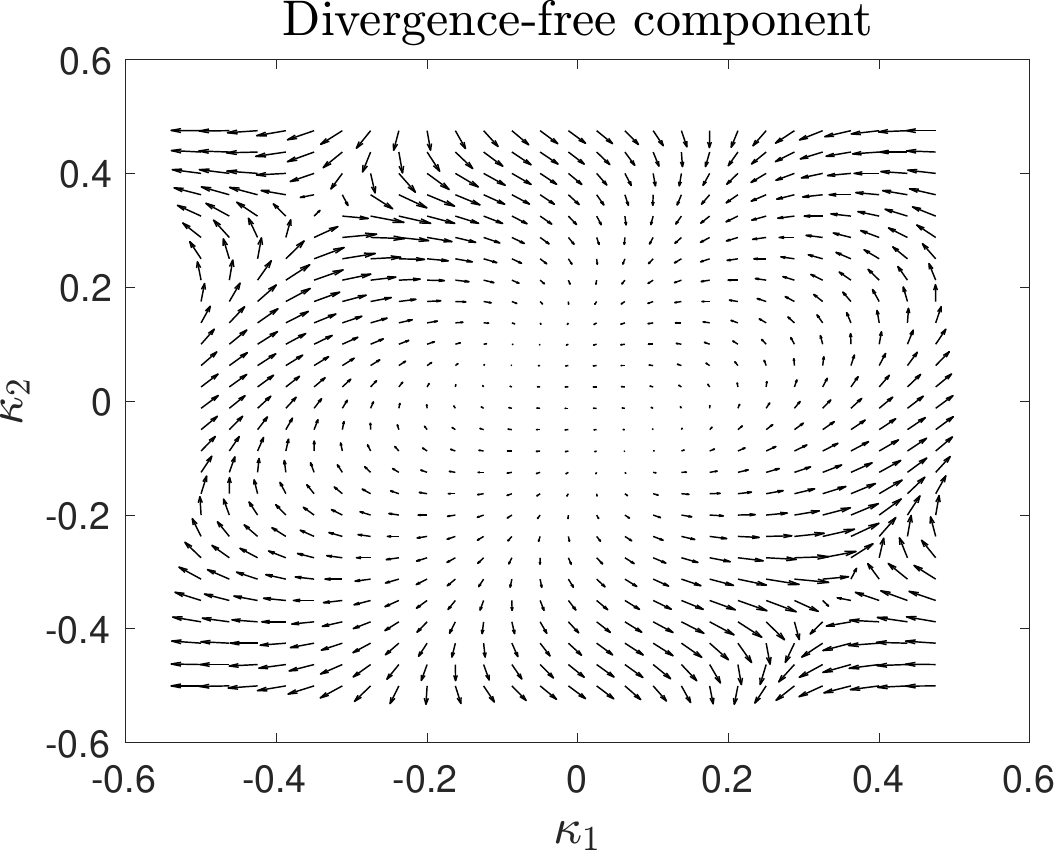}
	
	\includegraphics[scale=0.44]{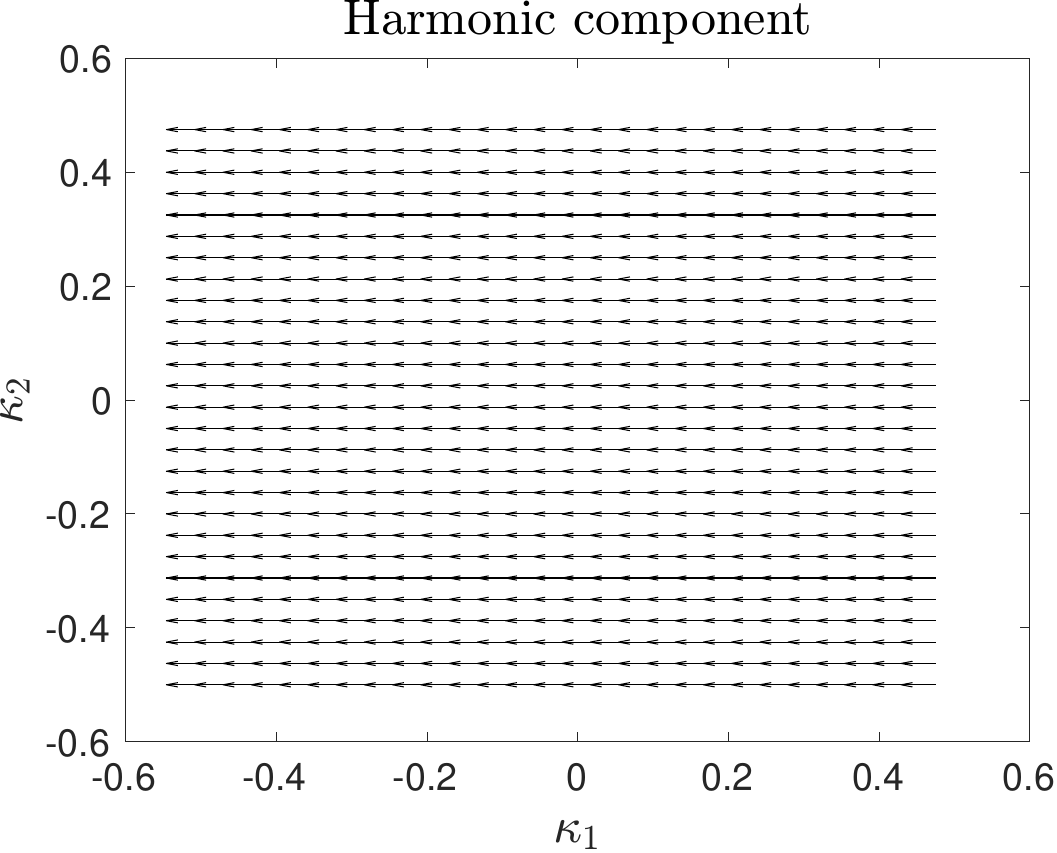}

	\vspace{-1em}
	\caption{ The Helmholtz-Hodge decomposition of the Berry connection in Figure\,\ref{fig:eg2bcdiv}. All vectors shown are in the $\vct{e}_x,\vct{e}_y$ basis. The equipotential lines are shown for the curl-free vector field, which is eliminated to obtain the optimal solution in Figure\,\ref{fig:eg2bcdivless}. 	\label{fig:eg2divf}}		
\end{figure}

\clearpage
\subsection{Example 3: Haldane model (topologically non-trivial case) \label{sec:ntrivial}}
In this example, we consider a topologically non-trivial version of the Haldane model in Section\,\ref{sec:trivial}. All the lattice vectors are identical to those in Section\,\ref{sec:trivial} with a modified matrix $H$ given by the formula
\begin{align}
	\begin{split}
	H(\vct{k}) &=  \sb{\mat{V_0 &  t_1(1 + e^{-\I\vct{k}\cdot\vct{a}_1} + e^{-\I\vct{k}\cdot\vct{a}_2})\\ 
			t_1(1 + e^{\I\vct{k}\cdot\vct{a}_1} + e^{\I\vct{k}\cdot\vct{a}_2})					& -V_0 }} \\
		& + t_2 \bb{\sin(\vct{k}\cdot \vct{a}_1) - \sin(\vct{k}\cdot \vct{a}_2) - \sin(\vct{k}\cdot (\vct{a}_1 - \vct{a}_2) )}\sb{\mat{1&0\\0&-1}}\,,
	\end{split}
\end{align}
where the constants are chosen as $t_1 = 1$, $t_2 = -0.45$ and  $V_0 = 0.5$. The eigenvalues as a function of the parameterization $\kappa_1$ and $\kappa_2$ are shown in Figure\,\ref{fig:eg2neval}, and the top band is chosen for the construction. The first Chern number $C_1 = 1$, so a topological obstruction is present (see the discussion below Theorem\,\ref{thm:stage2}). As a result,  in Figure\,\ref{fig:eg2nphi2}, its constructed $\varphi_2$ (see (\ref{eq:phitopo})) is not periodic, as opposed to that in Figure\,\ref{fig:eg1phi2} and \ref{fig:eg2phi2} for Example 1 and 2.

The two components of the assignment $\tilde{\vct{u}}$ after Stage 2 in Section\,\ref{sec:stage2} is shown in Figure\,\ref{fig:eg2ntu1}--\ref{fig:eg2ntu2}, together with the absolute value of their Fourier coefficients in the  $\log_{10}$ scale. The Fourier coefficients decay exponentially asymptotically in one direction but very slowly in the other one since by construction $\tilde{\vct{u}}(\vct{k}(\kappa_1,\kappa_2))$ is analytic and periodic in $\kappa_1$, but only analytic in $\kappa_2$, which can be seen from Figure\,\ref{fig:eg2ntu1}--\ref{fig:eg2ntu2}. 

Since this example is computationally similar to the trivial case in Section\,\ref{sec:trivial}, we only show the convergence of the computed Chern number in Table\,\ref{tab:eg3}. 

\begin{figure}[h]
	\centering	
	\subfigure[]{
		\includegraphics[scale=0.35]{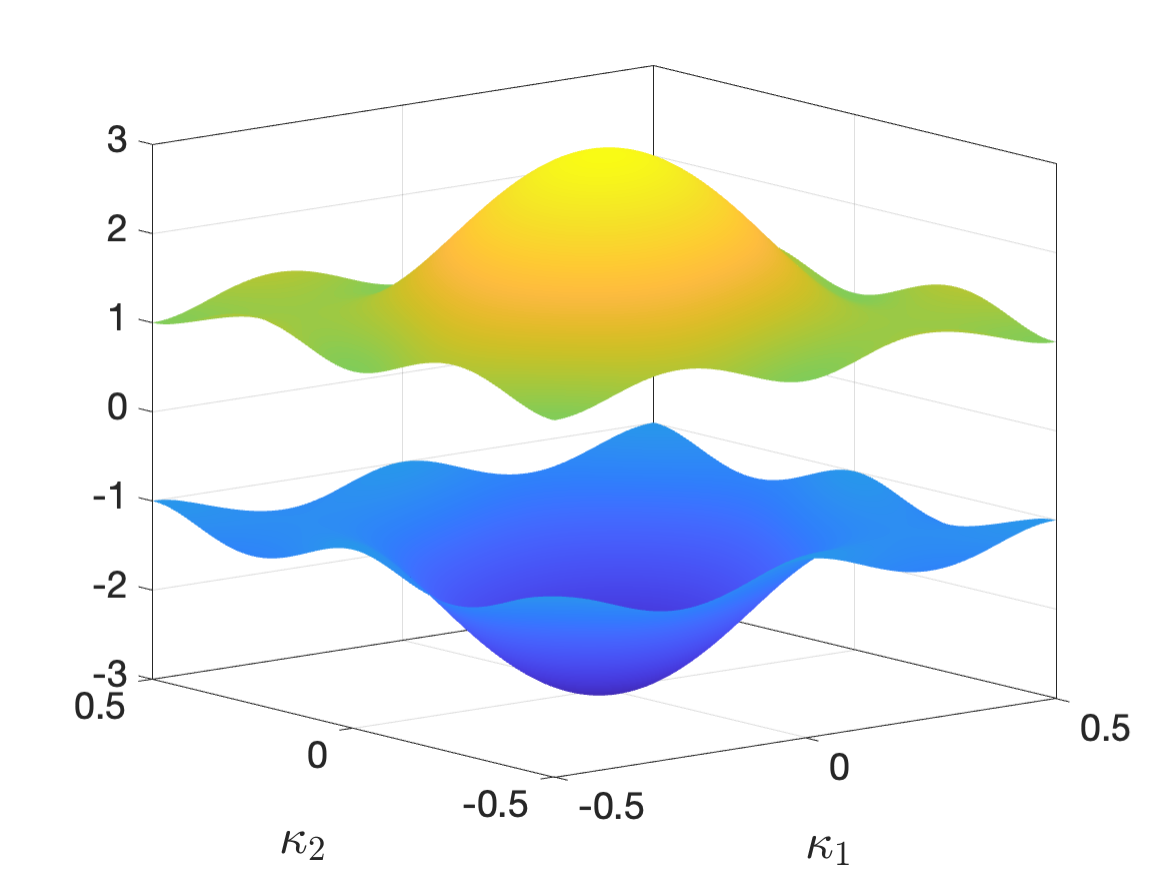}
		\label{fig:eg2neval}
	}
	\subfigure[]{
		\includegraphics[scale=0.35]{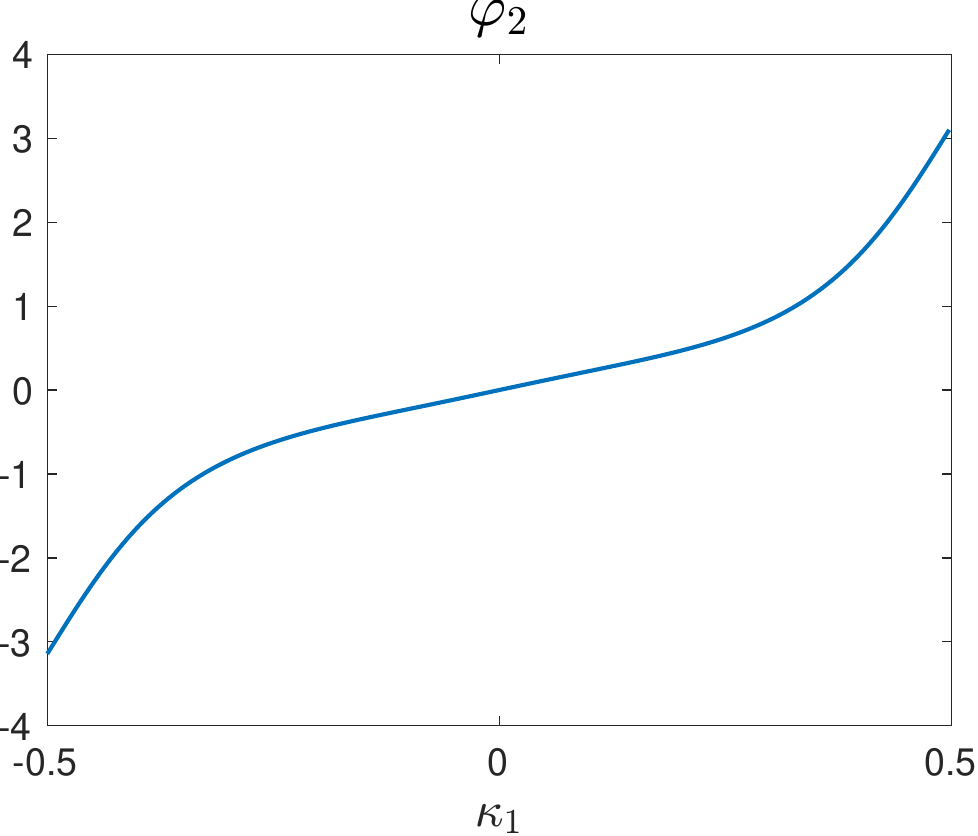}
		\label{fig:eg2nphi2}
	}
	\vspace{-1em}
	\caption{ (a) Plot of eigenvalues of $H$ in Example 3. The band on top is picked. (b) The phase $\varphi_2$ in (\ref{eq:phitopo}) for Example 3. It is discontinuous due to the nonzero Chern number.}	
	
\end{figure}

\begin{table}[h]
	\centering
	\begin{tabular}{c c }
		\hline
		$N$ & $\mathrm{E_{Ch}}$  \\
		\hline \hline
		50 & 2.69e-14   \\
		100 & 1.11e-16   \\
		\hline 
	\end{tabular}
	\caption{\label{tab:eg3}
		Errors in the computed first Chern number $C_1$ for Example 3 ($C_1 = 1$). }
\end{table}

\begin{figure}[h]
	\centering
	\includegraphics[scale=0.25]{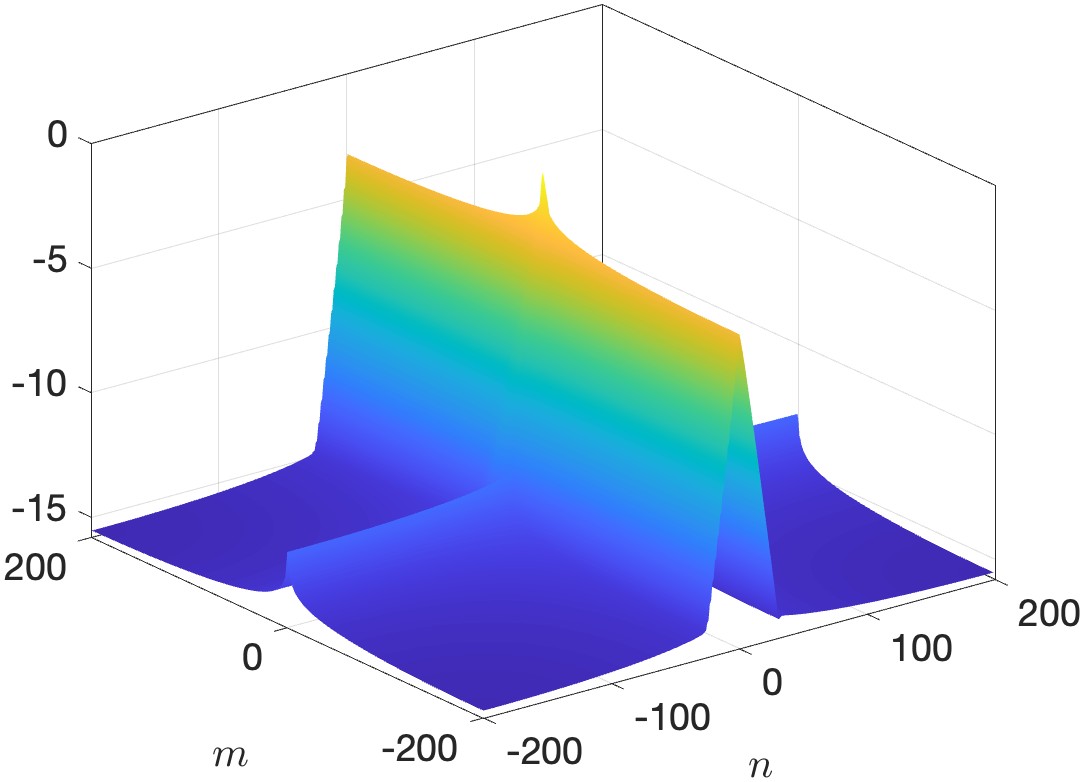}		
	
	\includegraphics[scale=0.43]{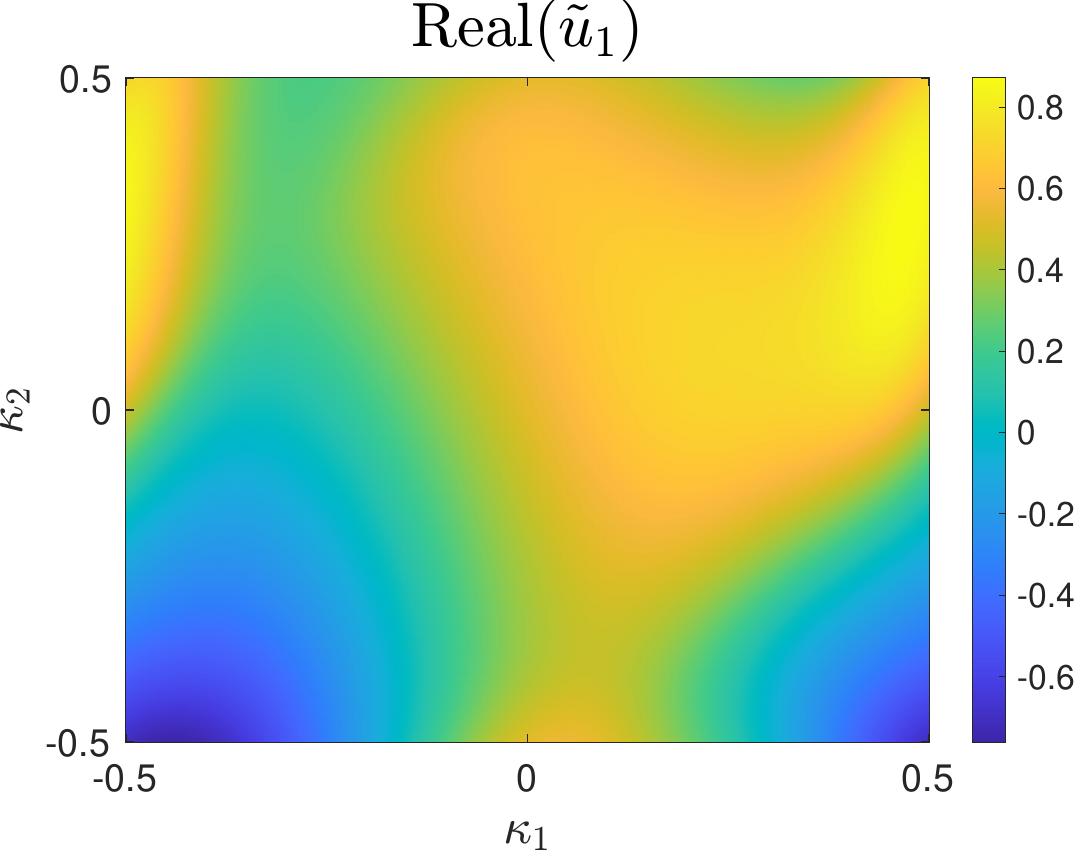}
	\includegraphics[scale=0.43]{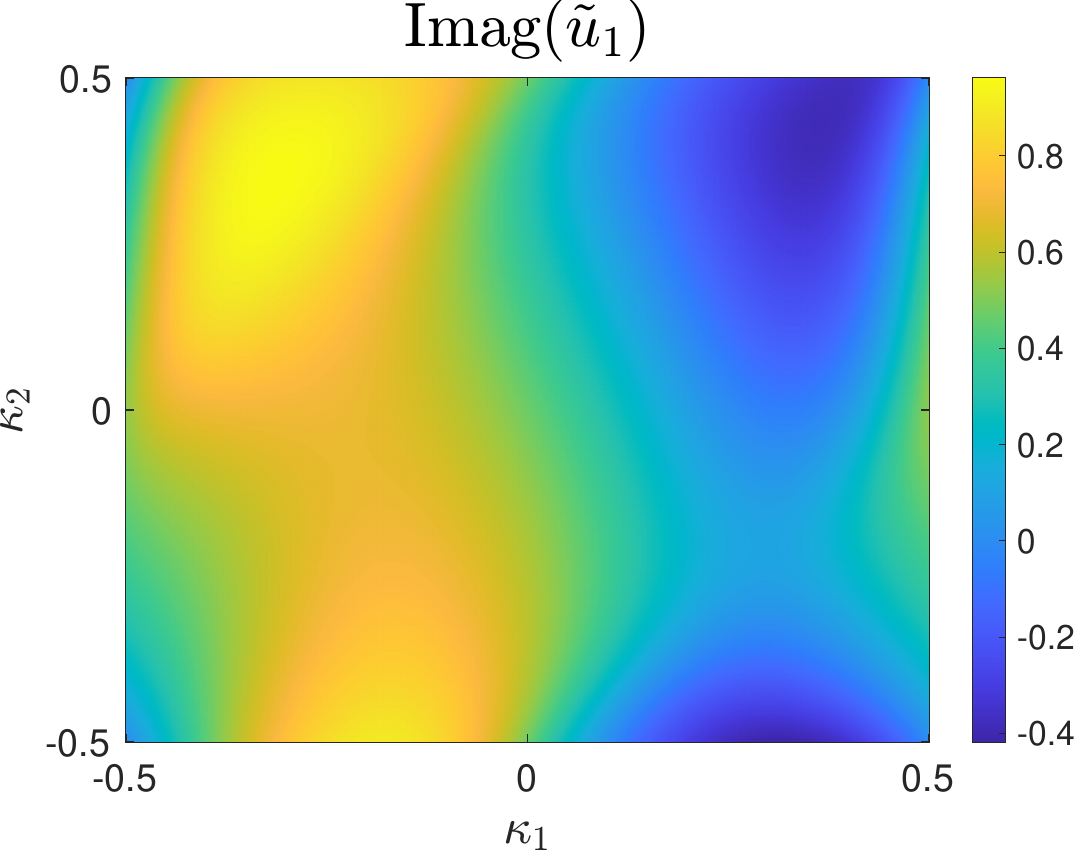}
	
	\vspace{-1em}
	\caption{ Plot of the real and imaginary  part of the component $\tilde{u}_1$ with the absolute value of the Fourier coefficients of $\tilde{u}_1$ in the $\log_{10}$ scale in Example 3. The Fourier coefficients decay exponentially only in one direction since $\tilde{u}_1$ is analytic and periodic in $\kappa_1$ but only analytic in $\kappa_2$\,.}
	\label{fig:eg2ntu1}
\end{figure}
\begin{figure}[h]
	\centering
	
	\includegraphics[scale=0.25]{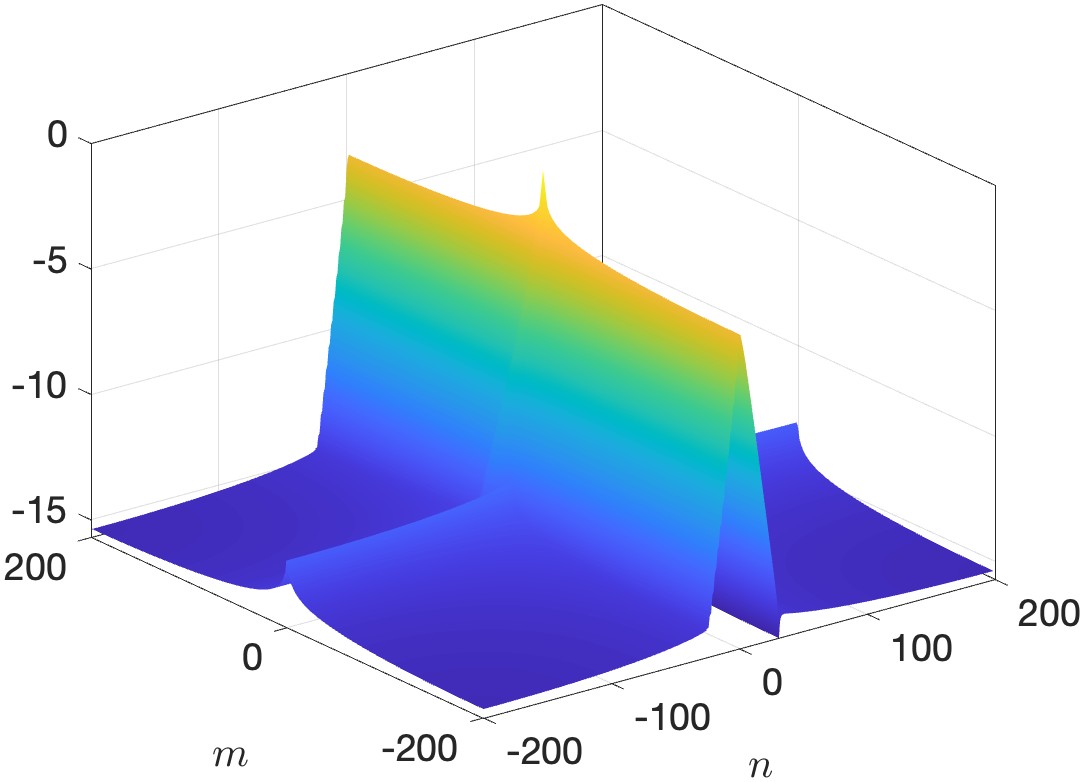}		
	
	\includegraphics[scale=0.43]{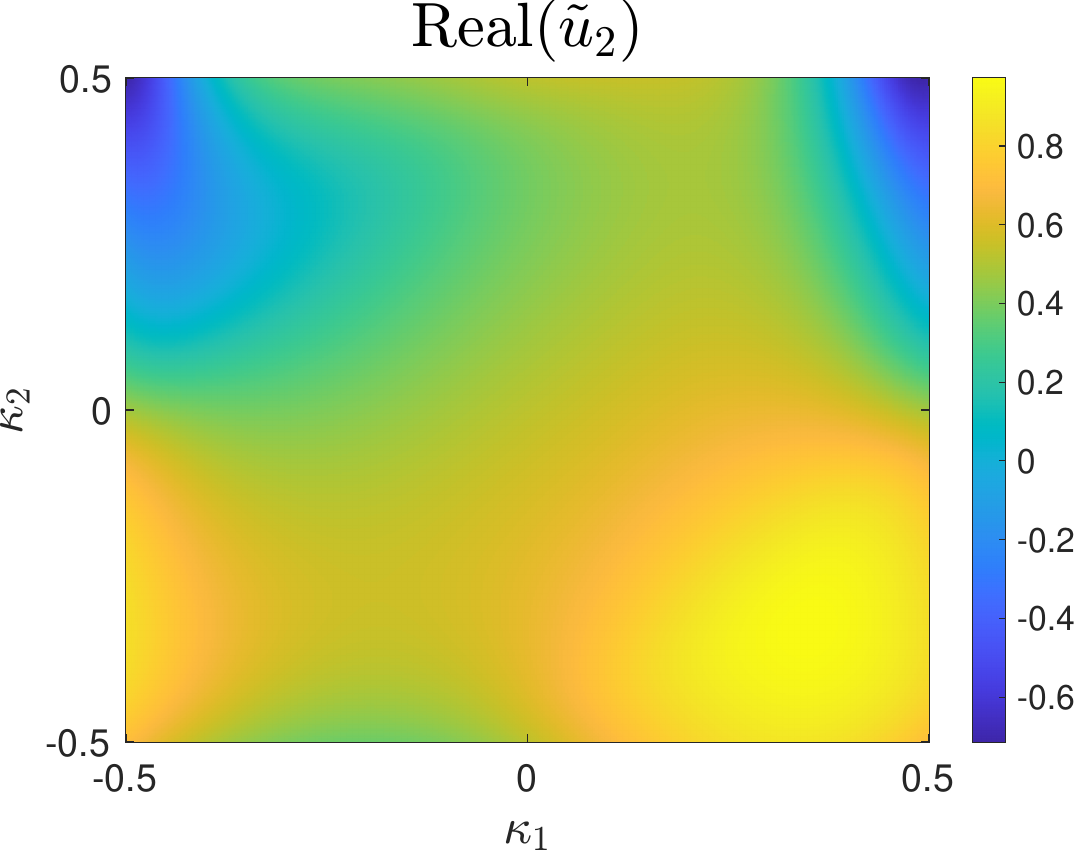}
	\includegraphics[scale=0.43]{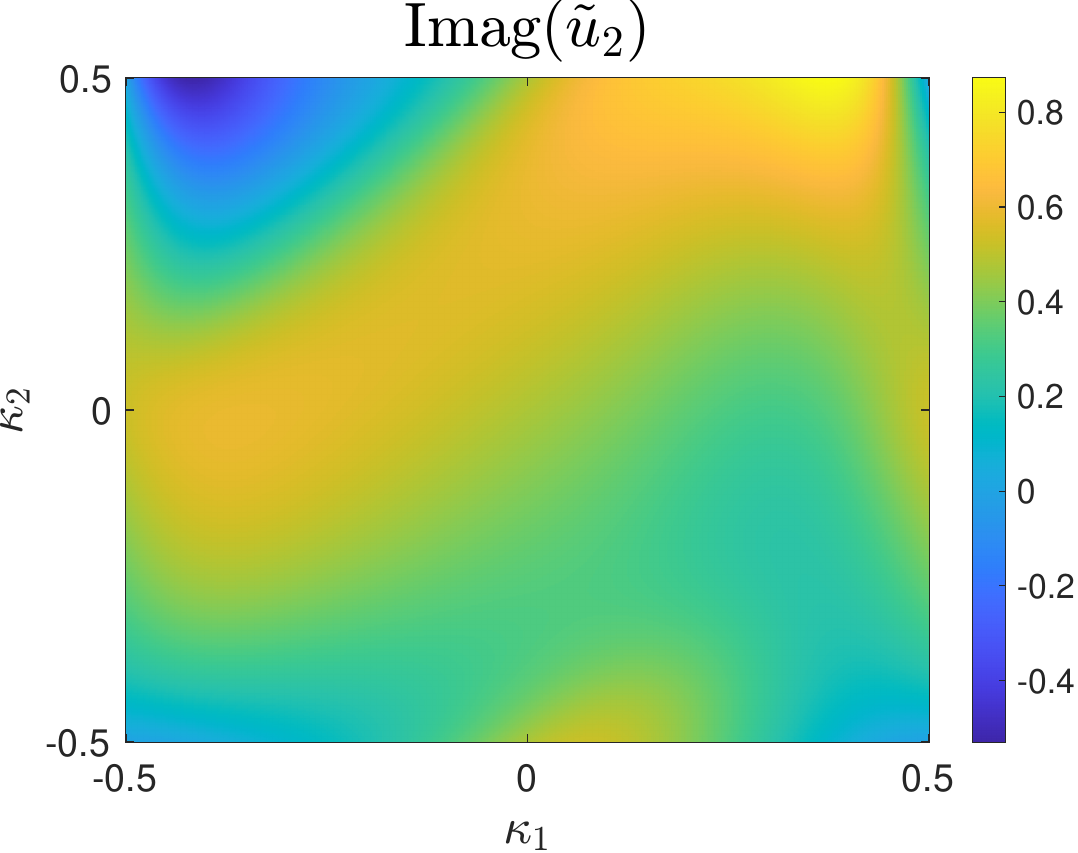}

	\vspace{-1em}
	\caption{ Plot of the real and imaginary  part of the component $\tilde{u}_2$ with the absolute value of the Fourier coefficients of $\tilde{u}_2$ in the $\log_{10}$ scale in Example 3. The Fourier coefficients decay exponentially only in one direction since $\tilde{u}_2$ is analytic and periodic in $\kappa_1$ but only analytic in $\kappa_2$. }
	\label{fig:eg2ntu2}
\end{figure}

\clearpage
\section{Conclusion and generalizations \label{sec:final}}
In this paper, we presented a rapidly convergent scheme for computing the globally optimal Wannier functions of an isolated single band in two dimensions. In the absence of topological obstructions, we proved that parallel transport (with simple corrections) alone leads to assignments of eigenvectors corresponding to exponentially localized Wannier functions. Furthermore, when the model possesses time reversal symmetry, we proved that the resulting Wannier functions are automatically real. Then a single gauge transform can be applied to eliminate the divergence of the Berry connections of the corresponding assignments, obtaining the globally optimal assignments for Wannier functions with minimum spread. We illustrated the efficiency and accuracy of the schemes with several examples. We compared the results above to those obtained via the parallel transport approach in Remark\,\ref{rmk:twist}, which decouples the computation of band structure information and Wannier functions. This alternative approach produces the same Wannier functions with second-order accuracy, but it has the advantage of being simple and computationally cheap once the band structure information is known.

The schemes in this paper can be further accelerated by exploiting symmetries of the lattice (and time-reversal symmetry if available); at least partial symmetry information can be easily utilized to avoid unnecessary computation of parallel transport and band structure information without affecting the equispaced grids scheme in this paper. Moreover, better time-stepping schemes for parallel transport can be used to further improve convergence or to reduce the number of matrix inversions.

Since the scheme in this paper builds two-dimensional assignments of eigenvectors based on one-dimensional ones, the analysis and algorithms can be easily generalized to three-dimensional cases. The three-dimensional construction will be made of two-dimensional slices obtained by the schemes in this paper, and all Chern-like numbers will appear in the construction. 

If we replace the construction of one-dimensional Wannier functions in this paper with the Fourier approach in \cite{gopal2024high} for Schr\"odinger operators, the analysis in this paper carries over to the operator case almost verbatim.  For the algorithms, when high accuracy is not required, a similar parallel transport scheme to that in Remark\,\ref{rmk:twist} can be applied, followed by eliminating the divergence of Berry connections as in this paper.
When higher accuracy is needed, better parallel transport schemes are required. 
The scheme in this paper and \cite{gopal2024high} can be applied, but they are not suitable for the operator case if the matrices to be inverted are large. As a result, better numerical schemes for the inversion are needed. These questions are currently under vigorous investigation.

For extensions to the isolated multiband case,  although the overall idea remains the same, there are notable differences. For example, the Berry connections in the multiband case become Lie-algebra valued; in two dimensions, the elimination of divergence cannot be done in a single step, but the potential theory approach will significantly reduce the number of optimization steps. Extensions of \cite{gopal2024high} and this paper to the isolated multiband case have been worked out and are now in preparation for publication.

\section*{Acknowledgement}
The author thanks John Schotland for drawing their attention to this subject. 
The author also thanks Vladimir Rokhlin for his support and helpful discussions pertaining to this work.

\clearpage
\bibliographystyle{plain}

\bibliography{ref} 

\begin{thebibliography}{10}

\bibitem{berry1989quantum}
Michael~V Berry.
\newblock The quantum phase, five years after, 1989.

\bibitem{berry1984quantal}
Michael~Victor Berry.
\newblock Quantal phase factors accompanying adiabatic changes.
\newblock {\em Proceedings of the Royal Society of London. A. Mathematical and
  Physical Sciences}, 392(1802):45--57, 1984.

\bibitem{blount1962formalisms}
EI~Blount.
\newblock Formalisms of band theory.
\newblock In {\em Solid state physics}, volume~13, pages 305--373. Elsevier,
  1962.

\bibitem{briggs1995dft}
William~L Briggs and Van~Emden Henson.
\newblock {\em The DFT: an owner's manual for the discrete Fourier transform}.
\newblock SIAM, 1995.

\bibitem{brouder2007exponential}
Christian Brouder, Gianluca Panati, Matteo Calandra, Christophe Mourougane, and
  Nicola Marzari.
\newblock Exponential localization of wannier functions in insulators.
\newblock {\em Physical review letters}, 98(4):046402, 2007.

\bibitem{cances2017robust}
{\'E}ric Canc{\`e}s, Antoine Levitt, Gianluca Panati, and Gabriel Stoltz.
\newblock Robust determination of maximally localized wannier functions.
\newblock {\em Physical Review B}, 95(7):075114, 2017.

\bibitem{ciarlet2013linear}
Philippe~G Ciarlet.
\newblock {\em Linear and nonlinear functional analysis with applications}.
\newblock SIAM, 2013.

\bibitem{dahlquist2003numerical}
Germund Dahlquist and {\AA}ke Bj{\"o}rck.
\newblock {\em Numerical methods}.
\newblock Courier Corporation, 2003.

\bibitem{damle2019variational}
Anil Damle, Antoine Levitt, and Lin Lin.
\newblock Variational formulation for {W}annier functions with entangled band
  structure.
\newblock {\em Multiscale Modeling \& Simulation}, 17(1):167--191, 2019.

\bibitem{damle2017scdm}
Anil Damle, Lin Lin, and Lexing Ying.
\newblock Scdm-k: Localized orbitals for solids via selected columns of the
  density matrix.
\newblock {\em Journal of Computational Physics}, 334:1--15, 2017.

\bibitem{dutt2000spectral}
Alok Dutt, Leslie Greengard, and Vladimir Rokhlin.
\newblock Spectral deferred correction methods for ordinary differential
  equations.
\newblock {\em BIT Numerical Mathematics}, 40:241--266, 2000.

\bibitem{gopal2024high}
Abinand Gopal and Hanwen Zhang.
\newblock A highly accruate procedure for computing globally optimal wannier
  functions in one-dimensional crystalline insulators.
\newblock {\em arXiv preprint arXiv:2409.04369}, 2024.

\bibitem{gresch2017z2pack}
Dominik Gresch, Gabriel Autes, Oleg~V Yazyev, Matthias Troyer, David
  Vanderbilt, B~Andrei Bernevig, and Alexey~A Soluyanov.
\newblock Z2pack: Numerical implementation of hybrid wannier centers for
  identifying topological materials.
\newblock {\em Physical Review B}, 95(7):075146, 2017.

\bibitem{haldane1988model}
F~Duncan~M Haldane.
\newblock Model for a quantum hall effect without landau levels:
  Condensed-matter realization of the" parity anomaly".
\newblock {\em Physical review letters}, 61(18):2015, 1988.

\bibitem{kato2013perturbation}
Tosio Kato.
\newblock {\em Perturbation theory for linear operators}, volume 132.
\newblock Springer Science \& Business Media, 2013.

\bibitem{kaxiras2019quantum}
Efthimios Kaxiras and John~D Joannopoulos.
\newblock {\em Quantum theory of materials}.
\newblock Cambridge university press, 2019.

\bibitem{marzari2012maximally}
Nicola Marzari, Arash~A Mostofi, Jonathan~R Yates, Ivo Souza, and David
  Vanderbilt.
\newblock Maximally localized wannier functions: Theory and applications.
\newblock {\em Reviews of Modern Physics}, 84(4):1419--1475, 2012.

\bibitem{marzari1997maximally}
Nicola Marzari and David Vanderbilt.
\newblock Maximally localized generalized {W}annier functions for composite
  energy bands.
\newblock {\em Physical review B}, 56(20):12847, 1997.

\bibitem{nakahara2018geometry}
Mikio Nakahara.
\newblock {\em Geometry, topology and physics}.
\newblock CRC press, 2018.

\bibitem{panati2007triviality}
Gianluca Panati.
\newblock Triviality of bloch and bloch--dirac bundles.
\newblock In {\em Annales Henri Poincar{\'e}}, volume~8, pages 995--1011.
  Springer, 2007.

\bibitem{papaconstantopoulos2015handbook}
Dimitris~A Papaconstantopoulos et~al.
\newblock {\em Handbook of the Band Structure of Elemental Solids: From Z}.
\newblock Springer, 2015.

\bibitem{pontryagin2014ordinary}
Lev~Semenovich Pontryagin.
\newblock {\em Ordinary Differential Equations: Adiwes International Series in
  Mathematics}.
\newblock Elsevier, 2014.

\bibitem{simon1983holonomy}
Barry Simon.
\newblock Holonomy, the quantum adiabatic theorem, and berry's phase.
\newblock {\em Physical Review Letters}, 51(24):2167, 1983.

\bibitem{thouless1984wannier}
DJ~Thouless.
\newblock Wannier functions for magnetic sub-bands.
\newblock {\em Journal of Physics C: Solid State Physics}, 17(12):L325, 1984.

\bibitem{trefethen2014exponentially}
Lloyd~N Trefethen and JAC Weideman.
\newblock The exponentially convergent trapezoidal rule.
\newblock {\em SIAM review}, 56(3):385--458, 2014.

\bibitem{vanderbilt2018berry}
David Vanderbilt.
\newblock {\em Berry phases in electronic structure theory: electric
  polarization, orbital magnetization and topological insulators}.
\newblock Cambridge University Press, 2018.

\bibitem{zak1989berry}
Joshua Zak.
\newblock Berry's phase for energy bands in solids.
\newblock {\em Physical review letters}, 62(23):2747, 1989.

\end{thebibliography}

\newpage
\section{Appendix}
\subsection{Alternative approach: direct computation of optimal gauge\label{sec:method2}}
In this section, we describe an alternative approach to the method in section\,\ref{sec:method1}. 
 In this approach, we directly compute the gauge-invariant parts of the  Berry connection in (\ref{eq:adecom4}), which uniquely determines the globally optimal assignment of eigenvectors (and the Wannier function). It proceeds by first computing the Berry curvature, from which a Poisson's equation solve is done to obtain the potential $F$ in (\ref{eq:adecom4}). The potential $F$ gives the Berry connection satisfying the integrability condition in 
Theorem\,\ref{thm:intecond}, so that a Pfaffian system of the form in (\ref{eq:odekn1}) can be solved to define an analytic (not necessarily $\Lambda^*$-periodic) assignment of eigenvectors on $D^*$. To make the assignment $\Lambda^*$-periodic, the constant harmonic components $(h_x, h_y$) is obtained, thus completing the construction. We do not provide any numerical procedure for this approach since the key ingredients required are essentially identical to those in Section\,\ref{sec:numerics}.

It should be observed that, although the method in this section is conceptually simpler, it is more expensive than the first method in Section\,\ref{sec:method1} due to the computation of the Berry curvature without doing the assignment. Furthermore, since it uses the gauge-invariant property of the Berry curvature for a single band, it cannot be generalized to the multiband case, where the Berry curvature tensor is only gauge-covariant.

\subsubsection{Stage 1: computing the Berry curvature}
In order to compute the Berry curvature $\Omega_{xy}$, we compute the family of eigenvalues $E$ and eigenvectors $\vct{u}$:
\begin{equation}
	H(\vct{k})\vct{u}(\vct{k}) = E(\vct{k})\vct{u}(\vct{k})\,,\quad \vct{k}\in D^*\,.
	\label{eq:eigstage1}
\end{equation}
It should be observed that the phase choice of $\vct{u}$ can be arbitrary since the Berry curvature $\Omega_{xy}$ is gauge-invariant (see Remark\,\ref{rmk:bc}). Next, we compute $\Omega_{xy}$ by the formula
\begin{equation}
	\Omega_{xy}(\vct{k}) = \I \frac{\partial}{\partial {k_x} }\vct{u}^*(\vct{k}) \frac{\partial}{\partial {k_y} } \vct{u}(\vct{k}) - \I \frac{\partial}{\partial {k_y} } \vct{u}^*(\vct{k})\frac{\partial}{\partial {k_x} }\vct{u}(\vct{k})\,, \quad \vct{k}\in D^*\,,
	\label{eq:bcvcom}
\end{equation}
where the derivatives are given by 
\begin{equation}
	\frac{\partial}{\partial {k_x} }\vct{u} = -(H-E)^{\dagger} \frac{\partial}{\partial {k_x} }H \vct{u}\,,\quad 	\frac{\partial}{\partial {k_y} } \vct{u} = -(H-E)^{\dagger} \frac{\partial}{\partial {k_y} } H \vct{u}\,.
\end{equation}

\subsubsection{Stage 2: computing the divergence-free component }
Next, based on the decomposition in (\ref{eq:adecom}), we compute the gauge-invariant vector fields. More explicitly, this involves finding an analytic and $\Lambda^*$-periodic $F$ that generates the divergence-free component and the harmonic components $(h_x,h_y)$. By Theorem\,\ref{thm:opt}, they uniquely (up to a lattice constant) determine the assignment of eigenvectors.

According to (\ref{eq:divless3}), we solve the following Poisson's equation
\begin{equation}
	\frac{\partial^2 F}{\partial {k^2_x} } + \frac{\partial^2 F}{\partial {k^2_y} }   = -g\,, \quad \mbox{$F$ is $\Lambda^*$-periodic on $D^*$}\,,
		\label{eq:divless3}
\end{equation}
where $g$ is given by the Berry curvature given by (\ref{eq:bcvcom}):
\begin{equation}
	g = \Omega_{xy}\,.
\end{equation}
However, by Theorem\,\ref{thm:poiss}, the equation is solvable if and only if the integral of $\Omega_{xy}$ over $D^*$ is zero. By the definition of the first Chern number $C_1$ in (\ref{eq:chern}), this corresponds to the condition 
\begin{equation}
	\int_{D^*} \D \vct{k}\, \Omega_{xy}(\vct{k}) = 2\pi C_1 = 0\,.
\end{equation}
Hence, we conclude that (\ref{eq:divless3}) is solvable if and only if $C_1=0$. In other words, when $C_1\ne 0$, we encounter the topological obstruction in Theorem\,\ref{thm:chern} and no such $F$ can be found. If we assume that $C_1 = 0$, we observe that (\ref{eq:divless3}) is easily solvable for the following reason. Since the projector $P$ is analytic and periodic on $D^*$,
so is $\Omega_{xy}$ by its definition in (\ref{eq:bcurv}). Hence, the Fourier coefficients of $\Omega_{xy}$ decay exponentially. By (\ref{eq:poisol2}), the solution $F$ is given by its Fourier series as
\begin{equation}
		F (\vct{k}) = \sum_{\substack{\vct{R}\in\Lambda\\ \vct{R}\ne \vct{0}}} \frac{g_{\vct{R}}}{\norm{\vct{R}}^2}\cdot e^{\I \vct{R}\cdot\vct{k}}\,, \quad \vct{k}\in D^*\,,
\end{equation}
where $g_{\vct{R}}$ is the Fourier coefficient of $g$ at $\vct{R}\in\Lambda$:
\begin{equation}
	g_{\vct{R}} = \frac{V_{\rm puc}}{(2\pi)^2} \int_{D^*} \D \vct{k} \, e^{-\I\vct{R}\cdot\vct{k}} g(\vct{k}) = \frac{V_{\rm puc}}{(2\pi)^2} \int_{D^*} \D \vct{k} \, e^{-\I\vct{R}\cdot\vct{k}} \Omega_{xy}(\vct{k})\,.
\end{equation}
After obtaining $F$, we compute the divergence-free Berry connection by the formula
\begin{align}
	A_x(\vct{k}) =&\frac{\partial }{\partial {k_y} }F(\vct{k}) = \sum_{\substack{\vct{R}\in\Lambda\\ \vct{R}\ne \vct{0}}} \I R_y \frac{g_{\vct{R}}}{\norm{\vct{R}}^2}\cdot e^{\I \vct{R}\cdot\vct{k}}\,,\\
	 A_y(\vct{k}) =& -\frac{\partial }{\partial {k_x} }F(\vct{k}) = -\sum_{\substack{\vct{R}\in\Lambda\\ \vct{R}\ne \vct{0}}} \I R_x \frac{g_{\vct{R}}}{\norm{\vct{R}}^2}\cdot e^{\I \vct{R}\cdot\vct{k}}\,,
	 \label{eq:bcint}
\end{align}
where $\vct{R}=(R_x,R_y)$\,.

We have obtained the Berry connection $(A_x,A_y)$ in (\ref{eq:bcint}). Due to Theorem\,\ref{thm:intecond}, the following Pfaffian system defined on $D^*$ is integrable:
\begin{equation}
	\frac{\partial }{\partial {k_x} } \vct{\tilde{u}} =\frac{\partial P}{\partial {k_x} } \vct{\tilde{u}} -\I A_x \vct{\tilde{u}}\,, \quad\frac{\partial }{\partial {k_y} } \vct{\tilde{u}} =\frac{\partial P}{\partial {k_y} } \vct{\tilde{u}} -\I A_y \vct{\tilde{u}}\,,
	\label{eq:pfafstage2}
\end{equation}
subject to the initial condition
\begin{equation}
	\vct{\tilde{u}}(\vct{k}^0) = \vct{u}(\vct{k}^0)\,,
\end{equation}
where $\vct{k}^0$ can be any point in $D^*$ and $\vct{u}(\vct{k}^0)$ is computed in (\ref{eq:eigstage1}).
However, we observe that the solution $\vct{\tilde{u}}$ to (\ref{eq:pfafstage2}) only defines an analytic assignment on $D^*$ viewed as a subset of $\R^2$ since $D^*$ is not simply connected when viewed as a torus. Hence, the solution $\vct{\tilde{u}}$ is not necessarily $\Lambda^*$-periodic on $D^*$.

\subsubsection{Stage 3: compute the harmonic component}

Next, we compute the harmonic components $(h_x,h_y)$ in (\ref{eq:adecom}) in order to turn $\vct{\tilde{u}}$ into a $\Lambda^*$-periodic assignment.  To do so, we parameterize $D^*$ by $(\kappa_1,\kappa_2)\in T$ so that we compute $(h_x,h_y)$ by its components $h_1$ and $h_2$ in the $\vct{b}_1$ and $\vct{b}_2$ direction via the formula
\begin{equation}
	(h_x,h_y) = h_1 \vct{b}_1 + h_2 \vct{b}_2\,.
\end{equation} 
By (\ref{eq:diffck}) and the definition of the Berry connection, the Berry connection $A_1,A_2$ in the $\vct{b}_1$ and $\vct{b}_2$ directions are given by the formulas
\begin{equation}
	A_1 = \vct{b}_1\cdot \vct{e}_x A_x + \vct{b}_1\cdot \vct{e}_y A_y\,,\quad 	A_2 = \vct{b}_2\cdot \vct{e}_x A_x + \vct{b}_2\cdot \vct{e}_y A_y\,,
	\label{eq:bckf}
\end{equation}
where $A_x,A_y$ are given in (\ref{eq:bcint}). Then the system in (\ref{eq:pfafstage2}) in the $\vct{b}_1$ and $\vct{b}_2$ directions are given by
\begin{equation}
\frac{\partial }{\partial {\kappa_1} } \vct{\tilde{u}}(\vct{k}(\kappa_1,\kappa_2)) = \frac{\partial P(\vct{k}(\kappa_1,\kappa_2)) }{\partial {\kappa_1} } \vct{\tilde{u}}(\vct{k}(\kappa_1,\kappa_2)) -\I A_1(\vct{k}(\kappa_1,\kappa_2)) \vct{\tilde{u}}(\vct{k}(\kappa_1,\kappa_2))\,,
		\label{eq:pfafstage31}
\end{equation}
\begin{equation}
\frac{\partial }{\partial {\kappa_2} } \vct{\tilde{u}}(\vct{k}(\kappa_1,\kappa_2)) =\frac{\partial P(\vct{k}(\kappa_1,\kappa_2)) }{\partial {\kappa_2} }  \vct{\tilde{u}}(\vct{k}(\kappa_1,\kappa_2)) -\I A_2(\vct{k}(\kappa_1,\kappa_2)) \vct{\tilde{u}}(\vct{k}(\kappa_1,\kappa_2))\,.
		\label{eq:pfafstage32}
\end{equation}
In order to compute $h_1$, we solve  (\ref{eq:pfafstage31}) along the line $\gamma_1$ across $T$ in Figure\,\ref{fig:path}, where $\kappa_2$ can be any value in $\sb{-\half,\half}$ and $\kappa_1$ runs from $-\half$ to $\half$. 
We specify the initial condition as 
\begin{equation}
	\vct{\tilde{u}}(\vct{k}(-1/2,\kappa_2)) =  \vct{u}(\vct{k}(-1/2,\kappa_2))\,,
			\label{eq:initstage3}
\end{equation}
where $\vct{u}(\vct{k}(-1/2,\kappa_2))$ is computed in (\ref{eq:eigstage1}). Since $\vct{\tilde{u}}$ is not guaranteed to be $\Lambda^*$-periodic on $D^*$, the solution $\vct{\tilde{u}}(\vct{k}(\kappa_1,\kappa_2))$ to (\ref{eq:pfafstage31}) may not be the same at $\kappa_1=-\half$ and $\kappa_1=\half$\,. However, $\vct{\tilde{u}}(\vct{k}(-1/2,\kappa_2))$ and $\vct{\tilde{u}}(\vct{k}(1/2,\kappa_2))$ are eigenvectors to the same eigenvalue due to periodicity, so they can only differ by a phase factor $e^{\I h_1 }$:
\begin{equation}
	\vct{\tilde{u}}(\vct{k}(-1/2,\kappa_2)) = e^{\I h_1 }\vct{\tilde{u}}(\vct{k}(1/2,\kappa_2))\,,
	\label{eq:h1diff}
\end{equation}
from which we compute $h_1$ by the formula
\begin{equation}
	h_1 = -\I\log(\vct{\tilde{u}}^*(\vct{k}(-1/2,\kappa_2))\vct{\tilde{u}}(\vct{k}(1/2,\kappa_2)))\,.
	\label{eq:h1stage3}
\end{equation}
Before we show that adding $h_1$ to $A_1$ will turn $\tilde{\vct{u}}(\vct{k}(\kappa_1,\kappa_2))$ into a periodic function in $\kappa_1\in\sb{-\half,\half}$ for any $\kappa_2$, we prove that $h_1$ in (\ref{eq:h1stage3}) is independent of $\kappa_1$. Consider solving the system defined by (\ref{eq:pfafstage31}) and  (\ref{eq:pfafstage32}), subject to the initial condition in (\ref{eq:initstage3}), along the rectangle in Figure\,\ref{fig:path}, where $\gamma_1$ is the path from which we obtain (\ref{eq:h1stage3}) and $\gamma_3$ is parallel to $\gamma_1$ at $\kappa'_1$. Since the system is integrable by construction, the solution after going around the square is the same as the initial condition in (\ref{eq:h1stage3}). Furthermore, (\ref{eq:pfafstage32}) on $\gamma_2$ and $\gamma_4$ are identical due to the periodicity of $\frac{\partial P}{\partial {\kappa_2}} $ and $A_2$ in (\ref{eq:pfafstage32}). Since  $\gamma_2$ and $\gamma_4$ are in the opposite direction, their contribution cancels. Consequently, the contribution from solving (\ref{eq:pfafstage31}) on $\gamma_1$ and $\gamma_3$ also cancels. Thus, we conclude that $h_1$ in (\ref{eq:h1stage3}) is independent $\kappa_1$; the same $h_1$ would be obtained by solving (\ref{eq:pfafstage31}) on any line parallel to $\gamma_1$.

\begin{figure}[h]
	\centering	
		\includegraphics[scale=0.50]{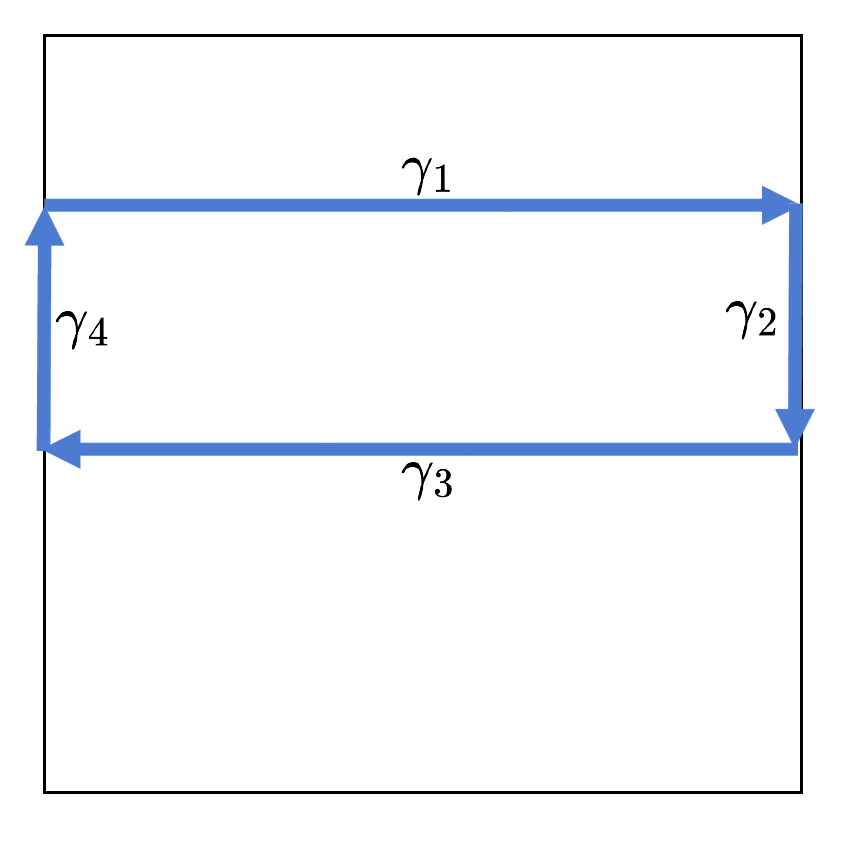}	
				\includegraphics[scale=0.50]{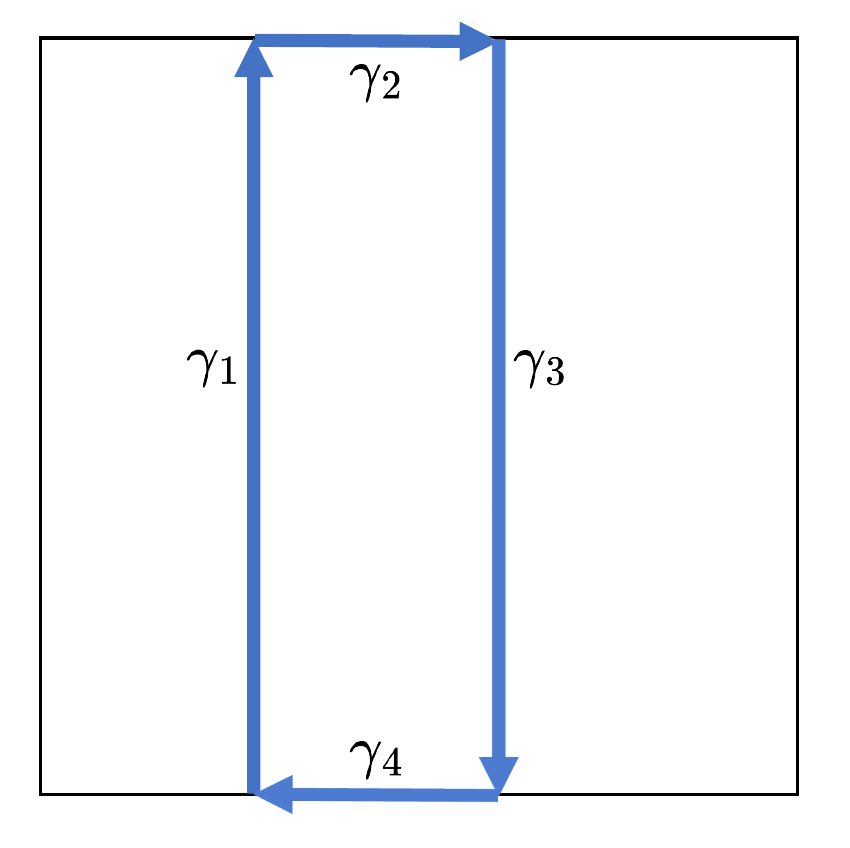}	
		\vspace{-1em}
	\caption{ The path for computing the harmonic component $h_1$ (left) and $h_2$ (right). \label{fig:path}}	
\end{figure}

The procedures for computing $h_2$ are similar to those above for $h_1$ by switching the role of $\kappa_1$ and $\kappa_2$. 
We solve (\ref{eq:pfafstage32})  along the line $\gamma_1$, where $\kappa_2$ runs from $-\half$ to $\half$ for any $\kappa_1\in \sb{-\half,\half}$ with the initial condition 
\begin{equation}
	\vct{\tilde{u}}(\vct{k}(\kappa_1,-1/2)) =  \vct{u}(\vct{k}(\kappa_1,-1/2))\,,
			\label{eq:initstage3}
\end{equation} 
where $\vct{u}(\vct{k}(\kappa_1,-1/2))$ is computed in (\ref{eq:eigstage1}).
By periodicity, $\vct{\tilde{u}}(\vct{k}(\kappa_1,-1/2))$ and $\vct{\tilde{u}}(\vct{k}(\kappa_1,1/2)) $ can only differ by a phase factor $e^{\I h_2}$:
\begin{equation}
	\vct{\tilde{u}}(\vct{k}(\kappa_1,-1/2)) = e^{\I h_2 }\vct{\tilde{u}}(\vct{k}(\kappa_1,1/2))\,,
	\label{eq:h2diff}
\end{equation}
from which we compute $h_2$ by the formula
\begin{equation}
	h_2 = -\I\log(\vct{\tilde{u}}^*(\vct{k}(\kappa_1,-1/2))\vct{\tilde{u}}(\vct{k}(\kappa_1,1/2)) )\,.
	\label{eq:h2stage3}
\end{equation}
The argument for showing that $h_1$ is independent of $\kappa_2$ can be applied to show that $h_2$ in (\ref{eq:h2stage3}) is independent $\kappa_1$.

Next, we apply the following gauge transformation to the solution $\tilde{\vct{u}}(\vct{k}(\kappa_1,\kappa_2))$ to (\ref{eq:pfafstage31}) subject to the initial condition (\ref{eq:initstage3}):
\begin{equation}
	\dbtilde{\vct{u}}(\vct{k}(\kappa_1,\kappa_2)) = e^{-\I h_1 \kappa_1}e^{-\I h_2 \kappa_2}\tilde{\vct{u}}(\vct{k}(\kappa_1,\kappa_2))\,.
	\label{eq:gth1}
\end{equation}
Obviously, the Berry connection $A_1$ and $A_2$ in (\ref{eq:pfafstage31}) and (\ref{eq:pfafstage32})  are modified accordingly as
\begin{equation}
	\tilde{A}_1 = A_1 + h_1\,,\quad 	\tilde{A}_2 = A_2 + h_2\,,
	\label{eq:tbck}
\end{equation}
and the new $\dbtilde{\vct{u}}$ satisfies 
\begin{equation}
	\frac{\partial }{\partial {\kappa_1}} \vct{\dbtilde{u}}(\vct{k}(\kappa_1,\kappa_2)) =\frac{\partial P(\vct{k}(\kappa_1,\kappa_2)) }{\partial {\kappa_1} }  \vct{\dbtilde{u}}(\vct{k}(\kappa_1,\kappa_2)) -\I \tilde{A}_1(\vct{k}(\kappa_1,\kappa_2)) \vct{\dbtilde{u}}(\vct{k}(\kappa_1,\kappa_2))\,,
		\label{eq:pfafstage311}
\end{equation}
\begin{equation}
	\frac{\partial }{\partial {\kappa_2}} \vct{\dbtilde{u}}(\vct{k}(\kappa_1,\kappa_2)) =\frac{\partial P(\vct{k}(\kappa_1,\kappa_2)) }{\partial {\kappa_2} }   \vct{\dbtilde{u}}(\vct{k}(\kappa_1,\kappa_2)) -\I \tilde{A}_2(\vct{k}(\kappa_1,\kappa_2)) \vct{\dbtilde{u}}(\vct{k}(\kappa_1,\kappa_2))\,.
		\label{eq:pfafstage322}
\end{equation}
By (\ref{eq:gth1}) and (\ref{eq:h1diff}), it is easy to see that $\dbtilde{\vct{u}}$ satisfies
\begin{equation}
	\dbtilde{\vct{u}}(\vct{k}(-1/2,\kappa_2)) = \dbtilde{\vct{u}}(\vct{k}(1/2,\kappa_2))\,,\quad \kappa_2\in\sb{-\half,\half}.
	\label{eq:coni}
\end{equation}
The periodicity of $\partial_{\kappa_1}P$ and $\tilde{A}_1$ in (\ref{eq:pfafstage311}) show that 
\begin{equation}
	\frac{\partial}{\partial {\kappa_1}}\dbtilde{\vct{u}}(\vct{k}(-1/2,\kappa_2)) = \frac{\partial}{\partial {\kappa_1}}\dbtilde{\vct{u}}(\vct{k}(1/2,\kappa_2))\,,\quad \kappa_2\in\sb{-\half,\half}.
\end{equation}
By repetitively differentiating (\ref{eq:pfafstage311}), we conclude that the derivative of order $n=2,3,\ldots$ satisfies
\begin{equation}
	\frac{\partial^n}{\partial {\kappa^n_1}}\dbtilde{\vct{u}}(\vct{k}(-1/2,\kappa_2)) = \frac{\partial^n }{\partial {\kappa^n_1}}\dbtilde{\vct{u}}(\vct{k}(1/2,\kappa_2))\,,\quad \kappa_2\in\sb{-\half,\half}.
\end{equation}
By (\ref{eq:gth1}), (\ref{eq:h2diff}), (\ref{eq:pfafstage322}) and the periodicity of $\partial_{\kappa_2}P$ and $\tilde{A}_2$, we conclude similarly that 
\begin{equation}
	\dbtilde{\vct{u}}(\vct{k}(\kappa_1,-1/2)) = \dbtilde{\vct{u}}(\vct{k}(\kappa_1,1/2))\,,\quad \kappa_1\in\sb{-\half,\half}\,,
\end{equation}
and
\begin{equation}
	\frac{\partial^n}{\partial {\kappa^n_2}}\dbtilde{\vct{u}}(\vct{k}(\kappa_1,-1/2)) = \frac{\partial^n }{\partial {\kappa^n_2}}\dbtilde{\vct{u}}(\vct{k}(\kappa_1,1/2))\,,\quad \kappa_1\in\sb{-\half,\half}\,,
		\label{eq:conf}
\end{equation}
for $n=1,2,\ldots$.
For simplicity, suppose that the initial condition for the system defined by (\ref{eq:pfafstage311}) and (\ref{eq:pfafstage322}) is specified  at $\kappa_1=\kappa_2=-\frac{1}{2}$ by the formula
\begin{equation}
	\dbtilde{\vct{u}} (\vct{k}(-1/2,-1/2))= \vct{u}(\vct{k}(-1/2,-1/2))\,,
	\label{eq:initf}
\end{equation}
where $ \vct{u}(\vct{k}(-1/2,-1/2))$ is computed in (\ref{eq:eigstage1}).
Then, by (\ref{eq:coni}-\ref{eq:conf}), together with the integrability of the system defined by (\ref{eq:pfafstage311}) and (\ref{eq:pfafstage322}), we conclude that the solution $\dbtilde{\vct{u}}$ defines an analytic and $\Lambda^*$-periodic assignment on $D^*$. 
\begin{remark}
	Since the system defined by (\ref{eq:pfafstage311}) and (\ref{eq:pfafstage322}) is integrable, the paths for which they are solved for finding $\dbtilde{\vct{u}}$ over $D^*$ is irrelevant. For example, the paths can be taken to be the same as those in Section\,\ref{sec:method1}.
\end{remark}

\subsubsection{Realty of Wannier functions}
In the previous section, we have constructed the optimal assignment $\dbtilde{\vct{u}}$ on $D^*$. There is one remaining degree of freedom -- the phase choice for the initial condition in (\ref{eq:initf}).
In this section, we show that, when $H$ has time-reversal symmetry (see (\ref{eq:trs})), $\dbtilde{\vct{u}}$  will have the symmetry
\begin{equation}
	\bar{\dbtilde{\vct{u}}}(\vct{k}) = \dbtilde{\vct{u}}(-\vct{k})\,,\quad \vct{k}\in D^*
\end{equation}
provided the initial condition in (\ref{eq:initf}) is chosen to be real: 
\begin{equation}
	\bar{{\vct{u}} }(\vct{k}(-1/2,-1/2)) = {\vct{u}} (\vct{k}(-1/2,-1/2))\,.
		\label{eq:initfreal}
\end{equation}

First, we observe that (\ref{eq:initfreal}) is always possible by considering the complex conjugate of (\ref{eq:eigstage1}) at $\kappa_1=\kappa_2=-\half$:
\begin{align}
\begin{split}
	\bar{H}(\vct{k}(-1/2,-1/2))\bar{{\vct{u}} }(\vct{k}(-1/2,-1/2)) =& E(\vct{k}(-1/2,-1/2)) \bar{{\vct{u}} }(\vct{k}(-1/2,-1/2))\\
	 =& H(\vct{k}(-1/2,-1/2))\bar{{\vct{u}} }(\vct{k}(-1/2,-1/2))\,,\label{eq:trseig}
\end{split}
\end{align}
where we used (\ref{eq:trs}) and the periodicity of $H$ to obtain
\begin{equation}
\bar{H}(\vct{k}(-1/2,-1/2)) = H(\vct{k}(1/2,1/2)) = H(\vct{k}(-1/2,-1/2))\,.	
\end{equation}
The relation in (\ref{eq:trseig}) shows that both ${\vct{u}}(\vct{k}(-1/2,-1/2))$ and $\bar{{\vct{u}} }(\vct{k}(-1/2,-1/2))$ are eigenvectors of the same (non-degenerate) eigenvalue, so they can only differ by a phase factor of the form ${\vct{u}}^*(\vct{k}(-1/2,-1/2)) = {\vct{u}}(\vct{k}(-1/2,-1/2))e^{2\I\varphi_0}$
 for some real $\varphi_0$\,. If (\ref{eq:initfreal}) does not hold, we can pick ${\vct{u}}(\vct{k}(-1/2,-1/2))e^{\I\varphi_0}$ in stead of ${\vct{u}}(\vct{k}(-1/2,-1/2))$ as the vector in (\ref{eq:initf}) so that (\ref{eq:initfreal}) holds.

We also observe that, since the Berry curvature satisfies $\Omega_{xy}(\vct{k}) = -\Omega_{xy}(-\vct{k}) $ if $H$ has time-reversal symmetry (see Remark\,\ref{rmk:trsp}),  formulas in (\ref{eq:bcint}) show that $A_x$ and $A_y$  stratify 
\begin{equation}
	A_x(-\vct{k}) = A_x(\vct{k})\,,\quad A_y(-\vct{k}) = A_y(\vct{k})\,.
	\label{eq:bcff}
\end{equation}
Combining (\ref{eq:bcff}) with (\ref{eq:bckf}) and (\ref{eq:tbck}) shows that
\begin{equation}
	\tilde{A}_1(-\vct{k}) = 	\tilde{A}_1(\vct{k}) \,,\quad 	\tilde{A}_2(-\vct{k}) = 	\tilde{A}_2(\vct{k})\,.
		\label{eq:bcfff}
\end{equation}
Furthermore, since $\bar{P}(\vct{k})=P(-\vct{k})$, we have 
\begin{equation}
	\frac{\partial}{\partial \kappa_1} \bar{P}(\vct{k}) = -\frac{\partial}{\partial \kappa_1} P(-\vct{k})\,,\quad \frac{\partial}{\partial \kappa_2} \bar{P}(\vct{k}) = -\frac{\partial}{\partial \kappa_2} P(-\vct{k})\,.
		\label{eq:ptrsf}
\end{equation}
By (\ref{eq:ptrsf}) and (\ref{eq:bcfff}), we observe that (\ref{eq:pfafstage311}) and (\ref{eq:pfafstage322}) are unchanged under complex conjugation followed by replacing $\vct{k}$ with $-\vct{k}$. This implies that, if we transform the solution $\dbtilde{\vct{u}}$  by the same operations in the following 
\begin{equation}
	\dbtilde{\vct{u}}(\vct{k}) \rightarrow \bar{\dbtilde{\vct{u}}}(-\vct{k})\,,
\end{equation}
the transformed solution will still satisfy  (\ref{eq:pfafstage311}) and (\ref{eq:pfafstage322}).
Furthermore, by periodicity, the vector on the right of (\ref{eq:initf}) remains an eigenvector at $\kappa_1=\kappa_2=\half$.  By the realty of the initial condition in (\ref{eq:initfreal}), the transformed solutions have the initial condition as the original one. Namely, we have
\begin{align}
	&\dbtilde{\vct{u}}(\vct{k}(-1/2,-1/2)) = \vct{u}(\vct{k}(-1/2,-1/2)) = \bar{\dbtilde{\vct{u}}}(\vct{k}(1/2,1/2))\,.
\end{align}
As a result, the transformed solution satisfies the same equations as $\dbtilde{\vct{u}}$ with the same initial condition. By the uniqueness theorem of initial value problems, we conclude that $\dbtilde{\vct{u}}(\vct{k}) = \bar{\dbtilde{\vct{u}}}(-\vct{k})$. By complex conjugation, we conclude that
\begin{equation}
	\bar{\dbtilde{\vct{u}}}(\vct{k}) = 	\dbtilde{\vct{u}}(-\vct{k})\,,\quad \vct{k}\in D^*\,.
\end{equation}
Hence the Fourier coefficients of $\dbtilde{\vct{u}}$ are real, so is its corresponding Wannier function defined by (\ref{eq:wann}).


\subsection{Derivation of formulas in Lemma\,\ref{lem:var} \label{sec:appvar}}
The following contains the derivation for the moment functions in Lemma\,\ref{lem:var}. 

For the Wannier center, by (\ref{eq:odekn1}) and the decomposition (\ref{eq:adecom}) , we have
\begin{align}
	\begin{split}
	\langle \vct{R} \rangle &= \I \frac{V_{\rm puc}}{(2\pi)^2}\int_{D^*} \D\vct{k}\, \tilde{\vct{u}}^*(\vct{k})\grad_{\vct{k}} \tilde{\vct{u}}(\vct{k})\\
	&= \frac{V_{\rm puc}}{(2\pi)^2} \int_{D^*} \D\vct{k}\, \vct{A}(\vct{k})	\\
	&= (h_x,h_y)\,,
		\end{split}
\end{align}
where we have used (\ref{eq:kxy0}) to obtain (\ref{eq:ap1}) so that integrals of the derivatives of $\psi, F$ vanishes. 

For the second moment $	\langle \norm{\vct{R}}^2 \rangle$, we have
\begin{align}
		\begin{split}
	\langle \norm{\vct{R}}^2 \rangle &= \frac{V_{\rm puc}}{(2\pi)^2}\int_{D^*} \D\vct{k}\, \norm{\grad_{\vct{k}} \tilde{\vct{u}}(\vct{k})}^2 \\
	&= \frac{V_{\rm puc}}{(2\pi)^2}\int_{D^*} \D\vct{k}\,\bb{ \norm{\frac{\partial P(\vct{k})}{\partial k_x}\tilde{\vct{u}}(\vct{k})}^2 + \norm{\frac{\partial P(\vct{k})}{\partial k_y}\tilde{\vct{u}}(\vct{k})}^2  +  \norm{\vct{A}(\vct{k})}^2}\,. \label{eq:ap1}
	\end{split}
\end{align}
We observe that we have used (\ref{eq:para2}) so that there is no cross terms that contain both the derivatives of the projector and $\vct{A}$. 
The last term in (\ref{eq:ap1}) is given by 
\begin{align}
&	\frac{V_{\rm puc}}{(2\pi)^2}\int_{D^*} \D\vct{k}\, \norm{\vct{A}(\vct{k})}^2 \nonumber \\
	&= \frac{V_{\rm puc}}{(2\pi)^2}\int_{D^*} \D\vct{k} \bigg[ \bb{\frac{\partial \psi}{\partial k_x}}^2 + \bb{\frac{\partial \psi}{\partial k_y}}^2 + \bb{\frac{\partial F}{\partial k_x}}^2 + \bb{\frac{\partial F}{\partial k_y}}^2 \nonumber\\
	& \hspace{2cm} + 2\bb{\frac{\partial \psi}{\partial k_x}\frac{\partial F}{\partial k_y} - \frac{\partial \psi}{\partial k_y}\frac{\partial F}{\partial k_x}}\bigg] + h_x^2 + h_y^2\,\label{eq:ap3}\\
	&=\frac{V_{\rm puc}}{(2\pi)^2}\int_{D^*} \D\vct{k} \bigg[ \bb{\frac{\partial \psi}{\partial k_x}}^2 + \bb{\frac{\partial \psi}{\partial k_y}}^2 + \bb{\frac{\partial F}{\partial k_x}}^2 + \bb{\frac{\partial F}{\partial k_y}}^2\bigg] + h_x^2 + h_y^2\,,
\end{align}
where we applied (\ref{eq:kxy0}) to make terms linear in the derivatives of $\psi$ and $F$ vanish in the first equality, and
we applied integration by parts to $\frac{\partial \psi}{\partial k_x}$ and $\frac{\partial \psi}{\partial k_y}$ in the second equality and the fact that both $\psi$ and $F$ are $\Lambda^*$-periodic to make the cross terms between derivatives of $\psi$ and $F$ vanish.

\end{document}